\documentclass[11pt]{article}
\usepackage[margin=.8in,left=.8in]{geometry}
\usepackage{amsmath}
\usepackage{amsfonts}
\usepackage{amssymb}
\usepackage{amsthm}
\usepackage{mathtools}
\usepackage{subcaption}
\usepackage{float}
\usepackage{graphicx}
\usepackage{color}
\usepackage{xcolor}
\usepackage{xfrac}
\usepackage{stmaryrd}
\usepackage{rotating}
\usepackage[utf8]{inputenc}
\usepackage{hyperref}
\usepackage[all]{xy}

\usepackage{tikz}
\usepackage{tikz-cd}
\usepackage{lipsum}
\usepackage{adjustbox}

\definecolor{redi}{RGB}{255,38,0}
\definecolor{redii}{RGB}{200,50,30}
\definecolor{yellowi}{RGB}{255,251,0}
\definecolor{bluei}{RGB}{0,150,255}
\definecolor{blueii}{RGB}{135,247,210}
\definecolor{blueiii}{RGB}{91,205,250}
\definecolor{blueiv}{RGB}{115,244,253}
\definecolor{bluev}{RGB}{1,58,215}
\definecolor{orangei}{RGB}{220,160, 20}
\definecolor{yellowii}{RGB}{222,247,100}
\definecolor{greeni}{RGB}{85,102,0}
\definecolor{navy}{RGB}{17, 10, 102}
\definecolor{brown}{RGB}{85, 102, 0}
\definecolor{oxford}{RGB}{0, 0, 100}

\usepackage{multirow}

\usepackage{stmaryrd}    

\usepackage{enumerate} 

\usepackage{helvet}   

\usepackage{mathptmx}
\usepackage{amsmath}
\usepackage{graphicx}
\DeclareRobustCommand{\coprod}{\mathop{\text{\fakecoprod}}}
\newcommand{\fakecoprod}{%
  \sbox0{$\prod$}%
  \smash{\raisebox{\dimexpr.9625\depth-\dp0}{\scalebox{1}[-1]{$\prod$}}}%
  \vphantom{$\prod$}%
}

\usepackage{floatflt}  
\usepackage{wrapfig} 
\usepackage{array}     
\newcolumntype{L}[1]{>{\raggedright\let\newline\\\arraybackslash\hspace{0pt}}m{#1}}
\newcolumntype{C}[1]{>{\centering\let\newline\\\arraybackslash\hspace{0pt}}m{#1}}
\newcolumntype{R}[1]{>{\raggedleft\let\newline\\\arraybackslash\hspace{0pt}}m{#1}}

\usepackage{diagbox}  



\makeatletter
\newcommand{\raisemath}[1]{\mathpalette{\raisem@th{#1}}}
\newcommand{\raisem@th}[3]{\raisebox{#1}{$#2#3$}}
\makeatother

\usepackage{tabularx}   


\usepackage[new]{old-arrows}   




\newdir{> }{{}*!/10pt/@{>}}




\DeclareRobustCommand{\rchi}{{\mathpalette\irchi\relax}}
\newcommand{\irchi}[2]{\raisebox{\depth}{$#1\chi$}} 

\makeatletter
\newif\if@sup
\newtoks\@sups
\def\append@sup#1{\edef\act{\noexpand\@sups={\the\@sups #1}}\act}%
\def\reset@sup{\@supfalse\@sups={}}%
\def\mk@scripts#1#2{\if #2/ \if@sup ^{\the\@sups}\fi \else%
  \ifx #1_ \if@sup ^{\the\@sups}\reset@sup \fi {}_{#2}%
  \else \append@sup#2 \@suptrue \fi%
  \expandafter\mk@scripts\fi}
\def\tensor#1#2{\reset@sup#1\mk@scripts#2_/}
\def\multiscripts#1#2#3{\reset@sup{}\mk@scripts#1_/#2%
  \reset@sup\mk@scripts#3_/}
\makeatother

\makeatletter
\newbox\slashbox \setbox\slashbox=\hbox{$/$}
\def\itex@pslash#1{\setbox\@tempboxa=\hbox{$#1$}
  \@tempdima=0.5\wd\slashbox \advance\@tempdima 0.5\wd\@tempboxa
  \copy\slashbox \kern-\@tempdima \box\@tempboxa}
\def\slash{\protect\itex@pslash}
\makeatother

\def\clap#1{\hbox to 0pt{\hss#1\hss}}
\def\mathllap{\mathpalette\mathllapinternal}
\def\mathrlap{\mathpalette\mathrlapinternal}
\def\mathclap{\mathpalette\mathclapinternal}
\def\mathllapinternal#1#2{\llap{$\mathsurround=0pt#1{#2}$}}
\def\mathrlapinternal#1#2{\rlap{$\mathsurround=0pt#1{#2}$}}
\def\mathclapinternal#1#2{\clap{$\mathsurround=0pt#1{#2}$}}

\let\oldroot\root
\def\root#1#2{\oldroot #1 \of{#2}}
\renewcommand{\sqrt}[2][]{\oldroot #1 \of{#2}}

\DeclareSymbolFont{symbolsC}{U}{txsyc}{m}{n}
\SetSymbolFont{symbolsC}{bold}{U}{txsyc}{bx}{n}
\DeclareFontSubstitution{U}{txsyc}{m}{n}

\DeclareSymbolFont{stmry}{U}{stmry}{m}{n}
\SetSymbolFont{stmry}{bold}{U}{stmry}{b}{n}

\DeclareFontFamily{OMX}{MnSymbolE}{}
\DeclareSymbolFont{mnomx}{OMX}{MnSymbolE}{m}{n}
\SetSymbolFont{mnomx}{bold}{OMX}{MnSymbolE}{b}{n}
\DeclareFontShape{OMX}{MnSymbolE}{m}{n}{
    <-6>  MnSymbolE5
   <6-7>  MnSymbolE6
   <7-8>  MnSymbolE7
   <8-9>  MnSymbolE8
   <9-10> MnSymbolE9
  <10-12> MnSymbolE10
  <12->   MnSymbolE12}{}


\makeatletter
\def\Decl@Mn@Delim#1#2#3#4{%
  \if\relax\noexpand#1%
    \let#1\undefined
  \fi
  \DeclareMathDelimiter{#1}{#2}{#3}{#4}{#3}{#4}}
\def\Decl@Mn@Open#1#2#3{\Decl@Mn@Delim{#1}{\mathopen}{#2}{#3}}
\def\Decl@Mn@Close#1#2#3{\Decl@Mn@Delim{#1}{\mathclose}{#2}{#3}}
\Decl@Mn@Open{\llangle}{mnomx}{'164}
\Decl@Mn@Close{\rrangle}{mnomx}{'171}
\Decl@Mn@Open{\lmoustache}{mnomx}{'245}
\Decl@Mn@Close{\rmoustache}{mnomx}{'244}
\makeatother

\makeatletter
\DeclareRobustCommand\widecheck[1]{{\mathpalette\@widecheck{#1}}}
\def\@widecheck#1#2{%
    \setbox\z@\hbox{\m@th$#1#2$}%
    \setbox\tw@\hbox{\m@th$#1%
       \widehat{%
          \vrule\@width\z@\@height\ht\z@
          \vrule\@height\z@\@width\wd\z@}$}%
    \dp\tw@-\ht\z@
    \@tempdima\ht\z@ \advance\@tempdima2\ht\tw@ \divide\@tempdima\thr@@
    \setbox\tw@\hbox{%
       \raise\@tempdima\hbox{\scalebox{1}[-1]{\lower\@tempdima\box
\tw@}}}%
    {\ooalign{\box\tw@ \cr \box\z@}}}
\makeatother


\makeatletter
\def\udots{\mathinner{\mkern2mu\raise\p@\hbox{.}
\mkern2mu\raise4\p@\hbox{.}\mkern1mu
\raise7\p@\vbox{\kern7\p@\hbox{.}}\mkern1mu}}
\makeatother





\newcommand{\gt}{>}
\newcommand{\lt}{<}

\renewcommand{\(}{\begin{equation}}
\renewcommand{\)}{\end{equation}}
\newcommand{\bea}{\begin{eqnarray*}}
\newcommand{\eea}{\end{eqnarray*}}


\usepackage{cleveref}

\crefformat{section}{\S#2#1#3} 
\crefformat{subsection}{\S#2#1#3}
\crefformat{subsubsection}{\S#2#1#3}

\theoremstyle{italics}
\newtheorem{theorem}{Theorem}[section]
\newtheorem{lemma}[theorem]{Lemma}
\newtheorem{prop}[theorem]{Proposition}

\theoremstyle{definition}
\newtheorem{defn}[theorem]{Definition}

\newtheorem{example}[theorem]{Example}

\newtheorem{remark}[theorem]{Remark}
\newtheorem{note[theorem]}{Note}

\usepackage{amsfonts}



\begin{document}

\title{
  Twisted Cohomotopy implies
  M-theory
  \\
  anomaly cancellation on 8-manifolds
}

\author{Domenico Fiorenza, \; Hisham Sati, \; Urs Schreiber}
%

\maketitle

\vspace{-10mm}
\begin{center}
{\it \footnotesize To Mike Duff on the occasion of his 70th birthday}
\end{center}
\begin{abstract}
  We consider the hypothesis
  that the C-field 4-flux and 7-flux forms in M-theory
  are in the image of the non-abelian Chern character map
  from the non-abelian generalized cohomology theory
  called J-twisted Cohomotopy theory.
  We prove for M2-brane backgrounds in M-theory on 8-manifolds
  that such charge quantization of the C-field in Cohomotopy theory
  implies a list of expected anomaly cancellation conditions,
  including:
  shifted C-field flux quantization and C-field tadpole cancellation,
  but also the DMW anomaly cancellation
  and the C-field's integral equation of motion.
\end{abstract}

\vspace{-.6cm}

\setcounter{tocdepth}{2}

\tableofcontents

\newpage

\section{Introduction and survey}
\label{IntroductionAndSurvey}

\noindent We consider the following hypothesis, which we
make precise as Def. \ref{HypothesisHFor8Manifolds} below,
based on details developed in \cref{Cohomotopy},
see \cref{Conclusions} for background, motivation and outlook:

\medskip

\hypertarget{HypothesisH}{}
\noindent {\bf Hypothesis H}:
{\it
The C-field 4-flux and 7-flux forms in M-theory
are subject
to
\emph{charge quantization in J-twisted Cohomotopy cohomology theory}
in that they are in the image of the non-abelian Chern character
map from J-twisted Cohomotopy theory.}

\medskip

\noindent In support of Hypothesis H, we prove
in \cref{CancellationFromCohomotopy} that it implies the following phenomena,
expected for M2-brane backgrounds in M-theory on 8-manifolds
(recalled in Remark \ref{MTheoryOn8Manifolds} below):

\medskip

{\small
\hypertarget{Table1}{}
\hspace{-1cm}
\begin{tabular}{|cc|cc|c|}
  \hline
  \multicolumn{2}{|c}{
  $
  \mathclap{\phantom{a^{\vert^{\vert^{\vert}}}}}
  \mathclap{\phantom{a_{\vert_{\vert_{\vert}}}}}
  $
  {\color{oxford} \bf Cohomotopy theory}
  }
  &
  \multicolumn{2}{c|}{
    Expression
  }
  &
  {\color{oxford} \bf M-theory}
  \\
  \hline
  \hline
  \multirow{2}{*}{
  \hspace{-.2cm}
  \cref{W7Cancellation}
  \hspace{-.5cm}
  }
  &
  \multirow{2}{*}{
  \hspace{-.3cm}
  \begin{tabular}{c}
    $\mathclap{\phantom{a^{\vert^{\vert^{\vert}}}}}$
    Compatible twisting
    \\
    $\mathclap{\phantom{a_{\vert_{\vert_{\vert}}}}}$
    on 4- \& 7-Cohomotopy theory
  \end{tabular}
  \hspace{-.3cm}
  }
  &
  $\mathclap{\phantom{ \vert \atop \vert }}$
  $W_7(T X) = 0$
  &
  \eqref{SP1Sp2Structure}
  &
  \hspace{-.4cm}
  \begin{tabular}{c}
    $\mathclap{\phantom{a^{\vert^{\vert^{\vert}}}}}$
    DMW anomaly cancellation condition
    \\
    $\mathclap{\phantom{a_{\vert_{\vert_{\vert}}}}}$
    \cite{DMWa}\cite[6]{DMWb}
  \end{tabular}
  \hspace{-.4cm}
  \\
  \cline{3-5}
  &
  &
  $\mathclap{\phantom{ \vert \atop \vert }}$
  $
    \begin{aligned}
    &
    \mathclap{\phantom{a^{\vert^{\vert^{\vert}}}}}
    \tfrac{1}{24} \rchi_8(T X)
    \;=\;
    I_8(T X)
    \\
    &
    \mathclap{\phantom{a_{\vert_{\vert_{\vert}}}}}
    :=\;
    \tfrac{1}{48}
    \big(
      p_2(T X)
      -
      \tfrac{1}{4}(p_1(T X))^2
    \big)
    \end{aligned}
  $
  &
  \eqref{SP1Sp2Structure}
  &
  \hspace{-.4cm}
  \begin{tabular}{c}
    one-loop anomaly polynomial
    \\
    \cite{DuffLiuMinasian95}\cite{VafaWitten95}
  \end{tabular}
  \hspace{-.4cm}
  \\
  \hline
  \hspace{-.2cm}
  \cref{TwistedCohomotopyInDegreeSeven}
  \hspace{-.5cm}
  &
  \hspace{-.3cm}
  \begin{tabular}{c}
   $\mathclap{\phantom{a^{\vert^{\vert^{\vert}}}}}$
   Any cocycle
   \\
   $\mathclap{\phantom{a_{\vert_{\vert_{\vert}}}}}$
   in J-twisted 7-Cohomotopy
  \end{tabular}
  \hspace{-.3cm}
  &
  $\mathrm{Spin}(7)$-structure $g$
  &
  \eqref{CohomotopyAndSpin7}
  &
  \hspace{-.4cm}
  \begin{tabular}{c}
    $\geq \sfrac{1}{8}$ BPS M2-brane background
    \\
    \cite{IP}\cite{IPW}\cite{Tsimpis06}
  \end{tabular}
  \hspace{-.4cm}
  \\
  \hline
  \hspace{-.2cm}
  \cref{TwistedCohomotopyInDegrees}
  \hspace{-.5cm}
  &
  \hspace{-.3cm}
  \begin{tabular}{c}
    $\mathclap{\phantom{a^{\vert^{\vert^{\vert}}}}}$
    Any cocycle in
    \\
    $\mathclap{\phantom{a_{\vert_{\vert_{\vert}}}}}$
    compatibly twisted 4\&7-Cohomotopy
  \end{tabular}
  \hspace{-.3cm}
  &
  $\mathrm{Sp}(1)\cdot \mathrm{Sp}(1)$-structure $\tau$
  &
  \eqref{CohomotopyAndSp2Sp1}
  &
  \hspace{-.4cm}
  \begin{tabular}{c}
    $\sfrac{4}{8}$ BPS M2-brane background
    \\
    \cite[7.3]{MF10}
  \end{tabular}
  \hspace{-.4cm}
  \\
  \hline
  \hline
  \hspace{-.2cm}
  \cref{CFieldBackgroundCharge}
  \hspace{-.5cm}
  &
  \hspace{-.3cm}
  \begin{tabular}{c}
    Chern character of
    \\
    rationally twisted 4-Cohomotopy
  \end{tabular}
  \hspace{-.3cm}
  &
  $
  \begin{aligned}
    \mathclap{\phantom{a^{\vert^{\vert^{\vert}}}}}
    d\, G_4 & = \; 0
    \\
    d\, G_7
    & =\;
    -\tfrac{1}{2}
    G_4 \wedge G_4
    +
    L_8
    \mathclap{\phantom{a_{\vert_{\vert_{\vert}}}}}
  \end{aligned}
  $
  &
  \eqref{RationalTwists}
  &
  \hspace{-.4cm}
  \begin{tabular}{c}
    $\mathclap{\phantom{a^{\vert^{\vert^{\vert}}}}}$
    C-field Bianchi identity with
    \\
    generic higher curvature correction
    \\
    $\mathclap{\phantom{a_{\vert_{\vert_{\vert}}}}}$
    \cite{SoueresTsimpis16}
  \end{tabular}
  \hspace{-.4cm}
  \\
  \hline
  \hspace{-.2cm}
  \cref{CFieldBackgroundCharge}
  \hspace{-.5cm}
  &
  \hspace{-.3cm}
  \begin{tabular}{c}
    Chern character of compatibly
    \\
    rationally twisted 4\&7-Cohomotopy
  \end{tabular}
  \hspace{-.3cm}
  &
  \raisebox{0pt}{
  $
  \begin{aligned}
    \mathclap{\phantom{a^{\vert^{\vert^{\vert}}}}}
    d\, \widetilde G_4 & =\; 0
    \\
    d\, G_7
    &=\;
    -\tfrac{1}{2}
    \big(
      \widetilde G_4 - \tfrac{1}{4}P_4
    \big)
      \wedge
    \widetilde G_4
    +
    K_8
    \mathclap{\phantom{a_{\vert_{\vert_{\vert}}}}}
  \end{aligned}
  $
  }
  \raisebox{6pt}{
    \hspace{-.9cm}
  }
  &
  \eqref{CompatibleRationalTwists}
  &
  \hspace{-.4cm}
  \begin{tabular}{c}
    $\mathclap{\phantom{a^{\vert^{\vert^{\vert}}}}}$
    Shifted C-field Bianchi identity with
    \\
    generic higher curvature correction
    \\
    \cite{Tsimpis04}
  \end{tabular}
  \hspace{-.4cm}
  \\
  \hline
  \hline
  \multirow{2}{*}{
  \hspace{-.2cm}
  \cref{HalfIntegralCFieldFluxQuantization}
  \hspace{-.5cm}
  }
  &
  \hspace{-.3cm}
  \multirow{4}{*}{
  \begin{tabular}{c}
    Chern character 4-form of
    \\
    $\mathrm{Sp}(2)$-twisted 4-Cohomotopy
  \end{tabular}
  }
  \hspace{-.3cm}
  &
  $
  \mathclap{\phantom{a^{\vert^{\vert^{\vert}}}}}
  \widetilde G_4 \;=\; G_4 + \tfrac{1}{4}p_1(\nabla)
  \mathclap{\phantom{a_{\vert_{\vert_{\vert}}}}}
  $
  &
  \eqref{ShiftByBackgroundCharge}
  &
  \hspace{-.4cm}
  \begin{tabular}{c}
    $\mathclap{\phantom{a^{\vert^{\vert^{\vert}}}}}$
    C-field shift
    \\
    \cite{Witten96a}\cite{Witten96b}\cite{Tsimpis04}
    $\mathclap{\phantom{a_{\vert_{\vert_{\vert}}}}}$
  \end{tabular}
  \hspace{-.4cm}
  \\
  \cline{3-5}
  &
  &
  $
  \mathclap{\phantom{a^{\vert^{\vert^{\vert}}}}}
  [\widetilde G_4] \;\in\; H^4(X, \mathbb{Z})
  \mathclap{\phantom{a_{\vert_{\vert_{\vert}}}}}
  $
  &
  \eqref{ShiftedFluxQuantizationCondition}
  &
  \hspace{-.4cm}
  \begin{tabular}{c}
    $\mathclap{\phantom{a^{\vert^{\vert^{\vert}}}}}$
    Shifted C-field flux quantization
    \\
    $\mathclap{\phantom{a_{\vert_{\vert_{\vert}}}}}$
    \cite{Witten96a}\cite{Witten96b}\cite{DFM03}\cite{HopkinsSinger05}
  \end{tabular}
  \hspace{-.4cm}
  \\
  \cline{3-5}
  \hspace{-.2cm}
  \cref{BackgroundCharge}
  \hspace{-.5cm}
  &
  &
  $
  \mathclap{\phantom{a^{\vert^{\vert^{\vert}}}}}
  (G_4)_0 \;=\; \tfrac{1}{2}p_1(\nabla)
  \mathclap{\phantom{a_{\vert_{\vert_{\vert}}}}}
  $
  &
  \eqref{BackgroundChargeValue}
  &
  \hspace{-.4cm}
  \begin{tabular}{c}
    $\mathclap{\phantom{a^{\vert^{\vert^{\vert}}}}}$
    Background charge
    \\
    $\mathclap{\phantom{a_{\vert_{\vert_{\vert}}}}}$
    \cite[p. 11]{Freed09}\cite{Freed00}
  \end{tabular}
  \hspace{-.4cm}
  \\
  \cline{3-5}
  \hspace{-.2cm}
  \cref{IntegralEquationOfMotionFromCohomotopy}
  \hspace{-.5cm}
  &
  &
  $
  \mathclap{\phantom{a^{\vert^{\vert^{\vert}}}}}
  \mathrm{Sq}^3\big( [\widetilde G_4] \big) \;=\; 0
  \mathclap{\phantom{a_{\vert_{\vert_{\vert}}}}}
  $
  &
  \eqref{SteenrodSquareVanishes}
  &
  \hspace{-.4cm}
  \begin{tabular}{c}
    $\mathclap{\phantom{a^{\vert^{\vert^{\vert}}}}}$
    Integral equation of motion
    \\
    $\mathclap{\phantom{a_{\vert_{\vert_{\vert}}}}}$
    \cite{DMWa}\cite[5]{DMWb}
  \end{tabular}
  \hspace{-.4cm}
  \\
  \hline
  \multirow{3}{*}{
  \hspace{-.2cm}
  \cref{PageCharge}
  \hspace{-.5cm}
  }
  &
  \hspace{-.3cm}
  \multirow{3}{*}{
  \begin{tabular}{c}
    Chern character 7-form of compatibly
    \\
    $\mathrm{Sp}(2)$-twisted 4\&7-Cohomotopy
  \end{tabular}
  }
  \hspace{-.3cm}
  &
  $
  \mathclap{\phantom{a^{\vert^{\vert^{\vert}}}}}
  \widetilde G_7 \;=\; G_7 + \tfrac{1}{2}H_3 \wedge \widetilde G_4
  \mathclap{\phantom{a_{\vert_{\vert_{\vert}}}}}
  $
  &
  \eqref{PageCharge7Form}
  &
  \multirow{2}{*}{
  \hspace{-.4cm}
  \begin{tabular}{c}
    $\mathclap{\phantom{a^{\vert^{\vert^{\vert}}}}}$
    Page charge
    \\
    $\mathclap{\phantom{a_{\vert_{\vert_{\vert}}}}}$
    \cite[(8)]{Page83}\cite[(43)]{DuffStelle91}\cite{Moore05}
  \end{tabular}
  \hspace{-.4cm}
  }
  \\
  \cline{3-4}
  &
  &
  $
  \mathclap{\phantom{a^{\vert^{\vert^{\vert}}}}}
  d\, \tilde G_7
    \;=\;
  - \tfrac{1}{2}\rchi_{8}(\nabla)
  \mathclap{\phantom{a_{\vert_{\vert_{\vert}}}}}
  $
  &
  \eqref{PageChargeFormIsClosed}
  &
  \\
  \cline{3-5}
  &
  &
  $
  \mathclap{\phantom{a^{\vert^{\vert^{\vert}}}}}
  2\int_{{}_{S^7}} i^\ast \widetilde G_7 \;\in\; \mathbb{Z}
  \mathclap{\phantom{a_{\vert_{\vert_{\vert}}}}}
  $
  &
  \eqref{HalfIntegral7Flux}
  &
  \hspace{-.3cm}
  \begin{tabular}{c}
    $\mathclap{\phantom{a^{\vert^{\vert^{\vert}}}}}$
    Level quantization of Hopf-WZ term
    \\
    $\mathclap{\phantom{a_{\vert_{\vert_{\vert}}}}}$
    \cite{Intriligator00}
  \end{tabular}
  \hspace{-.3cm}
  \\
  \hline
  \hline
  \hspace{-.2cm}
  \cref{M2BraneTadpoleCancellation}
  \hspace{-.5cm}
  &
  \hspace{-.4cm}
  \begin{tabular}{c}
    $\mathclap{\phantom{a^{\vert^{\vert^{\vert}}}}}$
    Integrated Chern character of compatibly
    \\
    $\mathclap{\phantom{a_{\vert_{\vert_{\vert}}}}}$
    $\mathrm{Sp}(2)$-twisted 4\&7-Cohomotopy
  \end{tabular}
  \hspace{-.4cm}
  &
  $
  \mathclap{\phantom{a^{\vert^{\vert^{\vert}}}}}
  N_{M2} \;=\; - I_8
  \mathclap{\phantom{a_{\vert_{\vert_{\vert}}}}}
  $
  &
  \eqref{NIsI}
  &
  \begin{tabular}{c}
    C-field tadpole cancellation
    \\
    \cite{SethiVafaWitten96}
  \end{tabular}
  \\
  \hline
\end{tabular}

}

\vspace{.2cm}

 ~~~~~~~~~~~~~~~~{\bf Table 1 --}{\it  Implications of C-field charge quantization in J-twisted Cohomotopy.}

\newpage

\vspace{-2mm}
\paragraph{Organization of the paper.}

\vspace{-1mm}
\begin{list}{--}{}
\vspace{-3mm}
\item In \cref{IntroductionAndSurvey} we
survey our constructions and results.
\vspace{-3mm}
\item  In  \cref{Cohomotopy} we introduce twisted Cohomotopy
  theory, and prove some fundamental facts about it.
\vspace{-3mm}
\item
In \cref{CancellationFromCohomotopy} we use these results
to explains and prove the statements in \hyperlink{Table1}{\it Table 1}.
\vspace{-3mm}
\item
In \cref{Conclusions} we comment on background and implications.
\end{list}

\medskip

\noindent {\bf Generalized abelian cohomology.} Before we start,
we briefly say a word on ``generalized'' cohomology theories,
recalling some basics, but in a broader perspective:
The \emph{ordinary cohomology groups}
$X \mapsto H^\bullet(X,\mathbb{Z})$
famously satisfy a list of nice properties, called the
\emph{Eilenberg-Steenrod axioms}.
Dropping just one of these axioms
(the \emph{dimension axiom})
yields a larger class of possible abelian group assignments
$X \mapsto E^\bullet(X)$, often called
\emph{generalized cohomology theories}. One example are
the \emph{complex topological K-theory groups}
$X \mapsto \mathrm{KU}^\bullet(X)$.

\medskip
By the \emph{Brown representability theorem}, every
generalized cohomology theory in this sense has a
\emph{classifying space} $E_n$ for each degree, such that
the $n$-th cohomology group is equivalently the set
of homotopy classes of maps into this space:
\footnote{Here and in the following, a dashed arrow indicates a map
representing a cocycle that can be freely choosen, as opposed to
solid arrows indicating fixed structure maps.}

\vspace{-.6cm}

\begin{equation}
  \label{StableCohomologyViaClassifyingSpaces}
  \;\;\;\;\;\;\;\;\;\;\;\;\;\;\;\;\;\;\;\;\;\;\;
  \mbox{
    \tiny
    \color{blue}
    \begin{tabular}{c}
      Generalized abelian
      \\
      cohomology theory
    \end{tabular}
  }
  \;\;\;
  E^n(X)
  \;\;
  \overset{
    \mathclap{
    \mbox{
      \tiny
      \color{blue}
      \begin{tabular}{c}
        Brown's
        \\
        representability
        \\
        theorem
        \\
        $\phantom{a}$
      \end{tabular}
    }
    }
  }{
    \simeq
  }
  \;\;
  \Big\{
    \xymatrix{
      X
      \ar@{-->}[rrr]^-{
        \mbox{
          \tiny
          \color{blue}
          continuous function
        }
      }_-{
        \mbox{
          \tiny
          \color{blue}
          = cocycle in $E$-theory
        }
      }
      &&&
      E_n
    }
  \Big\}_{\!\!\!\big/\sim_{{}_{\mathrm{homotopy}}}}
    \mbox{
  }\!.
 \end{equation}
For example, ordinary cohomology theory has as classifying spaces
the \emph{Eilenberg-MacLane spaces} $K(\mathbb{Z},n)$,
while complex topological K-theory in degree 1 is classified by the
space underlying the stable unitary group.

\medskip
For generalized cohomology theories in this sense of
Eilenberg-Steenrod, Brown's representability theorem
translates the \emph{suspension axiom}
into the statement that
the classifying spaces $E_n$ in \eqref{StableCohomologyViaClassifyingSpaces}
are loop spaces of each other,
$E_n \simeq \Omega E_{n + 1}$,
and thus organize into a sequence of classifying spaces
$(E_n)_{n \in \mathbb{N}}$ called a \emph{spectrum}.
The fact that each space in a spectrum is thereby an
infinite loop space makes it behave like a
homotopical \emph{abelian} group
(since higher-dimensional loops may
be homotoped and hence commuted around each other,
by the Eckmann-Hilton argument).

\medskip

\noindent {\bf Generalized non-abelian cohomology.}
We highlight the fact that not all cohomology theories are abelian.
The classical example, for $G$ any non-abelian Lie group,
is the \emph{first non-abelian cohomology}
$X \mapsto H^1\big(X, G\big)$, defined on
any manifold $X$ as the first  {\v C}ech cohomology of $X$ with coefficients
in the sheaf of $G$-valued functions.
Nevertheless, this non-abelian cohomology theory also
has a classifying space, called $B G$, and in terms of this
it is given exactly as the abelian generalized cohomology theories
in \eqref{StableCohomologyViaClassifyingSpaces}:
\begin{equation}
  \label{NonabelianCohomologyViaClassifyingSpaces}
  \mbox{
    \tiny
    \color{blue}
    \begin{tabular}{c}
      Degree-1 non-abelian
      \\
      cohomology theory
    \end{tabular}
  }
  \;\;\;
  H^1(X,G)
  \;\;
  \overset{
    \mathclap{
    \mbox{
      \tiny
      \color{blue}
      \begin{tabular}{c}
        principal bundle
        \\
        theory
      \end{tabular}
    }
    }
  }{\;\;
    \simeq
  }
  \;\;\;\;
  \Big\{
    \xymatrix{
      X
      \ar@{-->}[rrr]^-{
        \mbox{
          \tiny
          \color{blue}
          continuous function
        }
      }_-{
        \mbox{
          \tiny
          \color{blue}
          = cocycle
        }
      }
      &&&
      B G
    }
  \Big\}_{\!\!\!\big/\sim_{{}_{\mathrm{homotopy}}}}
  \,.
\end{equation}
Hence the joint generalization of generalized abelian cohomology theory
\eqref{StableCohomologyViaClassifyingSpaces} and non-abelian 1-cohomology
theories \eqref{NonabelianCohomologyViaClassifyingSpaces} are assignments
of homotopy classes of maps into \emph{any} coefficient space $A$
\begin{equation}
  \label{NonAbelianGeneralizedCohomology}
  \mbox{
    \tiny
    \color{blue}
    \begin{tabular}{c}
      Non-abelian generalized
      \\
      cohomology theory
    \end{tabular}
  }
  \;\;\;
  H(X,A)
  \;\;:=\;\;
  \Big\{
    \xymatrix{
      X
      \ar@{-->}[rrr]^-{
        \mbox{
          \tiny
          \color{blue}
          continuous function
        }
      }_-{
        \mbox{
          \tiny
          \color{blue}
          = cocycle
        }
      }
      &&&
      A
    }
  \Big\}_{\!\!\!\big/\sim_{{}_{\mathrm{homotopy}}}}
  \,.
\end{equation}
All this may naturally be further generalized from topological spaces
to higher stacks.  In the literature of this
broader context the perspective of
non-abelian generalized cohomology is more familiar.
But it applies to the topological situation as the easiest
special case, and this is the case with which we are concerned for the
present purpose.

\medskip

\noindent {\bf Higher principal bundles.}
This way, the classical statement \eqref{NonabelianCohomologyViaClassifyingSpaces}
of principal bundle theory finds the following elegant homotopy-theoretic generalization.
For every \emph{connected} space $A$, its based loop space $G := \Omega A$ is a higher
homotopical group under concatenation of loops (an ``$\infty$-group''). Moreover,
$A$ itself is equivalently the classifying space for that higher group:
\vspace{-2mm}
\begin{equation}
  \label{SpaceIsClassifyingSpaceOfItsLoopGroup}
  \overset{
    \mathclap{
    \mbox{
      \tiny
      \color{blue}
      \begin{tabular}{c}
        \emph{Every}
        \\
        connected
        \\
        space...
        \\
        $\phantom{a}$
      \end{tabular}
    }
    }
  }{
    A
  }
  \;\;\;\simeq\;\;\;
  \overset{
    \mathclap{
    \mbox{
      \tiny
      \color{blue}
      \begin{tabular}{c}
        ...is the
        \\
        classifying
        \\
        space...
        \\
        $\phantom{a}$
      \end{tabular}
    }
    }
  }{
    B
  }
  \;
  \underset{
    \mathclap{
    \mbox{
      \tiny
      \color{blue}
      \begin{tabular}{c}
        ...of its
        \\
        loop group.
      \end{tabular}
    }
    }
  }{
    \;\;
    \overset{G}{
      \overbrace{\Omega A}
    }
    \;
  }
\end{equation}

\vspace{-1mm}
\noindent in that non-abelian $G$-cohomology in degree 1 classifies
higher homotopical $G$-principal bundles:
\vspace{-1mm}
\begin{equation}
  \label{ClassificationOfGBundles}
  \begin{array}{ccc}
    \mathllap{
      H(X, BG)
      \;=\;
    }
    \overset{
      \mbox{
        \tiny
        \color{blue}
        \begin{tabular}{c}
          non-abelian
          \\
          $G$-cohomology
        \end{tabular}
      }
    }{
      H^1(X, G)
    }
    &
    \xrightarrow{\;\;\;\;\;\simeq\;\;\;\;\;}
    &
    \overset{
      \mathclap{
      \mbox{
        \tiny
        \color{blue}
        \begin{tabular}{c}
          higher homotopical
          \\
          $G$-principal bundles
        \end{tabular}
      }
      }
    }{
      G \mathrm{Bundles}(X)_{\!\!\!\big/ \sim}
    }
    \\
        \big[
      X
        \overset{
          \overset{
            \mbox{
              \tiny
              \color{blue}
              cocycle
            }
          }{c}
        }{\longrightarrow}
      B G
    \big]
    &\longmapsto&
    \left[
    \raisebox{19pt}{
    \;\;\;\;\;
    \xymatrix@C=3.5em@R=1.4em{
      \overset{
        \mathclap{
        \mbox{
          \tiny
          \color{blue}
          \begin{tabular}{c}
            $G$-principal bundle
            \\
            classified
            by $c$
          \end{tabular}
        }
        }
      }{
        P
      }
      \ar[rr]
      \ar@{}[drr]|-{
        \mbox{
          \tiny
          \color{blue}
          \begin{tabular}{c}
            homotopy
            \\
            pullback
          \end{tabular}
        }
      }
      \ar[d]|-{
        \mathllap{
          c^\ast(p_{{}_{B G}})
        }
      }
      &&
      \overset{
        \mathclap{
        \mbox{
          \tiny
          \color{blue}
          \begin{tabular}{c}
            universal
          \\
            $G$-principal
                        bundle
          \end{tabular}
        }
        }
      }{
        G \!\sslash\! G
      }
      \ar[d]^{ p_{{}_{B G}} }
      \\
      X
      \ar[rr]_-{
        \underset{
          \mathclap{
          \mbox{
           \tiny
           \color{blue}
           \begin{tabular}{c}
             classifying map
                         for $P$
           \end{tabular}
          }
          }
        }{c}
      }
      &&
      B G
    }
    }
    \;\;
    \right]
  \end{array}
\end{equation}

\noindent {\bf Cohomotopy cohomology theory.}
The primordial example of a non-abelian generalized
cohomology theory \eqref{NonAbelianGeneralizedCohomology} is
\emph{Cohomotopy cohomology theory},
denoted $\pi^\bullet$. By definition, its classifying spaces
are simply the $n$-spheres $S^n$:
\begin{equation}
  \label{IntroCohomotopy}
  \mbox{
    \tiny
    \color{blue}
    \begin{tabular}{c}
      Cohomotopy
      \\
      cohomology theory
    \end{tabular}
  }
  \;\;\;
  \pi^n(X)
  \;\;:=\;\;
  \Big\{
    \xymatrix{
      X
      \ar@{-->}[rrr]^-{
        \mbox{
          \tiny
          \color{blue}
          continuous function
        }
      }_-{
        \mbox{
          \tiny
          \color{blue}
          = cocycle
        }
      }
      &&&
      S^n
    }
  \Big\}_{\!\!\!\big/\sim_{{}_{\mathrm{homotopy}}}}
  \,.
\end{equation}
Since the $(n \geq 1)$-spheres are connected,
the equivalence \eqref{SpaceIsClassifyingSpaceOfItsLoopGroup}
applies and says that Cohomotopy theory is equivalently
non-abelian 1-cohomology for the loop groups
of spheres $G := \Omega S^n$:
$$
  \underset{
    \mathclap{
    \mbox{
      \tiny
      \color{blue}
      \begin{tabular}{c}
        Cohomotopy
        \\
        theory
      \end{tabular}
    }
    }
  }{
    \pi^n(X)
  }
  \;\;\simeq\;\;
  \underset{
    \mathclap{
    \mbox{
      \tiny
      \color{blue}
      \begin{tabular}{c}
        non-abelian 1-cohomology
        \\
        for sphere loop group
      \end{tabular}
    }
    }
  }{
    H^1(X, \Omega S^n)
  }
  \,.
$$
Evaluated on spaces which are themselves spheres, Cohomotopy cohomology
theory gives the (unstable!)
\emph{homotopy groups of spheres}, the ``vanishing point'' of algebraic topology:
$$
  \mbox{
    \tiny
    \color{blue}
    \begin{tabular}{c}
      $n$-cohomotopy groups
      \\
      of $k$-sphere
    \end{tabular}
  }
  \;\;\;
  \pi^n\big( S^k \big)
  \;\;\;\simeq\;\;\;
  \big\{
    \xymatrix{
      S^k
      \ar@{-->}[r]
      &
      S^n
    }
  \big\}_{\mathrlap{\!\!\!\big/\sim}}
  \;\;\;\simeq\;\;\;
  \pi_k\big( S^n \big)
  \;\;\;
  \mbox{
    \tiny
    \color{blue}
    \begin{tabular}{c}
      $k$-homotopy groups
      \\
      of $n$-sphere
    \end{tabular}
  }
  $$
A whole range of classical theorems in differential topology
all revolve around characterizations of Cohomotopy sets,
even if this is not often fully brought out in the terminology.

\medskip
\noindent {\bf The quaternionic Hopf fibration.}
A notable example, for the following purpose,
of a class in the Cohomotopy group of spheres,
is given by the
\emph{quaternionic Hopf fibration}
\begin{equation}
  \xymatrix@C=3.5em{
    S^7
    \ar@/^1.6pc/[rrrrr]^-{
      \overset{
       \mbox{
         \tiny
         \color{blue}
         \begin{tabular}{c}
           quaternionic Hopf fibration
         \end{tabular}
       }
      }{
        h_{\mathbb{H}}
      }
    }
    \ar@{}[r]|-{\simeq}
    &
    \underset{
      \mathclap{
      \mbox{
        \tiny
        \color{blue}
        \begin{tabular}{c}
          unit sphere
          \\
          in quaternionic
          \\
          2-space
        \end{tabular}
      }
      }
    }{
      S(\mathbb{H}^2)
    }
    \ar[rrr]_-{
      (q_1,q_2)
     \, \mapsto \,
      [q_1 : q_2]
    }
    &&&
    \underset{
      \mathclap{
      \mbox{
        \tiny
        \color{blue}
        \begin{tabular}{c}
          quaternionic
          \\
          projective
          \\
          1-space
        \end{tabular}
      }
      }
    }{
      \mathbb{H}P^1
    }
    \ar@{}[r]|-{\simeq}
    &
    S^4
  }
  \,,
\end{equation}
which represents a
generator of the non-torsion subgroup in the 4-Cohomotopy of
the 7-sphere, as shown on the left here:
\begin{equation}
  \label{QuaternionicHopfFibration}
  \xymatrix@R=7pt{
    \mbox{
      \tiny
      \color{blue}
      \begin{tabular}{c}
        $\phantom{a}$
        \\
        quaternionic
        \\
        Hopf fibration
      \end{tabular}
    }
    &
    \mathclap{
      [S^7 \overset{h_{\mathbb{H}}}{\to} S^4]
    }
    \ar@{}[d]|-{
      \mbox{
        \begin{rotate}{-90}
          $\!\!\!\!\!\mapsto$
        \end{rotate}
      }
    }
    \ar@{}[r]|>>>>{\in}
    &
    \overset
    {
      \mathclap{
      \mbox{
        \tiny
        \color{blue}
        \begin{tabular}{c}
          non-abelian/unstable
          \\
          Cohomotopy group
          \\
          $\phantom{a}$
        \end{tabular}
      }
      }
    }
    {
      \pi^4(S^7)
    }
    \ar@{}[d]|-{
      \mbox{
        \begin{rotate}{-90}
          $\!\!\!\simeq$
        \end{rotate}
      }
    }
    \ar[rrr]^-{
      \overset{
        \mbox{
          \tiny
          \color{blue}
          \begin{tabular}{c}
            stabilization
            \\
            $\phantom{a}$
          \end{tabular}
        }
      }{
        \Sigma^\infty
      }
    }
    &&&
    \overset{
      \mathclap{
      \mbox{
        \tiny
        \color{blue}
        \begin{tabular}{c}
          abelian/stable
          \\
          Cohomotopy group
          \\
          $\phantom{a}$
        \end{tabular}
      }
      }
    }{
      \;\mathbb{S}^4(S^7)\;
    }
    \ar@{}[d]|-{
      \mbox{
        \begin{rotate}{-90}
          $\!\!\!\simeq$
        \end{rotate}
      }
    }
    \ar@{}[r]|<{\ni}
    &
    \mathclap{
      \;\;
      \Sigma^\infty[S^7 \overset{h_{\mathbb{H}}}{\to} S^4]
    }
    \ar@{}[d]|-{
      \!\!\!\!\!
      \mbox{
        \begin{rotate}{-90}
          $\!\!\!\!\!\mapsto$
        \end{rotate}
      }
    }
    &
    \mbox{
      \tiny
      \color{blue}
      \begin{tabular}{c}
        stabilized
        \\
        quaternionic
        \\
        Hopf fibration
      \end{tabular}
    }
    \\
    \mbox{
      \tiny
      \color{blue}
      \begin{tabular}{c}
        non-torsion
        \\
        generator
      \end{tabular}
    }
    &
    (1,0)
    \ar@{}[r]|>>>{ \in }
     &
    \mathbb{Z} \times \mathbb{Z}_{12}
    \ar[rrr]_-{ (n,a) \, \mapsto \,  (n \,\,\mathrm{mod}\,\, {\color{magenta}24}) }
    &&&
    \mathbb{Z}_{\color{magenta}24}
    \ar@{}[r]|<<<<{\ni}
    &
    1
    &
    \mbox{
      \tiny
      \color{blue}
      \begin{tabular}{c}
        torsion
        \\
        generator
      \end{tabular}
    }
  }
\end{equation}
Shown on the right is the abelian approximation to non-abelian Cohomotopy
cohomology theory, called \emph{stable Cohomotopy theory}
and represented, via \eqref{StableCohomologyViaClassifyingSpaces},
by the \emph{sphere spectrum} $\mathbb{S}$,  whose component spaces
are the infinite-loop space completions of the $n$-spheres:
$\mathbb{S}_n \simeq \Omega^\infty \Sigma^\infty S^n$.
Crucially, in this approximation, the quaternionic Hopf fibration becomes
a torsion generator: non-abelian 4-Cohomotopy witnesses integer cohomology groups
not only in degree 4, but also in degree 7; but when seen in the abelian/stable
approximation this ``extra degree'' fades away and leaves only
a torsion shadow behind.
From the perspective, composition with the quaternionic Hopf fibration
can be viewed as a transformation that translates classes in degree-7 Cohomotopy
to classes in degree-4 Cohomotopy:
\begin{equation}
  \label{hHSendsDegree7ToDegree4}
  \raisebox{20pt}{
  \xymatrix{
    &&
    S^7
    \ar[d]^-{ h_{\mathbb{H}} }
    &
    \mbox{
      \tiny
      \color{blue}
      7-Cohomotopy
    }
    \ar@{}[d]|-{
      \mbox{
        \tiny
        \color{blue}
        reflects into
      }
    }
    &
    \pi^7(X)
    \ar[d]^-{ (h_{\mathbb{H}})_\ast   }
    \\
    X
    \ar@{-->}[urr]^-{c}
    \ar@{-->}[rr]_-{  (h_{\mathbb{H}})_\ast(c) }
    &&
    S^4
    &
    \mbox{
      \tiny
      \color{blue}
      4-Cohomotopy
    }
    &
    \pi^4(X)
  }
  }
\end{equation}

\medskip

\noindent {\bf Twisted non-abelian generalized cohomology.}
Regarding generalized cohomology theory as homotopy theory of
classifying spaces \eqref{NonAbelianGeneralizedCohomology}
makes  transparent the concept of \emph{twistings} in cohomology theory:
Instead of mapping into a fixed classifying spaces, a
\emph{twisted cocycle} maps into a varying classifying space that may twist and
turn as one moves in the domain space.
In other words, a \emph{twisting} $\tau$ of $A$-cohomology theory
on some $X$ is a bundle over $X$ with typical fiber $A$,
and a $\tau$-twisted cocycle is a \emph{section} of that bundle:
\begin{equation}
  \label{TwistedNonAbelianGeneralizedCohomology}
\hspace{-1cm}
  \begin{aligned}
  \mbox{
    \tiny
    \color{blue}
    \begin{tabular}{c}
      $\tau$-twisted
      \\
      non-abelian generalized
      \\
      $A$-cohomology theory
    \end{tabular}
  }
  \;
  A^\tau(X)
  & \;:=\;
  \left\{\!\!\!\!\!\!\!
    \raisebox{22pt}{
    \xymatrix@C=3em@R=1.5em{
      &
      \overset{
        \mathclap{
        \mbox{
          \tiny
          \color{blue}
          \begin{tabular}{c}
            $\phantom{a}$
            \\
            $A$-fiber bundle
          \end{tabular}
        }
        }
      }{
        P
      }
      \ar[d]^-p
      \ar[rr]
      &&
      \overset{
        \mathclap{
        \mbox{
          \tiny
          \color{blue}
          \begin{tabular}{c}
            universal
            \\
            $A$-fiber bundle
          \end{tabular}
        }
        }
      }{
      A \!\!\sslash\! \mathrm{Aut}(A)
      }
      \ar[d]
      \\
      X
      \ar@/^1.04pc/@{-->}[ur]^{
        \mbox{
          \tiny
          \color{blue}
          \begin{tabular}{c}
            continuous
            section
            \\
            =
            twisted cocycle
          \end{tabular}
        }
      }
      \ar@{=}[r]
      &
      X
      \ar[rr]^-{\tau}_-{
               \mathclap{
          \mbox{
            \tiny
            \color{blue}
            \begin{tabular}{c}
              classifying map
                            for $P$
            \end{tabular}
          }
          }
        }
      &&
      B \mathrm{Aut}(A)
    }
    }
  \right\}_{\!\!\!\!
       \Big/
    \sim_{
      {}_{
        \frac{
          \mathrm{homotopy}
        }
        {
          B \mathrm{Aut}(A)
        }
      }
    }
  }
  \\
  & \;\simeq\;
  \left\{
    \raisebox{22pt}{
    \xymatrix@C=4em{
      X
      \ar[dr]_-{
        \mathllap{
          \mbox{
            \tiny
            \color{blue}
            twist
          }
        }
        \;
        \tau
      }^>>>>{\ }="t"
      \ar@{-->}[rr]^-{
        \mbox{
          \tiny
          \color{blue}
          continuous function
        }
      }_-{\ }="s"
      &&
      A \!\!\sslash\! \mathrm{Aut}(A)
      \ar[dl]
      \\
      &
      B \mathrm{Aut}(A)
      \ar@{=>}^{
        \mbox{
          \begin{rotate}{48}
            \!\!\!\!\!\!\!\!\!
            \mbox{
              \tiny
              \color{blue}
              homotopy
            }
          \end{rotate}
        }
      } "s"; "t"
    }
    }
  \right\}_{
    \!\!\!\!
    \Big/
    \sim_{
      {}_{
        \frac{
          \mathrm{homotopy}
        }
        {
          B \mathrm{Aut}(A)
        }
      }
    }
  }
  \end{aligned}
  \end{equation}
Here the equivalent formulation shown in the second line
follows because $A$-fiber bundles are themselves classified by
nonabelian $\mathrm{Aut}(A)$-cohomology,
as shown on the right of the first line (due to \eqref{ClassificationOfGBundles}).

\medskip

\noindent{\bf Twisted Cohomotopy theory.}
For the example \eqref{IntroCohomotopy} of Cohomotopy cohomology theory
in degree $d-1$ there is a canonical twisting on
Riemannian $d$-manifolds, given by the unit sphere bundle in the orthogonal tangent bundle:

\begin{equation}
  \label{JTwistedCohomotopy}
  \hspace{-1cm}
  \begin{aligned}
  \mbox{
    \tiny
    \color{blue}
    \begin{tabular}{c}
      J-twisted
      \\
      Cohomotopy theory
    \end{tabular}
  }
  \;
  \pi^{{}^{T X^d}}(X^d)
  & \;:=\;
  \left\{\!\!\!\!\!\!\!\!
    \raisebox{20pt}{
    \xymatrix@C=3em@R=1.3em{
      &
      \overset{
        \mathclap{
        \mbox{
          \tiny
          \color{blue}
          \begin{tabular}{c}
            tangent
            \\
            unit sphere bundle
          \end{tabular}
        }
        }
      }{
        S(T X^d)
      }
      \ar[d]^-p
      \ar[rr]
      &&
      \overset{
        \mathclap{
        \mbox{
          \tiny
          \color{blue}
          \begin{tabular}{c}
            universal tangent
            \\
            unit sphere bundle
          \end{tabular}
        }
        }
      }{
      S^{d-1}
        \!\!\sslash\!
      \mathrm{O}(d)
      }
      \ar[d]
      \\
      X
      \ar@/^1.04pc/@{-->}[ur]^-{
        \mbox{
          \tiny
          \color{blue}
          \begin{tabular}{c}
            continuous
            section
            \\
            =
            twisted cocycle
          \end{tabular}
        }
      }
      \ar@{=}[r]
      &
      X
      \ar[rr]^-{T X^d}_-{
                \mathclap{
          \mbox{
            \tiny
            \color{blue}
            \begin{tabular}{c}
              classifying map of
              \\
              tangent/frame
              bundle
            \end{tabular}
          }
          }
        }
      &&
      B \mathrm{O}(d)
    }
    }
  \right\}_{
    \!\!\!\!
    \Big/
    \sim_{
      {}_{
        \frac{
          \mathrm{homotopy}
        }
        {
          B \mathrm{O}(d)
        }
      }
    }
  }
  \\
 & \;\simeq\;
 \left\{
    \;\;
    \raisebox{20pt}{
    \xymatrix@C=4em{
      X
      \ar[dr]_-{
        \underset{
          \mbox{
            \tiny
            \color{blue}
            \begin{tabular}{c}
              twist
            \end{tabular}
          }
        }{
          T X^{\mathrlap{d}}
        }
      }^>>>>{\ }="t"
      \ar@{-->}[rr]^-{
        \mbox{
          \tiny
          \color{blue}
          continuous function
        }
      }_-{\ }="s"
      &&
      S^{d-1}
        \!\!\sslash\!
      \mathrm{O}(d)
      \ar[dl]
      \\
      &
      B \mathrm{O}(d)
      \ar@{=>}^{
        \mbox{
          \begin{rotate}{48}
            \!\!\!\!\!\!\!\!\!
            \mbox{
              \tiny
              \color{blue}
              homotopy
            }
          \end{rotate}
        }
      } "s"; "t"
    }
    }
  \right\}_{
    \!\!\!
    \Big/
    \sim_{
      {}_{
        \frac{
          \mathrm{homotopy}
        }
        {
          B \mathrm{O}(d)
        }
      }
    }
  }
  \end{aligned}
\end{equation}
Since the canonical morphism $\mathrm{O}(d) \longrightarrow \mathrm{Aut}(S^{d-1})$
is known as the \emph{J-homomorphism}, we may call this
\emph{J-twisted Cohomotopy theory}, for short.

\medskip

\noindent {\bf Compatibly J-twisted Cohomotopy in degrees 4 \& 7.}
In view of \eqref{hHSendsDegree7ToDegree4}
it is natural to ask for the maximal subgroup $G \subset \mathrm{O}(8)$
for which the quaternionic Hopf fibration is equivariant,
so that its homotopy quotient $h_{\mathbb{H}} \sslash G$
exists and serves as a map of $G$-\emph{twisted} Cohomotopy theories
\eqref{JTwistedCohomotopy} from degree 7 and 4.
This subgroup turns out to be the central product of the quaternion unitary
groups $\mathrm{Sp}(n)$ for $n = 1,2$:

\vspace{-.5cm}

\begin{equation}
  \label{SP1Sp2TwistingOfhH}
  \hspace{-3cm}
  \underset{
    \mathclap{
    \mbox{
      \tiny
      \color{blue}
      \begin{tabular}{c}
        central product of
        \\
        quaternion-unitary groups
      \end{tabular}
    }
    }
  }{
    \mathrm{Sp}(2)
    \cdot
    \mathrm{Sp}(1)
  }
  \;\subset\;
  \mathrm{O}(8)
  \;\;\;
  \mbox{is maximal subgroup s.t. }
  \;\;
  \raisebox{44pt}{
  \xymatrix@C=4pt{
    &
    S^7 \!\!\sslash\! \mathrm{Sp}(2) \cdot \mathrm{Sp}(1)
    \ar[dl]
    \ar[dd]|-{
      \;\;\;\;\;\;\;\;
      {
        \color{brown}
        h_{\mathbb{H}} \sslash \mathrm{Sp}(2)\cdot\mathrm{Sp}(1)
      }
      \mathrlap{
      \!\!\!
      \mbox{
        \tiny
        \color{blue}
        \begin{tabular}{c}
          universally twisted
          \\
          quaternionic Hopf fibration
        \end{tabular}
      }
      }
    }
    \\
    B
    \big(
      \mathrm{Sp}(2) \cdot \mathrm{Sp}(1)
    \big)
    \\
    &
    S^4 \!\!\sslash\! \mathrm{Sp}(2)\cdot \mathrm{Sp}(1)
    \ar[ul]
  }
  }
 \end{equation}
In other words, J-twisted Cohomotopy \eqref{JTwistedCohomotopy}
exists compatibly in degrees 4 \& 7 precisely on those 8-manifolds which carry
topological $\mathrm{Sp}(2)\cdot \mathrm{Sp}(1)$-structure, i.e., whose structure
group of the tangent bundle is equipped with a reduction along
$\mathrm{Sp}(2)\cdot \mathrm{Sp}(1)\hookrightarrow \mathrm{O}(8)$.
This reduction is equivalent to a factorization of the classifying map
as shown on the left below, with some cohomological consequences
shown on the right:
\begin{equation}
  \label{SP1Sp2Structure}
  \hspace{-1cm}
  \raisebox{36pt}{
  \xymatrix@R=14pt@C=40pt{
    \overset{
      \mathclap{
      \mbox{
        \tiny
        \color{blue}
        \begin{tabular}{c}
        \end{tabular}
        \;\;\;\;\;\;\;\;\;\;\;\;\;\;\;
        }
      }
    }{
      X^8
    }
    \ar@{-->}[dr]^-{\tau}_<<<<{\ }="s2"
    \ar[d]_{
      \mathllap{
        \mbox{
          \tiny
          \color{blue}
          \begin{tabular}{c}
            tangent
            \\
            bundle
          \end{tabular}
        }
      }
      T X^8
    }^>>>>{\ }="t2"
    \\
    \underset{
      \mathclap{
      \mbox{
        \tiny
        \color{blue}
        \begin{tabular}{c}
          $\phantom{a}$
          \\
          classifying space of
          \\
          orthogonal structure
        \end{tabular}
      }
      }
      \;\;\;\;
    }{
      B \mathrm{O}(8)
    }
    &
    \underset{
      \mathclap{
      \;\;\;
      \mbox{
        \tiny
        \color{blue}
        \begin{tabular}{c}
          $\phantom{a}$
          \\
          classifying space of
          \\
          $\mathrm{Sp}(2)\cdot\mathrm{Sp}(1)$-twists
        \end{tabular}
      }
      }
    }{
      B \big( \mathrm{Sp}(2)\cdot \mathrm{Sp}(1) \big)
    }
    \ar[l]
    \ar@{=>}^-{
      \begin{rotate}{-30}
      \!\!\!\!
      \mbox{
        \tiny
        \color{brown}
        \begin{tabular}{c}
          $\mathrm{Sp}(2)\cdot\mathrm{Sp}(1)$
          \\
          -structure
        \end{tabular}
      }
      \end{rotate}
    }
      "s2"; "t2"
  }
  }
  \;\;\;
    \Longrightarrow
  \;\;\;
  \left\{
  \begin{array}{l}
    \tfrac{1}{\color{magenta}24}
    \overset{
      \mathclap{
      \mbox{
        \tiny
        \color{blue}
        \begin{tabular}{c}
          Euler
          \\
          class
          \\
          $\phantom{a}$
        \end{tabular}
      }
      }
    }{
      \rchi_8
    }
      \;=\;
      I_8
      \;:=\;
      \tfrac{1}{48}
      \overset{
        \!\!
        \mbox{
          \tiny
          \color{blue}
          \begin{tabular}{c}
            Pontrjagin
            classes
            \\
            $\phantom{a}$
s          \end{tabular}
        }
      }{
      \big(
        p_2
        -
        \tfrac{1}{4} (p_1)^2
      \big)
      }
      \\
      $\phantom{a}$
      \\
      \big(
        H^2(X^8, \mathbb{Z}_2) = 0
      \big)
      \;\Rightarrow\;
     \;\; (
        \underset{
          \mathclap{
          \mbox{
            \tiny
            \color{blue}
            \begin{tabular}{cc}
              $\phantom{a}$ &
              \\
              $\phantom{a}$ &
            \\
 &                Stiefel-Whitney
              \\
   &           class
            \end{tabular}
          }
          }
        }{
          w_6
        }
        = 0
      )
      \;\Rightarrow\;
      (
        \underset{
          \mathclap{
          \mbox{
            \tiny
            \color{blue}
            \begin{tabular}{cc}
              $\phantom{a}$ &
              \\
      &        integral
              \\
        &      Stiefel-Whitney
              \\
        &      class
            \end{tabular}
          }
          }
        }{
          W_7
        }
        = 0
      )
  \end{array}
  \right.
\end{equation}

\medskip

\noindent {\bf J-Twisted Cohomotopy and Topological $G$-Structure.}
For every topological coset space realization $G/H$ of an $n$-sphere,
there is a canonical homotopy equivalence between
the classifying spaces for $G$-twisted Cohomotopy
and for topological $H$-structure
(i.e., reduction of the structure group to $H$), as follows:
$$
  \overset{
    \mbox{
      \tiny
      \color{blue}
      \begin{tabular}{c}
        coset space structure
        \\
        on topological $n$-sphere
        \\
        $\phantom{a}$
      \end{tabular}
    }
  }{
    S^n
    \;
      \underset{
        \mbox{\tiny homeo}
      }{
        \simeq
      }
    \;
    G/H
  }
  \;\;\;\;
  \Rightarrow
  \;\;\;\;
  \overset{
    \mbox{
      \tiny
      \color{blue}
      \begin{tabular}{c}
        $G$-twisted Cohomotopy /
        \\
        topological $H$-structure
        \\
        $\phantom{a}$
      \end{tabular}
    }
  }{
    S^n \!\!\sslash\! G
    \;
      \underset{
        \mbox{\tiny htpy}
      }{
        \simeq
      }
    \;
    B H
  }
 \!\!\!\! .
$$
(One may think of this as ``moving $G$ from numerator on the right to denominator on the left''.)

In particular, on Spin 8-manifolds we have the following equivalences
between J-twisted Cohomotopy cocycles \eqref{JTwistedCohomotopy}
and topological $G$-structures:
\begin{equation}
  \label{CohomotopyAndSpin7}
  \hspace{-5mm}
  \begin{array}{r}
    S^7 \!\!\sslash \!\mathrm{Spin}(8)
    \\
    \simeq
    \;
    B \, \mathrm{Spin}(7)
  \end{array}
  \;\;\;\;\;\;\;
  \Longrightarrow
  \;\;\;\;\;\;\;
  \left\{
    \raisebox{31pt}{
  \xymatrix@C=5em{
    &
    \overset{
      \mathclap{
      \mbox{
       \tiny
       \color{blue}
       \begin{tabular}{c}
         classifying space
         \\
         for J-twisted
         \\
         Cohomotopy theory
         \\
         $\phantom{a}$
       \end{tabular}
      }
      }
    }{
      S^7 \!\!\sslash\! \mathrm{Spin}(8)
    }
    \ar[d]^-{}
    \\
    X^8
    \ar@/^1.2pc/@{-->}[ur]^{\!\!\!\!\!\!\!
      \overset{
        \mathllap{
        \mbox{
          \tiny
          \color{brown}
          \begin{tabular}{c}
            cocycle in
            \\
            J-twisted Cohomotopy
          \end{tabular}
        }
        \;\;\;
        }
      }{
        c
      }
    }_>>{\ }="s"
   \ar[r]_-{
     \underset{
       \mathclap{
       \mbox{
         \tiny
         \color{blue}
         \begin{tabular}{c}
           $\phantom{a}$
           \\
           tangent
           \\
           spin structure
         \end{tabular}
       }
       }
     }{
       T X^8
     }
   }^>>>{\ }="t"
    &
    B \mathrm{Spin}(8)
    \ar@{=>}|-{
      \mbox{
        \tiny
        \color{blue}
        homotopy
      }
    } "s"; "t"
  }
  }
  \right\}
  \;\simeq\;
  \left\{
  \;\;
  \raisebox{31pt}{
  \xymatrix@C=5em{
    &
    \overset{
      \mathclap{
      \mbox{
       \tiny
       \color{blue}
       \begin{tabular}{c}
         classifying space
         \\
         for topological
         \\
         $\mathrm{Spin}(7)$-structure
         \\
         $\phantom{a}$
       \end{tabular}
      }
      }
    }{
      B \mathrm{Spin}(7)
    }
    \ar[d]^-{B i}
    \\
    X^8
    \ar@/^1.2pc/@{-->}[ur]^{
      \overset{
        \mathllap{
        \mbox{
          \tiny
          \color{brown}
          \begin{tabular}{c}
            topological
            \\
            $\mathrm{Spin}(7)$-structure
          \end{tabular}
        }
        \!\!
        }
      }{g}
    }_>>{\ }="s"
   \ar[r]_-{
     \underset{
       \mathclap{
       \mbox{
         \tiny
         \color{blue}
         \begin{tabular}{c}
           $\phantom{a}$
           \\
           tangent
           \\
           spin structure
         \end{tabular}
       }
       }
     }{
       T X^8
     }
   }^>>{\ }="t"
    &
    B \mathrm{Spin}(8)
    \ar@{=>}|-{
      \mbox{
        \tiny
        \color{blue}
        homotopy
      }
    } "s"; "t"
  }
  }
  \right\}
\end{equation}
and
\begin{equation}
  \label{CohomotopyAndSp2Sp1}
   \begin{array}{r}
    S^7 \!\!\sslash \!\mathrm{Sp}(2)\cdot \mathrm{Sp}(1)
    \\
    \simeq
    \;
    B \, \mathrm{Sp}(1)\cdot \mathrm{Sp}(1)
  \end{array}
  \;\;\;\;\;
  \Longrightarrow
  \;\;\;\;\;
  \left\{
  \raisebox{31pt}{
  \xymatrix@C=4em{
    &
    \overset{
      \mathclap{
      \mbox{
       \tiny
       \color{blue}
       \begin{tabular}{c}
         classifying space
         \\
         for $\mathrm{Sp}(2)\cdot\mathrm{Sp}(1)$-twisted
         \\
         Cohomotopy theory
         \\
         $\phantom{a}$
       \end{tabular}
      }
      }
    }{
      S^7 \!\!\sslash\! \mathrm{Sp}(2)\cdot \mathrm{Sp}(1)
    }
    \ar[d]^-{}
    \\
    X^8
    \ar@/^1.2pc/@{-->}[ur]^{
      \overset{
        \mathllap{
        \mbox{
          \tiny
          \color{brown}
          \begin{tabular}{c}
            cocycle in
            \\
            $\mathrm{Sp}(2)\cdot \mathrm{Sp}(1)$-twisted
            \\
            Cohomotopy theory
          \end{tabular}
        }
        \;\;\;\;\;\;\;\;
        }
      }{
        c
      }
    }_>>{\ }="s"
   \ar[r]_-{
     \underset{
       \mathclap{
       \mbox{
         \tiny
         \color{blue}
         \begin{tabular}{c}
           $\phantom{a}$
           \\
           tangent
           \\
           spin structure
         \end{tabular}
       }
       }
     }{
       T X^8
     }
   }^>>>>{\ }="t"
    &
    B \mathrm{Spin}(8)
    \ar@{=>}|-{
      \mbox{
        \tiny
        \color{blue}
        homotopy
      }
    } "s"; "t"
  }
  }
  \right\}
  \;\simeq\;
  \left\{
  \raisebox{31pt}{
  \xymatrix@C=4em{
    &
    \overset{
      \mathclap{
      \mbox{
       \tiny
       \color{blue}
       \begin{tabular}{c}
         classifying space
         \\
         for topological
         \\
         $\mathrm{Sp}(1)\cdot \mathrm{Sp}(1)$-structure
         \\
         $\phantom{a}$
       \end{tabular}
      }
      }
    }{
      B \mathrm{Sp}(1)\cdot \mathrm{Sp}(1)
    }
    \ar[d]^-{B i}
    \\
    X^8
    \ar@/^1.2pc/@{-->}[ur]^{
      \overset{
        \mathllap{
        \mbox{
          \tiny
          \color{brown}
          \begin{tabular}{c}
            topological
            \\
            $\mathrm{Sp}(1)\cdot \mathrm{Sp}(1)$-structure
          \end{tabular}
        }
        \!\!\;\;\;\;\;\;
        }
      }{g}
    }_>>{\ }="s"
   \ar[r]_-{
     \underset{
       \mathclap{
       \mbox{
         \tiny
         \color{blue}
         \begin{tabular}{c}
           $\phantom{a}$
           \\
           tangent
           \\
           spin structure
         \end{tabular}
       }
       }
     }{
       T X^8
     }
   }^>>>>{\ }="t"
    &
    B \mathrm{Spin}(8)
    \ar@{=>}|-{
      \mbox{
        \tiny
        \color{blue}
        homotopy
      }
    } "s"; "t"
  }
  }
  \right\}
\end{equation}

As the existence of a $G$-structure is a non-trivial topological condition,
so is hence the existence of $J$-twisted Cohomotopy cocycles.
Notice that this is
a special effect of twisted non-abelian generalized Cohomology:
A non-twisted generalized cohomology theory (abelian or non-abelian) always
admits at least one cocycle, namely the trivial or zero-cocycle.
But here for non-abelian J-twisted Cohomotopy theory
on 8-manifolds,
the existence of \emph{any} cocycle is a
non-trivial topological condition.

\medskip

\noindent {\bf Compatibly $\mathrm{Sp}(2)$-Twisted Cohomotopy in degree 4 \& 7.}
For focus of the discussion,
we will now restrict attention to $G$-structure for the further
quaternion-unitary subgroup
$$
  \mathrm{Sp}(2)
  \;\longhookrightarrow
  \;
  \mathrm{Sp}(1)\cdot \mathrm{Sp}(2)
$$
in diagram \eqref{SP1Sp2TwistingOfhH}.
In summary then, due to the $\mathrm{Sp}(2)$-equivariance of the quaternionic Hopf fibration
\eqref{SP1Sp2TwistingOfhH}, the map \eqref{hHSendsDegree7ToDegree4}
from degree-7 to degree-4 Cohomotopy passes to
$\mathrm{Sp}(2)$-twisted Cohomotopy:
$$
\hspace{-1cm}
  \xymatrix@C=6em@R=2em{
    &&
    S^7 \!\!\sslash\! \mathrm{Sp}(2)
    \ar[ddl]|-{ {\phantom{{AA} \atop {AA}}} }
    \ar[d]^-{
         {
        \color{brown}
        h_{\mathbb{H}}\sslash \mathrm{Sp}(2)
      }
      \;
      \mathrlap{
      \mbox{
        \tiny
        \color{blue}
        \begin{tabular}{c}
        $\mathrm{Sp}(2)$-twisted
        \\
        quaternionic Hopf fibration
        \end{tabular}
      }
      }
    }
    \\
    X
    \ar@{-->}[rr]|<<<<<<<<<<<<<<<<<<<<<<<<<<<<<{
    \,  (h_{\mathbb{H}}\sslash \mathrm{Sp}(2))_\ast (c) \,
    }
    \ar@{-->}[urr]^-{
      \overset{
        \mathclap{
        \mbox{
          \tiny
          \color{blue}
          \begin{tabular}{c}
            cocycle in
            \\
            twisted
            \\
            7-Cohomotopy
          \end{tabular}
        }
        }
      }{
        c
      }
    }
    \ar[dr]_-{
      \mathllap{
        \mbox{
          \tiny
          \color{blue}
          \begin{tabular}{c}
            twist, uniformly
            \\
            in degrees 4 \& 7
          \end{tabular}
        }
      }
      \tau
    }
    &&
    S^4 \!\!\sslash\! \mathrm{Sp}(2)
    \ar[dl]
    \mathrlap{
      \;\;
      \mbox{
        \tiny
        \color{blue}
        \begin{tabular}{c}
          induced cocycle
          \\
          in twisted
          \\
          4-Cohomotopy
        \end{tabular}
      }
    }
    \\
    &
    B \mathrm{Sp}(2)
  }
$$
and hence \eqref{hHSendsDegree7ToDegree4} becomes:
\begin{equation}
  \label{ReflectTwisted7Cohomotopy}
  \begin{aligned}
  \mbox{
    \tiny
    \color{blue}
    \begin{tabular}{c}
      $\mathrm{Sp}(2)$-twisted
      \\
      7-Cohomotopy
    \end{tabular}
  }
  \;
  \xymatrix{
    \pi^{i_7\circ\tau}(X)
    \ar[ddd]|-{
      \mathllap{
        \mbox{
          \tiny
          \color{blue}
          \begin{tabular}{c}
            reflects into
          \end{tabular}
        }
        \;
      }
    \;   (h_{\mathbb{H}} \sslash \mathrm{Sp}(2))_\ast \;
    }
    \\
    \\
    \\
    \\
  }
  & \;:=\;
  \left\{
    \raisebox{20pt}{
    \xymatrix@C=4em{
      X
      \ar[dr]_-{
        \mathllap{
          \mbox{
            \tiny
            \color{blue}
            twist
          }
        }
        \;
        \tau
      }^>>>>{\ }="t"
      \ar@{-->}[rr]^-{
        \mbox{
          \tiny
          \color{blue}
          continuous function
        }
      }_-{\ }="s"
      &&
      S^7 \!\!\sslash\! \mathrm{Sp}(2)
      \ar[dl]
      \\
      &
      B \mathrm{Sp}(2)
      \ar@{=>}^-{
        \mbox{
          \begin{rotate}{49}
            \!\!\!\!\!\!\!\!\!\!\!
            \mbox{
              \tiny
              \color{blue}{
              homotopy
            }}
          \end{rotate}
        }
      } "s"; "t"
    }
    }
  \right\}_{
    \!\!\!
    \Big/
    \sim_{
      {}_{
        \frac{
          \mathrm{homotopy}
        }
        {
          B \mathrm{Sp}(2)
        }
      }
    }
  }
  \end{aligned}
 \end{equation}

\vspace{-35pt}

\begin{equation}
  \label{Sp24Cohomotopy}
  \begin{aligned}
  \mbox{
    \tiny
    \color{blue}
    \begin{tabular}{c}
      $\mathrm{Sp}(2)$-twisted
      \\
      4-Cohomotopy theory
    \end{tabular}
  }
  \;
  \pi^{i_4 \circ \tau}(X)
  & \;:=\;
  \left\{
    \raisebox{20pt}{
    \xymatrix@C=4em{
      X
      \ar[dr]_-{
        \mathllap{
          \mbox{
            \tiny
            \color{blue}
            twist
          }
        }
        \;
        \tau
      }^>>>>{\ }="t"
      \ar@{-->}[rr]^-{
        \mbox{
          \tiny
          \color{blue}
          continuous function
        }
      }_-{\ }="s"
      &&
      S^4 \!\!\sslash\! \mathrm{Sp}(2)
      \ar[dl]
      \\
      &
      B \mathrm{Sp}(2)
      \ar@{=>}^-{
        \mbox{
          \begin{rotate}{49}
            \!\!\!\!\!\!\!\!\!\!
            \mbox{
              \tiny
              \color{blue}
              homotopy
            }
          \end{rotate}
        }
      } "s"; "t"
    }
    }
  \right\}_{
    \!\!\!
    \Big/
    \sim_{
      {}_{
        \frac{
          \mathrm{homotopy}
        }
        {
          B \mathrm{Sp}(2)
        }
      }
    }
  }
  \end{aligned}
\end{equation}

\medskip

\noindent
{\bf Triality between $\mathrm{Sp}(2)$-structure and $\mathrm{Spin}(5)$-structure.}
While the group
$\mathrm{Sp}(2)\cdot \mathrm{Sp}(1)$ \eqref{SP1Sp2TwistingOfhH}
is abstractly isomorphic to  the central product of Spin-groups
$\mathrm{Spin}(5)\cdot \mathrm{Spin}(3)$,
the two are \emph{distinct} as
subgroups of $\mathrm{Spin}(8)$, and not conjugate to each other.
But as subgroups they are turned into each other by the
ambient action of {triality}:
$$
  \xymatrix@R=4pt{
    &
    \mathrm{Sp}(2)
    \; \ar@{^{(}->}[r]
    &
    \overset{
      \mathclap{
      \mbox{
        \tiny
        \color{blue}
        \begin{tabular}{c}
          central product of
          \\
          quaternion-unitary groups
          \\
          $\phantom{a}$
        \end{tabular}
      }
      }
    }{
      \mathrm{Sp}(2)
      \cdot
      \mathrm{Sp}(1)
    }
    \ar@{^{(}->}[dd]
    \ar@{<->}[rr]^-{\simeq}
    &&
    \overset{
      \mbox{
        \tiny
        \color{blue}
        \begin{tabular}{c}
          central product of
          \\
          Spin-groups
          \\
          $\phantom{a}$
        \end{tabular}
      }
    }{
      \mathrm{Spin}(5)
      \cdot
      \mathrm{Spin}(3)
    }
    \ar@{^{(}->}[dd]
    \ar@{<-^{)}}[r]
    &
    \; \mathrm{Spin}(5)
    \\
    &&&&&&
    \\
    &
    &
    \mathrm{Spin}(8)
    \ar@{<->}[rr]^-{\simeq}_-{
      \underset{
        \mathclap{
        \mbox{
          \tiny
          \color{blue}
          \begin{tabular}{c}
            triality automorphism
          \end{tabular}
        }
        }
      }{
        \mathrm{tri}
      }
    }
    &&
    \mathrm{Spin}(8)
  }
$$
While $\mathrm{Spin}(5)$ on the right is the structure group
of normal bundles to M5-branes, acting on fibers of 4-spherical
fibrations around 5-branes through its vector representation,
 $\mathrm{Sp}(2)$ on the left is the structure group
of normal bundles to M2-branes, acting on the 7-spherical
fibrations around 2-branes via its defining left action
on quaternionic 2-space
$\mathbb{H}^2 \simeq_{{}_{\mathbb{R}}} \mathbb{R}^8$
(\cite{MFGM09}\cite{MF10}):

\vspace{-.4cm}

$$
  \!\!\!\!\!\!\!\!\!
  \!\!\!\!\!\!\!\!\!
  S
  \big(
  \!\!\!
  \xymatrix{\mathbb{H}^2\ar@(ul,ur)^-{
    \mathllap{
      \mbox{
        \tiny
        \color{blue}
        \begin{tabular}{c}
          left quaternion
          \\
          multiplication
        \end{tabular}
      }
      \;
    }
    \mathrm{Sp}(2)}
  }
  \!\!\!
  \big)
  \;=\;
  S^7
  \phantom{AAAAAAAAAAAAA}
  S^4
  \;=\;
  S\big(
  \!\!\!\!\!\!
  \xymatrix{
    \mathbb{R}^5\ar@(ur,ul)_-{
      \mathrm{Spin}(5)
      \mathrlap{
        \;\;
        \mbox{
          \tiny
          \color{blue}
          \begin{tabular}{c}
            vector
            \\
            representation
          \end{tabular}
        }
      }
    }
  }
  \!\!\!\!\!\!
  \big)
$$
In this article we consider only the M2-case. But all formulas
we derive translate to the M5 case via triality.

\medskip

\noindent {\bf Generalized Chern characters.}
Since generalized cohomology theory is rich, one needs tools
to break it down. The first and foremost of these is the
\emph{generalized Chern character} map.
This extracts differential form data underlying a
cocycle in nonabelian generalized cohomology.
The Chern character is familiar in twisted K-theory (see \cite{GS3}\cite{GS-RR}),
as shown in the top half of the following:
\begin{equation}
  \label{GeneralizedChernCharacter}
  \xymatrix@R=1pt{
    &
    \fbox{
      \tiny
      \color{oxford}
      \begin{tabular}{c}
        Torsionful generalized
        \\
        cohomology theory
      \end{tabular}
    }
    \ar[rr]^{
      \mbox{
        \tiny
        \color{brown}
        \begin{tabular}{c}
          approximation by
          \\
          generalized Chern character
        \end{tabular}
      }
    }
    &&
    \fbox{
      \tiny
      \color{oxford}
      \begin{tabular}{c}
        $L_\infty$-valued de Rham
        \\
        cohomology theory
      \end{tabular}
    }
    \\
    \ar@{-}[rrrr]
    &&&&
    \\
    \fbox{
      \tiny
      \color{brown}
      \begin{tabular}{c}
        Chern character on
        \\
        ordinary
        \\
        integral cohomology
      \end{tabular}
    }
    &
    \overset{
      \mathclap{
      \mbox{
        \tiny
        \color{blue}
        \begin{tabular}{c}
          ordinary
          \\
          integral cohomology
        \end{tabular}
      }
      }
    }{
      H^3(X,\mathbb{Z})
    }
    \ar[rr]^{
      \mbox{
        \tiny
        \color{blue}
        \begin{tabular}{c}
          extension of scalars
          \\
          \& de Rham theorem
        \end{tabular}
      }
    }
    &&
    \overset{
      \mathclap{
      \mbox{
        \tiny
        \color{blue}
        \begin{tabular}{c}
          de Rham
          \\
          cohomology
        \end{tabular}
      }
      }
    }{
      H^3_{{}_{\mathrm{dR}}}(X)
    }
    \\
    &
    \raisebox{22pt}{
      $
        \underset{
          \mathclap{
          \mbox{
            \tiny
            \color{blue}
            \begin{tabular}{c}
              bundle gerbe
            \end{tabular}
          }
          }
        }{
          \tau
        }
      $
    }
    &
    \raisebox{22pt}{
      $\longmapsto$
    }
    &
    \raisebox{22pt}{
      $
        \underset{
          \mathclap{
          \mbox{
            \tiny
            \color{blue}
            \begin{tabular}{c}
              closed 3-form
            \end{tabular}
          }
          }
        }{
          [H_3]
        }
      $
    }
    \\
        \fbox{
      \tiny
      \color{brown}
      \begin{tabular}{c}
        Chern character on
        \\
        B-field-twisted
        \\
        K-theory
      \end{tabular}
    }
    &
    \overset{
      \mathclap{
      \mbox{
        \tiny
        \color{blue}
        \begin{tabular}{c}
          $\tau$-twisted
          \\
          complex K-theory
        \end{tabular}
      }
      }
    }{
      \mathrm{KU}^\tau(X)
    }
    \ar[rr]^-{
      \overset{
        \mbox{
          \tiny
          \color{blue}
          \begin{tabular}{c}
            $\tau$-twisted
            \\
            Chern character
          \end{tabular}
        }
      }{
        \mathrm{ch}^\tau
      }
    }
    &&
    \overset{
      \mathclap{
      \mbox{
        \tiny
        \color{blue}
        \begin{tabular}{c}
          $H_3$-twisted
          \\
          de Rham cohomology
        \end{tabular}
      }
      }
    }{
      H^{{}^{[H_3]}}_{{}_{\mathrm{dR}}}(X)
    }
    \\
    \ar@{-}@<-8pt>[rrrr]
    &
    \raisebox{22pt}{
      $
      \underset{
        \mathclap{
        \mbox{
          \tiny
          \color{blue}
          \begin{tabular}{c}
            virtual twisted
            \\
            vector bundle
          \end{tabular}
        }
        }
      }{
        V
      }
      $
    }
    &
    \raisebox{22pt}{
      $\longmapsto$
    }
    &
    \raisebox{22pt}{
      $
      \underset{
        \mathclap{
        \mbox{
          \tiny
          \color{blue}
          \begin{tabular}{c}
            exponentiated
            \\
            curvature form
          \end{tabular}
        }
        }
      }{
        \mathclap{
          \big[
            \mathrm{tr}\big( \exp(\mathrm{F}) \big)
          \big]
        }
      }
      $
    }
    &
    \\
    \fbox{
      \tiny
      \color{brown}
      \begin{tabular}{c}
        Chern character on
        \\
        non-abelian
        \\
        $\mathrm{O}(n)$-cohomology
      \end{tabular}
    }
    &
    \overset{
      \mbox{
        \tiny
        \color{blue}
        \begin{tabular}{c}
          non-abelian
          \\
          $O(n)$-cohomology
        \end{tabular}
      }
    }{
      H^1
      \big(
        X,
        \mathrm{O}(n)
      \big)
    }
    \ar[rr]^-{
      \mbox{
        \tiny
        \color{blue}
        characteristic forms
      }
    }
    &&
    \overset{
      \mathclap{
      \mbox{
      \tiny
      \color{blue}
      \begin{tabular}{c}
        de Rham cohomology tensor
        \\
        invariant polynomials on $\mathfrak{o}(n)$
      \end{tabular}
      }
      }
    }{
      H_{{}_{\mathrm{dR}}}
      \big(
        X
      \big)
      \otimes
      \mathrm{inv}(\mathfrak{o}(n))
    }
    \\
    &
    \raisebox{22pt}{
      $
      \underset{
        \mathclap{
        \mbox{
          \tiny
          \color{blue}
          \begin{tabular}{c}
            vector bundle
          \end{tabular}
        }
        }
      }{
        \tau
      }
      $
    }
    &
    \raisebox{22pt}{
      $
      \longmapsto
      $
    }
    &
    \mathrlap{
    \!\!\!\!\!\!\!\!\!\!\!\!\!\!\!\!\!\!\!\!\!\!
    \!\!\!\!\!\!\!\!\!\!\!\!\!\!\!\!\!\!
    \raisebox{22pt}{
      $
      \tau_{\mathbb{R}}
      \in
      \mathbb{R}
      \big[
        \underset{
          \mathclap{
          \mbox{
            \tiny
            \color{blue}
            \begin{tabular}{c}
              Stiefel-Whitney
              \\
              forms
            \end{tabular}
          }
          }
        }{
          [W_i(\nabla_\tau)]
        }
        \,,\,
        \underset{
          \mathclap{
          \mbox{
            \tiny
            \color{blue}
            \begin{tabular}{c}
              Pontrjagin
              \\
              forms
            \end{tabular}
          }
          }
        }{
          [p_k(\nabla_\tau)]
        }
      \big]_{i,k}
      $
    }
    }
    \\
    \fbox{
      \tiny
      \color{brown}
      \begin{tabular}{c}
        Chern character
        \\
        on J-twisted
        \\
        $n$-Cohomotopy
      \end{tabular}
    }
    &
    \overset{
      \mathclap{
      \mbox{
        \tiny
        \color{blue}
        \begin{tabular}{c}
          $\tau$-twisted
          \\
          Cohomotopy theory
        \end{tabular}
      }
      }
    }{
      \pi^\tau
      \big(
        X
      \big)
    }
    \ar[rr]^{
      \mbox{
        \tiny
        \color{blue}
        \begin{tabular}{c}
          cohomotopical
          \\
          Chern character
        \end{tabular}
      }
    }
    &&
    \overset{
      \mathclap{
      \mbox{
        \tiny
        \color{blue}
        \begin{tabular}{c}
          $\tau_{\mathbb{R}}$-twisted
          \\
          rational Cohomotopy theory
        \end{tabular}
      }
      }
    }{
      \pi^{\tau_{\mathbb{R}}}
      \big(
        X
      \big)_{\mathbb{R}}
    }
  }
\end{equation}
\noindent In order to see what the \emph{cohomotopical Chern character}
in the last line is, we need some general theory of
generalized Chern characters. This is
\emph{rational homotopy theory}:

\medskip

\noindent {\bf Rational homotopy theory.}
In the language of homotopy theory,
generalized Chern character maps are examples of
\emph{rationalization}, whereby the homotopy type of a
topological space
(here: the classifying space of a generalized cohomology theory)
is approximated by tensoring all its homotopy groups with the
rational numbers (equivalently: the real numbers),
thereby disregarding all torsion subgroups in homotopy groups
and in cohomology groups.

$$
  \xymatrix@C=4em{
    \fbox{\small
      \color{brown}
      \begin{tabular}{c}
        Generalized
        \\
        cohomology theory
      \end{tabular}
    }
    \ar@{<->}[d]_-{
      \mbox{
        \tiny
        \color{blue}
        \begin{tabular}{c}
          classifying
          \\
          spaces
        \end{tabular}
      }
    }
    \ar[rr]^{
      \mbox{
        \tiny
        \color{blue}
        Chern character
      }
    }
    &&
    \fbox{\small
      \color{brown}
      \begin{tabular}{c}
        $L_\infty$-valued
        \\
        differential forms
      \end{tabular}
    }
    \ar@{<->}[d]^{
      \mbox{
        \tiny
        \color{blue}
        \begin{tabular}{c}
          Sullivan model
          \\
          construction
        \end{tabular}
      }
    }
    \\
    \fbox{\small
      \color{brown}
      \begin{tabular}{c}
        Full
        \\
        homotopy theory
      \end{tabular}
    }
    \ar[rr]_-{
      \mbox{
        \tiny
        \color{blue}
        rationalization
      }
    }
    &&
    \fbox{ \small
      \color{brown}
      \begin{tabular}{c}
        Rational
        \\
        homotopy theory
      \end{tabular}
    }
  }
$$
What makes rational homotopy theory amenable to
computations is the existence of \emph{Sullivan models}.
These are differential graded-commutative algebras (dgc-algebras)
on a finite number of generating elements
(spanning the rational homotopy groups) subject to differential relations
(enforcing the intended rational cohomology groups).
In the supergravity literature Sullivan models are also known as ``FDA''s.
Here are some basic examples
(see \cite{FSS16b}\cite{FSS18}\cite{Higher-T}\cite{FSS19d},
review in \cite{FSS19a}):

\medskip
\medskip
{\small
\hspace{-.9cm}
\def\arraystretch{1.2}
\begin{tabular}{|c||c|c|rcl|}
  \hline
  &
  \!\!\!\!\!\! \begin{tabular}{c}
    {\bf Rational }
    \\
    {\bf super space }
  \end{tabular}
  \!\!\!\!\!\!
   &
 \!\!\!\!\!\!  \begin{tabular}{c}
    {\bf Loop }
    \\
    {\bf super $L_\infty$-algebra }
  \end{tabular}
  \!\!\!\!\!\!\!\!
  &
  \multicolumn{3}{c|}{
  \!\!\!\! \begin{tabular}{c}
    {\bf Chevalley-Eilenberg }
    \\
    {\bf super dgc-algebras}
    \\
    (``Sullivan models'', ``FDA''s)
  \end{tabular}
  \!\!\!\!
  }
    \\
  \hline
  \hline
  $\mathclap{ \phantom{ {\vert \atop \vert} \atop \vert } }$
 \!\!\!\!\!\! \mbox{General} \!\!\!\!\!\!
    &
  $\color{blue}X$
    &
  $\color{blue}\mathfrak{l}X$
    &
  \multicolumn{3}{c|}{
    \color{blue}
    $\mathrm{CE}\big( \mathfrak{l}X \big)$
  }
  \\
  \hline
  $\mathclap{ \phantom{ {{\vert \atop \vert} \atop {\vert \atop \vert} } \atop { {\vert \atop \vert } } } }$
  \begin{tabular}{c}
    \color{brown}
    Super
    \\
    \color{brown}
    spacetime
  \end{tabular}
    &
  $\mathbb{T}^{d,1\vert \mathbf{N}}$
    &
  $\mathbb{R}^{d,1\vert \mathbf{N}}$
    &
  $
    \mathbb{R}\big[ \{\psi^\alpha\}_{\alpha = 1}^N, \{e^a\}_{a = 0}^d \big]
  $
  &
  $
    \!\!\!\!\!\!\!\!\!
    \Big/
    \!\!\!\!\!\!\!\!\!
  $
  &
  $
    \left(
      \!\!\!
      \begin{array}{lcl}
        d\,\psi^\alpha & \!\!\!\!\!\!\ = \!\!\!\!\!\!\ & 0
        \\
        d\,e^a & \!\!\!\!\!\!\ = \!\!\!\!\!\!\ &
          \overline{\psi}\,\Gamma^a \psi
      \end{array}
      \!\!\!
    \right)
  $
  \\
  \hline
  $\mathclap{ \phantom{ {{\vert \atop \vert} \atop {\vert \atop \vert} } \atop { {\vert \atop \vert } } } }$
  \begin{tabular}{c}
    \color{brown}
    Eilenberg-MacLane
    \\
    \color{brown}
    space
  \end{tabular}
  &
  $
  \begin{aligned}
    & K(\mathbb{R},p+2)
    \\
   & \simeq_{{}_{\mathbb{R}}} B^{p+1} S^1
  \end{aligned}
  $
  &
  $\mathbb{R}[p+1]$
  &
  $
    \mathbb{R}[ c_{p+2}]
  $
  &
  $
    \!\!\!\!\!\!\!\!\!
    \big/
    \!\!\!\!\!\!\!\!\!
  $
  &
  $
    \big(
      \!\!\!
      \begin{array}{lcl}
        d\,c_{p+2} & \!\!\!\!\!\!\ = \!\!\!\!\!\!\ & 0
      \end{array}
      \!\!\!
    \big)
  $
  \\
  \hline
  $\mathclap{ \phantom{ {{\vert \atop \vert} \atop {\vert \atop \vert} } \atop { {\vert \atop \vert } } } }$
  \begin{tabular}{c}
    \color{brown}
    Odd-dimensional
    \\
    \color{brown}
    sphere
  \end{tabular}
  &
  $S^{2k+1}_{\mathbb{R}}$
  &
  $\mathfrak{l}(S^{2k+1})$
  &
  $
  \mathbb{R}[\omega_{2k+1}]
  $
  &
  $
    \!\!\!\!\!\!\!\!\!
    \big/
    \!\!\!\!\!\!\!\!\!
  $
  &
  $
  \big(
    \!\!\!
    \begin{array}{lcl}
      d\,\omega_{2k+1} & \!\!\!\!\!\! = \!\!\!\!\!\! & 0
    \end{array}
    \!\!\!
  \big)
  $
  \\
  \hline
  $\mathclap{ \phantom{ {{\vert \atop \vert} \atop {\vert \atop \vert} } \atop { {\vert \atop \vert } } } }$
  \begin{tabular}{c}
    \color{brown}
    Even-dimensional
    \\
    \color{brown}
    sphere
  \end{tabular}
  &
  $S^{2k}_{\mathbb{R}}$
  &
  $\mathfrak{l}(S^{2k})$
  &
  $
    \mathbb{R}\big[\omega_{2k}, \omega_{4k-1}\big]
  $
  &
  $
    \!\!\!\!\!\!\!\!\!
    \Big/
    \!\!\!\!\!\!\!\!\!
  $
  &
  $
  \left(
    \!\!\!
    \begin{array}{lcl}
      d\,\omega_{2k} & \!\!\!\!\!\! = \!\!\!\!\!\! & 0
      \\
      d\,\omega_{4k-1} & \!\!\!\!\!\! = \!\!\!\!\!\! &
       -\omega_{2k} \wedge \omega_{2k}
    \end{array}
    \!\!\!
  \right)
  $
  \\
  \hline
  $\mathclap{ \phantom{ {{\vert \atop \vert} \atop {\vert \atop \vert} } \atop { {\vert \atop \vert } } } }$
  \begin{tabular}{c}
    \color{brown}
    M2-extended
    \\
    \color{brown}
    super spacetime
  \end{tabular}
  &
  $\widehat{ \mathbb{T}^{10,1\vert \mathbf{32}} }$
  &
  $\mathfrak{m}2\mathfrak{brane}$
  &
  $
    \mathbb{R}
    \big[
      \{\psi^\alpha\}_{\alpha = 1}^{32},
      \{e^a\}_{a = 0}^{10},
      h_3
    \big]
  $
  &
  $
    \!\!\!\!\!\!\!\!\!
    \Big/
    \!\!\!\!\!\!\!\!\!
  $
  &
  $
    \left(
      \!\!\!
      \begin{array}{lcl}
        d\,\psi^\alpha & \!\!\!\!\!\!\!\!\! = \!\!\!\!\!\!\ & 0
        \\
        d\,e^a & \!\!\!\!\!\!\!\!\! = \!\!\!\!\!\!\ &
          \overline{\psi}\,\Gamma^a \psi
        \\
        d\,h_3 & \!\!\!\!\!\!\!\!\!= \!\!\!\!\!\!\ &
          \tfrac{i}{2}
         (\overline{\psi} \Gamma_{a b} \psi)
         \wedge
         e^a \wedge e^b
      \end{array}
      \!\!\!\!\!
    \right)
  $
  \\
  \hline
\end{tabular}
}

\medskip

\noindent Under \emph{Sullivan's theorem} the rational homotopy type
of well-behaved spaces
are equivalently encoded in their Sullivan model dgc-algebras.
For spaces and algebras which are  nilpotent and of finite type one has:
\vspace{-.5cm}

 $$
   \xymatrix@C=5em{
   {
     \color{brown}
     \mathrm{Spaces}
   }_{
     \!\!\!\!
     \Big/
     \!\!
     \sim_{
     \!\!\!\!\!\!\!\!\!\!\!
     \mathrlap{
     \mbox{
       \tiny
       \begin{tabular}{l}
         rational
         \\
         weak homotopy
         \\
         equivalence
       \end{tabular}
     }
     }
     \;\;\;\;\;\;\;\;\;
     }
   }
  \;\;\;\;\;
  \ar[rr]_-{
     \overset{
       \mathclap{
       \mbox{
         \tiny
         \color{blue}
         \begin{tabular}{c}
           form
           \\
           loop Lie algebra
         \end{tabular}
       }
       }
     }{
       \mathfrak{l}
     }
   }^-{\simeq}
   \ar@/^2pc/[rrrr]^-{
     \overset{
       \mbox{
         \tiny
         \color{blue}
         \begin{tabular}{c}
           form
           Sullivan model
         \end{tabular}
       }
     }{
       \mathrm{CE}(\mathfrak{l} - )
     }
   }_-{\simeq}
   &&
   {\color{brown}
     L_\infty \mathrm{Algebras}
   }_{
     \!\!\!\!
     \Big/
     \!\!\!
     \sim_{
     \!\!\!\!\!\!\!\!\!\!
     \mathrlap{
     \mbox{
       \tiny
       \begin{tabular}{l}
         quasi-
         \\
         isomorphism
       \end{tabular}
     }
     }
     \;\;\;
   }
   }
   \;\;\;\;\;\;\;\;\;\;
   \ar[rr]_-{
     \overset{
       \mathclap{
       \mbox{
         \tiny
         \color{blue}
         \begin{tabular}{c}
           form
           \\
           Chevalley-Eilenberg algebra
         \end{tabular}
       }
       }
     }{
       \mathrm{CE}
     }
   }^-{\simeq}
   &&
     {\color{brown}
       \overset{
         \mathrlap{
         \;\;\;\;\;\;\;\;\;\;
         \mbox{
           \tiny
           \color{blue}
           \begin{tabular}{c}
           ``FDA''s in supergravity jargon
           \\
           $\phantom{a}$
           \end{tabular}
         }
         }
       }{
         \mathrm{dgcAlgebras}
       }
     }^{\mathrm{op}}_{
       \Big/\!\!\!
       \sim_{
         \!\!\!\!\!\!\!\!\!\!
         \mbox{
           \tiny
           \begin{tabular}{l}
             quasi-
             \\
             isomorphism
           \end{tabular}
         }
       }
     }
  }
 $$

 \vspace{-7mm}
When applying the rational approximation to twisted generalized
cohomology theory, the order matters: There are in general more
\emph{rational twists}
$X \overset{\tau}{\longrightarrow} B \mathrm{Aut}(A_{\mathbb{R}})$
for \emph{twisted rational cohomology}
than there are rationalizations $\tau_{\mathbb{R}}$ of full twists
$X \overset{\tau}{\longrightarrow} B \mathrm{Aut}(A)$
for \emph{rational twisted cohomology}.
\footnote{This is in contrast with twisting vs. differential refinement where the
order does not matter -- see \cite{GS3}\cite{GS4}.}
We consider first the general
rational twists:

\medskip

\noindent {\bf Rationally twisted rational Cohomotopy.}
We find that the \emph{rationally twisted rational Cohomotopy}
sets in degrees 4 and 7 are equivalently characterized by
cohomotopical Chern character forms as follows:

\begin{equation}
\label{RationalTwists}
\mbox{
\begin{tabular}{c||c|ccc}
    &
    {
      \tiny
      \color{blue}
      \begin{tabular}{c}
        rational
        twist
      \end{tabular}
    }
    &
    {
      \tiny
      \color{blue}
      \begin{tabular}{c}
        rational
        twisted
        \\
        Cohomotopy
      \end{tabular}
    }
    &
    &
    {
      \tiny
      \color{blue}
      \begin{tabular}{c}
        $\phantom{a}$
        \\
        cohomotopical
        \\
        Chern characters
        \\
        $\phantom{a}$
      \end{tabular}
    }
    \\
    \hline
    &&
    \\
    $
    \;\;\;\;\;\;\;
    \mathllap{
    \mbox{
      \tiny
      \color{blue}
      7-Cohomotopy
    }
    }
    $
    &
    $
    X
      \overset{
        \tau^7
      }{\longrightarrow}
    B
    \mathrm{Aut}
    \big(
      S^7_{\mathbb{R}}
    \big)
    $
    &
    $
    \pi^{(\tau^7)}(X)
    $
    &
    $
    \simeq
    $
    &
    $
    \overset{
      \;\;\;\;\;\;\;\;\;\;\;\;\;
      \;\;\;\;\;\;\;\;\;\;\;\;\;
      \;\;\;\;\;\;\;\;\;\;\;\;\;
      \;\;\;\;\;\;\;\;\;\;\;\;\;
      \;\;\;\;\;\;\;
          \mathclap{
          \mbox{
            \tiny
            \color{blue}
            \begin{tabular}{c}
              characteristic form
              \\
              of twist $\tau^7$
            \end{tabular}
          }
          }
        }
    {
    \Big\{
      \phantom{(G_4,}
      \overset{
        \mbox{
          \tiny
          \color{brown}
          7-form
        }
      }{
        \widetilde G_7
      }
      \phantom{)\;}
      \;
    \left|
      \;\;
      \begin{aligned}
        d\,\widetilde G_7
          &=
        \phantom{-\tfrac{1}{2} G_4 \wedge G_4 +}
        K_8
      \end{aligned}
      \!\!
      \;\;\;\;
    \right.
    \Big\}_{\big/\sim}
    }
    $
    \\
    &&
    \\
    $
    \;\;\;\;\;\;\;
    \mathllap{
    \mbox{
      \tiny
      \color{blue}
      4-Cohomotopy
    }
    }
    $
    &
    $
    X
      \overset{
        \tau^4
      }{\longrightarrow}
    B
    \mathrm{Aut}
    \big(
      S^4_{\mathbb{R}}
    \big)
    $
    &
    $
    \pi^{(\tau^4)}(X)
    $
    &
    $\simeq$
    &
    $
      \underset{
        \;\;\;\;\;\;\;\;\;\;\;\;\;
        \;\;\;\;\;\;\;\;\;\;\;\;\;
        \;\;\;\;\;\;\;\;\;\;\;\;\;
        \;\;\;\;\;\;\;\;\;\;\;\;\;
        \;\;\;\;\;\;\;\;\;\;\;\;\;
        \mathclap{
        \mbox{
          \tiny
          \color{blue}
          \begin{tabular}{c}
            characteristic form
            \\
            of twist $\tau^4$
          \end{tabular}
        }
        }
      }
    {
    \left\{
      \overset{
        \mbox{
          \tiny
          \color{brown}
          \begin{tabular}{c}
            $\;\,\,$ 4-form
            \\
            \& 7-form
          \end{tabular}
        }
      }{
        (G_4, G_7)
      }
      \;
    \left|
      \;\;
      \begin{aligned}
      d\,G_4 & = \phantom{-}0
      \\
      d\,G_7 & = - \tfrac{1}{2} G_4 \wedge G_4
      +
      L_8
      \end{aligned}
      \;\;\;\;
    \right.
    \right\}_{\big/\sim}
    }
    $
\end{tabular}
}
\end{equation}
Here \emph{all} real 8-classes
$
  [K_8]
  ,\,
  [L_8]
  \;\in\;
  H^8(X, \mathbb{R})
$
may appear, for \emph{some} rational twists $\tau^{4, 7}$.
Constraints on these characteristic forms
appear when we
consider more than rational twisted structure:

\medskip

\noindent {\bf Compatibly rationally twisted rational Cohomotopy.}
We may ask that the rational twists $\tau^{4,7}$
in \eqref{RationalTwists}
are related analogously to how the twisted parametrized
Hopf fibration \eqref{SP1Sp2TwistingOfhH} relates the
full (non-rational) twists, through \eqref{ReflectTwisted7Cohomotopy}.
We find that this happens precisely when the difference of the
characteristic 8-classes in \eqref{RationalTwists}
is a complete square
$$
  L_8
  \;=\;
  K_8
  +
  \big(
    \tfrac{1}{4} P_4
  \big)
    \wedge
  \big(
    \tfrac{1}{4}P_4
  \big)
$$
and in that case the situation of \eqref{RationalTwists}
becomes the following:

\begin{equation}
\label{CompatibleRationalTwists}
\mbox{
\begin{tabular}{c||c|ccc}
    &
    {
      \tiny
      \color{blue}
      \begin{tabular}{c}
        {\color{brown} Compatible}
        \\
        rational
        twists
      \end{tabular}
    }
    &
    {
      \tiny
      \color{blue}
      \begin{tabular}{c}
        Rational
        \\
        {\color{brown} compatibly} twisted
        \\
        Cohomotopy
        \\
        $\phantom{a}$
      \end{tabular}
    }
    &
    &
    {
      \tiny
      \color{blue}
      \begin{tabular}{c}
        $\phantom{a}$
        \\
        Cohomotopical
        \\
        Chern characters
        \\
        $\phantom{a}$
      \end{tabular}
    }
    \\
    \hline
    &&
    \\
    $
    \;\;\;\;\;\;\;
    \mathllap{
    \mbox{
      \tiny
      \color{blue}
      7-Cohomotopy
    }
    }
    $
    &
    $
    X
      \overset{
        \tau^7
      }{\longrightarrow}
    B
    \mathrm{Aut}
    \big(
      S^7_{\mathbb{R}}
    \big)
    $
    &
    $
    \pi^{(\tau^7)}(X)
    $
    &
    $
    \simeq
    $
    &
    $
    \overset{
      \;\;\;\;\;\;\;\;\;\;\;\;\;
      \;\;\;\;\;\;\;\;\;\;\;\;\;
      \;\;\;\;\;\;\;\;\;\;\;\;\;
      \;\;\;\;\;\;\;\;\;\;\;\;\;
      \;\;\;\;\;\;\;
          \mathclap{
          \mbox{
            \tiny
            \color{blue}
            \begin{tabular}{c}
              characteristic form
              \\
              of twist $\tau^7$
            \end{tabular}
          }
          }
        }
    {
    \left\{
      \phantom{(G_4,}
      \widetilde G_7
      \phantom{)\;}
      \;
      \;
    \left|
      \;\;
      \begin{aligned}
        d\,\widetilde G_7
          &=
        \phantom{-\tfrac{1}{2} G_4 \wedge G_4 +}
        K_8
      \end{aligned}
      \!\!
      \;\;\;\;
    \right.
    \right\}_{\big/\sim}
    }
    $
    \\
    &
    \multicolumn{2}{c}
    {
      \fbox{
        $
        \begin{aligned}
          \overset{
            \mathrlap{
            \mbox{
              \tiny
              \color{brown}
              \begin{tabular}{c}
                shifted
                4-form
              \end{tabular}
            }
            }
          }{
            \widetilde G_4
          }
            & := G_4 + \tfrac{1}{4}P_4
          \\
          \underset{
            \mathrlap{
            \mbox{
              \tiny
              \color{brown}
              \begin{tabular}{c}
                shifted
                7-form
              \end{tabular}
            }
            }
          }{
            \widetilde G_7
          }
            & : = G_7 + \tfrac{1}{2} H_3 \wedge \widetilde G_4
        \end{aligned}
        $
      }
    }
    &
    $
    \simeq
    $
    &
    $
    {
    \left\{
      \left(
        \begin{aligned}
           H_3, &
           \\
           \widetilde G_4, & \, G_7
         \end{aligned}
      \right)
      \;
    \left|
      \;\;
      \begin{aligned}
        d\,H_3
          & =
          \phantom{-\tfrac{1}{2}} \widetilde G_4 - \tfrac{1}{2} P_4
        \\
        d\, \widetilde G_4 & = 0
        \\
        d\,G_7
          &=
        -\tfrac{1}{2}
         d H_3
          \wedge
        \widetilde G_4
        +
        K_8
      \end{aligned}
      \!\!
      \;\;\;\;
    \right.
    \right\}_{\big/\sim}
    }
    $
    \\
    &&
    \\
    &&
    \\
    $
    \;\;\;\;\;\;\;
    \mathllap{
    \mbox{
      \tiny
      \color{blue}
      4-Cohomotopy
    }
    }
    $
    &
    $
    X
      \overset{
        \tau^4
      }{\longrightarrow}
    B
    \mathrm{Aut}
    \big(
      S^4_{\mathbb{R}}
    \big)
    $
    &
    $
    \pi^{(\tau^4)}(X)
    $
    &
    $\simeq$
    &
    $
    {
    \left\{
      (\widetilde G_4, G_7)
      \;
    \left|
      \;\;\;
      \begin{aligned}
      d\,\widetilde G_4 & = \phantom{-}0
      \\
      d\,G_7
        & =
        - \tfrac{1}{2}
        (\widetilde G_4 -\tfrac{1}{2}P_4)
          \wedge
        \widetilde G_4
      +
      K_8
      \end{aligned}
      \;\;\;\;
    \right.
    \right\}_{\big/\sim}
    }
    $
\end{tabular}
}
\end{equation}
Here still \emph{all} real 8-classes and 4-classes
$
  [K_8] \;\in\; H^8(X,\mathbb{R})
  \,,
  \;\;
  [P_4] \;\in\; H^4(X,\mathbb{R})
$
may appear, for \emph{some} pair of compatible rational twists.

\medskip

Next we find that these real classes are fixed
as we consider full (not just rational) $\mathrm{Sp}(2)$-twists,
compatible by the full (not just rational)
$\mathrm{Sp}(2)$-twisted
quaternionic Hopf fibration  \eqref{SP1Sp2TwistingOfhH}.

\medskip

\noindent{\bf J-Twisted 4-Cohomotopy of $\mathrm{Sp}(2)$-manifolds.}
Consider a simply-connected Riemannian Spin manifold $\mathbb{R}^{2,1} \times X^8$
with affine connection $\nabla$ and equipped with:
\begin{enumerate}[{\bf (i)}]
\vspace{-3mm}
\item an $\mathrm{Sp}(2)$-structure $\tau$ \eqref{SP1Sp2Structure};
\vspace{-3mm}
\item a cocycle $c$ in $\tau$-twisted 4-Cohomotopy \eqref{Sp24Cohomotopy};
\end{enumerate}
\vspace{-3mm}
\noindent hence equipped with a homotopy-commutative diagram of
continuous maps as follows:
$$
  \left[
  \;\;\;\;\;\;\;\;
  \raisebox{29pt}{
  \xymatrix@R=14pt@C=7em{
    \mathllap{
      \mathbb{R}^{2,1}
      \times
      \;
    }
    \overset{
      \mathclap{
      \mbox{
        \tiny
        \color{blue}
        \begin{tabular}{c}
          spacetime
          \\
          $\phantom{a}$
          \\
          $\phantom{a}$
        \end{tabular}
        \;\;\;\;\;\;\;\;\;\;\;\;\;\;\;
        }
      }
    }{
      X^8
    }
    \ar[dr]^-{\tau}^>>>>>>>>>{\ }="t"_<<<<<{\ }="s2"
    \ar[d]_{
      \mathllap{
        \mbox{
          \tiny
          \color{blue}
          \begin{tabular}{c}
            tangent
            \\
            bundle
          \end{tabular}
        }
      }
      T X^8
    }^>>>>>>>{\ }="t2"
    \ar@{-->}[rr]^<<<<<<<<<<<<<<<<<{
      \overset
      {
        \mathclap{
        \mbox{
          \tiny
          \color{brown}
          \begin{tabular}{c}
            cocycle in
            \\
            J-twisted Cohomotopy
            \\
            $\phantom{a}$
          \end{tabular}
        }
        }
      }
      {
        c
      }
    }_-{\ }="s"
    &&
    \overset{
      \mathclap{
      \mbox{
        \hspace{-.6cm}
        \tiny
        \color{blue}
        \begin{tabular}{c}
          classifying space of
          \\
          $\mathrm{Sp}(2)$-twisted Cohomotopy
          \\
          $\phantom{a}$
        \end{tabular}
      }
      }
    }{
      S^4 \!\!\sslash\! \mathrm{Sp}(2)
    }
    \ar[dl]^-{\;\;\;\;\;\;\;\;\;\;
      \mbox{
        \tiny
        \color{brown}
        \begin{tabular}{l}
          twisting through
          \\
          $\mathrm{Sp}(2)
          \underset{\mathclap{\mathrm{abstr}}}{\simeq}
          \mathrm{Spin}(5) \to \mathrm{Aut}(S^4)$
        \end{tabular}
      }
    }
    \\
    \underset{
      \mathclap{
      \mbox{
        \tiny
       \color{blue}
        \begin{tabular}{c}
          $\phantom{a}$
          \\
           classifying space of
          \\
          Spin structure
        \end{tabular}
      }
      }
    }{
      B \mathrm{Spin}(8)
    }
    &
    \underset{
      \mathclap{
      \mbox{
        \tiny
        \color{blue}
        \begin{tabular}{c}
          $\phantom{a}$
          \\
          classifying space of
          \\
          $\mathrm{Sp}(2)$-twists
        \end{tabular}
      }
      }
    }{
      B \mathrm{Sp}(2)
    }
    \ar[l]
    \ar@{=>}^-{
      \raisebox{25pt}{
      \hspace{-12pt}
      \begin{rotate}{-38}
        \mbox{
          {
          \tiny
          \color{brown}
          homotopy
          }
        }
      \end{rotate}
      }
    }
      "s"; "t"
    \ar@{=>}^-{
      \begin{rotate}{-39.6}
      \,
      \mbox{
        \tiny
        \color{brown}
        $\mathrm{Sp}(2)$-structure
      }
      \end{rotate}
    }
      "s2"; "t2"
  }
  }
  \;\;
  \right]_{
    \!\!\!\!\!\!\!\!\!\!
    \mathrlap{
    \mbox{
      \color{blue}
      \tiny
      \begin{tabular}{c}
        homotopy class
        \\
        over $B \mathrm{Sp}(2)$
      \end{tabular}
    }
    }
  }
  \;\;\;\;\;\;\in\;\;\;\;
  \overset{
    \mathclap{
    \mbox{
      \tiny
      \color{blue}
      \begin{tabular}{c}
        twisted 4-Cohomotopy
        \\
        of spacetime $X^8$
        \\
        $\phantom{a}$
      \end{tabular}
    }
    }
  }{
    \pi^{i_4 \circ \tau}\big( X^8 \big)
  }
$$

\noindent Then the {\bf 4-Cohomotopical Chern character}
\eqref{GeneralizedChernCharacter} of $[c]$,
hence the differential flux forms $(G_4,G_7)$ underlying
$[c]$ by \eqref{RationalTwists},
as indicated on the left in the following diagram

\vspace{-.7cm}

$$
  \xymatrix@R=-25pt@C=3.5em{
    \overset{
      \mathclap{
      \mbox{
        \tiny
        \color{blue}
        \begin{tabular}{c}
          $\phantom{a}$
          \\
          twisted 4-Cohomotopy
          \\
          $\phantom{a}$
        \end{tabular}
      }
      }
    }{
      \pi^\tau\big( X^8 \big)
    }
    \ar[rr]^{
      \overset{
        \mathclap{
        \mbox{
          \tiny
          \color{blue}
          \begin{tabular}{c}
            rationalization
            \\
          \end{tabular}
        }
        }
      }{
        L_{\mathbb{R}}
      }
    }_-{
      \mathclap{
      \mbox{
        \tiny
        \color{brown}
        cohomotopical
        Chern character
      }
      }
    }
    &&
    \overset{
      \mathclap{
      \mbox{
        \tiny
        \color{blue}
        \begin{tabular}{c}
          rational
          \\
          twisted 4-Cohomotopy
          \\
          $\phantom{a}$
        \end{tabular}
      }
      }
    }{
      \pi^\tau\big( X^8 \big)_{\mathbb{R}}
    }
    \ar@{<<-}[r]^-{
      \mbox{
        \tiny
        \color{blue}
        \begin{tabular}{c}
          equivalence
          \\
          relations
        \end{tabular}
      }
    }
    &
  \;\,
  \big\{
    (G_4, G_7)
    \,\vert\,
    \cdots
  \big\}
    \;\ar@{_{(}->}[r]^-{
      \mbox{
        \tiny
        \color{blue}
        \begin{tabular}{c}
          conditions
        \end{tabular}
      }
    }
    &
    \overset{
      \mathclap{
      \mbox{
        \tiny
        \color{blue}
        \begin{tabular}{c}
          plain
          \\
          differential forms
          \\
          $\phantom{a}$
        \end{tabular}
      }
      }
    }{
      \Omega^4(X^8) \times \Omega^7(X^8)
    }
    \\
    \underset{
      \mbox{
        \tiny
        \color{blue}
        \begin{tabular}{c}
          $\phantom{a}$
          \\
          class in
          \\
          twisted Cohomotopy
        \end{tabular}
      }
    }{
      [c]
    }
    \ar@{|->}[rr]
    &&
    \underset{
      \mbox{
        \tiny
        \color{blue}
        \begin{tabular}{c}
          $\phantom{a}$
          \\
          Chern character in
          \\
          twisted Cohomotopy
        \end{tabular}
      }
    }{
    \big[
      (
      G_4, G_7
      )
    \big]
    }
  }
$$
satisfy, first of all, this condition:

\noindent The {\bf shifted 4-flux} form

\vspace{-.9cm}

\begin{equation}
  \label{ShiftByBackgroundCharge}
  \widetilde G_4
  \;\coloneqq\;
  \underset{
    \mathclap{
    \mbox{
      \tiny
      \color{blue}
      \begin{tabular}{c}
        $\phantom{a}$
        \\
        naive
        \\
        4-flux
      \end{tabular}
    }
    }
  }{
    G_4
  }
  \;+\;
  \underset{
    \mathclap{
    \mbox{
      \tiny
      \color{blue}
      \begin{tabular}{c}
        $\phantom{a}$
        \\
        shift by first
        \\
        fractional
        \\
        Pontrjagin form
      \end{tabular}
    }
    }
  }{
    \tfrac{1}{4}p_1(\nabla)
  }
  \;\in\;
  \underset{
    \mathclap{
    \mbox{
      \tiny
      \color{blue}
      \begin{tabular}{c}
        $\phantom{a}$
        \\
        differential
        \\
        4-forms
      \end{tabular}
    }
    }
  }{
    \Omega^4
    \big(
      X^8
    \big)
  }
\end{equation}
represents an {\bf  integral} cohomology class
\begin{equation}
  \label{ShiftedFluxQuantizationCondition}
  \xymatrix{
  \underset{
    \mathclap{
    \mbox{
    \tiny
    \color{brown}
    \begin{tabular}{c}
      $\phantom{a}$
      \\
      shifted
      \\
      4-flux
    \end{tabular}
    }
    }
  }{
    [\widetilde G_4]
  }
  \;\in\;
  \underset{
    \mathclap{
    \mbox{
      \tiny
      \color{blue}
      \begin{tabular}{c}
        $\phantom{a}$
        \\
        integral
        cohomology
      \end{tabular}
      }
    }
  }{
    H^4\big( X^8, \mathbb{Z}\big)
  }
  \ar[rr]^-{
    \mathclap{
    \mbox{
      \tiny
      \color{blue}
      \begin{tabular}{c}
        extension
        of scalars
      \end{tabular}
    }
    }
  }
  &&
  \underset{
    \mathclap{
    \mbox{
      \tiny
      \color{blue}
      \begin{tabular}{c}
        $\phantom{a}$
        \\
        real cohomology
      \end{tabular}
    }
    }
  }{
    H^4\big( X^8, \mathbb{R}\big)
  }
  \;\simeq\;
  \underset{
    \mathclap{
    \mbox{
      \tiny
      \color{blue}
      \begin{tabular}{c}
        $\phantom{a}$
        \\
        de Rham cohomology
      \end{tabular}
    }
    }
  }{
    H_{{}_{\mathrm{dR}}}(X^8)
  }
  }
\end{equation}
on which the action of the {\bf Steenrod square vanishes}:
\begin{equation}
  \label{SteenrodSquareVanishes}
  \overset{
    \mathclap{
    \mbox{
      \tiny
      \color{blue}
      \begin{tabular}{c}
        Steenrod square of
        \\
        mod-2 reduction of
        \\
        integral shifted 4-flux
        \\
        $\phantom{a}$
      \end{tabular}
    }
    }
  }{
  \mathrm{Sq}^2
  \big(
    [\widetilde G_2]
  \big)
  }
  = 0
  \;\;\;\;\;\;\;\;
  \mbox{hence also}
  \;\;\;\;\;\;\;\;
  \overset{
    \mathclap{
    \mbox{
      \tiny
      \color{blue}
      \begin{tabular}{c}
        Steenrod cube of
        \\
        mod-2 reduction of
        \\
        integral shifted 4-flux
        \\
        $\phantom{a}$
      \end{tabular}
    }
    }
  }{
    \mathrm{Sq}^3
    \big(
      [\widetilde G_2]
    \big)
  }
  = 0
  \,,
\end{equation}
and its {\bf background charge} in the case of factorization through
$h_{\mathbb{H}} \!\sslash \mathrm{Sp}(2) $ is
\begin{equation}
  \label{BackgroundChargeValue}
  \overset{
    \mathclap{
    \mbox{
      \tiny
      \color{blue}
      \begin{tabular}{c}
        residual flux of cocucle
        \\
        factoring through $h_{\mathbb{H}} \!\sslash\! \mathrm{Sp}(2)$
        \\
        $\phantom{a}$
      \end{tabular}
    }
    \;\;\;\;\;\;\;\;\;\;
    }
  }{
    (G_4)_0
  }
    \;\;=\;\;
  \overset{
    \mathclap{
    \;\;\;\;
    \mbox{
      \tiny
      \color{blue}
      \begin{tabular}{c}
        background charge
        \\
        $\phantom{a}$
      \end{tabular}
    }
    }
  }{
    \tfrac{1}{4}p_1(\nabla)
  }
  \,.
\end{equation}

To see the next condition
satisfied by the pair $(G_4,G_7)$,
consider the homotopy pullback of the 4-Cohomotopy cocycle
$c$ along the $\mathrm{Sp}(2)$-twisted quaternionic Hopf fibration
$h_{\mathbb{H}}$ to a cocycle in twisted 7-Cohomotopy
on the induced 3-spherical fibration $\widehat H^8$ over spacetime:

\begin{equation}
  \label{ExtendedSpacetimeIntro}
  \hspace{-8mm}
  \left[
  \;\;\;\;\;\;\;\;\;\;\;\;\;\;\;\;
  \raisebox{55pt}{
  \xymatrix@R=23pt@C=5em{
    \overset{
      \mathclap{
      \mbox{
        \tiny
        \color{blue}
        \begin{tabular}{c}
          classifying space of
          \\
          compatible 3-flux
          \\
          $\phantom{a}$
        \end{tabular}
        }
      }
    }{
      \widehat X^8
    }
    \ar@{-->}[rr]^-{
      \overset{
        \mathclap{
        \;\;\;\;\;
        \mbox{
          \tiny
          \color{brown}
          \begin{tabular}{c}
            induced cocycle in
            \\
            twisted 7-Cohomotopy
            \\
            $\phantom{a}$
          \end{tabular}
        }
        }
      }
      {
        \widehat c
      }
    }
    \ar[d]_-{
      \mathllap{
      \mbox{
        \tiny
        \color{blue}
        \begin{tabular}{c}
          induced
          \\
          3-spherical
          \\
          fibration
        \end{tabular}
      }
      \;
      c^\ast h
      =:
      p
      }
    }
    &&
    \overset{
      \mathrlap{
      \;\;\;\;\;\;\;\;\;
      \mbox{
        \hspace{-.6cm}
        \tiny
        \color{blue}
        \begin{tabular}{c}
          classifying space of
          \\
          $\mathrm{Sp}(2)$-twisted 7-Cohomotopy
          \\
          $\phantom{a}$
        \end{tabular}
      }
      }
    }
    {
      S^7 \!\!\sslash\! \mathrm{Sp}(2)
    }
    \ar[d]^-{
      h :=
      h_{{}_{\mathbb{H}}}
      \sslash \mathrm{Sp}(2)
      \;\;
      \mbox{
        \tiny
        \color{blue}
        \begin{tabular}{c}
          $\mathrm{Sp}(2)$-parametrized
          \\
          quaternionic Hopf
          \\
          fibration
        \end{tabular}
      }
    }
    \\
    \mathllap{
      \mbox{
        \tiny
        \color{greeni}
        \begin{tabular}{c}
          spacetime
          \\
          $\phantom{a}$
        \end{tabular}
      }
      \;\;\;\,
    }
    {
      X^8
    }
    \ar[dr]^-{\tau}^>>>>>>>>>{\ }="t"_<<<<<{\ }="s2"
    \ar[d]_{
      \mathllap{
        \mbox{
          \tiny
          \color{greeni}
          \begin{tabular}{c}
            tangent
            \\
            bundle
          \end{tabular}
        }
      }
      T X^8
    }^>>>>>>>{\ }="t2"
    \ar@{-->}[rr]^<<<<<<<<<<<<<<<<<{
      \overset
      {
        \mathclap{
        \mbox{
          \tiny
          \color{brown}
          \begin{tabular}{c}
            cocycle in
            \\
            J-twisted 4-Cohomotopy
            \\
            $\phantom{a}$
          \end{tabular}
        }
        }
      }
      {
        c
      }
    }_-{\ }="s"
    &&
    {
      S^4 \!\!\sslash\! \mathrm{Sp}(2)
    }
    \mathrlap{
      \mbox{
        \tiny
        \color{greeni}
        \begin{tabular}{c}
          classifying space of
          \\
          $\mathrm{Sp}(2)$-twisted 4-Cohomotopy
        \end{tabular}
      }
    }
    \ar[dl]^-{\;\;\;
      \mbox{
        \tiny
        \color{greeni}
        \begin{tabular}{l}
          twisting through
          \\
          $\mathrm{Sp}(2)
          \underset{\mathclap{\mathrm{abstr}}}{\simeq}
          \mathrm{Spin}(5) \to \mathrm{Aut}(S^4)$
        \end{tabular}
      }
    }
    \\
    \underset{
      \mathclap{
      \mbox{
        \tiny
        \color{greeni}
        \begin{tabular}{c}
          $\phantom{a}$
          \\
          classifying space of
          \\
          Spin structure
        \end{tabular}
      }
      }
    }{
      B \mathrm{Spin}(8)
    }
    &
    \underset{
      \mathclap{
      \mbox{
        \tiny
        \color{greeni}
        \begin{tabular}{c}
          $\phantom{a}$
          \\
          classifying space of
          \\
          $\mathrm{Sp}(2)$-twists
        \end{tabular}
      }
      }
    }{
      B \mathrm{Sp}(2)
    }
    \ar[l]
    \ar@{=>} "s"; "t"
    \ar@{=>}^-{
      \begin{rotate}{-39.6}
      \mbox{
        \tiny
        \color{greeni}
        $\mathrm{Sp}(2)$-structure
      }
      \end{rotate}
    }
      "s2"; "t2"
  }
  }
  \!\!\!\!\! \right]_{
    \!\!\!\!\!\!\!\!\!\!
    \mathrlap{
    \mbox{
      \color{blue}
      \tiny
      \begin{tabular}{c}
        homotopy class
        \\
        over $B \mathrm{Sp}(2)$
      \end{tabular}
    }
    }
  }
  \;\;\;\;\;\;\in\;\;\;\;
  \overset{
    \mathclap{
    \mbox{
      \tiny
      \color{blue}
      \begin{tabular}{c}
        twisted 7-Cohomotopy
        \\
        of $\widehat X^8$
        \\
        $\phantom{a}$
      \end{tabular}
    }
    }
  }{
    \pi^{\tau \circ p}\big( \widehat X^8 \big)
  }
\end{equation}

\noindent Then:

\noindent The {\bf pullback 3-spherical fibration} over spacetime
$$
  \widehat X^8
   \;\coloneqq\;
  c^\ast
  \big(
    S^7 \!\!\sslash\! \mathrm{Sp}(2)
  \big)
$$
{\bf carries a universal 3-flux} $H_3^{\mathrm{univ}}$
which trivializes the 4-flux relative to its background value
\begin{equation}
  \label{3Flux}
  d\, H_3^{\mathrm{univ}}
  \;=\;
  p^\ast \widetilde G_4 - \tfrac{1}{4}p_1(\nabla)
  \,.
\end{equation}
\noindent Moreover, the {\bf 7-Cohomotopical Chern character} of
$[\hat c]$,
hence the flux forms underlying
$[\hat  c]$ by \eqref{CompatibleRationalTwists},
as indicated on the left in the following diagram
$$
  \xymatrix@R=-25pt@C=5em{
    \overset{
      \mathclap{
      \mbox{
        \tiny
        \color{blue}
        \begin{tabular}{c}
          $\phantom{a}$
          \\
          twisted 7-Cohomotopy
          \\
          $\phantom{a}$
        \end{tabular}
      }
      }
    }{
      \pi^{p \circ \tau}\big( \widehat X^8 \big)
    }
    \ar[rr]^{
      \overset{
        \mathclap{
        \mbox{
          \tiny
          \color{blue}
          \begin{tabular}{c}
            rationalization
            \\
          \end{tabular}
        }
        }
      }{
        L_{\mathbb{R}}
      }
    }_-{
      \mathclap{
      \mbox{
        \tiny
        \color{brown}
        cohomotopical
        Chern character
      }
      }
    }
    &&
    \overset{
      \mathclap{
      \mbox{
        \tiny
        \color{blue}
        \begin{tabular}{c}
          rational
          \\
          twisted 7-Cohomotopy
          \\
          $\phantom{a}$
        \end{tabular}
      }
      }
    }{
      \pi^{p \circ \tau}\big( \widehat X^8 \big)_{\mathbb{R}}
    }
    \ar@{<<-}[r]^-{
      \mbox{
        \tiny
        \color{blue}
        \begin{tabular}{c}
          equivalence relations
        \end{tabular}
      }
    }
    &
    \;
    \big\{
      \widetilde G_7
        \,\vert\,
      \cdots
    \big\}
    \;
    \ar@{_{(}->}[r]^-{
      \mbox{
        \tiny
        \color{blue}
        \begin{tabular}{c}
          conditions
        \end{tabular}
      }
    }
    &
    \overset{
      \mathclap{
      \mbox{
        \tiny
        \color{blue}
        \begin{tabular}{c}
          plain
          \\
          differential forms
          \\
          $\phantom{a}$
        \end{tabular}
      }
      }
    }{
      \Omega^7(\widehat X^8)
    }
    \\
    \underset{
      \mbox{
        \tiny
        \color{blue}
        \begin{tabular}{c}
          $\phantom{a}$
          \\
          class in
          \\
          twisted Cohomotopy
        \end{tabular}
      }
    }{
      \big[ \hat c \big]
    }
    \ar@{|->}[rr]
    &&
    \underset{
      \mbox{
        \tiny
        \color{blue}
        \begin{tabular}{c}
          $\phantom{a}$
          \\
          Chern character in
          \\
          twisted Cohomotopy
        \end{tabular}
      }
    }{
    \big[
      \widetilde G_7
    \big]
    }
  }
$$
satisfy this condition:
\\
\noindent The  {\bf shifted 7-flux} form

\vspace{-1.2cm}

\begin{equation}
  \label{PageCharge7Form}
  \widetilde G_7
  \;=\;
  \underset{
    \mathclap{
    \mbox{
      \tiny
      \color{blue}
      \begin{tabular}{c}
        $\phantom{a}$
        \\
        naive 7-flux
      \end{tabular}
    }
    }
  }{
    p^\ast G_7
  }
  \;
  +
  \;
  \underset{
    \mathclap{
    \mbox{
    \tiny
    \color{blue}
    \begin{tabular}{c}
      shift by
      \\
      Hopf-Whitehead term
    \end{tabular}
    }
    }
  }{
  \underbrace{
  \tfrac{1}{2}
  \overset{
    \mathclap{
    \!\!
    \mbox{
      \tiny
      \color{blue}
      \begin{tabular}{c}
        3-flux
      \end{tabular}
    }
    }
  }{
    H^{\mathrm{univ}}_3
  }
  \wedge
  p^\ast
  \overset{
    \mathclap{
    \;\;\;
    \mbox{
      \tiny
      \color{blue}
      \begin{tabular}{c}
        shifted
        4-flux
      \end{tabular}
    }
    }
  }{
    \widetilde G_4
  }
  }
  }
\end{equation}
{\bf is closed up to the Euler 8-form}

\vspace{-.7cm}

\begin{equation}
  \label{PageChargeFormIsClosed}
  d\, \widetilde G_7
    \;=\;
  - \tfrac{1}{2}p^\ast \rchi_8(\nabla)
\end{equation}
and
{\bf half-integral} on every 7-sphere $S^7 \overset{i}{\to} \widehat X^8$:

\vspace{-.7cm}

\begin{equation}
  \label{HalfIntegral7Flux}
  2
  \int_{{}_{S^7}} i^\ast \tilde G_7
  \;\in\;
  \mathbb{Z}
  \,.
\end{equation}

\noindent Finally, consider the case when:

\begin{enumerate}[{\bf(i)}]
\vspace{-2mm}
\item Our manifold is the complement in a closed 8-manifold
of a finite set of disjoint open balls,
i.e. of a tubular neighbourhood $\mathcal{N}$
around a finite set $\{x_1, x_2, \cdots\}$ of points:
\begin{equation}
  \label{ComplementOfM2s}
  X^8
  \;=\;
  \overset{
    \mathclap{
    \mbox{
      \tiny
      \color{blue}
      \begin{tabular}{c}
        closed
        \\
        manifold
        \\
        $\phantom{a}$
      \end{tabular}
    }
    \;\;\;
    }
  }{
    X^8_{\mathrm{clsd}}
  }
   \setminus
  \overset{
    \mathclap{
    \mbox{
      \tiny
      \color{blue}
      \begin{tabular}{c}
        tubular
        \\
        neighbourhood
        \\
        $\phantom{a}$
      \end{tabular}
    }
    }
  }{
    \mathcal{N}_{
      \underset{
        \mathclap{
        \mbox{
          \tiny
          \color{blue}
          \begin{tabular}{c}
            $\phantom{a}$
            \\
            $\mathllap{\mbox{\tiny \color{blue}around}\; }$
            points
            in $X^{8}_{\mathrm{clsd}}$
          \end{tabular}
        }
        }
      }{
        \{x_1, x_2, \cdots \}
      }
    }
  }
  \;\;\;\;\;\;
  \Rightarrow
  \;\;\;\;\;\;
  \overset{
    \mathclap{
    \mbox{
      \tiny
      \color{blue}
      \begin{tabular}{c}
        boundary
        \\
        of $X^8$
        \\
        $\phantom{a}$
      \end{tabular}
    }
    }
  }{
    \partial X^8
  }
  \;\simeq\;
  \underset{
    \{x_1, x_2, \cdots\}
  }{\sqcup}
  \overset{
    \mathclap{
    \mbox{
      \tiny
      \color{blue}
      \begin{tabular}{c}
        sphere
        \\
        around $x_i$
        \\
        $\phantom{a}$
      \end{tabular}
    }
    }
  }{
    S^7
  }
\end{equation}
This implies that $X^8$ is a manifold
with boundary a disjoint union of 7-spheres.

\vspace{-2mm}
\item Such that
the corresponding extended spacetime fibration $\widehat X^8\to X^8$
\eqref{ExtendedSpacetimeIntro} admits a global section;
hence,
equivalently, such that the given cocycle in twisted 4-Cohomotopy
lifts through the quaternionic Hopf fibration to a cocycle in
twisted 7-Cohomotopy:
\begin{equation}
  \label{LiftsThrowughhHSp2}
  \raisebox{20pt}{
  \xymatrix@C=3em@R=3em{
    &
    \overset{
      \mathclap{
      \mbox{
        \tiny
        \color{blue}
        \begin{tabular}{c}
          classifying space of
          \\
          compatible 3-flux
          \\
          $\phantom{a}$
        \end{tabular}
      }
      }
    }{
      \widehat X^8
    }
    \ar[d]^-{
      p := c^\ast(h)
      \;\;\;
      \mbox{
        \tiny
        \color{blue}
        \begin{tabular}{c}
          induced
          \\
          3-spherical
          \\
          fibration
        \end{tabular}
      }
    }
    \\
    X^8
    \ar@/^1.2pc/@{-->}[ur]^-{
      \overset{
        \mathllap{
        \mbox{
          \tiny
          \color{brown}
          \begin{tabular}{c}
            global section of
            \\
            3-spherical fibration
            \\
            $\phantom{a}$
          \end{tabular}
        }
        \!\!
        }
      }{
        i
      }
    }
    \ar@{=}[r]
    &
    X^8
  }
  }
  \;\;\;\;\;\;
  \Leftrightarrow
  \;\;\;\;\;\;\;\;\;
  \raisebox{20pt}{
  \xymatrix@C=3em@R=4em{
    & &
    S^7 \!\!\sslash\! \mathrm{Sp}(2)
    \ar[d]^-{
      h := h_{\mathbb{H}}\sslash \mathrm{Sp}(2)
      \;\;\;
      \mbox{
        \tiny
        \color{blue}
        \begin{tabular}{c}
          $\mathrm{Sp}(2)$-parametrized
          \\
          quaternionic Hopf
          \\
          fibration
        \end{tabular}
      }
    }
    \\
    X^8
    \ar@{-->}@/^2pc/[urr]^-{
      \overset{
        \mathllap{
        \mbox{
          \tiny
          \color{brown}
          \begin{tabular}{c}
            lift to
            cocycle in
            \\
            J-twisted 7-Cohomotopy
            \\
            $\phantom{a}$
          \end{tabular}
        }
        \!\!\!
        }
      }{
        \hat c
      }
    }_>>>>>>{\ }="s"
    \ar[rr]_-{
      \underset{
        \mbox{
          \tiny
          \color{blue}
          \begin{tabular}{c}
            $\phantom{a}$
            \\
            cocycle in
            \\
            J-twisted 4-Cohomotopy
          \end{tabular}
        }
      }{
        c
      }
    }^-{\ }="t"
    &&
    S^4 \!\!\sslash\! \mathrm{Sp}(2)
    \ar@{=>}|-{
      \mbox{
        \tiny
        \color{blue}
        homotopy
      }
    } "s"; "t"
  }
  }
\end{equation}
Here the choice of points in \eqref{ComplementOfM2s}
matters only in so far as a sufficient number of points
has to be removed for a lifted cocycle $\hat c$ \eqref{LiftsThrowughhHSp2}
to exist at all.

\end{enumerate}

\noindent By \eqref{3Flux} this lift exhibits a {\bf 4-fluxless}
background at least at the level of integral cohomology.
In order to refine this to 4-fluxlessness at the finer level
of (stable) Cohomotopy, we observe the following:
\begin{enumerate}[{\bf (i)}]
\vspace{-2mm}
\item  Since the 7-sphere is parallelizable,
upon
restriction of $\hat c$ \eqref{LiftsThrowughhHSp2} to
the boundary $\partial X^8 \overset{i}{\longrightarrow} X^8$
\eqref{ComplementOfM2s} the twist vanishes, and we are
left with a pair of compatible cocycles in plain Cohomotopy theory
as in \eqref{hHSendsDegree7ToDegree4}:
$$
  \xymatrix@C=5em{
    &&
    S^7
    \ar[d]^-{
      h_{\mathbb{H}}
      \;\;
      \mbox{
        \tiny
        \color{blue}
        \begin{tabular}{c}
          plain
          \\
          quaternionic
          \\
          Hopf fibration
        \end{tabular}
      }
    }
    \\
    \mathllap{
      \underset{
        \{x_1, x_2, \cdots\}
      }{\sqcup}
      \overset{
        \mathclap{
        \mbox{
          \tiny
          \color{blue}
          \begin{tabular}{c}
            boundary 7-spheres
            \\
            $\phantom{a}$
          \end{tabular}
        }
        }
      }{
        S^7
      }
      \simeq
      \;
    }
    \partial X^8
    \ar[rr]_-{
      \underset{
        \mathclap{
        \mbox{
          \tiny
          \color{brown}
          \begin{tabular}{c}
            underlying boundary
            \\
            4-Cohomotopy cocycle
          \end{tabular}
        }
        }
      }{
        (h_\mathbb{H})_\ast
        \hat c_{\vert \partial X^8}
      }
    }^{\ }="t"
    \ar@{-->}[urr]^{
      \overset{
        \mathllap{
        \mbox{
          \tiny
          \color{brown}
          \begin{tabular}{c}
            boundary restriction of
            \\
            twisted 7-Cohomotopy cocycle
            \\
            $\phantom{a}$
          \end{tabular}
        }
        \!\!
        }
      }{
        \hat c_{\vert \partial X^8}
      }
    }_{\ }="s"
    &&
    S^4
    \ar@{=>} "s"; "t"
  }
$$
\vspace{-5mm}
\item By \eqref{QuaternionicHopfFibration},
cocycles in
stable 7-Cohomotopy have no side-effect in stable 4-Cohomotopy,
hence remain stably {\bf cohomotopically 4-fluxless}
precisely if they are multiples of {\color{magenta}24}:
$$
  \mbox{For}
  \;
  [c_1], [c_2]
  \;\in\;
  \overset{
    \mathclap{
    \mbox{
      \tiny
      \color{blue}
      \begin{tabular}{c}
        7-Cohomotopy
        \\
        $\phantom{a}$
      \end{tabular}
    }
    }
  }{
    \pi^7
    \big(
      \partial X^8
    \big)
  }
  \;\;\;\;\mbox{we have}\;\;\;\;
  \left\{
  \begin{array}{crlrl}
    &
    (h_{\mathbb{H}})_\ast [c_1]
      &=&
    (h_{\mathbb{H}})_\ast [c_2]
    &
    \in\;
    \overset{
      \mathclap{
      \mbox{
        \tiny
        \color{blue}
        \begin{tabular}{c}
          stable
          4-Cohomotopy
        \end{tabular}
      }
      }
    }{
      \mathbb{S}^4
      \big(
        \partial X^8
      \big)
    }
    \\
    \Leftrightarrow
    \\
    &
    [c_1]
      &=_{\mathrm{mod}\;{\color{magenta}24}}&
    [c_2]
    &
    \in\;
    \underset{
      \mathclap{
      \mbox{
        \tiny
        \color{blue}
        \begin{tabular}{c}
          stable
          7-Cohomotopy
        \end{tabular}
      }
      }
    }{
      \mathbb{S}^7
      \big(
        \partial X^8
      \big)
    }
  \end{array}
  \right.
$$
This means that the {\bf unit charge} of
a lift $\hat c$ in \eqref{LiftsThrowughhHSp2},
as seen by stable Cohomotopy, is {\color{magenta}24}.
In view of \eqref{HalfIntegral7Flux}
this says that
the
{\bf cohomotopically normalized 7-flux} of $X^8$ is
\begin{equation}
  \label{CohomotopicalNormalization}
  N_{\mathrm{M2}}
  \;\coloneqq\;
  \tfrac{-1}{12} \int_{{}_{X^8}} i^\ast d \widetilde{G}_{7}
  \;=\;
  \tfrac{-1}{12} \int_{{}_{\partial X^8}} i^\ast \widetilde{G}_{7}
  \,.
\end{equation}
\end{enumerate}
Our final result is that:
\\
this
{\bf equals the $I_8$-number} \eqref{SP1Sp2Structure} of the manifold:
\begin{equation}
  \label{NIsI}
  N_{\mathrm{M2}}
  \;=\;
  I_8[X^8]
  \,.
\end{equation}

\newpage

\section{J-Twisted Cohomotopy theory}
\label{Cohomotopy}

We now introduce our twisted generalized cohomology theory, J-twisted Cohomotopy theory,
and discuss some general properties.

\subsection{Twisted Cohomotopy}
\label{CohomotopyTwisted}

The non-abelian cohomology theory
(see \cite{NSS12}, following \cite{SatiSchreiberStasheff12})
represented by the $n$-spheres is called
\emph{Cohomotopy}, going  back to \cite{Borsuk36}\cite{Spanier49}.
Hence for $X$ a topological space, its \emph{Cohomotopy set} in degree $n$
is
\begin{equation}
  \label{CohomotopyUntwisted}
  \pi^n(X)
  \;=\;
  \pi_0
  \mathrm{Maps}
  \big(
    X, S^n
  \big)
  \;=\;
  \Big\{
    \xymatrix{
      X
      \ar@{-->}[rr]^{
        \mbox{
          \tiny
          \color{blue}
          \begin{tabular}{c}
            cocycle in
            \\
            Cohomotopy
          \end{tabular}
        }
      }
      &&
      S^n
    }
  \Big\}_{\big/\sim}.
\end{equation}
A basic class of examples is Cohomotopy
of a manifold $X$ in the same degree as the dimension $\mathrm{dim}(X)$
of that manifold. The classical \emph{Hopf degree theorem}
(see, e.g., \cite[IX (5.8)]{Kosinski93}, \cite[7.5]{Kobin16})
says that for $X$ connected, orientable and closed,
this is canonically identified with the integral cohomology
of $X$, and hence with the integers

\begin{equation}
  \label{HopfDegreeTheorem}
  \xymatrix{
    \pi^{n}(X)
    \ar[rrr]_-{\simeq}^-{ S^{n} \to K(\mathbb{Z},n) }
    &&&
    H^{n}(X; \mathbb{Z})
    \simeq
      \mathbb{Z}
    \,,
    \phantom{AA}
    \mbox{for $n = \mathrm{dim}(X)$.}
  }
\end{equation}
In its generalization to the
\emph{equivariant Hopf degree theorem} this becomes
a powerful statement about equivariant Cohomotopy theory
and thus, via \hyperlink{HypothesisH}{\it Hypothesis H},
about brane charges at orbifold singularities \cite{ADE}.
We discuss this in detail elsewhere \cite{SS19a}\cite{SS19b}.

\medskip

Here we generalize ordinary Cohomotopy
\eqref{CohomotopyUntwisted} to \emph{twisted Cohomotopy} (Def. \ref{TwistedCohomotopy} below),
following the general theory of non-abelian (unstable)
twisted cohomology theory \cite[Sec. 4]{NSS12}.
\footnote{All constructions here are homotopical,
in particular all group actions, principal bundles, etc.
are ``higher structures up to coherent homotopy'',
in a sense that has been made completely rigorous
via the notion of $\infty$-groups,
and their $\infty$-actions on $\infty$-principal bundles
\cite{NSS12}.
But the pleasant upshot of this theory is that
when homotopy coherence is systematically accounted for,
then higher structures behave in all general ways
as ordinary structures, for instance in that homotopy
pullbacks satisfy the same structural pasting laws as
ordinary pullbacks. Beware, this means in particular
that all our equivalences
are weak homotopy equivalences
(even when we denote them as equalities),
and
that all our commutative diagrams are commutative up to
specified homotopies (even when we do not display these).
}
Generally, Cohomotopy in degree $n$ may by twisted by $\mathrm{Aut}(S^n)$-principal $\infty$-bundles,
for $\mathrm{Aut}(S^n) \subset \mathrm{Maps}(S^n, S^n)$ the automorphism $\infty$-group
of $S^n$ inside the mapping space from $S^n$ to itself.

\medskip
A well-behaved subspace of twists comes from
$\mathrm{O}(n+1)$-principal bundles,
or their associated real vector bundles of rank $n+1$, under the inclusion
\begin{equation}
\label{Jn}
\xymatrix{
  \widehat J_n
  \;:\;
  \mathrm{O}(n+1)
  \;
  \ar@{^{(}->}[r]
  &
  \mathrm{Aut}(S^n)
  \;\ar@{^{(}->}[r]
  &
  \mathrm{Maps}(S^n, S^n)
  },
\end{equation}
which witnesses the canonical action of orthogonal transformations in Euclidean space
$\mathbb{R}^{n+1}$ on the unit  sphere $S^n = S(\mathbb{R}^{n+1})$. The restriction
of these to $O(n)$-actions
$$
\xymatrix{
  J_n
  \;:\;
  \mathrm{O}(n)
 \; \ar@{^{(}->}[r]&
  \mathrm{O}(n+1)
  \; \ar@{^{(}->}[r]^-{\widehat J_n} &
  \mathrm{Maps}(S^n, S^n)
  }
$$
are known as the unstable \emph{J-homomorphisms} \cite{Whitehead42}
(see \cite{Kosinski93}\cite{Mathew12} for expositions).
By general principles \cite{NSS12}, the homotopy quotient $S^n \sslash O(n+1)$ of
$S^n$ by the action via $\widehat{J}_n$  is canonically equipped with a map $\tilde J_n$ to the
classifying space $B O(n+1)$, such that the homotopy fiber is $S^n$:
$$
  \xymatrix@R=1.3em{
    S^n
    \ar[r]
    &
    S^n \sslash {\rm O}(n+1)
    \ar[d]
    \\
    &
    B {\rm O}(n+1)\;.
  }
$$
One may think of this as the universal spherical fibration which is the $S^n$-fiber $\infty$-bundle
associated to the universal $O(n+1)$-principal bundle via the homotopy action $\hat{J}_n$.

\begin{defn}[Twisted Cohomotopy]
  \label{TwistedCohomotopy}
Given a map $\tau : X \to B \mathrm{O}(n+1)$, we define the
\emph{$\tau$-twisted cohomotopy set of $X$} to be
\begin{equation}
  \begin{aligned}
  \pi^{\tau}(X)
  & \;: =\;
  \left\{
  \raisebox{20pt}{
  \xymatrix{
    &&&
    S^n \!\sslash\! \mathrm{O}(n+1)
    \ar[d]^-{ \widetilde J_{\mathrm{O}(n+1)} }
    \\
    X
    \ar@{-->}[urrr]^{
      \mbox{
        \tiny
        \color{blue}
        \begin{tabular}{c}
          cocycle in
          \\
          twisted
          \\
          Cohomotopy
        \end{tabular}
      }
    }_>>>>>>>{\ }="s"
    \ar[rrr]^-{\tau}_-{
      \mbox{
        \tiny
        \color{blue}
        twist
      }
    }^{\ }="t"
    &&&
    B \mathrm{O}(n+1)
    \ar@{==>} "s"; "t"
  }
  }
  \right\}_{\raisemath{16pt}{\Bigg/\sim}}
  \\
  & =
  \left\{
  \raisebox{20pt}{
  \xymatrix@C=3.5em{
    &&
    E
    \ar[d]
    \ar[r]
    \ar@{}[dr]|-{
      \mbox{
        \tiny
        (pb)
      }
    }
    &
    S^n \!\sslash\! \mathrm{O}(n+1)
    \ar[d]^-{ \widetilde J_{\mathrm{O}(n+1)} }_<{\ }="s"
    \\
    X
    \ar@{=}[rr]
    \ar@{-->}[urr]^{
      \mbox{
        \tiny
        \color{blue}
        \begin{tabular}{c}
          cocycle in
          \\
          twisted
          \\
          Cohomotopy
        \end{tabular}
      }
    }
    &&
    X
    \ar[r]^-{\tau}_-{
      \mbox{
        \tiny
        \color{blue}
        twist
      }
    }^{\ }="t"
    &
    B \mathrm{O}(n+1)
    \ar@{==>} "s"; "t"
  }
  }
  \right\}_{\raisemath{16pt}{\Bigg/\sim}}
  \end{aligned}
\end{equation}
Here in the second line, $E \to X$ denotes the $n$-spherical fibration
classified by $\tau$ and the universal property of the
homotopy pullback shows that cocycles in $\tau$-twisted equivariant
Cohomotopy are equivalently sections of this $n$-spherical fibration.
\end{defn}
\begin{remark}[Notation]
Here the notation $\pi^{\tau}\big(X\big)$  is motivated, as usual in twisted cohomology,
from thinking of  the map $\tau$ as encoding, in particular, also the degree $n \in \mathbb{N}$.
\end{remark}

\begin{remark}[Cohomotopy twist by Spin structure]
  \label{CohomotopyTwistBySpinStructure}
  In applications, the twisting map $\tau$ is often equipped with a lift through some stage of the
  Whitehead tower of $B {\rm O}(n+1)$, notably with a lift through   $B \mathrm{SO}(n+1)$ or
  further to $B \mathrm{Spin}(n+1)$
  $$
    \xymatrix{
      X
      \ar@/_1.5pc/[rr]_-{ \tau }
      \ar[r]^-{ \widehat{\tau} }
      &
      B \mathrm{Spin}(n+1)
      \ar[r]
      &
      B \mathrm{O}(n+1)
    }\!.
  $$
  In this case, due to the homotopy pullback diagram
  $$
    \xymatrix@C=6em{
      S^n \!\sslash\! \mathrm{Spin}(n+1)
      \ar[r]
      \ar[d]
      \ar@{}[dr]|-{
        \mbox{
          \tiny
          (pb)
        }
      }
      &
      S^n \!\sslash\! \mathrm{O(n+1)}
      \ar[d]^-{\tilde J_{\mathrm{O}(n+1)}}
      \\
      B \mathrm{Spin}(n+1)
      \ar[r]
      &
      B \mathrm{O}(n+1)
    }
  $$
  the twisted cohomotopy set from Def. \ref{TwistedCohomotopy}
  is equivalently given by
\begin{equation}
  \label{SpinTwistedCohomotopy}
  \pi^{\tau}(X)
  \;\simeq\;
  \left\{
  \raisebox{20pt}{
  \xymatrix{
    &&&
    S^n \!\sslash\! \mathrm{Spin}(n+1)
    \ar[d]^-{ {\widetilde J}_{\mathrm{Spin}(n+1)} }
    \\
    X
    \ar@{-->}[urrr]^{
      \mbox{
        \tiny
        \color{blue}
        \begin{tabular}{c}
          cocycle in
          \\
          twisted
          \\
          Cohomotopy
        \end{tabular}
      }
    }_>>>>>>>{\ }="s"
    \ar[rrr]^-{\widehat \tau}_-{
      \mbox{
        \tiny
        \color{blue}
        twist
      }
    }^{\ }="t"
    &&&
    B \mathrm{Spin}(n+1)
    \ar@{==>} "s"; "t"
  }
  }
  \right\}_{\raisemath{16pt}{\Bigg/\sim}}
\end{equation}
Most of the examples in  \cref{TwistedCohomotopyInDegrees}
and \cref{CancellationFromCohomotopy} arise in this form.
\end{remark}

In order to extract differential form data (``flux densities'') from cocycles in twisted Cohomotopy,
in Prop. \ref{SectionsOfRationalSphericalFibrations} below, we consider rational twisted
Cohomotopy (Def. \ref{RationalTwistedCohomotopy}) below. A standard reference on the rational
homotopy theory involved is \cite{FHT00}.
Reviews streamlined to our context can be found
in \cite[Appendix A]{FSS16a}\cite{GaugeEnhancement}.

\begin{defn}[Chern character on twisted Cohomotopy]
  \label{RationalTwistedCohomotopy}
  We write
  $
    \xymatrix@C=1em{
      \pi^{\tau}( X )
      \ar[rr]^-{ L_{\mathbb{R}} }
      &&
      \pi^\tau( X )_{\mathbb{R}}
     }
  $
  for the rationalization of twisted Cohomotopy
  to \emph{rational twisted Cohomotopy},
  given by applying rationalization to all spaces
  and maps involved in a twisted Cohomotopy cocycle.
\end{defn}

\medskip

We now characterize cocycles in rational twisted Cohomotopy in terms of
differential form data
(which will be the corresponding ``flux density'' in \cref{CancellationFromCohomotopy}).

\begin{prop}[Differential form data underlying twisted Cohomotopy]
  \label{SectionsOfRationalSphericalFibrations}
  Let $X$ be a smooth manifold which is simply connected
  (see Remark \ref{SimplyConnected} below)
  and $\tau : X \to B \mathrm{SO}(n+1)$
  a twisting for Cohomotopy in degree $n$, according to   Def. \ref{TwistedCohomotopy}.
  Let $\nabla_\tau$ be any connection on the real vector bundle $V$ classified by $\tau$
with Euler form $\rchi_{2k+2}( \nabla_\tau)$  (see \cite[below (7.3)]{MathaiQuillen86}\cite[2.2]{Wu06}).

  \item {\bf (i)} {\bf If $n = 2k+1$ is odd}, $n\geq 3$,
  a cocycle defining a class in the
  rational $\tau$-twisted Cohomotopy   of $X$ (Def. \ref{RationalTwistedCohomotopy}) is
  equivalently given by a differential $2k+1$-form   $G_{2k+1} \in \Omega^{2k+1}(X)$ on
  $X$ which trivializes the negative of the Euler form
  \begin{equation}
    \label{FormDataForOddSphereCase}
    \pi^\tau
    (
     X
    )_{\mathbb{R}}
    \;\simeq\;
    \big\{
      G_{2k+1}
      \;\vert\;
      d \, G_{2k + 1}
      \;=\;
      -
      \rchi_{2k+2}( \nabla_\tau )
    \big\}_{\big/\sim}  \,.
  \end{equation}

 \item {\bf (ii) If $n = 2k$ is even}, $n \geq 2$, a cocycle defining a class in  the
  rational $\tau$-twisted Cohomotopy  of $X$ (Def. \ref{RationalTwistedCohomotopy}) is
  given by a pair of  differential forms $G_{2k}\in \Omega^{2k}(X)$ and $G_{4k-1}\in \Omega^{4k-1}(X)$
  such that
  \begin{align}
    d G_{2k}
    &=0;
    \qquad
    \pi^*G_{2k}
    = \tfrac{1}{2}\rchi_{2k}( \nabla_{\hat{\tau}})
  \\
  \label{FHTRelationII}
  d
    2G_{4k-1}
   &=
   - G_{2k} \wedge G_{2k}
   +\tfrac{1}{4}p_{{}_{k}}( \nabla_{\tau}),
  \end{align}
 where $p_{{}_{k}}( \nabla_{\tau})$ is the $k$-th Pontrjagin form of $\nabla_\tau$,
 $\pi\colon E\to X$ is the unit sphere bundle
 over $X$ associated with $\tau$,
 $\hat{\tau}\colon E\to B \mathrm{SO}(n)$ classifies
 the vector bundle $\widehat{V}$
 on $E$ defined by the splitting
 $\pi^*V=\mathbb{R}_E\oplus \widehat{V}$ associated
 with the tautological section of $\pi^*V$ over $E$,
 and $\nabla_{\hat{\tau}}$
 is the induced connection on $\widehat{V}$.
 That is,
 \begin{equation}
   \label{RationallyTwistedCohomtopyInEvenDegree}
    \pi^\tau
    (
      X
    )_{\mathbb{R}}
    \;\simeq\;
    \left\{
      \big( G_{2k}, 2G_{4k-1} \big)
      \;\Big\vert\;
      \begin{aligned}
        d \, G_{2k} & = 0\,,
          \phantom{AA}
          \pi^*G_{2k}
          =
          \tfrac{1}{2}\rchi_{2k}( \nabla_{\hat{\tau}})
        \\
        d \, 2G_{4k - 1} & =
          - G_{2k} \wedge G_{2k}
          + \tfrac{1}{4}p_{{}_{k}}( \nabla_{\tau})
      \end{aligned}
    \right\}_{\raisemath{16pt}{\Big/\sim}}.
  \end{equation}
\end{prop}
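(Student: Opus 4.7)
The plan is to deploy rational homotopy theory via (relative) Sullivan models, which is exactly the framework set up in the introduction. In broad strokes: (i) identify a rational twisted cocycle with a section of a rationalized spherical fibration; (ii) translate such rational sections into dgc-algebra morphisms out of a suitable relative Sullivan model; (iii) compute that relative Sullivan model for the universal $n$-spherical fibration over $B\mathrm{SO}(n+1)$; (iv) read off the differential-form relations stated in the proposition.

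First I would recall that, by Def.~\ref{TwistedCohomotopy}, a cocycle in $\pi^\tau(X)$ is a section of the $n$-spherical fibration $E \to X$ obtained as the homotopy pullback of $\widetilde{J}_{\mathrm{SO}(n+1)}\colon S^n\sslash\mathrm{SO}(n+1)\to B\mathrm{SO}(n+1)$ along $\tau$. Rationalizing and using the standard correspondence between rational sections of a fibration with simply connected fiber and dgc-algebra morphisms out of the relative minimal Sullivan model (e.g.\ \cite{FHT00}), the set $\pi^\tau(X)_{\mathbb{R}}$ is given by homotopy classes of dgc-algebra maps
\begin{equation*}
\mathrm{CE}\bigl(\mathfrak{l}(S^n\sslash\mathrm{SO}(n+1))\bigr) \longrightarrow \Omega^\bullet(X)
\end{equation*}
over the Chern--Weil morphism $\mathrm{CE}(\mathfrak{l}B\mathrm{SO}(n+1))\to \Omega^\bullet(X)$ sending the universal Pontrjagin (and, when applicable, Euler) generators to the Chern--Weil forms $p_i(\nabla_\tau),\;\rchi(\nabla_\tau)$. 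The simply-connectedness assumption on $X$ (Rem.~\ref{SimplyConnected}) is what guarantees this rational correspondence is clean.

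The key computational input is then the relative minimal Sullivan model of the universal spherical fibration, obtained from the coset presentations $S^{2k+1}\simeq \mathrm{SO}(2k+2)/\mathrm{SO}(2k+1)$ and $S^{2k}\simeq\mathrm{SO}(2k+1)/\mathrm{SO}(2k)$. In the odd-sphere case one adjoins a single generator $\omega_{2k+1}$ to $\mathrm{CE}(\mathfrak{l}B\mathrm{SO}(2k+2)) = \mathbb{R}[p_1,\dots,p_k,\rchi_{2k+2}]$ with twisted differential $d\omega_{2k+1}=-\rchi_{2k+2}$, reflecting the vanishing of the Euler class on the total space $B\mathrm{SO}(2k+1)$ of the $S^{2k+1}$-bundle. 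In the even-sphere case one adjoins two generators $\omega_{2k},\omega_{4k-1}$ to $\mathbb{R}[p_1,\dots,p_k]$ with $d\omega_{2k}=0$ and $d(2\omega_{4k-1})=-\omega_{2k}\wedge\omega_{2k}+\tfrac14 p_k$, reflecting the relation $\rchi_{2k}^{\,2}=p_k$ on the total space $B\mathrm{SO}(2k)$, together with the normalization $\pi^\ast\omega_{2k}=\tfrac12\rchi_{2k}(\nabla_{\hat\tau})$ coming from fiber integration along $\pi\colon E\to X$ and the splitting $\pi^\ast V=\mathbb{R}_E\oplus\widehat V$. Evaluating a dgca-morphism to $\Omega^\bullet(X)$ on these generators yields the forms $G_{2k+1}$, resp.\ $(G_{2k},G_{4k-1})$, and the twisted differential relations translate directly into \eqref{FormDataForOddSphereCase} and \eqref{RationallyTwistedCohomtopyInEvenDegree}.

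The main obstacle is the even-sphere case: pinning down the precise relative Sullivan model (in particular the factor $\tfrac14$ multiplying $p_k$ and the identification $\pi^\ast G_{2k}=\tfrac12\rchi_{2k}(\nabla_{\hat\tau})$) requires carefully tracking normalizations through the coset fibration $\mathrm{SO}(2k)\hookrightarrow\mathrm{SO}(2k+1)\to S^{2k}$ and the Chern--Weil map, using that in $H^\bullet\bigl(B\mathrm{SO}(2k);\mathbb{Q}\bigr)=\mathbb{R}[p_1,\dots,p_{k-1},\rchi_{2k}]$ the top Pontrjagin pulls back to $\rchi_{2k}^{\,2}$. The odd case is essentially a direct application of the same machinery, with only a single new generator and a twisted differential set equal to minus the Euler form. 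Finally, passing from strict dgca-morphisms to homotopy classes thereof yields the equivalence relation $/\!\sim$ displayed on the right-hand sides, completing the identification.
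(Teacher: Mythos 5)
Your proposal takes essentially the same route as the paper: both pass to relative Sullivan models of the spherical fibration (invoking the standard FHT model for sphere bundles), identify sections with dgca retractions, pin down $\omega_{2k}\mapsto \tfrac12\rchi_{2k}(\nabla_{\hat\tau})$ via the fiber normalization $\langle\omega_{2k},[S^{2k}]\rangle=1$ together with $\rchi(T S^{2k})$ integrating to $2$, and extract the quadratic relation $\pi^\ast p_k(V)=\rchi_{2k}(\widehat V)^2$ from naturality/multiplicativity to get the $\tfrac14 p_k$ term. The only cosmetic difference is that you phrase the computation over the universal fibration $S^n\sslash\mathrm{SO}(n+1)\to B\mathrm{SO}(n+1)$ and pull back, while the paper works directly with $\mathrm{CE}(\mathfrak{l}E)\to\mathrm{CE}(\mathfrak{l}X)$; the content is the same.
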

\begin{proof}
  By the assumption that the smooth manifold $X$   is simply connected, it has a Sullivan model
  dgc-algebra $\mathrm{CE}\big( \mathfrak{l}X \big)$ and we may compute the rational twisted
  Cohomotopy by choosing a Sullivan model $\mathfrak{l} E$  for the spherical fibration
  classified by $\tau$.  By definition of rational twisted Cohomotopy, we are interested in the set of
  homotopy equivalence classes of dgca morphisms
  $ \mathrm{CE}(\mathfrak{l} E)\to  \mathrm{CE}(\mathfrak{l} X)$ that are sections
  of the morphism $ \mathrm{CE}(\mathfrak{l} X)\to  \mathrm{CE}(\mathfrak{l} E)$
  corresponding to the projection $E\to X$.
  The Sullivan model model for $E$ is well known. We recall from
  {\cite[Sec. 15, Example 4]{FHT00}}:

\begin{enumerate}[{\bf (I).}]
\vspace{-3mm}
\item
The Sullivan model
  for the total space of a $2k+1$-spherical fibration $E \to X$ is of the form
  \begin{equation}
    \label{FibS2k1Model}
    \mathrm{CE}(\mathfrak{l} E )
    \;=\;
    \mathrm{CE}(\mathfrak{l}X)
    \otimes
    \mathbb{R}[ \omega_{2k+1}]
    /
    \left(
      d \, \omega_{2k+1}  = -  c_{2k+2}
   \right)
    ,
  \end{equation}
  where

  \begin{enumerate}
  \item
  $c_{2k+2} \in \mathrm{CE}\big( \mathfrak{l}X \big)$ is some element in the base algebra, which by
     \eqref{FibS2k1Model} is closed and so it represents a rational cohomology class
     $$
       [
         c_{2k+2}
       ]
       \;=\;
       H^{2k+2}
       (
         X; \mathbb{R}
       ).
     $$
  This class classifies the spherical fibration, rationally. Moreover, if the spherical fibration $E \to X$ happens to
  be the unit sphere bundle $E = S(V)$ of a real vector bundle $V \to X$, then
the class of $c_{2k+2}$ is the rationalized Euler class $\rchi_{2k+2}(V)$ of $V$:
    \begin{equation}
      \label{EulerC2k2}
      [
        c_{2k+2}
      ]
      \;=\;
      \rchi_{2k+2}(V)
      \;\in\;
      H^{2k+2}
      (
        X; \mathbb{R}
      )\;.
    \end{equation}
  \item
    and in this case,
    under the quasi-isomorphism
    $\mathrm{CE}(\mathfrak{l} E)\to \Omega^\bullet_{\mathrm{dR}}(E)$
    the new generator $\omega_{2k+1}$ corresponds to
    a differential form that evaluates to
    the unit volume on
    each $(2k+1)$-sphere fiber:
    \begin{equation}
      \label{OddDegreeGeneratorNormalization}
      \big\langle
        \omega_{2k+1},
        [S^{2k+1}]
      \big\rangle
      \;=\;
      1\;.
    \end{equation}
    (This is not stated in {\cite[Sec. 15, Example 4]{FHT00}},
    but follows with \cite{Chern44},
    see \cite[Ch. 6.6, Thm. 6.1]{Walschap04}.)
 \end{enumerate}
The morphism $ \mathrm{CE}(\mathfrak{l} X)\to  \mathrm{CE}(\mathfrak{l} E )$ is the obvious inclusion,
so a section is completely defined by the image of $\omega_{2k+1}$ in $ \mathrm{CE}(\mathfrak{l} X)$. This image
will be an element $g_{2k+1}\in  \mathrm{CE}(\mathfrak{l} X)$ such that $dg_{2k+1}=c_{2k+2}$, and every such
element defines a section $ \mathrm{CE}(\mathfrak{l} E)\to  \mathrm{CE}(\mathfrak{l} X)$ and so a
cocycle in rational twisted cohomotopy. Under the quasi-isomorphism
$\mathrm{CE}(\mathfrak{l} X)\to \Omega^\bullet_{\mathrm{dR}}(X)$ defining $\mathrm{CE}(\mathfrak{l} X )$
as a Sullivan model of $X$, the element $c_{2k+2}$ is mapped to a closed differential form $\rchi_{2k+2}( \nabla_\tau)$
representing the Euler class $\rchi_{2k+2}(V)$ of $V$, and so $g_{2k+1}$ corresponds to a differential form $G_{2k+1}$ on
$X$ with $dG_{2k+1}=\rchi_{2k+2}(\nabla_\tau)$.

\vspace{-2mm}
 \item
  The Sullivan model for  the total space of $2k$-spherical fibration $E \to X$ is of the form\footnote{
    There is an evident sign typo in the statement
    (but not in the proof) of
    {\cite[Sec. 15, Example 4]{FHT00}}
    with respect to equation \eqref{FibS2k1Model}:
    The standard fact that the Euler class squares to the
    top Pontrjagin class means that there must be the relative
    minus sign in \eqref{FibS2k1Model}, which is exactly
    what the proof of {\cite[Sec. 15, Example 4]{FHT00}}
    actually concludes.
  }
  \begin{equation}
    \label{FibS2kModel}
    \mathrm{CE}(\mathfrak{l} E )
    \;=\;
    \mathrm{CE}(\mathfrak{l}X)
    \otimes
    \mathbb{R}\big[ \omega_{2k}, \omega_{4k-1} \big]
    \big/
    \left(\!\! \small
      \begin{array}{ll}
        d \, \omega_{2k} & = 0
        \\
        d 2\omega_{4k-1}
         &
         =
         - \omega_{2k} \wedge \omega_{2k}
         +
         c_{4k}
      \end{array}
   \!\! \right)
    ,
  \end{equation}
  where

  \vspace{-3mm}
\begin{enumerate}
 \item
    $c_{4k} \in \mathrm{CE}( \mathfrak{l}X )$ is some element in the base algebra, which by
     \eqref{FibS2kModel} is closed and represents the rational cohomology class of the cup square
     of the class of $\omega_{4k}$:
     $$
       [
         c_{4k}
       ]
       \;=\;
       [
         \omega_{2k}
       ]^2
       \;\in\; \
       H^{4k}
       (
         X; \mathbb{R}
       ).
     $$
    This class classifies the spherical fibration, rationally.
 \item
    under the quasi-isomorphism $\mathrm{CE}(\mathfrak{l} E)\to \Omega^\bullet_{\mathrm{dR}}(E)$
    the new generator $\omega_{2k}$ corresponds to a closed differential form that restricts to the volume form on the
    $2k$-sphere fibers $S^{2k} \simeq E_x \hookrightarrow E$ over each point $x \in X$:
    \begin{equation}
      \label{EvenDegreeGeneratorNormalization}
      \big\langle \omega_{2k}, [S^{2k}]\big\rangle
      \;=\;
      1
      \;.
    \end{equation}

\end{enumerate}
Note that the element $[
         \omega_{2k}
       ]^2$ is a priori an element in $H^{4k}
       (
         E, \mathbb{R}
       )$. By writing $[
         c_{4k}
       ]
       =
       [
         \omega_{2k}
       ]^2
       \;\in\; \
       H^{4k}
       (
         X; \mathbb{R}
       )$ we mean that $[
         \omega_{2k}
       ]^2$ is actually the pullback of the element $[
         c_{4k}
       ]$ via the projection $\pi\colon E\to X$.

Moreover, if the spherical fibration $\pi\colon E \to X$ happens to be the unit sphere bundle $E = S(V)$
of a real vector bundle $V \to X$, then the tautological section of $\pi^*V$ defines a splitting
$\pi^*V=\mathbb{R}_E\oplus \widehat{V}$ and
\begin{enumerate}
\vspace{-2mm}
  \item
        the class of $\omega_{2k}$ is half the rationalized Euler class $\rchi_{2k}(\widehat V)$ of $\widehat V$:
    \begin{equation}
      \label{EulerC2k2}
      [
        \omega_{2k}
      ]
      \;=\;
      \tfrac{1}{2}\rchi(\widehat{V})
      \;\in\;
      H^{2k}
      (
        E; \mathbb{R}
      )\;.
    \end{equation}

\vspace{-2mm}
  \item
    the class of $c_{4k}$ is one fourth the rationalized $k$-th Pontrjagin class $p_k(V)$ of $V$:
    \begin{equation}
      \label{c4pk}
      [
        c_{4k}
      ]
      \;=\;
      \tfrac{1}{4} p_k(V)
      \;\in\;
      H^{4k}
      (
        X;  \mathbb{R}
      )\;.
    \end{equation}
\end{enumerate}
\end{enumerate}
\vspace{-2mm}
The second equation is actually a consequence of the first one and of the naturality and  multiplicativity
of the total rational Pontrjagin class:
\[
\pi^*p_k(V)=p_k(\mathbb{R}_E\oplus \widehat{V})=
p_k(\widehat{V})=\rchi_{2k}(\widehat{V})^2.
\]
Reasoning as in the odd sphere bundles case, a section of $ \mathrm{CE}(\mathfrak{l} X )\to  \mathrm{CE}(\mathfrak{l} E )$,
and so a cocycle in rational twisted cohomotopy, is the datum of elements $g_{2k}, g_{4k-1}\in  \mathrm{CE}(\mathfrak{l} X)$
such that $dg_{2k}=0$ and $d 2g_{4k-1}=- g_{2k} \wedge g_{2k}+c_{4k}$.
Under the quasi-isomorphism $\mathrm{CE}(\mathfrak{l} E)\to \Omega^\bullet_{\mathrm{dR}}(E)$, the element $g_{2k}$,
seen as an element in $\mathrm{CE}(\mathfrak{l} E)$, is mapped to a closed differential form $\frac{1}{2}\rchi_{2k}
( \nabla_{\hat{\tau}})$ representing 1/2 the Euler class $\rchi_{2k}(\widehat{V})$ of $\widehat{V}$, while under the
quasi-isomorphism $\mathrm{CE}(\mathfrak{l} X )\to \Omega^\bullet_{\mathrm{dR}}(X)$ the element $c_{4k}$ is mapped
to a closed differential form $\frac{1}{4}p_{{}_{k}}\big( \nabla_{\hat{\tau}}\big)$ representing 1/4 the $k$-th Pontrjagin class
$\frac{1}{4}p_{{}_{k}}(V)$ of $V$. Therefore, the quasi-isomorphism
$\mathrm{CE}(\mathfrak{l} X)\to \Omega^\bullet_{\mathrm{dR}}(X)$ turns the  elements $g_{2k}$ and $g_{4k-1}$
into differential forms $G_{2k}$ and $G_{4k-1}$ on $X$,
subject to the identities
$dG_{2k}=0$, $\pi^*G_{2k}=\frac{1}{2}\rchi_{2k}( \nabla_{\hat{\tau}})$,
and
$
  d 2G_{4k-1}
  =
  - G_{2k} \wedge G_{2k}
  +
  \tfrac{1}{4}p_{{}_{k}}( \nabla_{\hat{\tau}})$.
\end{proof}

\medskip

\begin{remark}[Simply-connectedness assumption]
  \label{SimplyConnected}
  The assumption in Prop. \ref{SectionsOfRationalSphericalFibrations}  that $X$ be simply connected is
  just to ensure  the existence of a Sullivan model for $X$, as used in the proof. (It would be sufficient to
  assume, for that purpose, that the fundamental group is nilpotent).  If $X$ is not simply connected and
  not even nilpotent, then a similar statement  about differential form data underlying twisted Cohomotopy
  cocycles on $X$ will still hold, but statement and proof will  be much more involved.  Hence we assume
  simply connected $X$ here only for convenience,  not for fundamental reasons.
  A direct consequence of this assumption, which will play a role in
  \cref{CancellationFromCohomotopy},  is that, by the Hurewicz theorem and the universal
  coefficient theorem, the degree 2 cohomology of  $X$ with coefficients in $\mathbb{Z}_2$ is given by:
  \begin{equation}
    \label{H2Dual}
    H^2(
      X; \mathbb{Z}_2
    )
    \;\simeq\;
    \mathrm{Hom}_{\mathrm{Ab}}
    \big(
            H_2(X,\mathbb{Z}),
      \;
      \mathbb{Z}_2
         \big)\,.
  \end{equation}
\end{remark}

\medskip

\subsection{Twisted Cohomotopy via topological $G$-structure}
\label{ViaReductionOfStructureGroups}

We discuss how cocycles in $J$-twisted Cohomotopy are equivalent to
choice of  certain topological $G$-structures (Prop. \ref{TwistedCohomotopyIsReductionOfStructureGroup} below).

\medskip

The following fact plays a crucial role throughout:
\begin{lemma}[Homotopy actions and reduction of structure group]
  \label{fibBiota}
  Let $G$ be a topological group and $V$ any topological
  space.

  \vspace{-1mm}
  \item {\bf (i)} Then for every homotopy-coherent action of $G$ on
  $V$, the corresponding homotopy quotient $V \!\sslash\! G$
  forms a homotopy fiber sequence of the form
  $$
    \xymatrix{
      V
      \ar[r]
      &
      V \!\sslash\! G
      \ar[r]
            & B G
    }
  $$
  and, in fact, this association establishes an
  equivalence between homotopy $V$-fibrations over
  $B G$ and homotopy coherent actions of $V$ on $G$.

\vspace{-1mm}
   \item {\bf (ii)} In particular, if $\iota : H \hookrightarrow G$
  is an inclusion of topological groups,
  then the homotopy fiber of the induced map
  $B \iota$ of classifying spaces is the coset
  space $G/H$:
  $$
    \xymatrix{
      G/H
      \ar[r]^-{\mathrm{fib}}
      &
      B H
      \ar[r]^-{ B \iota }
        & B G
    }
  $$
  thus exhibiting the weak homotopy equivalence
  $
    \big( G/H\big) \!\sslash\! G
    \;\simeq\;
    B H
     $.
\end{lemma}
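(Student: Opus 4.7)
The plan is to reduce both parts to the classical theory of the Borel construction and the universal principal bundle, citing the general $\infty$-categorical equivalence between homotopy $G$-actions and fiber bundles over $BG$ developed in [NSS12], which has already been invoked implicitly throughout the paper.

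For part (i), first I would represent the homotopy quotient explicitly as the Borel construction
\[
  V \sslash G \;\simeq\; EG \times_G V,
\]
where $EG \to BG$ is a universal $G$-principal bundle, i.e.\ a contractible space carrying a free (homotopy coherent) $G$-action. The projection $[e,v] \mapsto [e]$ to $BG = EG/G$ yields a fibration whose strict fiber over $[e_0]$ is canonically identified with $V$, producing the homotopy fiber sequence $V \to V\sslash G \to BG$. For the equivalence between homotopy $V$-fibrations over $BG$ and homotopy-coherent $G$-actions on $V$, I would invoke the straightening/unstraightening equivalence of [NSS12]: an $n$-spherical (or more generally $V$-) fibration over any base $B$ is classified by a map $B \to B\mathrm{Aut}(V)$, and the datum of a homotopy-coherent $G$-action on $V$ is precisely a factorization of such a map through $BG$.

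For part (ii), I would apply (i) to the natural left-translation action of $G$ on the coset space $G/H$, which is a homotopy-coherent (indeed strict) $G$-action. The Borel construction then gives
\[
  (G/H)\sslash G \;\simeq\; EG \times_G (G/H) \;\simeq\; EG/H,
\]
the last step because the left $G$-action on $G/H$ is transitive with stabilizer $H$. Since $EG$ is contractible and $H \hookrightarrow G$ acts freely on $EG$ by restriction, $EG$ serves as a model for $EH$, so $EG/H \simeq BH$. Under this identification, the projection $(G/H)\sslash G \to BG$ from (i) corresponds to the map $B\iota : BH \to BG$ induced by $\iota$, so the fiber sequence of (i) reads $G/H \to BH \to BG$, exhibiting $G/H$ as the homotopy fiber of $B\iota$ and establishing the weak equivalence $(G/H)\sslash G \simeq BH$.

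I do not expect a genuine obstacle here; the statements are standard once one is willing to invoke the general classification result of [NSS12]. The only delicate point is to work consistently in the homotopy-coherent setting — i.e.\ to treat $G$ as a grouplike $\infty$-monoid and the action as an $\infty$-functor $BG \to \mathrm{Spaces}$ — so that the Borel construction, coset spaces, and the identification $EG/H \simeq BH$ are all understood up to weak homotopy equivalence, in line with the footnote at the start of \cref{CohomotopyTwisted}.
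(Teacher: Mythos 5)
Your proposal is correct and follows essentially the same route as the paper: part (i) is deferred to the general classification result of [NSS12] (the paper also cites [DDK]), and part (ii) is a Borel-construction computation using that $EG$ with the restricted free $H$-action models $EH$. The only cosmetic difference is the direction of the chain of weak equivalences — the paper runs $BH \simeq \ast\times_H EG \simeq (\ast\times_H G)\times_G EG \simeq (G/H)\sslash G$, whereas you start from $(G/H)\sslash G \simeq EG\times_G(G/H) \simeq EG/H \simeq BH$; these are the same associativity manipulation read in opposite order.
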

\begin{proof}
  This equivalence goes back to \cite{DDK}.  A modern account which generalizes to
  geometric situations (relevant for refinement  of all constructions here to
  differential cohomology) is in \cite[Sec. 4]{NSS12}.
  When the given homotopy-coherent action of the topological group $G$ on $V$ happens to
  be given by an actual topological action we may use the Borel construction to represent the
  homotopy quotient. For the case of $H \hookrightarrow G$ a topological subgroup inclusion,
  we may compute as follows:
  $$
  \begin{aligned}
    B H
    & \simeq
    \ast \times_{H}
    E H
    \\
    & \simeq
    \ast
      \times_{H}
    E G
    \\
    & \simeq
    \ast
      \times_{H}
    \left(
      G
        \times_{G}
      E G
    \right)
    \\
    & \simeq
    \left(
      \ast
        \times_{H}
      G
    \right)
      \times_{G}
    E G
    \\
    & \simeq
    \left(
      G/H
    \right)
      \times_{G}
    E G
    \\
    & \simeq
    (G/H) \sslash G
    \,.
  \end{aligned}
$$
Here the first weak equivalence is the usual definition of the classifying space, while the second uses that one may
take a universal $H$-bundle $E H$, up to weak homotopy equivalence, any contractible space with free $H$-action, hence in
particular $E G$. The third line uses that
$G$ is the identity under
Cartesian product followed by the quotient by the diagonal $G$-action.
\end{proof}

\begin{prop}[Twisted cohomotopy cocycle is reduction of structure group]
 \label{TwistedCohomotopyIsReductionOfStructureGroup}
Cocycles in twisted Cohomotopy (Def. \ref{TwistedCohomotopy})
are equivalent to
choices of topological $G$-structure for
$G = O(n) \hookrightarrow O(n+1)$:
$$
  \pi^{\tau}(X)
  \;=\;
  \left\{
  \raisebox{23pt}{
  \xymatrix{
    &&&
    B \mathrm{O}(n)
    \ar[d]
    \\
    X
    \ar@{-->}[urrr]^{
      \mbox{
        \tiny
        \color{blue}
        \begin{tabular}{c}
          cocycle in
          \\
          twisted
          \\
          Cohomotopy
        \end{tabular}
      }
    }
    \ar[rrr]^-{\tau}_-{
      \mbox{
        \tiny
        \color{blue}
        twist
      }
    }
    &&&
    B \mathrm{O}(n+1)
  }
  }
  \right\}_{\raisemath{16pt}{\Bigg/\sim}}
$$
Moreover, if the twist is itself is factored through $B \mathrm{Spin}(n+1)$
as in Remark \ref{CohomotopyTwistBySpinStructure}, then
$\tau$-twisted Cohomotopy is equivalent to reduction along $\mathrm{Spin}(n) \hookrightarrow \mathrm{Spin}(n+1)$:
$$
  \pi^{\tau}(X)
  \;=\;
  \left\{
  \raisebox{23pt}{
  \xymatrix{
    &&&
    B \mathrm{Spin}(n)
    \ar[d]
    \\
    X
    \ar@{-->}[urrr]^{
      \mbox{
        \tiny
        \color{blue}
        \begin{tabular}{c}
          cocycle in
          \\
          twisted
          \\
          Cohomotopy
        \end{tabular}
      }
    }
    \ar[rrr]^-{\widehat \tau}_-{
      \mbox{
        \tiny
        \color{blue}
        twist
      }
    }
    &&&
    B \mathrm{Spin}(n+1)
  }
  }
  \right\}_{\raisemath{16pt}{\Bigg/\sim}}
$$
Generally, if there is a coset realization of an $n$-sphere
$S^n \simeq G/H$ and the twist is factored through $G$-structure,
then $\tau$-twisted Cohomotopy is further reduction to
topological $H$-structure:
$$
  \pi^{\tau}(X)
  \;=\;
  \left\{
  \raisebox{23pt}{
  \xymatrix{
    &&&
    B H
    \ar[d]
    \\
    X
    \ar@{-->}[urrr]^{
      \mbox{
        \tiny
        \color{blue}
        \begin{tabular}{c}
          cocycle in
          \\
          twisted
          \\
          Cohomotopy
        \end{tabular}
      }
    }
    \ar[rrr]^-{\widehat \tau}_-{
      \mbox{
        \tiny
        \color{blue}
        twist
      }
    }
    &&&
    B G
  }
  }
  \right\}_{\raisemath{16pt}{\Bigg/\sim}}
$$
\end{prop}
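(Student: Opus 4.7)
The plan is to derive the statement as an immediate application of Lemma \ref{fibBiota}, applied to the canonical subgroup inclusions that realize $S^n$ as a homogeneous space. The key observation is that the classifying bundle $\widetilde{J}_{\mathrm{O}(n+1)} : S^n \!\sslash\! \mathrm{O}(n+1) \to B\mathrm{O}(n+1)$ appearing in Def. \ref{TwistedCohomotopy} is, up to weak homotopy equivalence, the map of classifying spaces associated with a subgroup inclusion.

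First I would observe that the canonical (vector) action of $\mathrm{O}(n+1)$ on the unit sphere $S^n \subset \mathbb{R}^{n+1}$ is transitive with stabilizer of any chosen basepoint equal to $\mathrm{O}(n)$, so that we have the standard coset realization
\[
  S^n \;\simeq\; \mathrm{O}(n+1)/\mathrm{O}(n).
\]
Applying Lemma \ref{fibBiota}(ii) to the inclusion $\iota : \mathrm{O}(n) \hookrightarrow \mathrm{O}(n+1)$ then yields a homotopy fiber sequence
\[
  \xymatrix{
    S^n \ar[r] & B\mathrm{O}(n) \ar[r]^-{B\iota} & B\mathrm{O}(n+1)
  }
\]
together with a weak equivalence $S^n \!\sslash\! \mathrm{O}(n+1) \simeq B\mathrm{O}(n)$ under which $\widetilde{J}_{\mathrm{O}(n+1)}$ is identified with $B\iota$. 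The equivalence in the first claim is then direct: by Def. \ref{TwistedCohomotopy} a cocycle in $\pi^\tau(X)$ is a homotopy class of lifts of $\tau$ against $\widetilde{J}_{\mathrm{O}(n+1)}$, which under the identification above is precisely a homotopy class of lifts of $\tau$ against $B\iota$, i.e.\ a reduction of the structure group from $\mathrm{O}(n+1)$ to $\mathrm{O}(n)$.

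The Spin case proceeds by exactly the same argument, using the coset realization $S^n \simeq \mathrm{Spin}(n+1)/\mathrm{Spin}(n)$ (which follows from the fact that $\mathrm{Spin}(n+1) \to \mathrm{SO}(n+1)$ is a double cover restricting to $\mathrm{Spin}(n) \to \mathrm{SO}(n)$ over the stabilizer), combined with the equivalent reformulation \eqref{SpinTwistedCohomotopy} of twisted Cohomotopy over a Spin twist. Lemma \ref{fibBiota}(ii) applied to $\mathrm{Spin}(n) \hookrightarrow \mathrm{Spin}(n+1)$ then identifies $S^n \!\sslash\! \mathrm{Spin}(n+1) \simeq B\mathrm{Spin}(n)$ fibered over $B\mathrm{Spin}(n+1)$ by the classifying map of the subgroup inclusion, so that lifts of $\widehat{\tau}$ through $B\mathrm{Spin}(n) \to B\mathrm{Spin}(n+1)$ are in bijection with twisted Cohomotopy cocycles.

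Finally, the general statement for an arbitrary coset realization $S^n \simeq G/H$ factored through a twist $\widehat{\tau} : X \to BG$ follows by the same recipe: Lemma \ref{fibBiota}(ii) applied to $H \hookrightarrow G$ yields $(G/H)\!\sslash\! G \simeq BH$, the bundle $S^n\!\sslash\! G \to BG$ is identified with $BH \to BG$, and a homotopy class of sections of the pullback to $X$ is precisely a lift of $\widehat{\tau}$ through $BH \to BG$. The only subtle point, and thus the main thing to spell out carefully, is to verify that the $G$-action on $S^n$ used to form $S^n \!\sslash\! G$ in the definition of the twisted Cohomotopy set agrees, up to homotopy, with the left-translation action of $G$ on $G/H$ induced by the coset realization; but this is automatic once a $G$-equivariant homeomorphism $S^n \simeq G/H$ is chosen, since by the first part of Lemma \ref{fibBiota} such a choice determines the same homotopy $V$-fibration over $BG$ on both sides.
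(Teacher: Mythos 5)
Your proof is correct and takes exactly the approach the paper uses: applying Lemma \ref{fibBiota} to the coset realizations $S^n \simeq O(n+1)/O(n)$, $\mathrm{Spin}(n+1)/\mathrm{Spin}(n)$, and $G/H$, identifying the universal sphere bundle with $BH \to BG$, so that sections become lifts. You spell out the verification that the two $G$-actions coincide, which the paper leaves implicit, but this is the same argument.
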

\begin{proof}
This follows by applying Lemma \ref{fibBiota} and using the fact that
$S^n \simeq O(n+1)/O(n)$.
\end{proof}

\begin{remark}[Cohomotopy twists from coset space structures on spheres]
\label{CohomotopyTwistsFromCosetSpaceStructuresOnSpheres}
\item {\bf (i)}   Prop. \ref{TwistedCohomotopyIsReductionOfStructureGroup}
  say that
  for each topological coset space structure on an $n$-sphere
  $
    S^n \;\simeq\; G/H
  $
  the corresponding $G$-twisted Cohomotopy
  (Def. \ref{TwistedCohomotopy})
  classifies reduction
  to topological $H$-structure.

\item {\bf(ii)} Coset space structures on $n$-spheres
come in three infinite series
and a few exceptional cases:

\hypertarget{TableS}{}
\begin{center}
\begin{tabular}{|l|l|}
  \hline
  {\bf Spherical coset spaces} & \cite{MontgomerySamelson43}, see \cite[p.2]{GrayGreen70}
  \\
  \hline
  \hline
  \rule{0pt}{2.5ex}  $S^{n-1} \simeq \mathrm{Spin}(n)/\mathrm{Spin}(n-1)$
  &
  \multirow{3}{*}{
    \begin{tabular}{l}
      standard,
      \\
      e.g. \cite[17.1]{BorelSerre53}
    \end{tabular}
  }
  \\
  \cline{1-1}
  \rule{0pt}{2.5ex}  $S^{2n-1} \simeq \mathrm{SU}(n)/\mathrm{SU}(n-1)$
  &
  \\
  \cline{1-1}
  \rule{0pt}{2.5ex}  $S^{4n-1} \simeq \mathrm{Sp}(n)/\mathrm{Sp}(n-1)$
  &
  \\
  \hline
  \hline
   \rule{0pt}{2.5ex}  $S^7 \simeq \mathrm{Spin}(7)/\mathrm{G}_2$
  &
  \begin{tabular}{l}
  \cite[Thm. 3]{Varadarajan01}
  \end{tabular}
  \\
  \hline
   \rule{0pt}{2.5ex}  $S^7 \simeq \mathrm{Spin}(6)/\mathrm{SU}(3)$
  &
  \begin{tabular}{l}
  by $\mathrm{Spin}(6) \simeq \mathrm{SU}(4)$
  \end{tabular}
  \\
  \hline
   \rule{0pt}{2.5ex}  $S^7 \simeq \mathrm{Spin}(5)/\mathrm{SU}(2)$
  &
  \begin{tabular}{l}
  by $\mathrm{Spin}(5) \simeq \mathrm{Sp}(2)$
  \\
  and $\mathrm{SU}(2) \simeq \mathrm{Sp}(1)$
  \\
  \cite{ADP}\cite{DNP1}
  \end{tabular}
  \\
  \hline
  \rule{0pt}{2.5ex}  $S^6 \simeq \mathrm{G}_2/\mathrm{SU}(3)$
  &
  \begin{tabular}{l}
  \cite{FuIsh55}
  \end{tabular}
  \\
  \hline
\end{tabular}

\vspace{.1cm}

{\bf Table S.}
{\small
Coset space structures on topological $n$-spheres.}
\end{center}

\item {\bf (iii)} Assembling these for the case of the 7-sphere, we interpret the
result in terms of special holonomy and $G$-structures
corresponding to consecutive reductions.

\end{remark}

Using this, the following construction is a rich source of twisted Cohomotopy
cocycles:

\begin{lemma}[Classifying maps to normal bundles via twisted Cohomotopy]
  \label{ClassifyingMapsToNormalBundles}
  Let $Y$ be a manifold, $n \in \mathbb{N}$
  a natural number and
  $
    \xymatrix{
      Y
      \ar[r]^-{ N }
      &
      B \mathrm{O}(n+1)
    }
  $
  a classifying map.

\vspace{-2mm}
  \item {\bf (i)} If $X \overset{\pi}{\to} Y$ denotes the $n$-spherical
  fibration classified by $N$, hence,
  by Remark \ref{CohomotopyTwistsFromCosetSpaceStructuresOnSpheres},
  the homotopy pullback in the following diagram:
  \begin{equation}
    \label{SphericalFibrationByPullback}
    \xymatrix@R=1.4em@C=4em{
      X
      \ar[rr]_-{ c }
      \ar@{}[ddrr]|-{
        \mbox{
          \tiny
         {\rm  (pb)}
        }
      }
      \ar@/^1pc/[rrr]^-{ T_{Y} X }
      \ar[dd]_-{\pi}
      &&
      S^n \!\sslash\! \mathrm{O}(n+1)
      \ar[r]_-{\simeq}
      \ar[dd]
      &
      B \mathrm{O}(n)
      \ar[ddl]
      \\
      \\
      Y
      \ar[rr]_-{N}
      &&
      B \mathrm{O}(n+1)
    }
  \end{equation}
  then the total top horizontal map is equivalently the classifying
  map for the vertical tangent bundle $T_Y X$, of $X$ over $Y$, as shown.

\vspace{-1mm}
  \item {\bf (ii)} In particular, if
  $
    Y
    :=
    \Sigma \times \mathbb{R}_{\gt 0}
  $
  is the Cartesian product of some manifold $\Sigma$
  with a real ray,  so that each fiber of $\pi$ over $\Sigma$ is identified with
  the Cartesian space $\mathbb{R}^{n+1}$ with the origin removed
  $$
    S^n \times \mathbb{R}_{\gt 0}
    \;\simeq\;
    \mathbb{R}^{n + 1} \setminus \{0\}
  $$
  then the pullback map $c$ in \eqref{SphericalFibrationByPullback}
  is a cocycle in $N\circ \pi$-twisted Cohomotopy on $X$, according to
  Def. \ref{TwistedCohomotopy}:
  \begin{equation}
    \label{TwistedCohomotopyCocycleFromN}
    \xymatrix@C=3em{
      &&
      S^n \!\sslash\! \mathrm{O}(n+1)
      \ar[d]
      \\
      X
      \ar[rr]|-{\, N \circ \pi  \,}
      \ar[rrd]_-{ T X }
      \ar@{-->}[urr]^-{ c }
      &&
      B \mathrm{O}(n+1)
      \ar[d]
      \\
      &&
      B \mathrm{O}( \mathrm{dim}(\Sigma) + n + 1 )
    }
  \end{equation}
\end{lemma}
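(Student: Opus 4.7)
The whole statement reduces to the standard fact that the unit sphere bundle of a real vector bundle carries a canonical tautological section, combined with the equivalence $S^n \sslash \mathrm{O}(n+1) \simeq B\mathrm{O}(n)$ furnished by Proposition \ref{TwistedCohomotopyIsReductionOfStructureGroup}.

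For part (i), the plan is to first realize the universal spherical fibration $p\colon S^n \sslash \mathrm{O}(n+1) \to B\mathrm{O}(n+1)$ as the unit sphere bundle $S(V_{\mathrm{univ}})$ of the universal real rank-$(n+1)$ bundle $V_{\mathrm{univ}} \to B\mathrm{O}(n+1)$. On this sphere bundle the tautological section $\sigma$ of $p^{*}V_{\mathrm{univ}}$, sending $v \in S(V_{\mathrm{univ}})$ to itself, is fibrewise of unit length and hence nowhere vanishing. This produces a canonical metric splitting
\[
  p^{*} V_{\mathrm{univ}}
  \;\cong\;
  \underline{\mathbb{R}}\!\cdot\!\sigma
  \;\oplus\;
  \widehat{V}_{\mathrm{univ}},
\]
with complementary rank-$n$ bundle $\widehat{V}_{\mathrm{univ}}$ equal to the vertical tangent bundle of $p$. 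Under the equivalence of Proposition \ref{TwistedCohomotopyIsReductionOfStructureGroup}, the classifying map of $\widehat{V}_{\mathrm{univ}}$ is precisely the identification $S^n \sslash \mathrm{O}(n+1) \xrightarrow{\simeq} B\mathrm{O}(n)$. Naturality of classifying maps then implies that, under the homotopy pullback \eqref{SphericalFibrationByPullback}, the composite $X \xrightarrow{c} S^n \sslash \mathrm{O}(n+1) \simeq B\mathrm{O}(n)$ classifies the pullback bundle $c^{*}\widehat{V}_{\mathrm{univ}}$; and this pullback is the fibrewise orthogonal complement of the pulled-back tautological section inside $\pi^{*}V$, which is by definition the vertical tangent bundle $T_Y X$ of $\pi\colon X \to Y$.

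For part (ii), once (i) is in hand the cocycle property is immediate from Definition \ref{TwistedCohomotopy}: the homotopy pullback square \eqref{SphericalFibrationByPullback} literally exhibits $c$ as a lift of the twist $N \circ \pi\colon X \to B\mathrm{O}(n+1)$ across the universal spherical fibration, which is exactly what a $(N\circ\pi)$-twisted Cohomotopy cocycle on $X$ is. The compatibility with $TX$ shown on the bottom of \eqref{TwistedCohomotopyCocycleFromN} then follows from a direct tangent-bundle decomposition: writing $\pi_{\Sigma}\colon X \to Y \to \Sigma$ for the total projection, one has $TX = \pi_{\Sigma}^{*}T\Sigma \oplus T_{\Sigma}X$, and the product structure $Y = \Sigma \times \mathbb{R}_{>0}$ trivializes the vertical tangent bundle of $Y \to \Sigma$ to $\underline{\mathbb{R}}$, giving $T_{\Sigma}X = \underline{\mathbb{R}} \oplus T_Y X \cong \pi^{*}V$ by (i). Combined with a trivialization of $T\Sigma$ (as holds in the intended applications where $\Sigma$ is Cartesian and hence parallelizable), this identifies the classifying map $TX$ with the stabilization of $N\circ\pi$ along $B\mathrm{O}(n+1)\hookrightarrow B\mathrm{O}(\dim(\Sigma)+n+1)$, as the outer face of the diagram asserts.

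The only real subtlety is bookkeeping rather than content: one must check that the splitting of $\pi^{*}V$ induced by the \emph{pulled back} tautological section coincides strictly with the abstract splitting $\underline{\mathbb{R}} \oplus T_Y X$ on the base side, not merely up to unspecified isomorphism. This is guaranteed by naturality of the tautological section under pullback of sphere bundles, and ensures that the identifications fit together coherently. Everything else in the argument is routine manipulation with classifying spaces and the universal property of the homotopy pullback.
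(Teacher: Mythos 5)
Your argument is correct, and the paper in fact states Lemma \ref{ClassifyingMapsToNormalBundles} without supplying a proof, treating it as a standard fact, so there is no proof in the source to compare against. Your strategy is the natural one: identify the coefficient space $S^n \!\sslash\! \mathrm{O}(n+1)$ with the unit sphere bundle of the universal rank-$(n+1)$ bundle, observe that the tautological section splits off a trivial line and that the complementary rank-$n$ bundle is precisely the vertical tangent bundle of $p$, identify the resulting classifying map with the weak equivalence to $B\mathrm{O}(n)$ from Prop. \ref{TwistedCohomotopyIsReductionOfStructureGroup}, and then pull everything back along the homotopy-pullback square \eqref{SphericalFibrationByPullback}; part (ii) is then an immediate rereading of \eqref{SphericalFibrationByPullback} against Def. \ref{TwistedCohomotopy}. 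One point worth emphasizing, which you correctly flag: the outer triangle of \eqref{TwistedCohomotopyCocycleFromN}, identifying the full tangent classifier $TX$ with the stabilization of $N\circ\pi$ along $B\mathrm{O}(n+1)\hookrightarrow B\mathrm{O}(\dim\Sigma+n+1)$, uses $TX \cong \pi^*V \oplus (\pi_\Sigma)^*T\Sigma$ and hence implicitly requires a trivialization of $T\Sigma$; this is not stated as a hypothesis in the lemma, though it holds in the situations in which the lemma is invoked (punctured open balls in a Riemannian manifold, as in \cref{EulerCharacteristicAndM2BraneWorldvolumes}). Your parenthetical caveat is the honest and right thing to say, and if anything the lemma itself should carry that hypothesis explicitly.
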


\medskip

\subsection{Twisted Cohomotopy in degrees 4 and 7 combined}
\label{TwistedCohomotopyInDegrees}

We discuss here twisted Cohomotopy in degree 4 and 7 jointly,
related by the quaternionic Hopf fibration $h_{\mathbb{H}}$.
This requires first determining the space of twists that
are compatible with $h_{\mathbb{H}}$, which is the
content of Prop. \ref{EquihH} and Prop. \ref{hHsslash}
below.
This yields the scenario
of incremental $G$-structures
shown in \hyperlink{FigureT}{\it Figure T}.
The twists that appear are subgroups
of $\mathrm{Spin}(8)$ related by triality
(Prop. \ref{QuaternionicSubgroupTriality} below), and
in fact the classifying space for the C-field
implied by \hyperlink{HypothesisH}{\it Hypothesis H}
comes out to be the homotopy-fixed locus of triality.

\medskip
It will be useful to have the following notation for a basic but crucial operation on $\mathrm{Spin}$ groups:
\begin{defn}[Central product of groups]
\label{Def-dot}
Given a tuple of groups $G_1, G_2, \cdots, G_n$,
each equipped with
a central $\mathbb{Z}_2$-subgroup inclusion
$
  \mathbb{Z}_2 \simeq \{1, -1\} \subset Z(G_i) \subset G_i
$,
we write
\begin{equation}
  \label{QuotientDot}
  G_1
    \boldsymbol{\cdot}
  G_2
   \boldsymbol{\cdot}
  \cdots
   \boldsymbol{\cdot}
  G_{n-1}
   \boldsymbol{\cdot}
  G_n
  \;:=\;
  \big(
     G_1 \times G_2 \times \cdots \times G_n
  \big)/_{\mathrm{diag}} \mathbb{Z}_2
\end{equation}
for the quotient group of their direct product group
by the corresponding diagonal $\mathbb{Z}_2$-subgroup:
$$
\xymatrix{
  \{
    (1,1, \cdots, 1),
    \;
    (-1, -1, \cdots, -1)
  \}
  \; \ar@{^{(}->}[r] &
  G_1 \times G_2 \times \cdots \times G_n
 }.
$$
Just to save space we will sometimes suppress the dots and write
$
  G_1 G_2 :=
  G_1
    \boldsymbol{\cdot}
  G_2
$,
etc.
\end{defn}
\begin{example}[Central product of symplectic groups]
\label{CentralProductOfSymplecticGroups}
The notation in Def. \ref{Def-dot} originates in \cite{Alekseevskii}\cite{Gray} for the examples
\begin{equation}
  \label{SpnSp1}
  \mathrm{Sp}(n)\boldsymbol{\cdot}\mathrm{Sp}(1)
  \;:=\;
  \big(
    \mathrm{Sp}(n) \times \mathrm{Sp}(1)
  \big)/ \{(1,1), (-1,-1)\} .
\end{equation}
For $n \geq 2$ this is such that a
$\mathrm{Sp}(n) \!\boldsymbol{\cdot}\! \mathrm{Sp}(1)$-structure on a $4n$-dimensional
manifold is equivalently a quaternion-K{\"a}hler structure
\cite{Salamon82}.
Specifically, for $n = 2$ there is a canonical subgroup inclusion
\begin{equation}
  \label{ActionOfSp2Sp1}
  \raisebox{30pt}{
  \xymatrix@C=2em@R=.1pt{
    &&
    \mathrm{Spin}(8)
    \ar@{->>}[dddd]
    \\
    \\
    \\
    \\
    \mathrm{Sp}(2)
     \boldsymbol{\cdot}
    \mathrm{Sp}(1)
    \ar@{^{(}->}[uuuurr]|-{   }
   \; \ar@{^{(}->}[rr]
    &&
    \mathrm{SO}(8)
    \mathrlap{
      \;
      \simeq \mathrm{SO}( \mathbb{H}^2)
    }
    \\
\;\;\; \;\;\;   (
      A, q
    )
    \;\;\;\;\;\;\;\;\ar@{|->}[rr]
    &&
  \;\;\;\; \;\;\;\;\;\;\; ( x \mapsto A \cdot x \cdot \overline{q} )
  }
  }
\end{equation}
given by identifying elements of
$\mathrm{Sp}(2)$
as quaternion-unitary $2 \times 2$-matrices $A$, elements of $\mathrm{Sp}(1)$ as
multiples of the $2 \times 2$ identity matrix by unit quaternions $q$, and acting with
such pairs by quaternionic matrix conjugation on elements
$x \in \mathbb{H}^2 \simeq_{\mathbb{R}} \mathbb{R}^8$ as indicated.
This lifts to an inclusion into $\mathrm{Spin}(8)$ through the defining double-covering map
(see \cite[2.]{CV97}).
Notice that reversing the $\mathrm{Sp}$-factors gives an
isomorphic group, but a different subgroup inclusion
\begin{equation}
  \label{ActionOfSp1Sp2}
  \raisebox{30pt}{
  \xymatrix@C=3em@R=.1pt{
    &&
    \mathrm{Spin}(8)
    \ar@{->>}[dddd]
    \\
    \\
    \\
    \\
    \mathrm{Sp}(1)
     \boldsymbol{\cdot}
    \mathrm{Sp}(2)
    \ar@{^{(}->}[uuuurr]|-{   }
    \;\ar@{^{(}->}[rr]
    &&
    \mathrm{SO}(8)
    \mathrlap{
      \;
      \simeq \mathrm{SO}( \mathbb{H}^2)
    }
    \\
  \;\;\; \;\;\;   (
      q, A
    )
  \;\;\;\;\;\;\;\;\ar@{|->}[rr]
    &&
   \;\;\;\; \;\;\;\;\;\;\;  ( x \mapsto q \cdot x \cdot \overline{A})
  }
  }
\end{equation}
For more on this see Prop. \ref{QuaternionicSubgroupTriality} below.
\end{example}
\begin{example}[Central product of Spin groups]
\label{ProductSpinNotSubgroup}
For $n_1, n_2 \in \mathbb{N}$, we have the central product (Def. \ref{Def-dot})
of the corresponding Spin groups
\begin{equation}
  \label{Spinn1Spinn2}
  \mathrm{Spin}(n_1)  \boldsymbol{\cdot} \mathrm{Spin}(n_2)
  \;:=\;
  \big(
    \mathrm{Spin}(n_1) \times \mathrm{Spin}(n_2)
  \big) / \{ (1,1), (-1,-1)\}\;.
\end{equation}
(This notation is used for instance in \cite[p. 9]{McInnes99}  \cite[Prop. 17.13.1]{HN}.)
Here the canonical subgroup inclusions of Spin groups
$
    \mathrm{Spin}(n)
    \overset{ \iota_{n} }{\longhookrightarrow}
    \mathrm{Spin}(n + k)
$
induce a canonical subgroup inclusion of \eqref{Spinn1Spinn2}
into $\mathrm{Spin}(n_1 + n_2)$:
\begin{equation}
  \label{Spinn1Spinn2SubgroupInclusion}
  \xymatrix@C=.1em@R=-1pt{
    &
 \;\;\;\;  (\alpha, \beta)
    \;\;\;\ar@{|->}[rr]
    &&
   \iota_{n_1}(\alpha) \cdot \iota_{n_2}(\beta)
    \\
    \big( \mathbb{Z}_2\big)_{\mathrm{diag}}
   \; \ar@{^{(}->}[r]^-{\mathrm{ker}}
    &
    \mathrm{Spin}(n_1)
    \times
    \mathrm{Spin}(n_2)
    \ar[rr]
    \ar[dddd]_-{\mathrm{quot}}
    &&
    \mathrm{Spin}(n_1 + n_2)\;.
    \\
    \\
    \\
    \\
    &
 \;\;\;\;\;\;\;\;\;\;\;\;\;\;   \mathrm{Spin}(n_1)
    \boldsymbol{\cdot}
    \mathrm{Spin}(n_2)
  \;\;\;\;\;\;\;\;\;\;\;\;\;\;
    \ar@{^{(}->}[uuuurr]
  }
\end{equation}
Notice that these groups sit in short exact sequences as follows:
\begin{equation}
  \label{Spinn1Spinn2ExactSequences}
  \xymatrix{
    1
    \ar[r]
    &
    \mathrm{Spin}(n_1)
   \; \ar@{^{(}->}[rr]^-{ \iota_{n_1} }
    &&
    \mathrm{Spin}(n_1)
      \boldsymbol{\cdot}
    \mathrm{Spin}(n_2)
    \ar@{->>}[rr]^-{ \mathrm{pr}_{n_2} }
    &&
    \mathrm{SO}(n_2)
    \ar[r]
    &
    1
  }.
\end{equation}
\end{example}

For low values of $n_1, n_2$ there are exceptional isomorphisms between the groups
\eqref{SpnSp1} and \eqref{Spinn1Spinn2} as abstract groups, but as subgroups under
the inclusions \eqref{ActionOfSp2Sp1} and \eqref{Spinn1Spinn2SubgroupInclusion}
these are different. This is the content
of Prop. \ref{QuaternionicSubgroupTriality} below.
First we record the following, for later use:

\begin{defn}[Universal class of central products]
  \label{Epsilon}
  For $n_1, n_2 \in \mathbb{N}$,
  write
  $$
    \varpi
    \;\in\;
    H^2\big(
      B
      (
        \mathrm{Spin}(n_1)
        \boldsymbol{\cdot}
        \mathrm{Spin}(n_2)
      )
      ;\,
      \mathbb{Z}_2
    \big)
  $$
  for the \emph{universal characteristic class}   on the classifying space of the central product
  Spin group (Def. \ref{ProductSpinNotSubgroup})  which is the pullback of the second Stiefel-Whitney class
  $w_2 \in H^2\big(
    B \mathrm{SO}(n_2),
    \mathbb{Z}_2
  \big)$
  from the classifying space of the underlying
  $\mathrm{SO}(n_2)$-bundles, via the projection \eqref{Spinn1Spinn2ExactSequences}:
  \begin{equation}
    \label{EpsilonFormula}
    \varpi
    \;:=\;
    (B \mathrm{pr}_{n_2})^\ast( w_2)
    \,.
  \end{equation}
\end{defn}
See also \cite[Def. 2.1]{Salamon82}, following \cite{MarchiafavaRomani76}.

\begin{lemma}[Obstruction to direct product structure]
  \label{ObstructionToDirectProductStructure}
  For $n_1, n_2 \in \mathbb{N}$,
  let
  $
    X
      \overset{\tau}{\longrightarrow}
      B
      \big(
        \mathrm{Spin}(n_1)
          \boldsymbol{\cdot}
        \mathrm{Spin}(n_2)
      \big)
  $
  be a classifying map for a central product Spin structure  (Def. \ref{ProductSpinNotSubgroup}).
  Then the following are equivalent:
  \begin{enumerate}[{\bf (i)}]
  \vspace{-2mm}
    \item The class $\varpi$ from Def. \ref{Epsilon} vanishes:
    $$
      \varpi(\tau) \;=\; 0
      \;\in\;
      H^2(X; \mathbb{Z}_2)
      \,.
    $$
    \vspace{-6mm}
    \item The classifying map $\tau$ has a lift to the direct product Spin structure:
   \vspace{-2mm}
    $$
      \xymatrix@C=3em{
        &&
        B
        \big(
          \mathrm{Spin}(n_1)
            \times
          \mathrm{Spin}(n_2)
        \big)
        \ar[d]
        \\
        X
          \ar[rr]_-{\tau}
          \ar@{-->}[urr]^-{\widehat\tau}
          &&
        B
        \big(
          \mathrm{Spin}(n_1)
            \!\boldsymbol{\cdot}\!
          \mathrm{Spin}(n_2)
        \big)\;.
      }
    $$
    \vspace{-4mm}
    \item The underlying $\mathrm{SO}(n_2)$-bundle
    admits Spin structure:
    $$
      \xymatrix@C=5em{
        &&
        B
          \mathrm{Spin}(n_2)
        \ar[d]
        \\
        X
          \ar[rr]_-{B\mathrm{pr}_{n_2} \circ \tau}
          \ar@{-->}[urr]^-{\widehat{ B\mathrm{pr}_{n_2} \circ \tau }}
          &&
        B
        \mathrm{SO}(n_2)\;.
      }
    $$
  \end{enumerate}
\end{lemma}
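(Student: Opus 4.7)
The plan is to package both equivalences around a single map of central $\mathbb{Z}_2$-extensions and to identify $\varpi$ as the characteristic class classifying the top extension. Concretely, I would first record the commutative square of central $\mathbb{Z}_2$-quotients
\begin{equation*}
\begin{array}{ccc}
\mathrm{Spin}(n_1)\times\mathrm{Spin}(n_2) & \twoheadrightarrow & \mathrm{Spin}(n_1)\boldsymbol{\cdot}\mathrm{Spin}(n_2) \\
\mathrm{pr}_2\,\downarrow & & \downarrow\,\mathrm{pr}_{n_2} \\
\mathrm{Spin}(n_2) & \twoheadrightarrow & \mathrm{SO}(n_2),
\end{array}
\end{equation*}
in which the top horizontal quotient is by the diagonal kernel $\{(\pm 1,\pm 1)\}$ and the bottom by the Spin kernel $\{\pm 1\}$ (the right column is precisely \eqref{Spinn1Spinn2ExactSequences}). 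The essential observation is that $\mathrm{pr}_2$ sends $(-1,-1)$ to $-1$, so this square is a morphism of central $\mathbb{Z}_2$-extensions that induces the \emph{identity} on kernels.

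Next, I would pass to classifying spaces: each row becomes a principal $K(\mathbb{Z}_2,1)$-fibration, classified by a characteristic class in $H^2(-;\mathbb{Z}_2)$. For the bottom row this class is tautologically $w_2\in H^2(B\mathrm{SO}(n_2);\mathbb{Z}_2)$, and I denote the one for the top row by $\epsilon \in H^2\bigl(B(\mathrm{Spin}(n_1)\boldsymbol{\cdot}\mathrm{Spin}(n_2));\mathbb{Z}_2\bigr)$. Naturality of the classifying class of a central $\mathbb{Z}_2$-extension, applied to the induced map of classifying spaces, yields $\epsilon = (B\mathrm{pr}_{n_2})^{\ast}w_2$, which is precisely the definition \eqref{EpsilonFormula} of $\varpi$. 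Hence $\varpi = \epsilon$.

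With this identification, both equivalences reduce to standard obstruction theory for principal $K(\mathbb{Z}_2,1)$-fibrations. For (i)$\Leftrightarrow$(ii), I would invoke the fibration sequence
\[
B(\mathrm{Spin}(n_1)\times\mathrm{Spin}(n_2)) \;\longrightarrow\; B(\mathrm{Spin}(n_1)\boldsymbol{\cdot}\mathrm{Spin}(n_2)) \;\longrightarrow\; K(\mathbb{Z}_2,2)
\]
whose final map represents $\epsilon$: a lift $\widehat\tau$ of $\tau$ exists if and only if $\tau^{\ast}\epsilon = \tau^{\ast}\varpi$ vanishes. For (i)$\Leftrightarrow$(iii), I would invoke the classical fact that $w_2$ is the obstruction to the Spin lift of an $\mathrm{SO}(n_2)$-bundle: applied to the $\mathrm{SO}(n_2)$-bundle with classifying map $B\mathrm{pr}_{n_2}\circ\tau$, the Spin lift exists if and only if $(B\mathrm{pr}_{n_2}\circ\tau)^{\ast}w_2 = \tau^{\ast}\varpi$ vanishes.

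The main obstacle is the identification $\varpi = \epsilon$; once one is careful about the diagonal-to-Spin-kernel matching in the square above, the rest is naturality of classifying $\mathbb{Z}_2$-classes plus the classical $w_2$ obstruction. I would also verify that for the applications of interest (notably $n_1=5$, $n_2=3$ in the $\mathrm{Sp}(2)\boldsymbol{\cdot}\mathrm{Sp}(1) \simeq \mathrm{Spin}(5)\boldsymbol{\cdot}\mathrm{Spin}(3)$ setting of Table 1) the hypothesis $n_2 \geq 3$ holds, so that $\mathrm{Spin}(n_2)\twoheadrightarrow\mathrm{SO}(n_2)$ is genuinely the universal double cover and the Spin obstruction in (iii) is unambiguous.
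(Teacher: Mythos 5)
Your proof is correct and takes essentially the same route as the paper's: both rest on the observation that the projection $\mathrm{pr}_2$ carries the diagonal $\mathbb{Z}_2$ kernel of $\mathrm{Spin}(n_1)\times\mathrm{Spin}(n_2)\twoheadrightarrow\mathrm{Spin}(n_1)\boldsymbol{\cdot}\mathrm{Spin}(n_2)$ identically onto the $\mathbb{Z}_2$ kernel of $\mathrm{Spin}(n_2)\twoheadrightarrow\mathrm{SO}(n_2)$, and then pass to classifying spaces to identify $\varpi=(B\mathrm{pr}_{n_2})^{\ast}w_2$ as the class of the top $K(\mathbb{Z}_2,2)$-valued fibration. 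The paper packages this as a $3\times 3$ short-exact-sequence diagram and its delooping, while you package it as a morphism of central $\mathbb{Z}_2$-extensions inducing the identity on kernels; these are the same argument, and your final caveat about $n_2\geq 3$ is harmless but unnecessary, since $w_2$ classifies the Spin cover of $B\mathrm{SO}(n_2)$ for all $n_2$.
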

\begin{proof}
By \eqref{Spinn1Spinn2} and \eqref{Spinn1Spinn2ExactSequences}
we have the following short exact sequence of short exact sequences
of groups:
$$
  \xymatrix@R=.8em{
    &
    1
    \ar@{^{(}->}[dd]
    \ar@{^{(}->}[rr]
    &&
    \mathrm{Spin}(5)
    \ar@{=}[rr]
    \ar@{^{(}->}[dd]
    &&
    \mathrm{Spin}(5)
    \ar@{^{(}->}[dd]
    \\
    \\
    &
    \mathbb{Z}_2
    \ar@{=}[dd]
    \ar@{^{(}->}[rr]
    &&
    \mathrm{Spin}(5)
    \times
    \mathrm{Spin}(3)
    \ar@{->>}[dd]
    \ar@{->>}[rr]
    &&
    \mathrm{Spin}(5)
    \!\boldsymbol{\cdot}\!
    \mathrm{Spin}(3)
    \ar@{->>}[dd]^-{ \mathrm{pr}_3 }
    \\
    \\
    &
    \mathbb{Z}_2
    \ar@{^{(}->}[rr]
    &&
    \mathrm{Spin}(3)
    \ar@{->>}[rr]
    &&
    \mathrm{SO}(3)
  }
$$
Since the bottom left morphism is an identity, it follows
that also after passing to classifying spaces and
forming connecting homomorphisms, the corresponding
morphism on the bottom right in the following diagram is
a weak homotopy equivalence:
$$
\hspace{-1cm}
  \xymatrix@R=.8em{
    &
    B \mathbb{Z}_2
    \ar@{=}[dd]
    \ar[rr]
    &&
    B
    \big(
      \mathrm{Spin}(5)
      \times
      \mathrm{Spin}(3)
    \big)
    \ar[dd]
    \ar[rr]
    &&
    B
    \big(
      \mathrm{Spin}(5)
      \!\boldsymbol{\cdot}\!
      \mathrm{Spin}(3)
    \big)
    \ar[rr]^-{ \varpi }
    \ar[dd]^-{ B \mathrm{pr}_3 }
    &&
    B^2 \mathbb{Z}
    \ar@{=}[dd]
    \\
    \\
    &
    B \mathbb{Z}_2
    \ar[rr]
    &&
    B \mathrm{Spin}(3)
    \ar[rr]
    &&
    B \mathrm{SO}(3)
    \ar[rr]^-{ w_2 }
    &&
    B^2 \mathbb{Z}
  }
$$
By the top homotopy fiber sequence, this exhibits $\varpi$ as the obstruction to the lift from central product Spin structure
to direct product Spin structure.
\end{proof}

\begin{example}
 \label{Sp1CentralProductGroups}
Applying Def. \ref{Def-dot} to three copies of $\mathrm{Sp}(1)$ yields the group
\begin{equation}
  \label{Sp1Sp1Sp1}
  \mathrm{Sp}(1)
    \boldsymbol{\cdot}
  \mathrm{Sp}(1)
    \boldsymbol{\cdot}
  \mathrm{Sp}(1)
  \;:=\;
  \big(
    \mathrm{Sp}(1)
    \times
    \mathrm{Sp}(1)
    \times
    \mathrm{Sp}(1)
  \big) / \big\{ (1,1,1), (-1,-1,-1) \big\}
  \,.
\end{equation}
The notation appears for instance in \cite{OrneaPiccinni01}\cite{BettiolMendes14}.
\begin{itemize}
\vspace{-2mm}
\item Observe that, due to the exceptional isomorphisms
$\mathrm{Spin}(3) \simeq \mathrm{Sp}(1)$ and
$\mathrm{Spin}(4) \simeq \mathrm{Spin}(3) \times \mathrm{Spin}(3)$
there are isomorphisms
\begin{equation}
  \label{Spin4DotSpin3}
  \mathrm{Spin}(4)\boldsymbol{\cdot}\mathrm{Spin}(3)
  \;\simeq\;
  \mathrm{Spin}(3)
    \boldsymbol{\cdot}
  \mathrm{Spin}(3)
    \boldsymbol{\cdot}
  \mathrm{Spin}(3)
  \;\simeq\;
  \mathrm{Sp}(1)
    \boldsymbol{\cdot}
  \mathrm{Sp}(1)
    \boldsymbol{\cdot}
  \mathrm{Sp}(1)
  \,.
\end{equation}
\vspace{-8mm}
\item
The group \eqref{Sp1Sp1Sp1} is acted upon via automorphisms interchange the three dot-factors
by the symmetric group on three elements:
\begin{equation}
  \label{S3ActionOnSpin4Spin3}
  \xymatrix@C=.1em{
  \ar@(lu,ld)_{ S_3 } & \big(
    \mathrm{Sp}(1)
      \boldsymbol{\cdot}
    \mathrm{Sp}(1)
      \boldsymbol{\cdot}
    \mathrm{Sp}(1)
    \big)
  }
\end{equation}
\vspace{-6mm}
\item
Beware that the central product of groups with central $\mathbb{Z}_2$-subgroup (Def. \ref{Def-dot})
is not a binary associative operation: for instance, we have
\begin{equation}
  \label{SO4AsCentralProduct}
  \mathrm{Sp}(1)
    \boldsymbol{\cdot}
  \mathrm{Sp}(1)
  \;\simeq\;
  \mathrm{Spin}(3)
    \boldsymbol{\cdot}
  \mathrm{Spin}(3)
  \;\simeq\;
  \mathrm{SO}(4)
  \,,
\end{equation}
which does not even contain the $\mathbb{Z}_2$-subgroup
anymore that one would diagonally quotient out in
\eqref{Spin4DotSpin3}, hence the would-be iterated binary expression
``$
  \big(
    \mathrm{Sp}(1)
      \boldsymbol{\cdot}
    \mathrm{Sp}(1)
  \big)
    \boldsymbol{\cdot}
  \mathrm{Sp}(1)
  $''
  does not even make sense.
Instead we have
\begin{equation}
  \label{TriplCentralProductAsCartesianFollowedByCentral}
  \mathrm{Sp}(1)
    \boldsymbol{\cdot}
  \mathrm{Sp}(1)
    \boldsymbol{\cdot}
  \mathrm{Sp}(1)
  \;\simeq\;
  \big(
    \mathrm{Sp}(1)
    \times
    \mathrm{Sp}(1)
  \big)
    \boldsymbol{\cdot}
  \mathrm{Sp}(1)
  \,.
\end{equation}
But it is useful to observe that
\begin{equation}
  \mathrm{Sp}(1)
  \;\simeq\;
  \mathrm{Sp}(1)
    \boldsymbol{\cdot}
  \mathbb{Z}_2
  \phantom{AA}
  \mbox{and}
  \phantom{AA}
  \mathrm{Sp}(1)
    \times
  \mathrm{Sp}(1)
  \;\simeq\;
  \mathrm{Sp}(1)
    \boldsymbol{\cdot}
  \mathbb{Z}_2
    \boldsymbol{\cdot}
  \mathrm{Sp}(1) \;.
\end{equation}
All of the above will play a role in Prop. \ref{hHsslash} below.
\end{itemize}
\end{example}

\begin{prop}[Triality of quaternionic subgroups of $\mathrm{Spin}(8)$]
\label{QuaternionicSubgroupTriality}
The subgroup inclusions into $\mathrm{Spin}(8)$ of
  $\mathrm{Sp}(2) \!\boldsymbol{\cdot}\! \mathrm{Sp}(1)$ via
  \eqref{ActionOfSp2Sp1},
  $\mathrm{Sp}(1) \!\boldsymbol{\cdot}\! \mathrm{Sp}(2)$ via
  \eqref{ActionOfSp1Sp2},
and
  $\mathrm{Spin}(5) \!\boldsymbol{\cdot}\! \mathrm{Spin}(3)$
  via \eqref{Spinn1Spinn2SubgroupInclusion}, represent three distinct conjugacy classes of subgroups,
and under the defining projection to $\mathrm{SO}(8)$ they map to subgroups of $\mathrm{SO}(8)$ as follows:
$$
  \xymatrix@R=.1em@C=1.5em{
    \mathrm{Sp}(1) \boldsymbol{\cdot} \mathrm{Sp}(2)
    \ar@{=}[dddddr]
    \ar@{_{(}->}[ddrr]^-{ \iota' }
    \\
    \\
    &&
    \mathrm{Spin}(8)
    \ar@{->>}[dddddr]
    &&
   \; \mathrm{Spin}(3) \boldsymbol{\cdot} \mathrm{Spin}(5)
    \ar@{->>}[dddddr]
     \ar@{_{(}->}[ll]_-{\iota}
    \\
    \\
    \mathrm{Sp}(2) \boldsymbol{\cdot} \mathrm{Sp}(1)
     \ar@{=}[dddddr]
     \ar@{^{(}->}[uurr]
       |<<<<<<<{ \phantom{AA \atop AA} }
       ^>>>>>>>>>>>>>>>>>{ \iota'' }
    \\
    &
    \mathrm{Sp}(1) \boldsymbol{\cdot} \mathrm{Sp}(2)
    \ar@{_{(}->}[ddrr]^-{ \iota' }
    \\
    \\
    &&&
    \mathrm{SO}(8)
    &&
   \; \mathrm{SO}(3) \times \mathrm{SO}(5) .
    \ar@{_{(}->}[ll]_-{\iota}
    \\
    \\
    &
    \mathrm{Sp}(2) \boldsymbol{\cdot} \mathrm{Sp}(1)
    \ar@{^{(}->}[uurr]^-{ \iota'' }
  }
$$
Moreover, the triality group $\mathrm{Out}( \mathrm{Spin}(8))$
acts transitively by permutation on the set of these three conjugacy classes.
$$
  \xymatrix@C=-12pt@R=.001pt{
    \mathrm{Spin}(8)
    \ar@/_2.9pc/@{<->}[dddddddddddddddddddd]_{\simeq}
    \ar@{<->}@/^4.9pc/[ddddddddddrrrrrrrrrrrr]^-{\simeq}_-{ \mathrm{tri} }
    \\
    \\
    \\
    \\
    &
    \mathrm{Sp}(2)\!\boldsymbol{\cdot}\! \mathrm{Sp}(1)
    \ar@{<->}@/_2pc/[dddddddddddd]_-{\simeq}
    \ar@{_{(}->}[uuuul]
    \ar@{<->}@/^3pc/[ddddddrrrrrrr]^-{\simeq}
    \\
    \\
    \\
    \\
    &
    &
    \mathrm{Sp}(1) \!\boldsymbol{\cdot}\! \mathrm{Sp}(1)\!\boldsymbol{\cdot}\!\mathrm{Sp}(1)
    \ar@{<->}[dddd]^-{\simeq}
    \ar@{_{(}->}[uuuul]
  \;\;\;\;  \ar@{<->}[ddrr]^-{\simeq}
    \\
    \\
    &
    &
    &&
  \hspace{1.2cm} \mathrm{Sp}(1)\!\boldsymbol{\cdot}\! \mathrm{Sp}(1)\!\boldsymbol{\cdot}\! \mathrm{Sp}(1)
    \;\ar@{^{(}->}[rrrr]
    &&
    {\phantom{AAAAAAAA}}
    &&
    \mathrm{Spin}(5)\!\boldsymbol{\cdot}\! \mathrm{Spin}(3)
   \; \ar@{^{(}->}[rrrr]
    &&
    {\phantom{AAAAAAAA}}
    &&
    \mathrm{Spin}(8)\;.
    \\
    \\
    &
    &
     \mathrm{Sp}(1)\!\boldsymbol{\cdot}\! \mathrm{Sp}(1)\!\boldsymbol{\cdot}\! \mathrm{Sp}(1)
    \ar@{^{(}->}[ddddl]
   \;\;\;\;\; \ar@{<->}[uurr]^-{\simeq}
    \\
    \\
    \\
    \\
    &
    \mathrm{Sp}(1)\!\boldsymbol{\cdot}\! \mathrm{Sp}(2)
    \ar@{<->}@/_3pc/[uuuuuurrrrrrr]_-{\simeq}
    \ar@{^{(}->}[ddddl]
    \\
    \\
    \\
    \\
    \mathrm{Spin}(8)
    \ar@{<->}@/_4.9pc/[uuuuuuuuuurrrrrrrrrrrr]_-{\simeq}
  }
$$
\end{prop}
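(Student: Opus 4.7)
The plan is to prove the three assertions in sequence, using as the main tool the three inequivalent eight-dimensional irreducible representations $\mathbf{8}_v$, $\mathbf{8}_s$, $\mathbf{8}_c$ of $\mathrm{Spin}(8)$, together with the fact that $\mathrm{Out}(\mathrm{Spin}(8)) \simeq S_3$ acts transitively on them by permutation. This provides a uniform bookkeeping of the three subgroups in terms of how the restricted representations decompose.

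First, I would compute the images under the covering projection $\mathrm{Spin}(8) \twoheadrightarrow \mathrm{SO}(8)$. Each of the three subgroups, by the very construction of central product in Def. \ref{Def-dot}, is a quotient by a diagonal central $\mathbb{Z}_2$, and this $\mathbb{Z}_2$ coincides with the kernel $\{\pm 1\}$ of the covering (acting trivially on $\mathbb{H}^2$ via $x \mapsto (-1)\cdot x\cdot(-1)=x$, respectively trivially on $\mathbb{R}^5 \oplus \mathbb{R}^3$). Hence the covering restricts to an isomorphism onto each image. For $\mathrm{Sp}(2)\!\cdot\!\mathrm{Sp}(1)$ with action $(A,q)\colon x \mapsto A x \bar q$ in \eqref{ActionOfSp2Sp1} and $\mathrm{Sp}(1)\!\cdot\!\mathrm{Sp}(2)$ with action $(q,A)\colon x\mapsto q x \bar A$ in \eqref{ActionOfSp1Sp2}, the resulting subgroups of $\mathrm{SO}(\mathbb{H}^2)$ consist of the same linear transformations (only the indexing of the two $\mathrm{Sp}$-factors differs), so they coincide. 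For $\mathrm{Spin}(5)\!\cdot\!\mathrm{Spin}(3)$, the block inclusion \eqref{Spinn1Spinn2SubgroupInclusion} covers $\mathrm{SO}(5)\times\mathrm{SO}(3)$ by construction.

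Second, I would establish non-conjugacy of the three subgroups in $\mathrm{Spin}(8)$. Distinguishing $\mathrm{Spin}(5)\!\cdot\!\mathrm{Spin}(3)$ from the two quaternion-K\"ahler copies is immediate at the $\mathrm{SO}(8)$ level: the former preserves a $5\oplus 3$ decomposition of $\mathbf{8}_v = \mathbb{R}^8$, while the latter acts irreducibly on $\mathbf{8}_v = \mathbb{H}^2$ (Example \ref{CentralProductOfSymplecticGroups}). To distinguish $\mathrm{Sp}(2)\!\cdot\!\mathrm{Sp}(1)$ from $\mathrm{Sp}(1)\!\cdot\!\mathrm{Sp}(2)$, which project to the same $\mathrm{SO}(8)$-subgroup, one needs the finer invariant of the restricted spinor representations. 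Using the standard branching rules (derivable from an explicit octonionic or Clifford model of $\mathrm{Spin}(8)$), one finds that for $\mathrm{Sp}(2)\!\cdot\!\mathrm{Sp}(1)$ exactly one of $\mathbf{8}_s,\mathbf{8}_c$ restricts as the $\mathbb{H}^2$-representation while the other splits as $\mathbf{5}\oplus\mathbf{3}$, and these two roles are exchanged for $\mathrm{Sp}(1)\!\cdot\!\mathrm{Sp}(2)$. Since inner $\mathrm{Spin}(8)$-conjugation fixes the isomorphism class of each of the three $\mathbf{8}_\bullet$ as $\mathrm{Spin}(8)$-modules, the two subgroups cannot be conjugate.

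Finally, for the triality claim: the analysis above characterizes each conjugacy class intrinsically by the unique one among $\mathbf{8}_v, \mathbf{8}_s, \mathbf{8}_c$ on which the subgroup fails to act irreducibly (equivalently, on which it splits as $\mathbf{5}\oplus\mathbf{3}$). Since $\mathrm{Out}(\mathrm{Spin}(8)) \simeq S_3$ acts by transitively permuting $\mathbf{8}_v, \mathbf{8}_s, \mathbf{8}_c$, it correspondingly permutes the three conjugacy classes transitively, yielding the claimed action. The main obstacle is the verification of the branching of the spinor representations—this is classical but requires careful bookkeeping; alternatively the triality permutation can be read off directly from the octonionic realization of $\mathrm{Spin}(8)$, where the three 8-dimensional representations are three copies of $\mathbb{O}$ that get naturally permuted by the triality $S_3$, and the three subgroups arise as the stabilizers of a fixed $5\oplus 3$ decomposition on each of the three factors.
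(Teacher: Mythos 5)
The paper's proof is a one-line appeal to the literature: it cites \cite{CV97} (Sec.~2) and \cite{Ko02} (Prop.~3.3~(3)) for the analysis of the triality action on the relevant Lie algebras, and gives no further argument. Your proposal, by contrast, supplies an actual self-contained argument organized around the three inequivalent $8$-dimensional irreducible $\mathrm{Spin}(8)$-modules $\mathbf{8}_v,\mathbf{8}_s,\mathbf{8}_c$ and the fact that $\mathrm{Out}(\mathrm{Spin}(8))\simeq S_3$ permutes them transitively. The key invariant you use --- for each of the three subgroups, exactly one of $\mathbf{8}_v,\mathbf{8}_s,\mathbf{8}_c$ restricts reducibly as $\mathbf{5}\oplus\mathbf{3}$ while the other two restrict irreducibly --- is the right thing: inner conjugation fixes the isomorphism type of each $\mathbf{8}_\bullet$, so the three subgroups lie in distinct conjugacy classes, and triality permutes the $\mathbf{8}_\bullet$ and therefore permutes these classes. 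This is genuinely more explicit than the paper's citation and would make a nice self-contained alternative, at the cost of requiring one to verify (or further cite) the branching of the two half-spin representations.

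One factual slip in the preliminary step: your claim that the covering $\mathrm{Spin}(8)\twoheadrightarrow\mathrm{SO}(8)$ restricts to an isomorphism ``onto each image'' is correct for the two quaternion-K\"ahler copies but fails for $\mathrm{Spin}(5)\boldsymbol{\cdot}\mathrm{Spin}(3)$: that subgroup contains the central element $-1\in\mathrm{Spin}(8)$ (via $\iota_5(-1)$), so the covering restricted to it is two-to-one onto $\mathrm{SO}(5)\times\mathrm{SO}(3)$ --- which is exactly why the paper's diagram draws a double-headed arrow $\twoheadrightarrow$ there while using $=$ for the other two. This does not affect the later representation-theoretic argument, since distinguishing $\mathrm{Spin}(5)\boldsymbol{\cdot}\mathrm{Spin}(3)$ from the other two only uses the reducibility of its action on $\mathbf{8}_v$, but the slip should be corrected if the argument is to be written out in full.
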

\begin{proof}
  This follows by analysis of the action of triality on the
  corresponding Lie algebras;
   see \cite[Sec. 2]{CV97}, \cite[Prop. 3.3 (3)]{Ko02}.
\end{proof}

\begin{remark}[Subgroups]
  \label{SubgroupInclusions}
{\bf (i)}   For emphasis,
  notice that the subgroups appearing in Prop. \ref{QuaternionicSubgroupTriality} are all isomorphic
  as abstract groups
  $$
    \mathrm{Sp}(1)
    \boldsymbol{\cdot}
    \mathrm{Sp}(2)
    \;\simeq\;
    \mathrm{Sp}(2)
    \boldsymbol{\cdot}
    \mathrm{Sp}(1)
    \;\simeq\;
    \mathrm{Spin}(5)
    \boldsymbol{\cdot}
    \mathrm{Spin}(3)
    \;\simeq\;
    \mathrm{Spin}(3)
    \boldsymbol{\cdot}
    \mathrm{Spin}(5)
  $$
  due to the classical exceptional isomorphisms
  $$
    \mathrm{Sp}(1) \;\simeq\; \mathrm{Spin}(3)
    \,,
    \phantom{AA}
    \mathrm{Sp}(2) \;\simeq\; \mathrm{Spin}(5)
  $$
  and via the evident automorphisms that permutes central  product factors. However, when each is equipped with
  \emph{its} canononical subgroup inclusion into $\mathrm{Spin}(8)$,  via \eqref{ActionOfSp2Sp1},
  \eqref{ActionOfSp1Sp2} and \eqref{Spinn1Spinn2SubgroupInclusion}, then these are  distinct subgroups.
  Moreover, Prop. \ref{QuaternionicSubgroupTriality} says that the first three of these are even in
  distinct conjugacy classes of subgroups, while  the two $\mathrm{Spin}(3) \!\boldsymbol{\cdot}\! \mathrm{Spin}(5)$
  and   $\mathrm{Spin}(5) \!\boldsymbol{\cdot}\! \mathrm{Spin}(3)$
  are in the same conjugacy class.

 \item {\bf (ii)}  In the following, when considering these subgroup inclusions
  and their induced morphisms on classifying spaces,   we will always mean that \emph{canonical}
  inclusion of the subgroup of that name. When we need to refer to another,
  non-canonical embedding of any of these groups $G$,   then we will always make this
  explicit as a triality automorphism $G \overset{\simeq}{\to} G'$ followed by the  canonical inclusion of $G'$.
  See for instance \eqref{DiagramProvingHalfIntegralFluxQuantization} below
  for an example.
\end{remark}

For the development in \cref{CancellationFromCohomotopy} we need to know in particular how universal characteristic
classes behave under the triality automorphisms:
\begin{lemma}[Pullback of classes along triality]
  \label{PullbackOfClassesAlongTriality}
  The integral cohomology ring of $B \mathrm{Spin}(8)$
  is
  \begin{equation}
    \label{Spin8CohomologyRing}
    H^\bullet
    (
      B \mathrm{Spin}(8);
      \mathbb{Z}
    )
    \;\simeq\;
    \mathbb{Z}
    \big[
     \tfrac{1}{2}
     p_1
     ,\;
     \tfrac{1}{4}
     \big(
       p_2
       -
       \big(\tfrac{1}{2}p_1\big)^2
     \big)
     -
     \tfrac{1}{2}\rchi
     ,\;
     \rchi_8,
     \;
     \beta(w_6)
    \big]
    \big/
    \big( 2 \beta(w_6)\big)
    \,,
  \end{equation}
  where $p_k$ are Pontrjagin classes,   $\rchi_8$ is the Euler class, $w_6$ is a Stiefel-Whitney
  class, $\beta$ is the Bockstein homomorphism, so that   $W_7 := \beta(w_6)$ is an integral Stiefel-Whitney class.

  \item {\bf (i)} Under the delooping of the triality automorphism  from Prop. \ref{QuaternionicSubgroupTriality} to
  classifying spaces
  \begin{equation}
    \label{TrialityAutomorphismDelooped}
    \raisebox{20pt}{
    \xymatrix{
      B
      \big(
        \mathrm{Sp}(2)\boldsymbol{\cdot} \mathrm{Sp}(1)
      \big)
      \ar[rr]^-{\simeq}
      \ar[d]
      &&
      B
      \big(
        \mathrm{Spin}(5)\boldsymbol{\cdot} \mathrm{Spin}(3)
      \big)
      \ar[d]
      \\
      B \mathrm{Spin}(8)
      \ar[rr]^-{\simeq}_-{ B \mathrm{tri} }
      &&
      B \mathrm{Spin}(8)
    }
    }
  \end{equation}
  these classes pull back as follows:
\begin{equation}
  \label{PullbackAlongTriality}
  \big(
    B \mathrm{tri}
  \big)^\ast
  \;\colon\;
  \begin{aligned}
    \tfrac{1}{2} p_1 \quad & \longmapsto \quad \tfrac{1}{2} p_1
    \\
    \rchi_8 \quad
      & \longmapsto \quad
      -
      \tfrac{1}{4}
      \big(
        p_2
        -
        \big(\tfrac{1}{2}p_1\big)^2
      \big)
      +
      \tfrac{1}{2}\rchi_8
    \\
     \tfrac{1}{4}
     \big(
       p_2
       -
       \big(\tfrac{1}{2}p_1\big)^2
     \big)
     -
     \tfrac{1}{2}\rchi_8
     \quad
    & \longmapsto \quad
      -
      \rchi_8
  \end{aligned}
\end{equation}
\item {\bf (ii)}   Notice that, in particular,
  $$
    \big(
      (
        B \mathrm{tri}
      )^\ast
    \big)^{-1}
    \;=\;
    (
      B \mathrm{tri}
    )^\ast
    \,.
  $$
  and
  \begin{equation}
    \label{PullbackOfp2UnderTriality}
    (
      B \mathrm{tri}
    )^\ast
    \;:\;
    \tfrac{1}{4}p_2
    \;\longmapsto\;
    - \rchi_8
    +
   \big(
      \tfrac{1}{4} p_1
    \big)^2
    -
    \tfrac{1}{2}
    \Big(
     \tfrac{1}{4}
     \big(
       p_2
       -
       \big(\tfrac{1}{2}p_1\big)^2
     \big)
     -
     \tfrac{1}{2}\rchi
    \Big).
  \end{equation}
\end{lemma}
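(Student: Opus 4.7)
The strategy is to reduce everything to a computation on the maximal torus of $\mathrm{Spin}(8)$, where the triality action admits an explicit description, and then to extract the pullback formulas for the named classes, finally obtaining \eqref{PullbackOfp2UnderTriality} by algebraic rearrangement.

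First, I would recall the classical identification \eqref{Spin8CohomologyRing}; its rational version is $H^\bullet(B\mathrm{Spin}(8);\mathbb{Q}) \simeq \mathbb{Q}[p_1, p_2, \rchi_8]$, realized as Weyl-invariants inside $\mathbb{Q}[x_1,x_2,x_3,x_4] = H^\bullet(BT;\mathbb{Q})$ for a choice of maximal torus $T \subset \mathrm{Spin}(8)$, with
$$
  p_1 \;=\; \sum_i x_i^2 \,, \qquad p_2 \;=\; \sum_{i<j} x_i^2 x_j^2 \,, \qquad \rchi_8 \;=\; x_1 x_2 x_3 x_4 \,.
$$
The integral refinement recorded in \eqref{Spin8CohomologyRing}---featuring the generators $\tfrac{1}{2}p_1$ and the combination $\widetilde{\rchi} := \tfrac{1}{4}(p_2 - (\tfrac{1}{2}p_1)^2) - \tfrac{1}{2}\rchi_8$---reflects the half-integrality of the weights of the half-spinor representations of $\mathrm{Spin}(8)$; the $2$-torsion generator $\beta(w_6)$ is irrelevant here since it is uniquely determined.

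Next, I would invoke the classical fact that the triality group $\mathrm{Out}(\mathrm{Spin}(8)) \simeq S_3$ permutes the three inequivalent $8$-dimensional fundamental representations: the vector representation $V$ and the two half-spinors $\Delta_\pm$. A choice of order-$2$ generator interchanges $V$ with (say) $\Delta_+$, so $(B\mathrm{tri})^\ast$ sends $\rchi_8 = \rchi_8(V)$ to $\rchi_8(\Delta_+)$, the latter being the product of the weights of $\Delta_+$, namely the eight expressions $\tfrac{1}{2}(\epsilon_1 x_1 + \epsilon_2 x_2 + \epsilon_3 x_3 + \epsilon_4 x_4)$ with $\prod_i \epsilon_i = +1$. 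A routine expansion using standard symmetric-function identities evaluates this product to $-\widetilde{\rchi}$; hence $(B\mathrm{tri})^\ast(\rchi_8) = -\widetilde{\rchi}$. Since $p_1$ comes from the triality-invariant Killing form, one has $(B\mathrm{tri})^\ast(p_1)=p_1$, and since the chosen $\mathrm{tri}$ is a transposition in $S_3$, $(B\mathrm{tri})^\ast$ is an involution. These three facts determine all of \eqref{PullbackAlongTriality}: the relation $(B\mathrm{tri})^\ast(\widetilde{\rchi}) = -\rchi_8$ arises simply by applying the involution to the exchange $\rchi_8 \leftrightarrow -\widetilde{\rchi}$.

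Equation \eqref{PullbackOfp2UnderTriality} then follows by algebraic rearrangement: solving for $\tfrac{1}{4}p_2$ in the definition of $\widetilde{\rchi}$ gives $\tfrac{1}{4}p_2 = \widetilde{\rchi} + \tfrac{1}{2}\rchi_8 + (\tfrac{1}{4}p_1)^2$, and applying $(B\mathrm{tri})^\ast$ term-by-term using the formulas already established yields the claim. The main obstacle I anticipate is ensuring that the a priori rational torus computation respects the integral structure of \eqref{Spin8CohomologyRing}; this reduces to the explicit finite check that $\rchi_8(\Delta_+)$ lies in the subring $\mathbb{Z}[\tfrac{1}{2}p_1, \rchi_8, \widetilde{\rchi}]$. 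An alternative structural route, bypassing the integrality bookkeeping, would be to use Prop. \ref{QuaternionicSubgroupTriality}: restrict the classes along the inclusions $\mathrm{Sp}(2)\!\boldsymbol{\cdot}\!\mathrm{Sp}(1) \hookrightarrow \mathrm{Spin}(8)$ and $\mathrm{Spin}(5)\!\boldsymbol{\cdot}\!\mathrm{Spin}(3) \hookrightarrow \mathrm{Spin}(8)$, express the restrictions in terms of quaternionic and Spin characteristic classes on each side, and match them via the abstract isomorphism of these subgroups induced by triality.
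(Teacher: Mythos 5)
Your proposal is correct and takes a genuinely different, more self-contained route than the paper, whose proof of this lemma is a one-line citation to {\v C}adek--Van{\v z}ura and Gray--Green. You recompute the key identity directly on the maximal torus of $\mathrm{Spin}(8)$: the Euler class of the half-spinor bundle is
\begin{equation*}
  \rchi_8(\Delta_+)
  \;=\;
  \tfrac{1}{16}
  (x_1{+}x_2{+}x_3{+}x_4)(x_1{+}x_2{-}x_3{-}x_4)(x_1{-}x_2{+}x_3{-}x_4)(x_1{-}x_2{-}x_3{+}x_4)
  \;=\;
  \tfrac{1}{16}p_1^2 - \tfrac{1}{4}p_2 + \tfrac{1}{2}\rchi_8
  \;=\; -\,\widetilde{\rchi}\,,
\end{equation*}
where $\widetilde{\rchi} := \tfrac{1}{4}\big(p_2-(\tfrac{1}{2}p_1)^2\big)-\tfrac{1}{2}\rchi_8$; this, together with $p_1$ being triality-invariant and $(B\mathrm{tri})^\ast$ being an involution (since the chosen $\mathrm{tri}$ is a transposition in $\mathrm{Out}(\mathrm{Spin}(8))\simeq S_3$), recovers all of \eqref{PullbackAlongTriality}, and \eqref{PullbackOfp2UnderTriality} follows by the rearrangement you indicate. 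Your version gains explicitness and self-containment; the paper's gains brevity by outsourcing exactly this torus/Weyl-invariant computation to the cited lemmas.

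Two minor corrections. First, the Euler class of $\Delta_+$ is the product of \emph{four} weights, one from each $\pm$-pair respecting the orientation, not ``the eight expressions $\tfrac{1}{2}(\epsilon_1 x_1+\cdots)$'' as you write: multiplying all eight would produce a class of cohomological degree $16$, not $8$. Second, the integrality issue you flag as the ``main obstacle'' is in fact no obstacle: $\rchi_8(\Delta_+)$ is integral a priori (it is the Euler class of a real oriented vector bundle on $B\mathrm{Spin}(8)$), $\widetilde{\rchi}$ is an integral generator by the very definition of the ring in \eqref{Spin8CohomologyRing}, and $H^8(B\mathrm{Spin}(8);\mathbb{Z})$ is torsion-free since the degree-$7$ torsion generator $\beta(w_6)$ cannot contribute to degree $8$; hence the rational identity you establish lifts automatically to the stated integral one.
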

\begin{proof}
  This follows by combining \cite[Lemmas 2.5, 4.1, 4.2]{CV97},
  following \cite[Thm. 2.1]{GrayGreen70},
  and using the property $(\mathrm{tri}^\ast)^{-1} = \mathrm{tri}^\ast$,
  recalled in \cite[2.]{CV97}.
\end{proof}

Now we may have a closer look at the quaternionic Hopf fibration
$
    \xymatrix{
      S^7
      \simeq
      S( \mathbb{H}^2)
      \ar[r]^-{ h_{\mathbb{H}} }
      &
      \mathbb{H}P^1
      \simeq
      S^4
    }
  $:

\begin{prop}[Symmetries of the quaternionic Hopf fibration]
  \label{EquihH}
\item {\bf (i)} The symmetry group of  $h_{\mathbb{H}}$  and hence the group of twists for Cohomotopy
  jointly in degrees 4 and 7,
  is the group \eqref{SpnSp1},
  \begin{equation}
    \label{Sp2Sp1ActionOnS7}
    \mathrm{Sp}(2)
      \!\boldsymbol{\cdot}\!
    \mathrm{Sp}(1)
    \longhookrightarrow
    \mathrm{O}(8)\
    \,,
  \end{equation}
  with its canonical action \eqref{ActionOfSp2Sp1},
  in that this is the largest
  subgroup of $\mathrm{O}(8) \simeq \mathrm{O}(\mathbb{H}^2)$
  under which $h_{\mathbb{H}}$ is equivariant.

\item {\bf (ii)} The corresponding action on the codomain 4-sphere
$S^4 \simeq S\big(\mathbb{R}^5\big)$ is via the
canonical projection \eqref{Spinn1Spinn2ExactSequences}
to $\mathrm{SO}(5)$
\begin{equation}
  \label{Sp2Sp1ActionOnS4}
  \xymatrix{
    \mathrm{Sp}(2)
      \!\boldsymbol{\cdot}\!
    \mathrm{Sp}(1)
    \ar[r]^-{\simeq} &
    \mathrm{Spin}(5) \!\boldsymbol{\cdot}\! \mathrm{Spin}(3)
    \ar@{->>}[r]^{~~~~~~
      \mathrm{pr}_5
    }
    &
    \mathrm{SO}(5)
  }.
\end{equation}
\end{prop}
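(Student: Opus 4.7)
The plan is to identify the symmetry group as the normalizer of a well-chosen $\mathrm{Sp}(1)$-subgroup inside $\mathrm{O}(8)$, and then compute this normalizer. The key observation is that the fibers of $h_{\mathbb{H}}$ are precisely the orbits of the right-multiplication action $R \colon \mathrm{Sp}(1) \hookrightarrow \mathrm{O}(\mathbb{H}^2) = \mathrm{O}(8)$, $q \mapsto (x \mapsto x\bar{q})$, since $\mathbb{H}P^1$ is by definition the quotient of $\mathbb{H}^2\setminus\{0\}$ by right multiplication by $\mathbb{H}^\times$. Consequently a subgroup $G \subset \mathrm{O}(8)$ descends through $h_{\mathbb{H}}$ to a well-defined action on $S^4$ if and only if it permutes these orbits, which in turn holds iff $G$ lies in the normalizer $N := N_{\mathrm{O}(8)}\big(R(\mathrm{Sp}(1))\big)$. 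Part (i) then reduces to computing this normalizer and identifying it with the image of $\mathrm{Sp}(2)\!\boldsymbol{\cdot}\!\mathrm{Sp}(1)$ under \eqref{ActionOfSp2Sp1}.

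To carry this out I would first compute the centralizer $C := C_{\mathrm{O}(8)}(R(\mathrm{Sp}(1)))$: an $\mathbb{R}$-linear map commuting with right multiplication by every unit quaternion is, by $\mathbb{R}$-linear extension (using that $1,i,j,k \in \mathrm{Sp}(1)$ span $\mathbb{H}$ over $\mathbb{R}$), right $\mathbb{H}$-linear on $\mathbb{H}^2$, hence given by left multiplication by a $2\times 2$ quaternion matrix; additionally belonging to $\mathrm{O}(8)$ forces this matrix to be quaternion-unitary, so $C = \mathrm{Sp}(2)$ acting as in \eqref{ActionOfSp2Sp1}. Next I would observe that the quotient $N/C$ embeds into $\mathrm{Aut}(\mathrm{Sp}(1))$; since $\mathrm{Out}(\mathrm{Sp}(1)) = 1$, we have $\mathrm{Aut}(\mathrm{Sp}(1)) = \mathrm{Inn}(\mathrm{Sp}(1)) \simeq \mathrm{SO}(3)$, and a direct check shows that conjugation in $\mathrm{O}(8)$ by $R(\mathrm{Sp}(1))$ already surjects onto this $\mathrm{SO}(3)$, so $N = C \cdot R(\mathrm{Sp}(1))$. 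A brief direct computation gives $C \cap R(\mathrm{Sp}(1)) = \{\pm I\}$, which is precisely the diagonal $\mathbb{Z}_2$ quotiented out in Def. \ref{Def-dot}, identifying $N$ with the central product $\mathrm{Sp}(2)\!\boldsymbol{\cdot}\!\mathrm{Sp}(1)$ embedded in $\mathrm{O}(8)$ via \eqref{ActionOfSp2Sp1}.

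For part (ii), the induced action on $\mathbb{H}P^1 = S^4$ is $[A,q]\cdot [y] = [A y \bar{q}] = [Ay]$, so the $R(\mathrm{Sp}(1))$-factor acts trivially. A short linear-algebra argument—starting from $Ay = y\, r_y$ for all $y$ and using $\mathbb{H}$-linearity together with non-commutativity of $\mathbb{H}$ to force $r_y$ to be independent of $y$ and central in $\mathbb{H}$—pins down the kernel of the $\mathrm{Sp}(2)$-action on $\mathbb{H}P^1$ as the center $\{\pm I\}$. Under the exceptional isomorphism $\mathrm{Sp}(2) \simeq \mathrm{Spin}(5)$ recalled in Remark \ref{SubgroupInclusions}, the resulting quotient map $\mathrm{Sp}(2)\!\boldsymbol{\cdot}\!\mathrm{Sp}(1) \twoheadrightarrow \mathrm{Sp}(2)/\{\pm I\} \simeq \mathrm{SO}(5)$ matches the projection $\mathrm{pr}_5$ of \eqref{Spinn1Spinn2ExactSequences}, which is \eqref{Sp2Sp1ActionOnS4}. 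I expect the main obstacle to be the careful bookkeeping of the several $\mathbb{Z}_2$'s that appear: verifying that the $\{\pm I\}$ arising as $C \cap R(\mathrm{Sp}(1))$ is the same one quotiented out in the central product $\mathrm{Sp}(2)\!\boldsymbol{\cdot}\!\mathrm{Sp}(1)$ and also the same as the kernel of $\mathrm{Spin}(5) \twoheadrightarrow \mathrm{SO}(5)$; once these identifications are in place, the surjectivity onto $\mathrm{Aut}(\mathrm{Sp}(1))=\mathrm{SO}(3)$ and the matching with \eqref{ActionOfSp2Sp1}--\eqref{Sp2Sp1ActionOnS4} are routine.
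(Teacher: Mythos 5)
Your argument is correct, and it proves the proposition by a genuinely different route than the paper. The paper's proof is essentially a citation: it refers the maximality claim to Gluck--Warner--Ziller and Porteous, and then makes the equivariance explicit via the coset‐space presentation $S^3 \to S^7 \to S^4$ as $\mathrm{Spin}(4)/\mathrm{Spin}(3) \to \mathrm{Sp}(2)/\mathrm{Sp}(1) \to \mathrm{Sp}(2)/(\mathrm{Sp}(1)\times\mathrm{Sp}(1))$. You instead give a self-contained first-principles computation: identify the Hopf fibers as the orbits of $R(\mathrm{Sp}(1))$, observe that the pointwise stabilizer of all fibers inside $\mathrm{O}(8)$ is exactly $R(\mathrm{Sp}(1))$ (so fiber-permuting is equivalent to normalizing), and then compute the normalizer via its centralizer (right-$\mathbb{H}$-linearity forces $C=\mathrm{Sp}(2)$) and the map $N/C \hookrightarrow \mathrm{Aut}(\mathrm{Sp}(1))\simeq \mathrm{SO}(3)$, onto which conjugation by $R(\mathrm{Sp}(1))$ already surjects ($R(p)R(q)R(p)^{-1}=R(pqp^{-1})$). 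What your approach buys is that it establishes \emph{maximality} directly, without appealing to the literature, and it gives a transparent reason for the specific central product structure (the $\mathbb{Z}_2$ in Def. \ref{Def-dot} is literally $C\cap R(\mathrm{Sp}(1))=\{\pm I\}$). What the paper's approach buys is that the coset presentation is immediately available for the equivariance arguments in Prop. \ref{hHsslash}. One small caveat worth making explicit in your part (ii): beyond identifying the kernel and the abstract quotient $\mathrm{Sp}(2)/\{\pm I\}\simeq\mathrm{SO}(5)$, one should check that the $\mathrm{Sp}(2)$-action on $\mathbb{H}P^1$ descends, under the identification $\mathbb{H}P^1\simeq S(\mathbb{R}^5)$, to the \emph{vector} representation of $\mathrm{SO}(5)$ as the paper's \eqref{Sp2Sp1ActionOnS4} asserts; this follows because the stabilizer $\mathrm{Sp}(1)\times\mathrm{Sp}(1)$ maps onto a copy of $\mathrm{SO}(4)$, and $\mathrm{SO}(4)\hookrightarrow\mathrm{SO}(5)$ is unique up to conjugacy, but you should state it rather than leave it inside "routine."
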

\begin{proof}
  This statement essentially
  appears as \cite[Prop. 4.1]{GluckWarnerZiller86}
  and also, somewhat more implicitly, in  \cite[p. 263]{Porteous95}.
  To make this more explicit, we may observe,
  with \hyperlink{TableT}{Table S},
  that the  quaternionic Hopf fibration
  has the following coset space description:
  \begin{equation}
    \label{CosethH}
    \raisebox{20pt}{
    \xymatrix@R=1em{
      S^3
      \ar@{=}[d]
      \ar[rr]^{ \mathrm{fib}(h_{\mathbb{H}}) }
      &&
      S^7
      \ar@{=}[d]
      \ar[rr]^{h_{\mathbb{H}}}
      &&
      S^4
      \ar@{=}[d]
      \\
      \frac{
        \mathrm{Spin}(4)
      }{
        \mathrm{Spin}(3)
      }
      \ar[rr]_-{
        \frac{\iota_{{}_{4}}}
        {
          \mathrm{id}
        }
      }
      &&
      \frac{
        \mathrm{Sp}(2)
      }{
        \mathrm{Sp}(1)
      }
      \ar[rr]_{
        \frac{
          \mathrm{id}
        }{
          q \mapsto (q,1)
        }
      }
      &&
      \frac{
        \mathrm{Sp}(2)
      }{
        \mathrm{Sp}(1) \times \mathrm{Sp}(1)
      }
    }
    }
  \end{equation}
  where
  $\iota_4
   : \mathrm{Spin}(4)
   \hookrightarrow \mathrm{Spin}(5) \simeq \mathrm{Sp}(2)$
  denotes the canonical inclusion.
  This can also be deduced from \cite[Table 1]{HatsudaTomizawa09}.
  In the octonionic case the analogous statement is noticed in
  \cite[p. 7]{OPPV12}.
\end{proof}
The following Prop. \ref{hHsslash}
gives the homotopy-theoretic version of Prop. \ref{EquihH},
which is the key for the discussion in \cref{CancellationFromCohomotopy} below.
In order to clearly bring out all subtleties, we first recall the following fact:
\begin{lemma}[${\rm Spin}(4)$-action on quaternions]
  \label{Spin4Action}
  Under the exceptional isomorphism
  $$
    \xymatrix@R=1.2em{
      \mathrm{Sp}(1)
       \times
      \mathrm{Sp}(1)
      \ar@{^{(}->}[d]
      \ar[r]^-{ \simeq }
      &
      \mathrm{Spin}(4)
      \ar@{^{(}->}[d]
      \ar@{->>}[r]
      &
      \mathrm{SO}(4)
      \ar@{^{(}->}[d]
      \\
      \mathrm{Sp}(2)
      \ar[r]_-{\simeq}
      &
      \mathrm{Spin}(5)
      \ar@{->>}[r]
      &
      \mathrm{SO}(5)
    }
  $$
  the action of $\mathrm{Sp}(1) \times \mathrm{Sp}(1)$
  on $\mathbb{R}^4 \simeq_{\mathbb{R}} \mathbb{H}$
  is the conjugation action of pairs $(q_1, q_2)$
  of unit quaternions on any quaternion $x$:
  \begin{equation}
    \label{Spin4ByConjugationAction}
    \raisebox{30pt}{
    \xymatrix@R=-2pt{
      \mathrm{Spin}(4)
        \times
      \mathbb{R}^4
      \ar[dddddd]_-{\simeq}
      \ar[rr]
      &&
      \mathbb{R}^4
      \ar[dddddd]^-{ \simeq }
      \\
      \\
      \\
      \\
      \\
      \\
      \big(
        \mathrm{Sp}(1)
        \times
        \mathrm{Sp}(1)
      \big)
      \times
      \mathbb{H}
      \ar[rr]^-{
        \mathrm{conj}(-,-)(-)
      }
      &&
      \mathbb{H}
      \\
      \big(
        (q_1,q_2), x
      \big)
      \ar@{|->}[rr]
      &&
      q_1 \cdot x \cdot \overline{q_2}
    }
    }
  \end{equation}
\end{lemma}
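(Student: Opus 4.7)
The plan is to construct the map $\Phi$ explicitly, show it is a $2{:}1$ covering onto $\mathrm{SO}(4)$, and identify it as the Spin cover by a simple-connectedness argument.

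\textbf{First,} define $\Phi\colon \mathrm{Sp}(1)\times\mathrm{Sp}(1)\to \mathrm{GL}_{\mathbb{R}}(\mathbb{H})$ by $\Phi(q_1,q_2)(x) := q_1\cdot x\cdot \overline{q_2}$. Associativity of quaternion multiplication gives
$\Phi(q_1',q_2')\circ \Phi(q_1,q_2) = \Phi(q_1'q_1,\, q_2'q_2)$,
so $\Phi$ is a group homomorphism. Since $|q_1 x \overline{q_2}|^2 = q_1 x \overline{q_2}\, q_2 \overline{x}\, \overline{q_1} = |x|^2$ when $|q_i| = 1$, the image preserves the Euclidean norm on $\mathbb{H}\simeq_{\mathbb{R}} \mathbb{R}^4$, so $\Phi$ lands in $\mathrm{O}(4)$. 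By connectedness of $S^3\times S^3$ and $\Phi(1,1) = \mathrm{id}$, the image lies in $\mathrm{SO}(4)$.

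\textbf{Second,} I compute $\ker\Phi$: if $q_1 x \overline{q_2} = x$ for all $x\in \mathbb{H}$, then evaluating at $x = 1$ forces $q_1 = q_2 =: q$, and then $qx = xq$ for all $x$ forces $q\in Z(\mathbb{H})\cap S^3 = \{\pm 1\}$. Hence $\ker\Phi = \{(1,1),(-1,-1)\}\simeq \mathbb{Z}_2$. Since the kernel is discrete, $d\Phi$ is injective, and since $\dim(\mathrm{Sp}(1)\times \mathrm{Sp}(1)) = 6 = \dim \mathrm{SO}(4)$, $\Phi$ is a local diffeomorphism; its image is open, and it is closed by compactness of the source, hence is all of $\mathrm{SO}(4)$ by connectedness. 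Thus $\Phi$ exhibits $\mathrm{Sp}(1)\times \mathrm{Sp}(1)$ as a connected $2{:}1$ covering of $\mathrm{SO}(4)$; since $S^3\times S^3$ is simply connected, the uniqueness of the universal cover identifies it with $\mathrm{Spin}(4)$ in such a way that the canonical $\mathrm{Spin}(4)$-action on $\mathbb{R}^4$ (factoring through $\mathrm{SO}(4)$) becomes precisely the formula \eqref{Spin4ByConjugationAction}.

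\textbf{Finally,} the compatibility with the vertical inclusions $\mathrm{Sp}(2)\simeq \mathrm{Spin}(5)$ in the lemma's diagram reduces to observing that both lifts of the standard inclusion $\mathrm{SO}(4) \hookrightarrow \mathrm{SO}(5)$ (as stabilizer of a unit vector in $\mathbb{R}^5$) to Spin double covers of $\mathrm{SO}(5)$ are, by simple-connectedness of $S^3\times S^3$, forced to agree up to a sign choice fixed by the condition $\Phi(1,1) = \mathrm{id}$. The only potential obstacle is bookkeeping of this sign when transporting $\Phi$ through the isomorphism $\mathrm{Spin}(5)\simeq \mathrm{Sp}(2)$; this is verified by restricting both sides to a maximal torus of $\mathrm{Sp}(1)\times \mathrm{Sp}(1)$ and matching the resulting rotations in the two orthogonal planes of $\mathbb{R}^4$, which is a routine computation.
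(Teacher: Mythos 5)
The paper does not prove this lemma at all: it is introduced with the phrase ``we first recall the following fact,'' so the authors are citing a classical identity and supply no argument. Your proposal therefore supplies a proof the paper omits, and is not in competition with a paper proof.

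Your first two steps are correct and complete. The verification that $\Phi(q_1,q_2)(x)=q_1 x\overline{q_2}$ is a homomorphism into $\mathrm{SO}(4)$, the kernel computation giving $\{(1,1),(-1,-1)\}$, the dimension count plus compact-source/connected-target argument giving surjectivity, and the simple-connectedness of $S^3\times S^3$ identifying it as the universal (hence Spin) cover of $\mathrm{SO}(4)$: all clean and standard. That establishes that the conjugation action realizes \emph{some} isomorphism $\mathrm{Sp}(1)\times\mathrm{Sp}(1)\simeq\mathrm{Spin}(4)$.

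Your final paragraph, however, is circular as written. You argue that the two maps $\mathrm{Sp}(1)\times\mathrm{Sp}(1)\to\mathrm{Spin}(5)$ --- one through $\mathrm{Sp}(2)$, one through $\mathrm{Spin}(4)$ --- agree because both are lifts of the same map to $\mathrm{SO}(5)$; but the assertion that the map through $\mathrm{Sp}(2)\simeq\mathrm{Spin}(5)\twoheadrightarrow\mathrm{SO}(5)$ \emph{is} the conjugation action landing in the stabilizer $\mathrm{SO}(4)$ is precisely what the lemma is claiming, so you are assuming the conclusion. You then say this is ``verified by restricting to a maximal torus'' but do not carry it out, and in any case a torus check alone would not nail down the off-diagonal behavior. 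The correct fix is to choose a concrete model of the covering $\mathrm{Sp}(2)\twoheadrightarrow\mathrm{SO}(5)$ and compute. For instance, let $\mathrm{Sp}(2)$ act by conjugation $A\mapsto gAg^{-1}$ on trace-free Hermitian $2\times2$ quaternionic matrices $A=\left(\begin{smallmatrix} a & z \\ \overline z & -a\end{smallmatrix}\right)$ with $a\in\mathbb{R}$, $z\in\mathbb{H}$, a real $5$-dimensional space on which the determinant form is preserved, giving $\mathrm{Sp}(2)\to\mathrm{SO}(5)$. Then $g=\mathrm{diag}(q_1,q_2)$ sends $a\mapsto a$ and $z\mapsto q_1 z\overline{q_2}$, which simultaneously shows that the diagonal $\mathrm{Sp}(1)\times\mathrm{Sp}(1)$ stabilizes a fixed direction (so lands in $\mathrm{SO}(4)$) and acts on the complementary $\mathbb{R}^4\simeq\mathbb{H}$ by exactly the conjugation formula. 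Replacing your lifting argument with this one-line computation would close the gap.
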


\begin{prop}
[The $\mathrm{Sp}(2)\boldsymbol{\cdot}\mathrm{Sp}(1)$-parametrized quaternionic Hopf fibration]
  \label{hHsslash}
  The homotopy quotient
  of the  quaternionic Hopf fibration
  $h_{\mathbb{H}}$ by its equivariance group  (Prop. \ref{EquihH}) is
  equivalently the map  of classifying spaces
  $$
    \xymatrix@R=1.2em@C=5em{
      S^7
       \!\sslash\!
      \mathrm{Sp}(2) \!\boldsymbol{\cdot}\! \mathrm{Sp}(1)
      \ar[dd]_{
        h_{\mathbb{H}}
        \sslash
        \mathrm{Sp}(2) \boldsymbol{\cdot} \mathrm{Sp}(1)
      }
      \ar@{<->}[r]^-{ \simeq }
      &
      B
      \big(
        \mathrm{Sp}(1)
          \boldsymbol{\cdot}
        \mathrm{Sp}(1)
      \big)
      \ar[dd]^{
        B
        (
          [q_1, q_2]
          \mapsto
          [q_1, q_2, q_2]
        )
      }
      \\
      \\
      S^4
       \!\sslash\!
      \mathrm{Sp}(2) \!\boldsymbol{\cdot}\! \mathrm{Sp}(1)
      \ar@{<->}[r]_-{ \simeq }
      &
      B
      \big(
        \mathrm{Sp}(1)
          \!\boldsymbol{\cdot}\!
        \mathrm{Sp}(1)
          \!\boldsymbol{\cdot}\!
        \mathrm{Sp}(1)
      \big)
    }
  $$
  which is induced by the following inclusion
  of central product groups from Example \ref{Sp1CentralProductGroups}:
  \begin{equation}
    \label{Sp1Sp1InsideSp1Sp1Sp1}
    \xymatrix@R=-2pt{
      \mathrm{Sp}(1)
        \boldsymbol{\cdot}
      \mathrm{Sp}(1)
      \ar@{^{(}->}[rr]^-{
      }
      &&
      \mathrm{Sp}(1)
        \boldsymbol{\cdot}
      \mathrm{Sp}(1)
        \boldsymbol{\cdot}
      \mathrm{Sp}(1)
      \\
      [q_1,\,q_2]
    \;\;\;  \ar@{|->}[rr]
      &&
      \big[ q_1 ,\, q_2,\, q_2 \big]
    }
  \end{equation}
\end{prop}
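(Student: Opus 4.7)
The plan is to apply Lemma \ref{fibBiota}, which for any transitive topological action $G \curvearrowright S^n$ with point-stabilizer $H$ yields a weak equivalence $S^n \sslash G \simeq B H$, and to use the naturality of this construction with respect to the $\mathrm{Sp}(2)\!\cdot\!\mathrm{Sp}(1)$-equivariant map $h_{\mathbb H}$ (whose equivariance is the content of Prop.~\ref{EquihH}). Thus the whole argument reduces to computing the two relevant stabilizers and the inclusion between them.

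First I would compute the stabilizer of the basepoint $(1,0)\in S(\mathbb{H}^2)=S^7$ under the action \eqref{ActionOfSp2Sp1}. A pair $(A,q)\in\mathrm{Sp}(2)\times\mathrm{Sp}(1)$ fixes $(1,0)$ iff $A(1,0)^{\mathsf T}\bar q=(1,0)^{\mathsf T}$; quaternion-unitarity of $A$ then forces $A=\mathrm{diag}(q,d)$ with $d\in\mathrm{Sp}(1)$ arbitrary. Hence the stabilizer in $\mathrm{Sp}(2)\times\mathrm{Sp}(1)$ is parametrised by $(q,d)\mapsto(\mathrm{diag}(q,d),q)$, giving a copy of $\mathrm{Sp}(1)\times\mathrm{Sp}(1)$. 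The central element $(-I,-1)$ of $\mathrm{Sp}(2)\cdot\mathrm{Sp}(1)$ corresponds to $(q,d)=(-1,-1)$, so after passage to the central product the stabilizer becomes $\mathrm{Sp}(1)\!\cdot\!\mathrm{Sp}(1)$, establishing the top equivalence.

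Next, by Prop.~\ref{EquihH}(ii) the $\mathrm{Sp}(1)$-factor acts trivially on $S^4$, so the stabilizer of $[1{:}0]\in\mathbb{H}P^1$ is obtained by dropping the constraint $A_{11}=q$: one finds $A=\mathrm{diag}(\lambda,d)$ with $\lambda,d\in\mathrm{Sp}(1)$ and $q\in\mathrm{Sp}(1)$ free. This gives $\mathrm{Sp}(1)\times\mathrm{Sp}(1)\times\mathrm{Sp}(1)$, and quotienting by the diagonal $\mathbb{Z}_2$ of $\mathrm{Sp}(2)\cdot\mathrm{Sp}(1)$ yields the triple central product $\mathrm{Sp}(1)\!\cdot\!\mathrm{Sp}(1)\!\cdot\!\mathrm{Sp}(1)$ from Example~\ref{Sp1CentralProductGroups}; by the isomorphism \eqref{Spin4DotSpin3} this is the same as $\mathrm{Spin}(4)\!\cdot\!\mathrm{Spin}(3)\hookrightarrow\mathrm{Spin}(5)\!\cdot\!\mathrm{Spin}(3)$ as expected from \eqref{Sp2Sp1ActionOnS4}, yielding the bottom equivalence.

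Finally, the inclusion of stabilizers (which by Lemma~\ref{fibBiota} is the map of classifying spaces induced by $h_{\mathbb H}\sslash(\mathrm{Sp}(2)\cdot\mathrm{Sp}(1))$) sends the $S^7$-stabilizer $\{(\mathrm{diag}(q,d),q)\}$ into the $S^4$-stabilizer $\{(\mathrm{diag}(\lambda,d),q)\}$ by imposing $\lambda=q$. In the $(\lambda,d,q)$-coordinates this is the map $(q,d)\mapsto(q,d,q)$, which, after relabeling the three central-product factors (using the freedom given by the $S_3$-action \eqref{S3ActionOnSpin4Spin3} on $\mathrm{Sp}(1)\!\cdot\!\mathrm{Sp}(1)\!\cdot\!\mathrm{Sp}(1)$ to match the convention of \eqref{Sp1Sp1InsideSp1Sp1Sp1}), becomes $[q_1,q_2]\mapsto[q_1,q_2,q_2]$. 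The only real subtlety is bookkeeping: keeping track of which of the three $\mathrm{Sp}(1)$-factors corresponds to which geometric role (the right $\mathrm{Sp}(1)$-action, the top-left block of $A$, and the bottom-right block of $A$), and checking that the diagonal $\mathbb{Z}_2$ quotients match up consistently at each stage.
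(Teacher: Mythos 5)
Your proposal follows the paper's own route: both arguments reduce the claim via Lemma~\ref{fibBiota} to identifying the point-stabilizers of the $\mathrm{Sp}(2)\!\cdot\!\mathrm{Sp}(1)$-actions on $S^7$ and $S^4$ and the inclusion between them. The paper's proof differs only in the level of detail it supplies for what you flag as ``bookkeeping'': it first establishes commutativity of the whole diagram of homotopy quotients (so that functoriality of homotopy fibers legitimately identifies the induced vertical map), and then verifies the inclusion $[q_1,q_2]\mapsto[q_1,q_2,q_2]$ by an explicit coset chase at its chosen basepoint $(0,1)$ --- your basepoint $(1,0)$ produces the pattern $(q,d,q)$, and making the permutation that reconciles the two conventions compatible with the structure map to $B(\mathrm{Sp}(2)\!\cdot\!\mathrm{Sp}(1))$ is exactly the verification that the paper carries through and you leave implicit.
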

\begin{proof}
Consider the following diagram:
$$
  \hspace{-.2cm}
  \xymatrix@C=1.8em{
    S^7
    \ar@/^2pc/[rrrr]^{ h_{\mathbb{H}} }
    \ar[dd]_-{ \mathrm{quot} }
    \ar@{=}[r]
    &
    \frac{
      \mathrm{Sp}(2)
    }{
      \mathrm{Sp}(1)
    }
    \ar[dd]_-{ \mathrm{fib} }
    \ar[rr]_-{
      \frac{
        \mathrm{id}
      }{
        \mathrm{
          (\mathrm{id},\;e)
        }
      }
    }
    &&
    \frac{
      \mathrm{Sp}(2)
    }{
      \mathrm{Sp}(1)
        \times
      \mathrm{Sp}(1)
    }
    \ar[dd]^-{ \mathrm{fib} }
    \ar@{=}[r]
    &
    S^4
    \ar[dd]^-{ \mathrm{quot} }
    \\
    \\
    S^7
      \!\sslash\!
      \big(
        \mathrm{Sp}(2) \!\boldsymbol{\cdot}\! \mathrm{Sp}(1)
      \big)
      \ar@/_2pc/[rrrr]|-{\footnotesize
       \; h_{\mathbb{H}}
        \sslash
        (
          \mathrm{Sp}(2)\boldsymbol{\cdot} \mathrm{Sp}(1)
        ) \;
      }
    \ar@{=}[r]
    &
    B
    \big(
      \mathrm{Sp}(1)
       \boldsymbol{\cdot}
      \mathrm{Sp}(1)
    \big)
    \ar[ddr]|<<<<<<<{
      \phantom{ {AA} \atop {AA} }
    }_>>>>>>>>>>>>{
    }
    \ar[rr]^-{\small
      B
      (
        [q_1, q_2]
        \mapsto
        [q_1, q_2, q_2]
      )
    }
    &&
    B
    \big(
      \mathrm{Sp}(1)
      \!\boldsymbol{\cdot}\!
      \mathrm{Sp}(1)
      \!\boldsymbol{\cdot}\!
      \mathrm{Sp}(1)
    \big)
    \ar@{=}[r]
    \ar[ddl]|<<<<<<<<{
      \phantom{ {AA} \atop {AA} }
    }^>>>>>>>>>>>>{
    }
    &
    S^4
    \!\sslash\!
    \big(
      \mathrm{Sp}(2)\!\boldsymbol{\cdot}\!\mathrm{Sp}(1)
    \big)
    \\
    \\
    &
    &
    \!\!\!\!\!\!\!\!\!\!\!\!\!\!\!\!\!\!
    B
    \big(
      \mathrm{Sp}(2)
      \!\boldsymbol{\cdot}\!
      \mathrm{Sp}(1)
    \big)
    \!\!\!\!\!\!\!\!\!\!\!\!\!\!\!\!\!\!
  }
  $$
The outer rectangle exhibits the homotopy quotient of
$h_{\mathbb{H}}$ that we are after, and so we need to show
this factors as a pasting of homotopy commutative inner squares as
shown.

First, the factorization of the top
horizontal map follows as the right
half of diagram \eqref{CosethH} in Prop. \ref{EquihH}.
Moreover, the bottom triangle exhibits the
delooping of the factorization
\begin{equation}
  \xymatrix@R=-2pt@C=13pt{
    \mathrm{Sp}(1)
      \boldsymbol{\cdot}
    \mathrm{Sp}(1)
    \ar@{^{(}->}[rr]^-{
    }
    &&
    \mathrm{Sp}(1)
      \boldsymbol{\cdot}
    \mathrm{Sp}(1)
      \boldsymbol{\cdot}
    \mathrm{Sp}(1)
    \ar@{^{(}->}[rr]^-{
    }
    &&
    \mathrm{Sp}(2)
      \boldsymbol{\cdot}
    \mathrm{Sp}(1)
    \\
    [q_1,q_2]
    \ar@{|->}[rr]
    &&
    [q_1,q_2, q_2]
    \ar@{|->}[rr]
    &&
    \left[
      \left(
        {\begin{smallmatrix}
          q_1 & 0
          \\
          0 & q_2
       \end{smallmatrix}}
      \right)
      ,
      q_2
    \right]
  }
\end{equation}
and hence commutes by construction.
This implies, by functoriality of homotopy fibers,
that also the square of homotopy fibers commutes,
and hence the whole diagram commutes as soon as these
squares have top horizontal morphisms as shown.
Hence it remains to see that the induced morphism of homotopy fibers
is indeed as shown, and hence is indeed the quaternionic Hopf fibration.

For this, we invoke Lemma \ref{fibBiota}, which says that the
homotopy fibers here are the coset spaces of the corresponding
group inclusions, and hence the morphism of homotopy fibers
is the corresponding induced morphism of coset spaces.
With this we are reduced to showing that we have a commuting top square
as follows
\begin{equation}
  \label{ExtendedCosetSpaceRealizationOfQuaternionicHopfFibration}
  \raisebox{30pt}{
  \xymatrix@C=4em{
    \frac{
      \mathrm{Sp}(2)
        \boldsymbol{\cdot}
      \mathrm{Sp}(1)
    }{
      \mathrm{Sp}(1) \boldsymbol{\cdot} \mathrm{Sp}(1)
    }
    \ar[rr]^-{
      \frac{
        \mathrm{id}
      }{
        [q_1,q_2] \mapsto [q_1, q_2, q_2]
      }
    }
    \ar@{=}[d]
    &&
    \frac{
      \mathrm{Sp}(2)
        \cdot
      \mathrm{Sp}(1)
    }{
      \mathrm{Sp}(1)
        \boldsymbol{\cdot}
      \mathrm{Sp}(1)
        \boldsymbol{\cdot}
      \mathrm{Sp}(1)
    }
    \ar@{=}[d]
    \\
    \frac{
      \mathrm{Sp}(2)
    }{
      \mathrm{Sp}(1)
    }
    \ar@{=}[d]
    \ar[rr]^-{
      \frac{
        \mathrm{id}
      }{
        q \mapsto (q,1)
      }
    }
    &&
    \frac{
      \mathrm{Sp}(2)
    }{
      \mathrm{Sp}(1)
        \times
      \mathrm{Sp}(1)
    }
    \ar@{=}[d]
    \\
    S^7
    \ar[rr]^-{ h_{\mathbb{H}} }
    &&
    S^4
  }
  }
\end{equation}
because the bottom square already commutes by Prop. \ref{EquihH}.

For this, we observe that the groups
$\mathrm{Sp}(1)\boldsymbol{\cdot} \mathrm{Sp}(1)$
and
$\mathrm{Sp}(1)
  \boldsymbol{\cdot}
  \mathrm{Sp}(1)
  \boldsymbol{\cdot}
  \mathrm{Sp}(1)
$
are the
\emph{stabilizer subgroups} under the
respective $\mathrm{Sp}(2)\boldsymbol{\cdot}\mathrm{Sp}(1)$-actions
from Prop. \ref{EquihH}
on $S^7$ and $S^4$,
of any one point on $S^7$ and $S^4$, respectively:
For definiteness we consider the points
$$
\begin{bmatrix}
      0
      \\
      1
    \end{bmatrix}
  \;\in\;
    S^7
    \;\simeq\;
    S
    \left(\!\!\!
    {\begin{array}{c}
      \underset{
        \oplus
      }{
        \mathbb{H}
      }
      \\
      \mathbb{H}
    \end{array}}
  \!\!\!\right)
  \qquad
  \mbox{and}
  \qquad
  \begin{bmatrix}
        0
        \\
        1
      \end{bmatrix}
     \;\in\;
    S^4
    \;\simeq\;
    S
    \left(\!\!\!
    {\begin{array}{c}
      \underset{\oplus}{
        \mathbb{H}
      }
      \\
      \mathbb{R}
    \end{array}}
 \!\!\! \right)
$$
for which one sees by direct inspection of the matrix multiplications involved that their
stabilizer subgroups under the actions of Prop. \ref{EquihH} are as follows:
$$
  \xymatrix@C=34pt{
    \mathrm{Sp}(1)
    \boldsymbol{\cdot}
    \mathrm{Sp}(1)
    \ar[d]_-{\simeq}
    \ar[rr]
    &&
    \mathrm{Sp}(1)
    \boldsymbol{\cdot}
    \mathrm{Sp}(1)
    \boldsymbol{\cdot}
    \mathrm{Sp}(1)
    \ar[d]_-{\simeq}
    \\
  \left\{
  \left[
\left(
        {\begin{smallmatrix}
          q_1 & 0
          \\
          0 & q_2
       \end{smallmatrix}}
      \right)
  ,
  q_2
  \right]
  \;\vert\;
    q_i
    \;\in\;
    \mathrm{Sp}(1)
  \right\}
   \ar[rr]^{
     [q_1, q_2]
     \;\mapsto\;
     [ q_1, q_2, q_2 ]
   }
   \ar[d]_-{ \simeq }
   &&
  \Big\{
  \big[
    \mathrm{conj}(q_1,q_2)
    ,
    q_3
  \big]
  \;\vert\;
    q_i
    \;\in\;
    \mathrm{Sp}(1)
  \Big\}
  \ar[d]^-{ \simeq }
  \\
  \mathrm{Stab}_{
    \mathrm{Sp}(2)\boldsymbol{\cdot}\mathrm{Sp}(1)
  }
  \left(\!
    \left[\!\!\!\!
      {\begin{array}{c}
        0
        \\
        1
      \end{array}}
   \!\!\!\! \right]
      \in
      {\begin{array}{c}
        \underset{
          \oplus
        }{
          \mathbb{H}
        }
        \\
        \mathbb{H}
      \end{array}}
 \!\! \right)
    \ar@{^{(}->}[dr]
    &&
  \mathrm{Stab}_{
    \mathrm{Sp}(2)
      \boldsymbol{\cdot}
    \mathrm{Sp}(1)
  }
  \left(\!
    \left[\!\!\!\!
      {\begin{array}{c}
        0
        \\
        1
      \end{array}}
    \!\!\!\!\right]
    \in
    {\begin{array}{c}
      \underset{\oplus}{
        \mathbb{H}
      }
      \\
      \mathbb{R}
    \end{array}}
 \!\!\! \right)
   \ar@{^{(}->}[dl]
  \\
  &
  \mathrm{Sp}(2)
    \boldsymbol{\cdot}
  \mathrm{Sp}(1)
  }
$$
Here on the left we used the defining action by
quaternionic matrix multiplication from \eqref{ActionOfSp2Sp1},
while on the right we used the quaternionic conjugation
action $\mathrm{conj}(-,-)$ \eqref{Spin4ByConjugationAction}
of
$\mathrm{Spin}(4) \simeq \mathrm{Sp}(1) \times \mathrm{Sp}(1)$ by
Lemma \ref{Spin4Action}.

That our groups are thus stabilizer subgroups
implies the existence of top vertical isomorphisms in
\eqref{ExtendedCosetSpaceRealizationOfQuaternionicHopfFibration}.
Making these explicit
and chasing a coset through the
top square in \eqref{ExtendedCosetSpaceRealizationOfQuaternionicHopfFibration}
makes manifest that the square indeed commutes:
$$
  \raisebox{20pt}{
  \hspace{-.4cm}
  \footnotesize
  \xymatrix@C=4em@R=4em{
    \mathrm{Sp}(1)
    \boldsymbol{\cdot}
    \mathrm{Sp}(1)
    \ar@{^{(}->}[r]^-{
      {
      [q_1,q_2]
      \mapsto
      [\mathrm{diag}(q_1,q_2),q_2]
      }
      \atop
      \phantom{a}
    }
    &
    \mathrm{Sp}(2)
      \boldsymbol{\cdot}
    \mathrm{Sp}(1)
    \ar@{->>}[r]|-{ \mathrm{quot} }
    &
    \frac{
      \mathrm{Sp}(2)
        \boldsymbol{\cdot}
      \mathrm{Sp}(1)
    }{
      \mathrm{Sp}(1) \boldsymbol{\cdot} \mathrm{Sp}(1)
    }
    \ar[r]^-{
      {
      \frac{
        \mathrm{id}
      }{
        [q_1,q_2]
        \mapsto
        [q_1, q_2, q_2]
      }
      }
      \atop
      {
        \phantom{a}
      }
    }
    \ar@{<-}[d]_-{\simeq}
    &
    \frac{
      \mathrm{Sp}(2)
        \cdot
      \mathrm{Sp}(1)
    }{
      \mathrm{Sp}(1)
        \boldsymbol{\cdot}
      \mathrm{Sp}(1)
        \boldsymbol{\cdot}
      \mathrm{Sp}(1)
    }
    \ar@{<-}[d]^-{ \simeq }
    &
    \mathrm{Sp}(2)
      \boldsymbol{\cdot}
    \mathrm{Sp}(1)
    \ar@{->>}[l]|-{
      \mathrm{quot}
    }
    &
    \mathrm{Sp}(1)
      \boldsymbol{\cdot}
    \mathrm{Sp}(1)
      \boldsymbol{\cdot}
    \mathrm{Sp}(1)
    \ar@{_{(}->}[l]_-{
      {
       [
        \mathrm{diag}(q_1,q_2),
        q_3
       ]
       \mapsfrom
       [q_1, q_2, q_3]
       }
       \atop
       {
         \phantom{a}
       }
    }
    \\
    \mathrm{Sp}(1)
    \ar@{^{(}->}[r]_{
      q \mapsto \mathrm{diag}(q,1)
    }
    \ar@{^{(}->}[u]|-{
      \;\;q \mapsto [q,1]
    }
    &
    \mathrm{Sp}(2)
    \ar@{->>}[r]|-{ \mathrm{quot} }
    \ar@{^{(}->}[u]|-{
      \;\;A \mapsto [A,1]
    }
    &
    \frac{
      \mathrm{Sp}(2)
    }{
      \mathrm{Sp}(1)
    }
    \ar[r]_-{
      \frac{
        \mathrm{id}
      }{
        q \mapsto (q,1)
      }
    }
    &
    \frac{
      \mathrm{Sp}(2)
    }{
      \mathrm{Sp}(1)
        \times
      \mathrm{Sp}(1)
    }
    &
    \mathrm{Sp}(2)
    \ar@{^{(}->}[u]|-{
      \;\;
      A
      \mapsto
      [A,1]
    }
    \ar@{->>}[l]|-{
      \mathrm{quot}
    }
    &
    \mathrm{Sp}(1)
      \times
    \mathrm{Sp}(1)
    \ar@{^{(}->}[u]|-{
      (q_1, q_2)
      \mapsto
      [q_1,q_2, 1]
    }
    \ar@{_{(}->}[l]^-{
      {
        \phantom{a}
      }
      \atop
      {
      \mathrm{diag}(q_1,q_2)
      \mapsfrom
      (q_1, q_2)
      }
    }
    }
    }
  $$

$$
\footnotesize
  \xymatrix{
    [A,1]
    \cdot
    \big(
      \mathrm{Sp}(1)
      \boldsymbol{\cdot}
      \mathrm{Sp}(1)
    \big)
    \ar@{|->}[r]
    &
    [A,1]
    \cdot
    \big(
      \mathrm{Sp}(1)
      \boldsymbol{\cdot}
      \mathrm{Sp}(1)
      \boldsymbol{\cdot}
      \mathrm{Sp}(1)
    \big)
    \\
    A
    \cdot
    \big(
      \mathrm{Sp}(1)
    \big)
    \ar@{|->}[r]
    \ar@{|->}[u]
    &
    A
    \cdot
    \big(
      \mathrm{Sp}(1)
        \boldsymbol{\cdot}
      \mathrm{Sp}(1)
    \big)
    \ar@{|->}[u]
  }
 $$
This completes the proof.

\vspace{-4mm}
\end{proof}

\medskip

\medskip

\subsection{Twisted Cohomotopy in degree 7 alone}
\label{TwistedCohomotopyInDegreeSeven}

If we do not require the twists of Cohomotopy in degree 7
to be compatible with the quaternionic Hopf fibration
(as we did in the previous section, \cref{TwistedCohomotopyInDegrees})
then there are more exceptional twists.
We give a homotopy-theoretic classification of these in
Prop. \ref{Cohomotopy7GStructure} below.
In Prop. \ref{NIs1GStructures} below
we highlight how this recovers precisely
the special holonomy structures of $\mathcal{N}=1$ compactifications
of M/F-theory.

\medskip
Further below in  \cref{TwistedPT}, we explain how
these $N=1$ structures are \emph{fluxless} in
a precise cohomotopical sense, which crucially enters the M2-tadpole
cancellation in  \cref{M2BraneTadpoleCancellation}.

\medskip

\begin{prop}[G-structures induced by Cohomotopy in degree 7]
\label{Cohomotopy7GStructure}
We have the following sequence of homotopy
pullbacks of universal 7-spherical fibrations, hence of twists
for Cohomotopy in degree 7 (see \hyperlink{FigureD}{\it Figure D}):
$$
  \xymatrix@R=.7em@C=4em{
    S^7
    \ar@{=}[dd]
    \ar[rr]^-{ \mathrm{fib} }
    &&
    B \mathrm{SU}(2)
    \ar[r]
    \ar[dd]
    \ar@{}[ddr]|-{
      \mbox{
        \tiny
        {\rm (pb)}
      }
    }
    &
    B \mathrm{Spin}(5)
    \ar[dd]
    \\
    \\
    S^7
    \ar@{=}[dd]
    \ar[rr]^-{ \mathrm{fib} }
    &&
    B \mathrm{SU}(3)
    \ar[r]
    \ar[dd]
    \ar@{}[ddr]|-{
      \mbox{
        \tiny
        {\rm (pb)}
      }
    }
    &
    B \mathrm{Spin}(6)
    \ar[dd]
    \\
    \\
    S^7
    \ar@{=}[dd]
    \ar[rr]^-{ \mathrm{fib} }
    &&
    B \mathrm{G}_2
    \ar[r]
    \ar[dd]
    \ar@{}[ddr]|-{
      \mbox{
        \tiny
      {\rm  (pb)}
      }
    }
    &
    B \mathrm{Spin}(7)
    \ar[dd]^-{ B \iota }
    \\
    \\
    S^7
    \ar[rr]^-{ \mathrm{fib} }
    &&
    B \mathrm{Spin}(7)
    \ar[r]_-{B \iota' }
    &
    B \mathrm{Spin}(8)
  }
$$
\end{prop}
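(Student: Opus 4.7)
The four rows of the diagram each assert a fibration $S^7 \to BG \to BG'$, which follows from Lemma \ref{fibBiota} applied to the subgroup inclusions $\mathrm{SU}(2) \subset \mathrm{Spin}(5)$, $\mathrm{SU}(3) \subset \mathrm{Spin}(6)$, $\mathrm{G}_2 \subset \mathrm{Spin}(7)$, and $\mathrm{Spin}(7) \subset \mathrm{Spin}(8)$, together with the coset realizations of $S^7$ listed in Table S (Remark \ref{CohomotopyTwistsFromCosetSpaceStructuresOnSpheres}). The substance of the proposition is thus the claim that the three squares connecting consecutive rows are homotopy pullbacks.

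Each such square has the form
\[
\xymatrix{
BH \ar[r] \ar[d] & BG_1 \ar[d] \\
BG_2 \ar[r] & BG
}
\]
with $H \subset G_1$, $H \subset G_2$, and $G_1, G_2 \subset G$, where $G_1/H \cong G/G_2 \cong S^7$ by the above coset identifications. A commutative square of based connected spaces is a homotopy pullback precisely when the induced map on horizontal homotopy fibers is a weak equivalence; that induced map is the self-map of $S^7$ sending $gH \mapsto gG_2$ (for $g \in G_1 \subset G$), and as a continuous map between compact Hausdorff spaces it is a homeomorphism iff it is bijective. Bijectivity reduces to the two group-theoretic identities $G_1 \cap G_2 = H$ (injectivity) and $G_1 \cdot G_2 = G$ (surjectivity), with the latter following, since $G$ is connected, from the Lie-algebraic dimension identity $\dim G_1 + \dim G_2 - \dim H = \dim G$.

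For the first two squares, with quadruples $(\mathrm{SU}(2), \mathrm{Sp}(2), \mathrm{SU}(3), \mathrm{SU}(4))$ and $(\mathrm{SU}(3), \mathrm{Spin}(6), \mathrm{G}_2, \mathrm{Spin}(7))$, the required identities are classical and follow by identifying subgroups with common stabilizers: for the first, $\mathrm{Sp}(2) \cap \mathrm{SU}(3) = \mathrm{Sp}(1) = \mathrm{SU}(2)$ inside $\mathrm{SU}(4)$ (intersection of the stabilizer of the quaternionic structure on $\mathbb{C}^4$ with the stabilizer of the fourth coordinate); for the second, $\mathrm{Spin}(6) \cap \mathrm{G}_2 = \mathrm{SU}(3)$ inside $\mathrm{Spin}(7)$ (common stabilizer of a unit vector in $\mathbb{R}^7$ and a generic spinor). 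The dimension counts $10 + 8 - 3 = 15 = \dim \mathrm{Spin}(6)$ and $15 + 14 - 8 = 21 = \dim \mathrm{Spin}(7)$ then yield surjectivity.

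The main obstacle is the third square, involving the two embeddings $\iota, \iota' : \mathrm{Spin}(7) \hookrightarrow \mathrm{Spin}(8)$ shown in the diagram, which are non-conjugate and related by triality. Already the commutativity of the square requires $\iota|_{\mathrm{G}_2} = \iota'|_{\mathrm{G}_2}$, and the injectivity identity strengthens this to $\iota(\mathrm{Spin}(7)) \cap \iota'(\mathrm{Spin}(7)) = \mathrm{G}_2$ inside $\mathrm{Spin}(8)$. This is the $\mathrm{G}_2$-incarnation of triality: the subgroup $\mathrm{G}_2 \subset \mathrm{Spin}(8)$ lies diagonally inside the three triality-related copies of $\mathrm{Spin}(7)$, in close parallel to the situation for the quaternionic subgroups treated in Prop. \ref{QuaternionicSubgroupTriality}. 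Once this is established (invoking the triality analysis of \cite{CV97}), the dimension count $21 + 21 - 14 = 28 = \dim \mathrm{Spin}(8)$ gives surjectivity and completes the verification.
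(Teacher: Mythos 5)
Your proof is correct and runs on essentially the same rails as the paper's: identify each row as a coset fibration via Lemma~\ref{fibBiota} and the spherical coset realizations, reduce each pullback claim to showing the induced map on horizontal homotopy fibers --- concretely the coset map $G_1/H \to G/G_2$ --- is a homeomorphism, and then verify the requisite group identities. The only organizational difference is how those identities are sourced: the paper first promotes the strict group diagram \eqref{SomeSpin8Subgroups} to a pullback diagram by citing Onishchik (for the tall vertical rectangles) and Varadarajan (for the intersections involving $\mathrm{G}_2$ and the exotic $\mathrm{Spin}(7)$), then deduces the homeomorphism via a sandwich with evaluation maps, whereas you verify $G_1 \cap G_2 = H$ directly by stabilizer descriptions and get surjectivity from a dimension count (which is actually unnecessary: once the coset map is injective and continuous between compact connected $7$-manifolds, invariance of domain already forces surjectivity). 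One small slip on references: the fact $\iota(\mathrm{Spin}(7)) \cap \iota'(\mathrm{Spin}(7)) = \mathrm{G}_2$ and the existence of three conjugacy classes of $\mathrm{Spin}(7)$ in $\mathrm{Spin}(8)$ is taken from \cite[Thm.~5]{Varadarajan01}, not from \cite{CV97}.
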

\begin{proof}
First, observe that there is the following analogous commuting diagram of Lie groups:
\begin{equation}
  \label{SomeSpin8Subgroups}
  \xymatrix@C=25pt@R=9pt{
    \mathrm{SU}(2)
  \;  \ar@{^{(}->}[rr]
    \ar@{^{(}->}[dd]
    \ar@{}[ddrr]|-{
      \mbox{
        \tiny
        (pb)
      }
    }
    &&
    \mathrm{SU}(3)
   \; \ar@{^{(}->}[rr]
     \ar@{^{(}->}[dd]
    \ar@{}[ddrr]|-{
      \mbox{
        \tiny
        (pb)
      }
    }
    &&
    \mathrm{G}_2
  \;  \ar@{^{(}->}[rr]
    \ar@{^{(}->}[dd]
    \ar@{}[ddrr]|-{
      \mbox{
        \tiny
        (pb)
      }
    }
    &&
    \mathrm{Spin}(7)
    \ar@{^{(}->}[dd]^-{ \iota' }
    \\
    \\
    \mathrm{Spin}(5)
    \ar@{->>}[dd]
 \;   \ar@{^{(}->}[rr]
    \ar@{}[ddrr]|-{
      \mbox{
        \tiny
        (pb)
      }
    }
    &&
    \mathrm{Spin}(6)
  \;  \ar@{^{(}->}[rr]
    \ar@{->>}[dd]
    \ar@{}[ddrr]|-{
      \mbox{
        \tiny
        (pb)
      }
    }
    &&
    \mathrm{Spin}(7)
 \;   \ar@{^{(}->}[rr]
    \ar@{->>}[dd]
    \ar@{}[ddrr]|-{
      \mbox{
        \tiny
        (pb)
      }
    }
    &&
    \mathrm{Spin}(8)
    \ar@{->>}[dd]
    \\
    \\
    \mathrm{SO}(5)
   \; \ar@{^{(}->}[rr]
    &&
    \mathrm{SO}(6)
  \;  \ar@{^{(}->}[rr]
    &&
    \mathrm{SO}(7)
  \;  \ar@{^{(}->}[rr]
    &&
    \mathrm{SO}(8) \;.
  }
 \end{equation}
Here the bottom squares evidently commute and are pullback
squares by the definition of Spin groups, while the
three total vertical rectangles commute and are pullback
squares by  \cite[Table 2, p. 144]{Onishchik93}.
By the pasting law,
\footnote{\label{foot}
Recall that this says that if
$$
  \xymatrix@R=1em@C=3em{
    A
    \ar[r]
    \ar[d]
    & B
    \ar[r]
    \ar[d]
    \ar@{}[dr]|-{
      \mbox{
        \tiny
         (pb)
      }
    }
    &
    C
    \ar[d]
    \\
    D
    \ar[r]
    &
    E
    \ar[r]
    &
    F
  }
$$
is a commuting diagram, where the right square is a pullback, then the left square is a pullback
precisely if the full outer rectangle is a pullback. The same holds for homotopy-commutative
diagrams and homotopy-pullback squares.}
this implies that also the top squares are pullbacks, hence exhibiting intersections
of subgroup inclusions.
Notice that the top right vertical inclusion $\iota^\prime$ is \emph{not} the canonical inclusion
of $\mathrm{Spin}(7)$ in $\mathrm{Spin}(8)$, but is a subgroup inclusion in a distinct
$\mathrm{Spin}(7)$-conjugacy class, of which there are three \cite[Thm. 5 on p. 6]{Varadarajan01}.
The intersection in the top right square is also proven in \cite[Thm. 5 on p. 13]{Varadarajan01}, and
that of the middle square in \cite[Lem. 9 on p. 10]{Varadarajan01}.
Again, by the pasting law, this implies that also the top squares are pullbacks,
hence exhibiting intersections of subgroup inclusions.

Applying delooping (passage to classifying spaces) to these top squares,
this shows that we have a homotopy commuting diagram as follows:
\begin{equation}
  \xymatrix@C=13pt@R=4pt{
    &
    &&
    S^7
    \ar[dddd]^-{\mathrm{fib}}
    \ar@{=}[rr]
    &&
    S^7
    \ar@{=}[rr]
    \ar[dddd]^-{\mathrm{fib}}
    &&
    S^7
    \ar@{=}[rr]
    \ar[dddd]^-{\mathrm{fib}}
    &&
    S^7
    \ar[dddd]^-{\mathrm{fib}}
    \\
    \\
    \\
    S^3
    \ar[dr]^-{\mathrm{fib}}
    &
    &
    S^5
    \ar[dr]^-{\mathrm{fib}}
    & &
    S^6
    \ar[dr]^-{\mathrm{fib}}
    & &
    S^7
    \ar[dr]^-{\mathrm{fib}}
    \\
    &
    \ast
    \ar[rr]
    \ar[dddd]
    &&
    B \mathrm{SU}(2)
    \ar[rr]
    \ar[dddd]
    \ar@{}[ddddrr]|-{
      \mbox{
        \tiny
        (pb)
      }
    }
    &&
    B \mathrm{SU}(3)
    \ar[rr]
    \ar[dddd]
    \ar@{}[ddddrr]|-{
      \mbox{
        \tiny
        (pb)
      }
    }
    &&
    B \mathrm{G}_2
    \ar[rr]
    \ar[dddd]
    \ar@{}[ddddrr]|-{
      \mbox{
        \tiny
        (pb)
      }
    }
    &&
    B \mathrm{Spin}(7)
    \ar[dddd]^-{ B \iota^\prime }
    \\
    \\
    \\
    \\
    &
    B\mathrm{Spin}(4)
    \ar[rr]
    &&
    B \mathrm{Spin}(5)
    \ar[rr]
    &&
    B \mathrm{Spin}(6)
    \ar[rr]
    &&
    B \mathrm{Spin}(7)
    \ar[rr]
    &&
    B \mathrm{Spin}(8)
    \\
    S^4
    \ar[ur]_-{\mathrm{fib}}
    &
    &
    S^5
    \ar[ur]_-{\mathrm{fib}}
    & &
    S^6
    \ar[ur]_-{\mathrm{fib}}
    & &
    S^7
    \ar[ur]_-{\mathrm{fib}}
  }
\end{equation}
The spherical homotopy fibers shown in this diagram follow
by using Lemma \ref{fibBiota}
with classical results
about coset space structures of topological spheres,
as summarized in \hyperlink{TableS}{Table S}.

In order to see that each square in the diagram of classifying spaces is a homotopy pullback,
we now use the following basic fact from homotopy theory (see e.g. \cite[5.2]{CPS05}):
Assume that $Y_1, Y_2$ are connected spaces, and we are given a homotopy-commutative
square as on the right in the following diagram
$$
  \xymatrix@C=15pt@R=10pt{
    \mathrm{fib}(f_1)
    \ar[rr]
    \ar@{..>}[dd]_-{\simeq}
    &&
    X_1
    \ar[rr]^-{f_1}
    \ar[dd]
    \ar@{}[ddrr]|-{\
      \mbox{
        \tiny
        (pb)
      }
    }
    &&
    Y_1
    \ar[dd]
    \\
    \\
    \mathrm{fib}(f_2)
    \ar[rr]
    &&
    X_2
    \ar[rr]^-{f_2}
    &&
    Y_2\;.
  }
$$
Then the square is a homotopy pullback square if and only if the induced left vertical morphism
between horizontal homotopy fibers is a weak homotopy equivalence; as indicated.
To see that in our case these induced left vertical morphisms are indeed weak homotopy equivalences,
we first observe that for each of the squares above the horizontal homotopy fibers are $n$-spheres
of the same dimension $n$:
$$
  \xymatrix@C=15pt@R=11pt{
    \mathllap{
      S^7 \simeq \,
    }
    \frac{
      \mathrm{Spin}(7)
    }{
      \mathrm{G}_2
    }
    \ar[rr]
    \ar@{-->}[dd]_-{\simeq}
    &&
    B \mathrm{G}_2
    \ar[rr]^-{}
    \ar[dd]
    \ar@{}[ddrr]|-{\
      \mbox{
        \tiny
      }
    }
    &&
    B \mathrm{Spin}(7)
    \ar[dd]
    \\
    \\
    \mathllap{
      S^7 \simeq \,
    }
    \frac{
      \mathrm{Spin}(8)
    }{
      \mathrm{Spin}(7)
    }
    \ar[rr]
    &&
    B \mathrm{Spin}(7)
    \ar[rr]_-{}
    &&
    B \mathrm{Spin}(8)
  }
$$
and
$$
  \xymatrix@C=15pt@R=11pt{
    \mathllap{
      S^6 \simeq \,
    }
    \frac{
      \mathrm{G}_2
    }{
      \mathrm{SU}(3)
    }
    \ar[rr]
    \ar@{-->}[dd]_-{\simeq}
    &&
    B \mathrm{SU}(3)
    \ar[rr]^-{}
    \ar[dd]
    \ar@{}[ddrr]|-{\
      \mbox{
        \tiny
      }
    }
    &&
    B \mathrm{G}_2
    \ar[dd]
    \\
    \\
    \mathllap{
      S^6 \simeq \,
    }
    \frac{
      \mathrm{Spin}(7)
    }{
      \mathrm{Spin}(6)
    }
    \ar[rr]
    &&
    B \mathrm{Spin}(6)
    \ar[rr]_-{}
    &&
    B \mathrm{Spin}(7)
  }
$$
(for the coset realization of $S^6$ on the top left see \cite{FuIsh55})
and
$$
  \xymatrix@C=15pt@R=11pt{
    \mathllap{
      S^5 \simeq \,
    }
    \frac{
      \mathrm{SU}(3)
    }{
      \mathrm{SU}(2)
    }
    \ar[rr]
    \ar@{-->}[dd]_-{\simeq}
    &&
    B \mathrm{SU}(2)
    \ar[rr]^-{}
    \ar[dd]
    \ar@{}[ddrr]|-{\
      \mbox{
        \tiny
      }
    }
    &&
    B \mathrm{SU}(3)
    \ar[dd]
    \\
    \\
    \mathllap{
      S^5 \simeq \,
    }
    \frac{
      \mathrm{Spin}(6)
    }{
      \mathrm{Spin}(5)
    }
    \ar[rr]
    &&
    B \mathrm{Spin}(5)
    \ar[rr]_-{}
    &&
    B \mathrm{Spin}(6)\;.
  }
$$
To see in detail that the homotopy fibers on the left are not only pairwise weakly homotopy equivalent,
but that the universally induced dashed morphism exhibits such a weak homotopy equivalence, we proceed
as follows. For $G := \mathrm{Spin}(n)$ one of the  $\mathrm{Spin}$ groups appearing above,
pick any one topological space $E G$ modelling the total space of the universal $G$ bundle
(hence any weakly contractible topological space equipped with a free continuous $G$-action).
Then for $G' \overset{\iota}{\hookrightarrow} G$ any subgroup, we have that the projection
 $(E G)/G' \to (E G)/G$ is a Serre fibration modelling
$B G' \overset{B \iota}{\longrightarrow} B G$ (e.g. \cite[11.4]{Mitchell11}).
Since ordinary pullbacks of Serre fibrations are already homotopy pullbacks,
this means that the above homotopy pullback squares are represented by actual
pullback squares of topological spaces in the following diagram:
$$
  \xymatrix@C=25pt@R=11pt{
    \mathllap{
      S^n \simeq \,
    }
    \frac{
      G'
    }{
      G' \cap G''
    }
    \ar[rr]
    \ar@{-->}[dd]_-{\simeq}
    \ar@{}[ddrr]|-{\
      \mbox{
        \tiny
        (pb)
      }
    }
    &&
    (
      E G
    )
    /
    (
      G' \cap G''
    )
    \ar[rr]^-{}
    \ar[dd]
    \ar@{}[ddrr]|-{\
      \mbox{
        \tiny
        (pb)
      }
    }
    &&
    (
      E G
    )/ G'
    \ar[dd]
    \\
    \\
    \mathllap{
      S^n \simeq \,
    }
    \frac{
      G
    }{
      G''
    }
    \ar[rr]
    &&
    (
     E G
    )/ G''
    \ar[rr]_-{}
    &&
    (
     E G
    )/ G\;.
  }
$$
Here the dashed morphism is the canonical continuous
function induced by the given group inclusions,
so that it is now sufficient to observe that this is a homeomorphism.

While this does not follow for general subgroup intersections,
it does follow as soon as the given coset spaces
are homeomorphic, as is the case here.
Namely, pick any point $x \in S^n$ and observe
that we have a commuting square of continuous functions as follows.
$$
  \xymatrix@C=4em@R=1.5em{
    S^n
    \ar@{<-}[rr]^-{ [g'] \mapsto g'(x) }_-{\simeq_{\mathrm{homeo}}}
    \ar@{=}[d]
    &&
    \tfrac{
      G'
    }{
      G' \cap G''
    }
    \ar[d]
    \\
    S^n
    \ar@{<-}[rr]_-{ [g] \mapsto g(x) }^-{\simeq_{\mathrm{homeo}}}
    && \;
    \tfrac{
      G
    }{
      G''
    }\;.
  }
$$
Since in this diagram the top, bottom and left maps are homeomorphisms,
it follows that the right map is also a homeomorphism.
\end{proof}

\begin{remark}
[Twisted generalized cohomotopy]
  \label{TwistedGeneralizedCohomotopy}
One may also consider twisted Cohomotopy with coefficients
in fibrations of \emph{pairs} of spheres:
$$
  \left\{
    \raisebox{23pt}{
    \xymatrix{
    &&
    \big(
      S^p \times S^q
    \big)
    \!\sslash\!
    \big(
     {\rm O}(p) \times {\rm O}(q)
    \big)
    \ar[d]
    \\
    X
      \ar@{-->}[urr]
      \ar[rr]_{ TX }
      &&
    B \mathrm{O}(n)
    }
    }
  \right\}_{\raisemath{16pt}{\Bigg/\sim}}
$$
\item {\bf (i)} Corresponding twists arise from ``doubly exceptional geometry'',
in that we have the following pasting diagram of homotopy
pullbacks, further refining those of Prop. \ref{Cohomotopy7GStructure}:
$$
  \raisebox{49pt}{
  \xymatrix@R=1em{
    S^7\times S^7
    \ar@{}[ddrr]|-{
      \mbox{
        \tiny
        (pb)
      }
    }
    \ar[rr]
    \ar[dd]
    &&
    B G_2
    \ar[dd]^-{ i_{G_2} }
    \\
    \\
    S^7\ar[dd]
    \ar[rr]
    \ar@{}[ddrr]|-{
      \mbox{
        \tiny
        (pb)
      }
    }
    &&
    B \mathrm{Spin}(7)\ar[dd]^-{ i_{\mathrm{Spin(7)}} }
    \\
    \\
    \ast
    \ar[rr]
    &&
    B \mathrm{Spin}(8)
  }
  }
 \phantom{AAAA}
 \mbox{equivalently}
 \phantom{AAAA}
 \raisebox{49pt}{
  \xymatrix@R=1em{
    S^7\times S^7
    \ar[rr]
    \ar[dd]
    \ar@{}[ddrr]|-{
      \mbox{
        \tiny
        (pb)
      }
    }
    &&
    S^7 \!\sslash\! \mathrm{Spin}(7)
    \ar[dd]^-{ i_{G_2} }
    \\
    \\
    S^7\ar[dd]
    \ar[rr]
    \ar@{}[ddrr]|-{
      \mbox{
        \tiny
        (pb)
      }
    }
    &&
    S^7 \!\sslash\! \mathrm{Spin}(8)
    \ar[dd]^-{ i_{\mathrm{Spin(7)}} }
    \\
    \\
    \ast \ar[rr]
    &&
    B \mathrm{Spin}(8)
  }
  }
$$
This follows analogously as in Prop. \ref{Cohomotopy7GStructure},
with \cite[p. 146]{Onishchik93}.

\item {\bf (ii)} Further twists for Cohomotopy with coefficients in $S^p \times S^p$
arise from topological $G$-structure for rotation groups ${\rm O}(p,p)$ in split signature,
and hence from \emph{generalized geometry} (e.g. \cite{Hull07}).
This is because indefinite orthogonal groups are homotopy equivalent to their
maximal compact subgroups via the polar decomposition
$$
  \mathrm{O}(p,p)
  \;\simeq_{\mathrm{wh}}\;
  \mathrm{O}(p) \times \mathrm{O}(p)
$$
(see, e.g., \cite[Sec. 17.2]{HN}) and similarly for higher connected covers (see \cite{SSh}).
Therefore, we might call Cohomotopy with coefficients in
$S^p \times S^p$, and twisted by generalized geometry,
\emph{generalized Cohomotopy} (not to be confused with older terminology \cite{Jaworowski62}).
We will discuss the details elsewhere.
\end{remark}


\medskip

\subsection{Twisted Cohomotopy via Poincar{\'e}-Hopf}
\label{EulerCharacteristicAndM2BraneWorldvolumes}

We characterize here the $T X$-twisted Cohomotopy of
compact orientable smooth manifolds $X$ in terms of the
``Cohomotopy charge'' carried by a finite number of
point singularities in $X$.
This is the content of
Prop. \ref{TwistedCohomotopyViaPoincareHopf} below.
The proof is a cohomotopical restatement of the
classical Poincar{\'e}-Hopf (PH) theorem (see e.g. \cite[Sec. 15.2]{DNF85}),
but the perspective of twisted Cohomotopy
is noteworthy in itself and is crucial for the discussion
of M2-brane tadpole cancellation in \cref{M2BraneTadpoleCancellation}
below.

\medskip

\begin{prop}[Twisted cohomotopy and the Euler characteristic]
  \label{TwistedCohomotopyViaPoincareHopf}
  Let $X$ be an orientable compact smooth manifold.
  Then:
\vspace{-2mm}
\item  {\bf (i)} A cocycle in the $T X$-twisted Cohomotopy of $X$ (Def. \ref{TwistedCohomotopy})
exists if and only if the Euler  characteristic of $X$ vanishes:
  $$
    \pi^{T X}( X )
    \;\neq\;
    \varnothing
    \;\;\;\;
    \Longleftrightarrow
    \;\;\;\;
    \rchi[X]
    \;\neq\;
    0
    \,.
  $$

 \item {\bf (ii)} Generally, there exists a finite set of
  points $\{x_i \in X\}$ such that
  the operation of
  restriction to open neighbourhoods of these
  points exhibits an injection of the $T X$-twisted Cohomotopy
  of their complement
  $
      \pi^{T X}
      \big(
        X \setminus \underset{i}{\coprod}\{x_i\}
      \big)
  $
  (Def. \ref{TwistedCohomotopy})
  into the
  product of untwisted Cohomotopy sets \eqref{CohomotopyUntwisted}
  $
    \pi^{ \mathrm{dim}(X) }\big( U_{x_i} \setminus \{x_i\}\big)
  $
  of these
  pointed neighborhoods.
  Moreover, the latter are integers
  which sum to the Euler characteristic $\rchi[X]$
  of $X$:
  \begin{equation}
    \label{PHViaCohomotopy}
    \xymatrix{
      \pi^{T X}
      \big(
        X \setminus \underset{i}{\coprod}\{x_i\}
      \big)
      \ar[d]
      \; \ar@{^{(}->}[rr]^-{ \mathrm{restr.} }
      &&
      \underset{i}{\prod}
      \pi^{ \mathrm{dim}(X)-1 }
      \big(
        U_{x_i} \setminus \{x_i\}
      \big)
      \ar[r]^-{ \simeq }
      &
      \underset{i}{\prod}
      \mathbb{Z}
      \ar[d]^-{ \underset{i}{\sum} }
      \\
      \ast
      \ar[rrr]^-{ \chi[X] }
      && &
      \mathbb{Z}
    }
  \end{equation}
\end{prop}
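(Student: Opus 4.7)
The plan is to recognize the whole proposition as a cohomotopical restatement of the classical Poincar\'e--Hopf theorem. By Def.\ \ref{TwistedCohomotopy} (or equivalently by Prop.\ \ref{TwistedCohomotopyIsReductionOfStructureGroup}), a cocycle in $TX$-twisted Cohomotopy on a smooth $d$-manifold $X$ is the same data as a section of the unit tangent sphere bundle $S(TX) \to X$. Using the Riemannian metric, such sections correspond bijectively (up to fiberwise radial rescaling) to nowhere-vanishing vector fields on $X$. Hence part \textbf{(i)} is exactly the classical statement that the Euler number obstructs the existence of a nowhere-zero vector field, i.e.\ $S(TX)$ admits a global section if and only if $\rchi[X] = 0$.

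For part \textbf{(ii)}, the plan is to start from \emph{any} smooth vector field $v$ on $X$ in general position, so that its zero set $\{x_1, \ldots, x_k\}$ is finite and discrete; normalizing $v/|v|$ away from these zeros then furnishes the desired cocycle in $\pi^{TX}(X \setminus \coprod_i \{x_i\})$. Around each $x_i$ I would pick a small open ball $U_{x_i}$ on which $TX$ trivializes, producing an identification $S(TX)\vert_{U_{x_i}} \simeq U_{x_i} \times S^{d-1}$. Under this trivialization the restriction of the cocycle to $U_{x_i} \setminus \{x_i\}$ becomes an ordinary continuous map to $S^{d-1}$, and because $U_{x_i} \setminus \{x_i\}$ deformation retracts onto a $(d-1)$-sphere the Hopf degree theorem \eqref{HopfDegreeTheorem} canonically identifies $\pi^{d-1}(U_{x_i} \setminus \{x_i\}) \simeq \mathbb{Z}$. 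The integer thereby produced is precisely the Poincar\'e--Hopf local index $\mathrm{ind}_{x_i}(v)$.

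The classical Poincar\'e--Hopf index theorem then gives $\sum_i \mathrm{ind}_{x_i}(v) = \rchi[X]$, yielding commutativity of the square in \eqref{PHViaCohomotopy}. For the injectivity of the top horizontal restriction map I would invoke an obstruction-theoretic argument: choose a CW-decomposition of $X$ in which the $x_i$ lie in the interiors of top cells, so that $X \setminus \coprod_i \{x_i\}$ deformation retracts onto the union of the punctured top cells with the $(d-1)$-skeleton. The fibers of $S(TX)$ being $(d-1)$-spheres, with $\pi_{k-1}(S^{d-1}) = 0$ for $k \leq d-1$, any two sections of $S(TX)$ over the $(d-1)$-skeleton are homotopic; the residual data is then exactly the homotopy class of the restriction to each punctured neighborhood $U_{x_i} \setminus \{x_i\}$.

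The main obstacle I expect is the injectivity clause: part (i) and the identification of local restrictions with integer indices are direct translations of classical facts, but injectivity requires the obstruction-theoretic analysis outlined above, together with the verification that the relevant cellular obstructions really do lie in the vanishing homotopy groups of the $(d-1)$-sphere fiber. Once this is in place, the commuting square \eqref{PHViaCohomotopy} is a formal rewriting of the Poincar\'e--Hopf identity.
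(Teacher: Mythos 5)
Your reading of part (i) and of the Poincar\'e--Hopf identity in part (ii) is exactly the paper's: identify twisted Cohomotopy cocycles with sections of the unit tangent sphere bundle, i.e.\ with nowhere-zero vector fields, so that (i) is the classical Euler-characteristic obstruction; then, given a vector field with isolated zeros $\{x_i\}$, the normalized $v/|v|$ furnishes a twisted Cohomotopy cocycle on the punctured manifold whose restriction to $U_{x_i}\setminus\{x_i\}$, read via the Hopf degree theorem \eqref{HopfDegreeTheorem}, is the Poincar\'e--Hopf index $\mathrm{ind}_{x_i}(v)$, and the classical Poincar\'e--Hopf theorem \eqref{PHTheorem} closes the square \eqref{PHViaCohomotopy}. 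This is precisely the content of the paper's proof, which is likewise a translation of Poincar\'e--Hopf into cohomotopical language.

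However, your obstruction-theoretic argument for the injectivity of the top row of \eqref{PHViaCohomotopy} has a genuine gap. You invoke $\pi_{k-1}(S^{d-1})=0$ for $k\leq d-1$ to conclude that \emph{any two} sections of $S(TX)$ over the $(d-1)$-skeleton are homotopic. That vanishing governs the obstructions to \emph{existence} of a section over the $k$-skeleton; the obstructions to a \emph{homotopy between two given sections} over the $k$-skeleton live instead in $H^k(K;\pi_k(S^{d-1}))$, and for $k=d-1$ this is $H^{d-1}(K;\pi_{d-1}(S^{d-1}))\simeq H^{d-1}(K;\mathbb{Z})$, which in general does not vanish. Consequently two sections over the $(d-1)$-skeleton need not be homotopic, and the ``residual data'' of a twisted Cohomotopy class is not captured solely by its restrictions to the punctured neighborhoods: the difference class can live in the part of $H^{d-1}(X\setminus\coprod_i\{x_i\};\mathbb{Z})$ that dies under restriction to the small link spheres. (For instance, on $T^2$ with its standard CW structure, the boundary of the top cell is the commutator $[a,b]$, which is null in $H_1$, so restriction of degree-$1$ Cohomotopy to the link circle is the zero map while $H^1(T^2\setminus\{x\};\mathbb{Z})\simeq\mathbb{Z}^2$.) You should be aware that the paper's own proof also only establishes commutativity of the square and does not actually argue for the injectivity of the restriction map, so you are attempting to supply a step the paper leaves unjustified---but the obstruction-theory step as written does not go through.
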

\begin{proof}
This follows with the classical {Poincar{\'e}-Hopf theorem}, \eqref{PHTheorem} below.
We recall the relevant terminology:

 \begin{enumerate}[{\bf (i)}]
 \vspace{-2mm}
  \item For $v$ a vector field on $X$, a point $x \in X$ is
  called an
  \emph{isolated zero} of $v$ if there exists an open
  contractible neighborhood
   $U_x \subset X$ such that
  the restriction $v|_{U_x}$ of $v$ to this neighborhood vanishes at $x$ and only at $x$.
  \vspace{-2mm}
  \item This means that on $U_x \setminus \{x\}$ the vector field $v$
   induces a map to the $( \mathrm{dim}(X)-1)$-sphere
   \begin{equation}
     \label{vOnUxAsMapToSphere}
     v/{\left\vert v\right\vert}
     \;:\;
     \xymatrix@C=4em{
      U_x \setminus \{x\}
      \ar[r]^-{ v/{\vert v\vert }}
      &
      S(T_x X)
      \simeq
      S^{ \mathrm{dim}(X)-1 }
      }
      \,.
   \end{equation}
   Here the equivalence on the right is to highlight that
   the sphere arises as the fiber of the unit sphere bundle
   of the tangent bundle $T U_x$, which may be
   identified with the unit sphere in $T_x X$, by the
   assumed contractibility of $U_x$.
\vspace{-2mm}
  \item Given an isolated zero $x$, the
   \emph{Poincar{\'e}-Hopf index} of $v$ at that point is
   the degree of the associated map \eqref{vOnUxAsMapToSphere}
   to the sphere, for any choice of local chart:
  \begin{equation}
    \label{PHIndexAtIsolatedZero}
    \xymatrix@C=3em{
    \mathrm{index}_x(v)
    \;:=\;
    \mathrm{deg}
    \big(
      U_x \setminus \{x\}
      \ar[r]^-{ v/{\vert v\vert }}
      &
      S(T_x X)
      \simeq
      S^{ \mathrm{dim}(X)-1 }
    \big).
    }
  \end{equation}
\end{enumerate}

 Now for $X$ orientable and compact,
 the {\it Poincar{\'e}-Hopf theorem} (e.g. \cite[Sec. 15.2]{DNF85})
 says that for any vector field $v \in \Gamma(T X)$
  with a finite set $\{x_i \in X\}$ of isolated zeros,
  the sum of the indices \eqref{PHIndexAtIsolatedZero} of $v$
  equals the
  Euler characteristic $\rchi[X]$ of $X$:
  \begin{equation}
    \label{PHTheorem}
    \underset{
      \mathclap{
        \mbox{
          \tiny
          isolated zero
        }
        \atop
        x_i \in X
      }
    }{\sum}
    \;
    \mathrm{index}_{x_i}(v)
    \;=\;
    \rchi[X]
    \,.
  \end{equation}

To conclude, observe that the  maps to spheres in
\eqref{vOnUxAsMapToSphere} are but the restriction of
the corresponding cocycle in the $T X$-Cohomotopy of
$X \setminus \underset{i}{\coprod} \{x_i\}$:
$$
  \xymatrix{
    &&
    S^{\mathrm{dim}(X)} \!\sslash\! \mathrm{SO}(\mathrm{dim}(X))
    \ar[d]
    \\
    X \setminus \underset{i}{\coprod} \{x_i\}
    \ar@/^1pc/@{-->}[urr]^-{ v/{\left\vert v \right\vert} }
    \ar[rr]_-{ T X }
    &&
    B \mathrm{SO}(\mathrm{dim}(X))
  }
$$
Finally, the identification of the PH-index
with an integer is via the Hopf degree theorem \eqref{HopfDegreeTheorem},
now understood
as the characterization of untwisted Cohomotopy in \eqref{HopfDegreeTheorem}.
\end{proof}

  We may equivalently use the differential form  data that underlies a cocycle in twisted Cohomotopy,
  by Prop. \ref{SectionsOfRationalSphericalFibrations}, to   re-express the cohomotopical PH-theorem,
  Prop. \ref{TwistedCohomotopyViaPoincareHopf},  via Stokes' theorem.
  Let $X$ be an orientable compact smooth manifold  of even dimension $\mathrm{dim}(X) = 2n + 2$,
  for $n \in \mathbb{N}$   and let $v \in TX$ be a vector field   with isolated zeros $\{x_i \in X\}$.
  For any fixed choice of Riemannian metric   on $X$ and any small enough positive real number $\epsilon$,
  write
  $$
    D^\epsilon_{x_i}
    \;:=\;
    \big\{
      x \in X
      \;\vert\;
      d(x,x_i) \lt \epsilon
    \big\} \subset X
  $$
  for the open ball of radius $\epsilon$ around $x_i$.
  The complement of these open balls is hence a manifold
  with boundary a disjoint union of $(2n+1)$-spheres:
  $$
    \partial
    \big(
      X \setminus \underset{i}{\coprod} \{x_i\}
    \big)
    \;\simeq\;
    \underset{i}{\coprod}
    S^{2n+1}.
  $$
  Then, by Prop. \ref{SectionsOfRationalSphericalFibrations},   the cocycle in twisted Cohomotopy
  on $X \setminus \underset{i}{\coprod} \{x_i\}$ which  corresponds to the vector field $v$ has
  underlying it a differential $(2n+1)$-form $G_{2n+1}$ which satisfies
  $$
    d G_{2n+1}
    \;=\;
    -
    \rchi_{2n+2}(\nabla)
    \,.
  $$
  By Stokes' theorem we thus have
  $$
    \begin{aligned}
    \rchi[X]
    & =
    \underset{\epsilon \to 0}{\mathrm{lim}}
    \,
    \int_{X \setminus \underset{i}{\coprod} D^\epsilon_{x_i}}
    \rchi
    \\
    & =
    -
    \underset{\epsilon \to 0}{\mathrm{lim}}\,
    \underset{i}{\sum} \int_{ \partial D^\epsilon_{x_i} }
    G_{2n+1}
  \end{aligned}
  $$
  We may summarize the above by the following.

  \begin{lemma}[Cohomological PH-theorem]
  \label{CohomologicalPH}
  In the above setting, the Euler characteristic is given by
   the integral of $-G_{2n+1}$
  over the boundary components
  around the zeros of $v$:
  \begin{equation}
    \label{CohomologicalPHViaStokes}
    -
    \underset{i}{\sum}
    \int_{ S^{2n+1}_i }
    G_{2n+1}
    \;=\;
    \rchi[X]
    \,.
  \end{equation}
\end{lemma}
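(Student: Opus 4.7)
The plan is essentially to package the Stokes'-theorem calculation that the excerpt has already set up in the paragraph immediately preceding the lemma statement. First I would recall that, by the standard Chern–Gauss–Bonnet identity, the Euler characteristic of the closed manifold $X$ equals $\int_X \chi_{2n+2}(\nabla)$ for any metric connection $\nabla$; since the Euler form is smooth and $\bigcup_i D^\epsilon_{x_i}$ has measure shrinking to zero with $\epsilon$, this integral is the $\epsilon \to 0$ limit of the integrals over the complement of the small balls:
\[
\chi[X] \;=\; \lim_{\epsilon \to 0}\int_{X \setminus \coprod_i D^\epsilon_{x_i}} \chi_{2n+2}(\nabla).
\]

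Next I would invoke Prop.~\ref{SectionsOfRationalSphericalFibrations}: the twisted Cohomotopy cocycle on $X \setminus \coprod_i \{x_i\}$ corresponding to the vector field $v$ (which exists away from the zeros, by Prop.~\ref{TwistedCohomotopyViaPoincareHopf}) is represented by a form $G_{2n+1}$ satisfying $dG_{2n+1} = -\chi_{2n+2}(\nabla)$ on $X \setminus \coprod_i \{x_i\}$. Applying Stokes' theorem to each truncated domain $X \setminus \coprod_i D^\epsilon_{x_i}$, whose boundary is $-\coprod_i \partial D^\epsilon_{x_i}$ with the outward orientation relative to the domain (inward relative to the punctures), converts the above integral into
\[
\chi[X] \;=\; \lim_{\epsilon \to 0}\Big(-\sum_i \int_{\partial D^\epsilon_{x_i}} G_{2n+1}\Big).
\]

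Finally, I would observe that for each $i$ the boundary sphere $\partial D^\epsilon_{x_i}$ is smoothly isotopic to the representative $S^{2n+1}_i$ through codimension-one spheres in the punctured neighborhood $U_{x_i} \setminus \{x_i\}$ on which $G_{2n+1}$ is closed (the Euler form being smooth across $x_i$ up to the puncture, but in fact closed on the punctured neighborhood as well). Hence $\int_{\partial D^\epsilon_{x_i}} G_{2n+1}$ is independent of $\epsilon$ for all sufficiently small $\epsilon$ and equals $\int_{S^{2n+1}_i} G_{2n+1}$, which removes the limit and yields \eqref{CohomologicalPHViaStokes}.

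The only mild subtlety — and the only candidate for a ``hard step'' — is the boundary-orientation bookkeeping: one has to be careful that the boundary of $X \setminus \coprod_i D^\epsilon_{x_i}$ inherits the opposite orientation to the outward-pointing normal orientation of each small sphere around $x_i$, which is precisely what produces the overall minus sign on the left-hand side of \eqref{CohomologicalPHViaStokes} and makes the identity cohere with the Poincaré–Hopf sign convention \eqref{PHTheorem}. Everything else is either a direct invocation of Prop.~\ref{SectionsOfRationalSphericalFibrations} or a standard application of Stokes' theorem.
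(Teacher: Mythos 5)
Your proposal follows the same route as the paper: Chern--Gauss--Bonnet to express $\rchi[X]$ as $\int_X \rchi_{2n+2}(\nabla)$, Prop.~\ref{SectionsOfRationalSphericalFibrations}(i) to get $d G_{2n+1} = -\rchi_{2n+2}(\nabla)$ on the punctured manifold, and Stokes on the complement of small balls, keeping track of boundary orientations. The paper itself treats the lemma as nothing more than a ``summary'' of the limiting Stokes computation in the paragraph above it, so up to and including the displayed limit formula your argument reproduces that proof.

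There is, however, a genuine flaw in your final step, the one in which you try to remove the $\epsilon \to 0$ limit. You claim that $\int_{\partial D^\epsilon_{x_i}} G_{2n+1}$ is independent of $\epsilon$ because $G_{2n+1}$ is closed on the punctured neighbourhood $U_{x_i}\setminus\{x_i\}$. But $G_{2n+1}$ is \emph{not} closed there: by construction $d G_{2n+1} = -\rchi_{2n+2}(\nabla)$, and the Euler form $\rchi_{2n+2}(\nabla)$ is a smooth, generically nonvanishing $(2n+2)$-form on all of $X$ (its restriction to the punctured neighbourhood need not vanish). Consequently, for $0 < \epsilon < \epsilon'$ small, Stokes on the annulus $D^{\epsilon'}_{x_i}\setminus \overline{D^\epsilon_{x_i}}$ gives
\[
\int_{\partial D^{\epsilon'}_{x_i}} G_{2n+1}
-
\int_{\partial D^{\epsilon}_{x_i}} G_{2n+1}
\;=\;
-\int_{D^{\epsilon'}_{x_i}\setminus \overline{D^\epsilon_{x_i}}} \rchi_{2n+2}(\nabla)\,,
\]
which is nonzero in general. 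The boundary integral therefore does depend on $\epsilon$, and the isotopy-invariance argument you invoke (which requires closedness of the form on the region swept out by the isotopy, not merely on each individual hypersurface) does not apply. What \emph{is} true is that the integral converges as $\epsilon \to 0$, and that limit is what the Poincar\'e--Hopf index computes; the paper's lemma should be read with $S^{2n+1}_i$ understood as the infinitesimal boundary sphere, i.e., as shorthand for the $\epsilon\to 0$ limit already present in the displayed Stokes computation. In short: keep the limit, drop the $\epsilon$-independence claim.
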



\medskip

\subsection{Twisted Cohomotopy via Pontrjagin-Thom}
\label{TwistedPT}

We recall the unstable Pontrjagin-Thom theorem
relating untwisted Cohomotopy to normally framed submanifolds,
\eqref{ClassicalPTIsIso} below.
Then we show that twisted Cohomotopy jointly in degrees
4 and 7 (as per \cref{TwistedCohomotopyInDegrees})
knows about \emph{calibrated submanifolds} in 8-manifolds, Prop. \ref{ModuliSpaceOfCalibratedFromTwistedCohomotopy} below.
Finally we observe that in this case vanishing submanifolds under a
twisted Pontrjagin-Thom construction means, equivalently, a
factorization through the quaternionic Hopf fibration,
\eqref{PTVanishingChargeAsLift} below.

\medskip

\noindent {\bf Framed submanifolds from untwisted Cohomotopy.}
One striking aspect of \hyperlink{HypothesisH}{\it Hypothesis H},
is that unstable Cohomotopy of a manifold $X$
is exactly the cohomology theory
which classifies (cobordism classes of)
\emph{submanifolds} $\Sigma \subset X$, subject to
constraints on the normal bundle $N_X \Sigma$ of the embedding.

\medskip
In the case of vanishing twist, this is the statement of the classical
\emph{unstable Pontrjagin-Thom isomorphism} (e.g. {\cite[IX.5]{Kosinski93}})
\begin{equation}
  \label{ClassicalPTIsIso}
  \xymatrix{
    \pi^n(X)
    \; \; \;
    \ar@{<-}@<+6pt>[rr]^-{ \mathrm{PT}^n }
    \ar@<-6pt>[rr]_-{
      \mathrm{fib}_0 \; \circ \; \mathrm{reg}
    }^-{
      \raisebox{1pt}{
        $\simeq$
      }
    }
    &&
    \;\; \;
    \mathrm{FrSubMfd}^{\mathrm{codim} = n}(X)_{\raisemath{0pt}{\big/ \sim_{\mathrm{bord}}}}.
  }
\end{equation}
For a closed smooth manifold $X$ and any degree $n \in \mathbb{N}$, this
identifies degree $n$ cocycles
$$
  \big[
    X
      \overset{c}{\longrightarrow}
    S^n
  \big]
  \;\in\;
  \pi^{n}(X)
$$
in the untwisted unstable Cohomotopy \eqref{CohomotopyUntwisted} of $X$
with the cobordism classes of normally framed submanifolds
$\Sigma$  of codimension $n$
$$
  \big(
    \Sigma \hookrightarrow X
    \,,
    \;\;
    \xymatrix{
      N_{X} \Sigma \ar[r]^-{\mathrm{fr}}_{ \simeq }
      &
     \Sigma \times \mathbb{R}^n
    }\,,
    \;\;
    \mathrm{dim}(\Sigma) = \mathrm{dim}(X) - n
     \; \big)
$$
given as the preimage of a chosen base point
\begin{equation}
  \label{BasePointInSphere}
  \mathrm{pt}
  \;\in\; S^n
\end{equation}
under a smooth function representative $c$ of $[c]$ for which $\mathrm{pt}$ is a regular value
$
  c^{-1}\big(
    \{\mathrm{pt}\}
  \big)
  \;=:\;
  \Sigma \subset X
  $.

\medskip
As advocated in \cite{S-top}, we may naturally think of the submanifolds
$\Sigma \subset X$  appearing in the unstable Pontrjagin-Thom isomorphism
\eqref{ClassicalPTIsIso} as \emph{branes} whose charge is given by the Cohomotopy class $[c]$.
This reveals Cohomotopy as the canonical cohomology theory for measuring charges of branes
given as (cobordism classes of) submanifolds. To see this in full detail one needs
to consider the refinement of \eqref{ClassicalPTIsIso} to  twisted and \emph{equivariant}
Cohomotopy. In the rational approximation this is discussed in \cite{ADE},
the full non-rational theory of M-branes at singularities classified by
equivariant Cohomotopy will be discussed
elsewhere \cite{SS19a}\cite{SS19b}.

\medskip
Here we content ourselves with highlighting two
related
facts, which are needed for the discussion in \cref{CancellationFromCohomotopy}.

\medskip

\noindent {\bf Calibrated submanifolds from twisted Cohomotopy.}
The manifold $\mathbb{R}^8$ carries an
exceptional \emph{calibration} by the \emph{Cayley 4-form}
$\Phi \in \Omega^4(\mathbb{R}^8)$ \cite{HarveyLawson82},
which singles out 4-dimensional
submanifold embeddings $\Sigma_4 \hookrightarrow \mathbb{R}^8$
as the corresponding \emph{calibrated submanifolds}.
The space of all such \emph{Cayley 4-planes},
canonically a subspace of the Grassmannian space
$\mathrm{Gr}(4,8)$ of \emph{all} 4-planes in 8 dimensions,
is denoted
\begin{equation}
  \mathrm{CAY}
  \;\subset\;
  \mathrm{Gr}(4,8)
\end{equation}
in \cite[(2.19)]{BryantHarvey89}\cite[(5.20)]{GluckMackenzieMorgan95}.
We will write
\begin{equation}
  \mathrm{CAY}_{\mathrm{sL}}
  \;\subset\;
  \mathrm{CAY}
  \;\subset\;
  \mathrm{Gr}(4,8)
\end{equation}
for the further subspace of those Cayley 4-planes which are
also special Lagrangian submanifolds.
There are canonical symmetry actions of $\mathrm{Spin}(7)$
and of $\mathrm{Spin}(6)$, respectively, on these spaces
\cite[Prop. 1.36]{HarveyLawson82}:
\begin{equation}
  \label{GroupActionsOnCayleyPlanes}
  \xymatrix{
    \mathrm{CAY}
    \ar@(ul,ur)^{ \mathrm{Spin}(7) }
  }
\quad
\text{and}
\quad
  \xymatrix{
    \mathrm{CAY}_{\mathrm{sL}}
    \ar@(ul,ur)^{ \mathrm{Spin}(6) }
  }.
\end{equation}
Hence the corresponding homotopy quotients
\begin{equation}
  \label{ModuliSpacesOfCayley4Planes}
  \mathrm{CAY} \!\sslash\! \mathrm{Spin}(7)
  \quad
  \text{and}
  \quad
  \mathrm{CAY}_{\mathrm{sL}} \!\sslash\! \mathrm{Spin}(6)
\end{equation}
are the \emph{moduli spaces} for Cayley 4-planes
and for special Lagrangian Cayley 4-planes, respectively:
for $X$ a $\mathrm{Spin}(7)$-manifold, a dashed lift in
$$
  \xymatrix{
    &&
    \mathrm{CAY} \!\sslash\! \mathrm{Spin}(7)
    \ar[d]
    \\
    X
    \ar@/^1pc/@{-->}[urr]
    \ar[rr]
    &&
    B \mathrm{Spin}(7)
  }
  \phantom{AAAAAAAAA}
  \xymatrix{
    &&
    \mathrm{CAY}_{\mathrm{sL}} \!\sslash\! \mathrm{Spin}(6)
    \ar[d]
    \\
    X
    \ar@/^1pc/@{-->}[urr]
    \ar[rr]
    &&
    B \mathrm{Spin}(6)
  }
$$
is a distribution on $X$ by tangent spaces to (special Lagrangian)
calibrated submanifolds.

\begin{prop}[Calibrations from twisted cohomotopy]
  \label{ModuliSpaceOfCalibratedFromTwistedCohomotopy}
  The moduli spaces of (special Lagrangian) Cayley 4-planes   \eqref{ModuliSpacesOfCayley4Planes}
  are compatibly weakly homotopy equivalent to the   coefficient spaces for twisted Cohomotopy
  jointly in degrees 4 and 7,   according to Prop. \ref{EquihH}:
  $$
    \xymatrix{
      \mathrm{CAY}_{\mathrm{sL}} \!\sslash \! \mathrm{Spin}(6)
      \ar@{}[r]|-{\simeq}
      \ar[d]
      &
      S^7
        \!\sslash\!
      \big(
        \mathrm{Sp}(2)
        \boldsymbol{\cdot}
        \mathrm{Sp}(1)
      \big)
      \ar[d]
      \\
      \mathrm{CAY} \!\sslash \! \mathrm{Spin}(7)
      \ar@{}[r]|-{\simeq}
      &
      S^4
        \!\sslash\!
      \big(
        \mathrm{Sp}(2)
        \boldsymbol{\cdot}
        \mathrm{Sp}(1)
      \big)
    }
  $$
\end{prop}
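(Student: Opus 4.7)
The plan is to realize both Cayley Grassmannian moduli spaces as classifying spaces of central-product groups, and then invoke Proposition \ref{hHsslash}, which has already identified the right-hand sides of the claimed equivalences with exactly these classifying spaces.

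First, I would recall from the theory of calibrations (going back to Harvey--Lawson, and further developed by Bryant--Harvey and Gluck--Mackenzie--Morgan) that the $\mathrm{Spin}(7)$-action \eqref{GroupActionsOnCayleyPlanes} on $\mathrm{CAY}$ is transitive, with the stabilizer of a chosen basepoint isomorphic, as an abstract group, to the triple central product $\mathrm{Sp}(1)\cdot \mathrm{Sp}(1)\cdot \mathrm{Sp}(1)$ (equivalently $\mathrm{Spin}(4)\cdot \mathrm{Spin}(3)$, via \eqref{Spin4DotSpin3}); similarly the restricted $\mathrm{Spin}(6)$-action on $\mathrm{CAY}_{\mathrm{sL}}$ is transitive, with stabilizer the double central product $\mathrm{Sp}(1)\cdot \mathrm{Sp}(1)\simeq\mathrm{SO}(4)$. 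This yields the coset-space presentations
\[
\mathrm{CAY} \;\simeq\; \mathrm{Spin}(7)\big/\bigl(\mathrm{Sp}(1)\cdot\mathrm{Sp}(1)\cdot\mathrm{Sp}(1)\bigr)
\quad \text{and} \quad
\mathrm{CAY}_{\mathrm{sL}} \;\simeq\; \mathrm{Spin}(6)\big/\bigl(\mathrm{Sp}(1)\cdot\mathrm{Sp}(1)\bigr).
\]

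Next, Lemma \ref{fibBiota} applied to these coset realizations gives canonical weak homotopy equivalences
\[
\mathrm{CAY}\!\sslash\!\mathrm{Spin}(7) \;\simeq\; B\bigl(\mathrm{Sp}(1)\cdot \mathrm{Sp}(1)\cdot \mathrm{Sp}(1)\bigr),
\qquad
\mathrm{CAY}_{\mathrm{sL}}\!\sslash\!\mathrm{Spin}(6) \;\simeq\; B\bigl(\mathrm{Sp}(1)\cdot \mathrm{Sp}(1)\bigr).
\]
Combined with Proposition \ref{hHsslash}, which identifies $S^4\!\sslash\!(\mathrm{Sp}(2)\cdot\mathrm{Sp}(1)) \simeq B(\mathrm{Sp}(1)\cdot\mathrm{Sp}(1)\cdot\mathrm{Sp}(1))$ and $S^7\!\sslash\!(\mathrm{Sp}(2)\cdot\mathrm{Sp}(1)) \simeq B(\mathrm{Sp}(1)\cdot\mathrm{Sp}(1))$, this immediately produces the two horizontal weak equivalences claimed in the proposition.

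The main obstacle, and the only non-formal step, is to verify compatibility with the vertical maps, i.e.\ that under the above identifications the forgetful map $\mathrm{CAY}_{\mathrm{sL}}\!\sslash\!\mathrm{Spin}(6) \to \mathrm{CAY}\!\sslash\!\mathrm{Spin}(7)$ corresponds exactly to the map $B(\mathrm{Sp}(1)\cdot\mathrm{Sp}(1)) \to B(\mathrm{Sp}(1)\cdot\mathrm{Sp}(1)\cdot\mathrm{Sp}(1))$ induced by the specific embedding $[q_1,q_2]\mapsto[q_1,q_2,q_2]$ appearing in \eqref{Sp1Sp1InsideSp1Sp1Sp1}. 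This forces a coordinated choice of basepoints: a Cayley 4-plane $\Pi \in \mathrm{CAY}$ together with a special Lagrangian refinement $\Pi' \in \mathrm{CAY}_{\mathrm{sL}}$ (with respect to a complex structure $J$ singling out the subgroup $\mathrm{Spin}(6)\subset \mathrm{Spin}(7)$) must be chosen so that the induced inclusion of stabilizers matches this particular subgroup embedding, rather than some other conjugate one. I expect this to follow from a direct computation in the spin representation, tracking the triality-related embeddings identified in Proposition \ref{QuaternionicSubgroupTriality} together with the $\mathrm{SO}(4)$ description of the double central product in \eqref{SO4AsCentralProduct}.
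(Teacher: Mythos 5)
Your proposal is correct and takes essentially the same route as the paper: present $\mathrm{CAY}$ and $\mathrm{CAY}_{\mathrm{sL}}$ as coset spaces of $\mathrm{Spin}(7)$ resp.\ $\mathrm{Spin}(6)$ with stabilizers the central products $\mathrm{Spin}(4)\boldsymbol{\cdot}\mathrm{Spin}(3)$ resp.\ $\mathrm{Spin}(3)\boldsymbol{\cdot}\mathrm{Spin}(3)$ (citing Harvey--Lawson and Becker--Becker--Morrison--Ooguri--Oz--Yin), convert these to identifications with classifying spaces via Lemma \ref{fibBiota}, and then match against Prop.\ \ref{hHsslash}. You are somewhat more careful than the paper's terse proof in explicitly flagging the compatibility of the vertical maps (i.e.\ that the induced inclusion of stabilizers is the specific embedding $[q_1,q_2]\mapsto[q_1,q_2,q_2]$ of \eqref{Sp1Sp1InsideSp1Sp1Sp1} and not some other conjugate) as the genuinely non-formal point, though you only sketch rather than carry out that verification; the paper leaves this implicit in the appeal to Prop.\ \ref{hHsslash}.
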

\begin{proof}
  By \cite[Theorem 1.38]{HarveyLawson82}
  (see also \cite[(3.19)]{BryantHarvey89}, \cite[(5.20)]{GluckMackenzieMorgan95})
  we have a coset space realization
$$
  \mathrm{CAY}
  \;\simeq\;
  \mathrm{Spin}(7)
  /
  \big(
    \mathrm{Spin}(4)
    \boldsymbol{\cdot}
    \mathrm{Spin}(3)
  \big)\;.
$$
and by  \cite[p. 7]{BBMOOY96} we have a coset space realization
$$
  \mathrm{CAY}_{\mathrm{sl}}
  \;\simeq\;
  \mathrm{Spin}(6)
  /
  \big(
    \mathrm{Spin}(3)
    \boldsymbol{\cdot}
    \mathrm{Spin}(3)
  \big)
  \;\simeq\;
  \mathrm{SU}(6)/ \mathrm{SO}(4)\;.
$$
By Lemma \ref{fibBiota} this means equivalently that
there are weak homotopy equivalences
$$
  \mathrm{CAY} \!\sslash\! \mathrm{Spin}(7)
  \;\simeq\;
  B
  \big(
    \mathrm{Spin}(4)
    \boldsymbol{\cdot}
    \mathrm{Spin}(3)
  \big)
  \;\simeq\;
  B
  \big(
    \mathrm{Sp}(1)
    \boldsymbol{\cdot}
    \mathrm{Sp}(1)
    \boldsymbol{\cdot}
    \mathrm{Sp}(2)
  \big)
$$
and
$$
  \mathrm{CAY}_{\mathrm{sL}}
  \!\sslash\! \mathrm{Spin}(6)\
  \;\simeq\;
  B
  \big(
    \mathrm{Spin}(3)
    \boldsymbol{\cdot}
    \mathrm{Spin}(3)
  \big)
  \;\simeq\;
  B
  \big(
    \mathrm{Sp}(1)
    \boldsymbol{\cdot}
    \mathrm{Sp}(1)
  \big)
  \,.
$$
This then implies the claim by Prop. \ref{hHsslash}.
\end{proof}

\medskip

\noindent {\bf Vanishing PT-charge in twisted Cohomotopy.}
Even without discussing a full generalization of the
untwisted Pontrjagin-Thom theorem \eqref{ClassicalPTIsIso}
to the case of twisted Cohomotopy (Def. \ref{TwistedCohomotopy}),
we may say what it means for a cocycle in twisted Cohomotopy
to correspond to the empty submanifold, hence to
correspond to vanishing brane charge in the sense discussed
above. This is all that we will need to refer to below in
\cref{CFieldBackgroundCharge}
and
\cref{M2BraneTadpoleCancellation}:

\begin{enumerate}[{\bf (i)}]
\vspace{-2mm}
\item In the case of untwisted cohomotopy it is immediate that the zero-charge cocycle is simply
the one represented by any function that does not meet the given
base point $\mathrm{pt} \in S^n$ \eqref{BasePointInSphere}.
\vspace{-2mm}
\item
In the case of twisted Cohomotopy
according to Def. \ref{TwistedCohomotopy}, this chosen point must be a chosen \emph{section}
of the given spherical fibration corresponding to the given twist $\tau$:
$$
  \xymatrix@R=1em{
    &&
    S^n \!\sslash\! \mathrm{O}(n+1)
    \ar[dd]
    \\
    \\
    X
    \ar@{-->}[uurr]^{ \mathrm{pt} }
    \ar[rr]_-{\tau}
    &&
    B \mathrm{O}(n+1)
  }
$$
which serves over each $x \in X$ as the point $\mathrm{pt}_x \in E_x \simeq S^4$ at which
we declare to form the inverse image of another given section,
under a parametrized inverse Pontrjagin-Thom construction.
\vspace{-2mm}
\item
With that section $\mathrm{pt}$ chosen, any
other twisted Cohomotopy cocycle
$[c_0] \in \pi^\tau(X)$  which will yield the
empty submanifold under parametrized Pontrjagin-Thom
must be represented by a section
$c_0$ which is everywhere distinct from $\mathrm{pt}$,
$$
  c_0(x) \neq \mathrm{pt}_x
$$
so that $c_0^{-1}(\mathrm{pt}(x)) = \varnothing$
for all $x \in X$.
\vspace{-2mm}
\item
But such a choice of a pair of pointwise distinct sections is equivalently a reduction of the
structure group not just along $\mathrm{O}(4) \hookrightarrow \mathrm{O}(5)$
as in Remark \ref{TwistedCohomotopyIsReductionOfStructureGroup}, but
is rather a reduction all the way along $\mathrm{O}(3)\hookrightarrow \mathrm{O}(5)$.
\end{enumerate}
\vspace{-1.5mm}

Specified to the
$\mathrm{Sp}(2) \boldsymbol{\cdot} \mathrm{Sp}(1)$-twisted
Cohomotopy jointly in degrees 4 and 7, from \cref{TwistedCohomotopyInDegrees}
this says that vanishing of the brane charge
seen by degree 4 Cohomotopy cocycle via a putative parameterized PT theorem
is witnessed by a lift from
$B \big(  \mathrm{Spin}(5)  \boldsymbol{\cdot} \mathrm{Spin}(3) \big) $
all the way to
$B \big(  \mathrm{Spin}(3)  \boldsymbol{\cdot} \mathrm{Spin}(3) \big) $.
But comparison with
Prop. \ref{hHsslash}
(see also \hyperlink{FigureT}{\it Figure T}) shows the following:

\begin{lemma}[Vanishing of Cohomotopy charge means factorization through $h_{\mathbb{H}}$]
\label{VanishingBraneChargeInTermsOfPontrjaginThom}
The vanishing of cohomotopical brane charge
of $\mathrm{Sp}(2)\boldsymbol{\cdot}\mathrm{Sp}(1)$-twisted
Cohomotopy in degree 4 (\cref{TwistedCohomotopyInDegrees}),
in the sense of the above parametrized Pontrjagin-Thom
construction of corresponding branes,
is exhibited by factorizations of the degree-4 cocycle
through degree-7 Cohomotopy, via the
equivariant quaternionic Hopf fibration $h_{\mathbb{H}}$
of Prop. \ref{hHsslash}:
\begin{equation}
  \label{PTVanishingChargeAsLift}
  \raisebox{60pt}{
  \xymatrix@C=6em@R=1.5em{
    &&
    S^7
    \!\sslash\!
    \big(
      \mathrm{Sp}(2)
        \!\boldsymbol{\cdot}\!
      \mathrm{Sp}(1)
    \big)
    \ar[dd]^-{
      h_{\mathbb{H}}
      \sslash
      (
        \mathrm{Sp}(2)
          \boldsymbol{\cdot}
        \mathrm{Sp}(1)
      )
    }
    \ar[r]^-{ \simeq }
    &
    B
    \big(
      \mathrm{Spin}(3)
        \boldsymbol{\cdot}
      \mathrm{Spin}(3)
    \big)
    \ar[dd]
    \\
    \\
    &&
    S^4
      \!\sslash\!
    \big(
      \mathrm{Sp}(2)
        \!\boldsymbol{\cdot}\!
      \mathrm{Sp}(1)
    \big)
    \ar[dd]
    \ar[r]^-{ \simeq }
    &
    B
    \big(
      \mathrm{Spin}(4)
        \!\boldsymbol{\cdot}\!
      \mathrm{Spin}(3)
    \big)
    \ar[dd]
    \\
    \\
    X
    \ar[rr]_-{ \tau }
    \ar@/^1pc/@{-->}[uurr]|-{
      \mbox{\!\!
        \tiny
        \color{blue}
        \begin{tabular}{c}
          cocycle in
          \\
          twisted
          \\
          Cohomotopy
          \\
          in degree 4
        \end{tabular}
      \!}
    }
    \ar@/^2pc/@{-->}[uuuurr]|>>>>>>>>>>>>{
      \mbox{
        \tiny
        \color{blue}
        \begin{tabular}{c}
          PT-vanishing of
          \\
          cocycle in
          \\
          twisted Cohomotopy
          \\
          in degree 4
        \end{tabular}
      }
    }
    &&
    B
    \big(
      \mathrm{Sp}(2)
        \!\boldsymbol{\cdot}\!
      \mathrm{Sp}(1)
    \big)
    \ar[r]_-{ B \mathrm{tri} }^-{ \simeq }
    &
    B
    \big(
      \mathrm{Spin}(5)
        \!\boldsymbol{\cdot}\!
      \mathrm{Spin}(3)
    \big)\;.
  }
  }
\end{equation}
\end{lemma}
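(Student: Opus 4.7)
My plan is to follow the informal four-step argument laid out just before the statement, making each step precise in terms of homotopy pullbacks of classifying spaces. The key point is to translate the geometric notion of ``no intersection with the chosen base-point section'' into an algebraic condition on reductions of structure group, and then recognize that condition as exactly the lift through the $\mathrm{Sp}(2)\boldsymbol{\cdot}\mathrm{Sp}(1)$-parametrized quaternionic Hopf fibration of Prop.~\ref{hHsslash}.

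First, I would reformulate vanishing PT-charge. Over $B\bigl(\mathrm{Sp}(2)\boldsymbol{\cdot}\mathrm{Sp}(1)\bigr)$, the base-point datum of a parametrized Pontrjagin-Thom construction is a section of the universal spherical fibration $S^4\!\sslash\!\bigl(\mathrm{Sp}(2)\boldsymbol{\cdot}\mathrm{Sp}(1)\bigr)\to B\bigl(\mathrm{Sp}(2)\boldsymbol{\cdot}\mathrm{Sp}(1)\bigr)$, and pulling this back along $\tau$ gives a canonical ``point at infinity'' section $\mathrm{pt}$ of the associated $S^4$-bundle $E\to X$. A cocycle $c\in\pi^{i_4\circ\tau}(X)$ then represents vanishing charge in the sense described precisely when $c$ admits a representative which is pointwise distinct from $\mathrm{pt}$, i.e.\ when the associated section of $E\to X$ lifts through the fiberwise complement
\[
  E\setminus\mathrm{pt}(X)\;\longhookrightarrow\;E\,.
\]
Fiberwise, this complement deformation retracts onto the unit sphere $S^3$ in the orthogonal complement of the pointwise chosen base point, so the total space of $E\setminus\mathrm{pt}(X)$ is classified by the reduction of the $S^4$-fibration along the stabilizer of a point.

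Second, I would turn this into a statement about classifying spaces using Lemma~\ref{fibBiota}. The stabilizer in $\mathrm{Sp}(2)\boldsymbol{\cdot}\mathrm{Sp}(1)$ of a chosen point in $S^4\simeq\mathrm{Spin}(5)/\mathrm{Spin}(4)$, under the action \eqref{Sp2Sp1ActionOnS4} through $\mathrm{Spin}(5)\boldsymbol{\cdot}\mathrm{Spin}(3)\twoheadrightarrow\mathrm{SO}(5)$, is the preimage of $\mathrm{Spin}(4)$ in $\mathrm{Spin}(5)\boldsymbol{\cdot}\mathrm{Spin}(3)$, namely $\mathrm{Spin}(4)\boldsymbol{\cdot}\mathrm{Spin}(3)$; hence the universal $S^4$-fiber bundle on $B\bigl(\mathrm{Sp}(2)\boldsymbol{\cdot}\mathrm{Sp}(1)\bigr)$ is $B\bigl(\mathrm{Spin}(4)\boldsymbol{\cdot}\mathrm{Spin}(3)\bigr)\to B\bigl(\mathrm{Spin}(5)\boldsymbol{\cdot}\mathrm{Spin}(3)\bigr)$. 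A twisted cocycle is by Prop.~\ref{TwistedCohomotopyIsReductionOfStructureGroup} a lift of $\tau$ through this map, and the further reduction to the stabilizer of an \emph{ordered pair} of distinct points $(\mathrm{pt},c(x))$ on the $S^4$-fiber is exactly $\mathrm{Spin}(3)\boldsymbol{\cdot}\mathrm{Spin}(3)$, since the stabilizer of a pair of distinct points on $S^4\simeq\mathrm{Spin}(5)/\mathrm{Spin}(4)$ reduces $\mathrm{Spin}(4)$ to $\mathrm{Spin}(3)$, acting jointly with the $\mathrm{Spin}(3)$-factor.

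Third, I would invoke Prop.~\ref{hHsslash}, which identifies
\[
  S^7\!\sslash\!\bigl(\mathrm{Sp}(2)\boldsymbol{\cdot}\mathrm{Sp}(1)\bigr)\;\simeq\;B\bigl(\mathrm{Spin}(3)\boldsymbol{\cdot}\mathrm{Spin}(3)\bigr)
\]
compatibly over $B\bigl(\mathrm{Sp}(2)\boldsymbol{\cdot}\mathrm{Sp}(1)\bigr)$, in such a way that the canonical map to $S^4\!\sslash\!\bigl(\mathrm{Sp}(2)\boldsymbol{\cdot}\mathrm{Sp}(1)\bigr)\simeq B\bigl(\mathrm{Spin}(4)\boldsymbol{\cdot}\mathrm{Spin}(3)\bigr)$ is precisely $h_{\mathbb{H}}\sslash\bigl(\mathrm{Sp}(2)\boldsymbol{\cdot}\mathrm{Sp}(1)\bigr)$. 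Combining this with step two, the dashed lift produced from a pair of pointwise distinct sections is exactly a factorization of $c$ through the parametrized quaternionic Hopf fibration, which is the content of the diagram \eqref{PTVanishingChargeAsLift}. Conversely, any such factorization produces, via the fiberwise Hopf fibration, a section everywhere distinct from the chosen base point, establishing the equivalence claimed.

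The main obstacle in this plan is the bookkeeping around \emph{which} copy of $\mathrm{Spin}(3)$ in $\mathrm{Spin}(3)\boldsymbol{\cdot}\mathrm{Spin}(3)$ arises as the stabilizer of the second point in $S^4$, versus the $\mathrm{Spin}(3)$ coming from the original ``Hopf fiber'' $\mathrm{Sp}(1)$-factor. This has to be done compatibly with the central product structures, and in particular with the triality issues highlighted in Prop.~\ref{QuaternionicSubgroupTriality} and Remark~\ref{SubgroupInclusions}; a clean way is to use the explicit embedding \eqref{Sp1Sp1InsideSp1Sp1Sp1} from Prop.~\ref{hHsslash} and check compatibility of the two subgroup inclusions $\mathrm{Sp}(1)\boldsymbol{\cdot}\mathrm{Sp}(1)\hookrightarrow\mathrm{Sp}(1)\boldsymbol{\cdot}\mathrm{Sp}(1)\boldsymbol{\cdot}\mathrm{Sp}(1)$ and $\mathrm{Sp}(1)\boldsymbol{\cdot}\mathrm{Sp}(1)\hookrightarrow\mathrm{Sp}(2)\boldsymbol{\cdot}\mathrm{Sp}(1)$ at the level of $S^4$-stabilizers. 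Everything else reduces to the pasting law for homotopy pullbacks together with the coset realizations $S^n\simeq G/H$ used throughout \cref{ViaReductionOfStructureGroups}.
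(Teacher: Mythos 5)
Your overall plan follows the same route the paper takes: translate vanishing PT-charge into avoidance of a chosen base-point section, recast that avoidance as a further reduction of structure group, and then identify the resulting classifying spaces with the $\mathrm{Sp}(2)\!\boldsymbol{\cdot}\!\mathrm{Sp}(1)$-parametrized Hopf fibration via Prop.~\ref{hHsslash}. Indeed, the paper itself does not give a separate proof of Lemma~\ref{VanishingBraneChargeInTermsOfPontrjaginThom} but treats it as the conclusion of exactly the four-item discussion preceding it, combined with Prop.~\ref{hHsslash}. So the strategy is a match.

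There is, however, one concrete error in your step one that needs repair. You write that the fiberwise complement $E\setminus\mathrm{pt}(X)$ has fibers deformation retracting onto the unit sphere $S^3$ in $\mathrm{pt}_x^\perp$. That is false: each fiber $S^4\setminus\{\mathrm{pt}_x\}\cong\mathbb{R}^4$ is \emph{contractible} (deformation retracting onto the antipode $-\mathrm{pt}_x$, not onto the equator). Taken literally, this would make the lifting condition vacuous, which is not what you want. To get the claimed reduction down to $\mathrm{Spin}(3)\!\boldsymbol{\cdot}\!\mathrm{Spin}(3)$ (i.e.\ $\mathrm{O}(3)\hookrightarrow\mathrm{O}(5)$) you actually need the second section $c$ to be pointwise \emph{linearly independent} from $\mathrm{pt}$, not merely pointwise distinct: the relevant complement is $E\setminus\bigl(\mathrm{pt}(X)\cup(-\mathrm{pt})(X)\bigr)$, whose fibers $S^4\setminus\{\pm\mathrm{pt}_x\}$ do deformation retract onto the equatorial $S^3$, and a section of this is precisely the data spanning a rank-two subbundle together with the given one. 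The paper's item (iv), which asserts that a pair of merely distinct sections gives an $\mathrm{O}(3)$-reduction, glosses over the same antipodal issue; so you have inherited rather than invented the imprecision, but by attaching the explicit (and incorrect) $S^3$ deformation retraction to $E\setminus\mathrm{pt}(X)$ you have made it load-bearing. Correcting this does not change your steps two and three, only the fiberwise model, and the stabilizer computation of a linearly independent pair (which is genuinely $\mathrm{Spin}(3)\!\boldsymbol{\cdot}\!\mathrm{Spin}(3)$) and the comparison with the inclusion~\eqref{Sp1Sp1InsideSp1Sp1Sp1} from Prop.~\ref{hHsslash} go through as you planned.
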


\medskip
We come back to this in Prop. \ref{PTVanishing4Flux}
and Prop. \ref{DifferentialFormDataOnExtendedSpacetime} below.

\medskip

This concludes our discussion of general properties of
twisted Cohomotopy theory. Now we turn,
in  \cref{CancellationFromCohomotopy}, to discussing how,
under \hyperlink{HypothesisH}{\it Hypothesis H},
these serve to yield anomaly cancellation in M-theory.

\section{C-field charge-quantized in twisted Cohomotopy}
\label{CancellationFromCohomotopy}

We consider now the setup of M-theory on 8-manifolds:
\begin{remark}
 \label{MTheoryOn8Manifolds}
For M-Theory on 8-manifolds
\cite{Witten95b}\cite{BeckerBecker96}\cite{SethiVafaWitten96},
spacetime is of the form
$\mathbb{R}^{2,1} \times X^8$, corresponding to a
background of parallel M2-branes which appear as singular
points in the 8-dimensional space $X^8$, or else
as points that would be singular were they included in $X^8$.
See also
\cite{Tsimpis06}
\cite{CMP13}\cite{PrinsTsimpis13}\cite{BabalicLazaroiu14a}\cite{Shahbazi15}\cite{BabalicLazaroiu14b}
\cite{BabalicLazaroiu15c}\cite{BabalicLazaroiu15d}.

M-theory on 8-manifolds with $\mathrm{Sp}(2)\cdot \mathrm{Sp}(1)$-structure
(as in Def. \ref{HypothesisHFor8Manifolds} below),
specifically on the quaternionic projective plane $\mathbb{H}P^2$
\cite[4.3]{McNamaraVafa19} (see also Example \ref{The8Manifold} below),
has been argued in \cite[pp. 75]{AtiyahWitten01} to be dual to
4d M-theory on $G_2$-manifolds in three different ways,
such as to plausibly yield proof of confinement in 4d gauge theory.

If the 8-manifold $X^8$ is elliptically fibered then
M-theory on $X^8$ has been argued to be T-dual to
phenomenologically interesting F-theory compactifications
on spacetimes of the form $\mathbb{R}^{3,1} \times \widetilde X^8$
\cite{CMP13}\cite{BGPP13}:
$$
  \overset{
    \mathclap{
    \mbox{
      \tiny
      \color{blue}
      \begin{tabular}{c}
        M-theory
        \\
        on 8-manifolds
      \end{tabular}
    }
    }
  }{
    \mathbb{R}^{2,1}
    \times
    X^8
  }
  \;\;\;\;
  \xleftrightarrow{
    \;\;\;
      \mbox{
        \tiny
        \color{blue}
        \begin{tabular}{c}
          T-duality
        \end{tabular}
      }
    \;\;\;
  }
  \;\;\;\;
  \overset{
    \mbox{
      \tiny
      \color{blue}
      \begin{tabular}{c}
        F-theory on
        \\
        8-manifolds
      \end{tabular}
    }
  }{
    \mathbb{R}^{3,1}
    \times
    \widetilde X^8
  }
$$
In particular, for $X^8$ of $\mathrm{Spin}(7)$-structure, the
resulting $\mathcal{N} = 1$ supersymmetry in 3d on the left is argued
\cite{Witten95b}\cite[3]{Witten95c}\cite[4.3]{Vafa96}\cite[p. 7]{Witten00}
to be dual to a peculiar ``$\mathcal{N} = \sfrac{1}{2}$'' supersymmetry
in 4d on the right,
which does enforce a vanishing cosmological constant, but does
not constrain the finite energy particle spectrum to be supersymmetric.
This is developed in \cite{BonettiGrimmPugh13}\cite{BGPP13}\cite{HLLZ19}\cite{HLLSZ19}.
\end{remark}

\medskip

For our purposes, the following is concretely the data in question:
\begin{defn}[The 8-manifold $X^8$]
\label{The8Manifold}
We consider $X^8$ to be a smooth 8-dimensional spin-manifold,
possibly with boundary, which is connected and simply connected.
Let $\nabla$ be any affine connection
on the tangent bundle $T X^8$.
We assume that $H^2(X^8, \mathbb{Z}_2) = 0$.
\end{defn}

\begin{remark}[Role of technical assumptions on the 8-manifold]
\label{AssumptionsOn8Manifold} We highlight the following:

\vspace{-1mm}
 \item {\bf (i)} The assumption in Def. \ref{The8Manifold} that $X^8$ be connected
is convenient but immaterial and easily dropped.

\vspace{-1mm}
\item {\bf (ii)}  The assumption that $X^8$ be simply connected should also be
immaterial, but is not so easily dropped: All proofs invoking
Sullivan models in the following should generalize
at least to nilpotent fundamental groups, but will be
much harder without this assumption.

\vspace{-1mm}
\item {\bf (iii)}  The choice of affine connection $\nabla$
in Def. \ref{The8Manifold} is just to bring in Chern-Weil theory
and only affects the explicit representatives
of characteristic forms in the following, not any of the
gauge/homotopy invariant statements.

\vspace{-1mm}
\item {\bf (iv)}  The assumption $H^2(X,\mathbb{Z}_2) = 0$
bluntly ensures that any specific obstruction classes that could appear
in this group vanishes.
This is used only in \cref{W7Cancellation}
and \cref{HalfIntegralCFieldFluxQuantization} below,
and in \cref{HalfIntegralCFieldFluxQuantization} we
only need that the specific obstruction class
$\varpi \in H^2(X,\mathbb{Z}_2)$ to direct product
$\mathrm{Sp}(2) \times \mathrm{Sp}(1)$-structure vanishes
(from Prop. \ref{ObstructionToDirectProductStructure}).
With this class thus assumed to vanish,
there is no essential loss of generality in assuming
$\mathrm{Sp}(2)$-structure.
\end{remark}

\begin{example}
  \label{HP2}
  The quaternionic projective plane $X^8 = \mathbb{H}P^2$
  satisfies the assumptions of Def. \ref{The8Manifold}.
  To see this, it is sufficient to observe that
  it is homotopy equivalent to the result of gluing
  an 8-cell to a 4-cell (with attaching map being the
  quaternionic Hopf fibration)
  $$
    \xymatrix@C=4em@R=1.5em{
      S^7
      \ar@{}[dr]|-{ \mbox{\tiny (po)} }
      \ar[r]^{ h_{\mathbb{H}} }
      \ar[d]
      &
      S^4
      \ar[d]
      \\
      \mathbb{D}^8
      \ar[r]
      &
      \mathbb{H}P^2
    }
  $$
  This cell structure immediately implies vanishing of
  all cohomology in degree $\leq 3$.
\end{example}

\medskip

\begin{defn}[{\it Hypothesis H} for M-theory on 8-manifolds]
\label{HypothesisHFor8Manifolds}
Given an 8-manifold $X^8$ (Def. \ref{The8Manifold})
we say that a pair of differential forms $(G_4, G_7)$
on $X^8$ satisfies {\it Hypothesis H}
if it is in the image
of the non-abelian Chern character map
(Def. \ref{RationalTwistedCohomotopy})
from J-twisted 4-Cohomotopy
$$
  \xymatrix@R=-3pt@C=4em{
    \pi^{i_7 \circ \tau}\big( X^8 \big)
    \ar[dddddd]_{ h_\ast }
    \\
    \\
    \\
    \\
    \\
    \\
    \pi^{i_4 \circ \tau}\big( X^8 \big)
    \ar[rr]^{
      L_{\mathbb{R}}
    }_-{
    }
    &&
    \pi^{i_4 \circ \tau}\big( X^8 \big)_{\mathbb{R}}
    \ar@{<<-}[r]^-{
    }
    &
     \;\,
  \big\{
    (G_4, G_7)
    \,\vert\,
    \cdots
  \big\}
    \; \ar@{_{(}->}[r]^-{
    }
    &
    \Omega^4(X^8) \times \Omega^7(X^8)
    \\
    \big[ c \big]
    \ar@{|->}[rr]
    &&
    \big[
      (
      G_4, G_7
      )
    \big]
  }
$$
for twists compatible with the quaternionic Hopf fibration,
which by Prop. \ref{EquihH} means
that  $\tau$ is a
topological $\mathrm{Sp}(2)\cdot \mathrm{Sp}(1)$-structure
on $X^8$, via \eqref{ActionOfSp2Sp1}.
\end{defn}

We now discuss some consequences of Hypothesis H,
as summarized in \hyperlink{Table1}{\it Table 1}.

\subsection{Special $G$-structures}

We discuss how \hyperlink{HypothesisH}{Hypothesis H} implies $\mathcal{N}=1$ $G$-structure as in
\eqref{CohomotopyAndSpin7}.

\medskip

\noindent {\bf Parallel spinors and $G$-structure.}
Conditions on a compactification manifold to admit
suitably parallel spinor sections and hence preserve
some amount of supersymmetry have commonly been phrased in
terms of special holonomy metrics (see e.g. \cite{Gubser02}).
But more generally, in the potential presence of flux,
an alternative characterization
is in terms of \emph{$G$-structure},
i.e. reductions of the structure group
of the tangent/frame bundle.
This was used already in the classical \cite{IPW}
but received more attention after it was re-amplified
in the context of  flux compactifications in \cite[Sec. 2]{GMPW04},
see also \cite[Sec. 2]{Koerber11}\cite{Gaillard11}\cite{DDG14}.
Discussion of $G$-structure specifically in the context
of M-theory on 8-manifolds (Remark \ref{MTheoryOn8Manifolds}) includes
\cite{Tsimpis06}\cite{CMP13}\cite{PrinsTsimpis13}\cite{BabalicLazaroiu14a}\cite{Shahbazi15}\cite{BabalicLazaroiu14b}
\cite{BabalicLazaroiu15c}\cite{BabalicLazaroiu15d}.

\begin{prop}
  \label{NIs1GStructures}
  Let $X^d$ be a spin-manifold of dimension $d \in \{5,6,7,8\}$.
  Then cocycles in J-twisted 7-Cohomotopy (Def. \ref{TwistedCohomotopy})
  are equivalent to topological $G$-structures
  on $X^d$ as follows:
  \begin{equation}
    \left[
    \raisebox{20pt}{
    \xymatrix@C=0pt{
      X^d
      \ar@{-->}[rr]^-{ \exists }_{\ }="s"
      \ar[dr]_{T X}^{\ }="t"
      &&
      S^7 \!\sslash\! \mathrm{Spin}(d)
      \ar[dl]
      \\
      & B \mathrm{Spin}(d)
      \ar@{=>} "s"; "t"
    }
    }
    \right]
    \;\;\in\;\;
    \pi^{ i_7 \circ T X }(X^d)
    \;\;\;
    \Leftrightarrow
    \;\;\;
    \mbox{
      \begin{tabular}{l}
        topological
        \\
        $G$-structure
      \end{tabular}
      for
      $
        G =
        \left\{
        \begin{array}{lcl}
          \mathrm{Spin}(7) & \vert & d = 8
          \\
          \mathrm{G}_2 & \vert & d = 7
          \\
          \mathrm{SU}(3) & \vert & d = 6
          \\
          \mathrm{SU}(2) & \vert & d = 5
        \end{array}
        \right.
      $
  }
  \end{equation}
  hence are equivalent precisely to
  those $G$-structures that correspond to $\mathcal{N}=1$ compactifications
 of F-theory, M-theory, and string theory, respectively.
 (e.g. \cite{AcharyaGukov04}\cite{BBS10}\cite{GSZ14}).
\end{prop}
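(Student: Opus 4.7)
The plan is to deduce this proposition essentially as a corollary of the two structural results already established in \cref{Cohomotopy}, namely Prop.~\ref{TwistedCohomotopyIsReductionOfStructureGroup} (twisted Cohomotopy $=$ reduction of structure group) and Prop.~\ref{Cohomotopy7GStructure} (the identification of $S^7$-twisted classifying spaces with $BG$ for the special holonomy groups $\mathrm{Spin}(7),\mathrm{G}_2,\mathrm{SU}(3),\mathrm{SU}(2)$). The core observation is that these two facts compose: the first reduces ``sections of a twisted spherical fibration'' to ``homotopy lifts through the homotopy fiber $B H \to B G$ whenever $S^n \simeq G/H$'', and the second hands us exactly such coset presentations of $S^7$ in all four relevant dimensions.

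First I would spell out the definition of $\pi^{i_7 \circ T X}(X^d)$ via Def.~\ref{TwistedCohomotopy}: a cocycle is a homotopy-commutative triangle
\[
\xymatrix@R=1.2em{
& S^7 \!\sslash\! \mathrm{Spin}(d) \ar[d] \\
X^d \ar@{-->}[ur] \ar[r]_-{T X} & B\mathrm{Spin}(d)
}
\]
over $B\mathrm{Spin}(d)$. Next, I would invoke the four pullback squares of Prop.~\ref{Cohomotopy7GStructure} to rewrite the coefficient $S^7 \!\sslash\! \mathrm{Spin}(d)$ as $BG$, with $G$ the appropriate one of $\mathrm{Spin}(7),\mathrm{G}_2,\mathrm{SU}(3),\mathrm{SU}(2)$ (this uses the spherical coset realizations in \hyperlink{TableS}{Table S}, which are the input to Prop.~\ref{Cohomotopy7GStructure}). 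By the universal property of the homotopy pullback, a lift in the triangle above is then equivalent to a homotopy-commutative triangle
\[
\xymatrix@R=1.2em{
& B G \ar[d]^-{B\iota} \\
X^d \ar@{-->}[ur] \ar[r]_-{T X} & B\mathrm{Spin}(d)
}
\]
which is by definition a topological $G$-structure (reduction of the tangent Spin structure along the canonical inclusion $\iota\colon G \hookrightarrow \mathrm{Spin}(d)$). This is precisely the content of Prop.~\ref{TwistedCohomotopyIsReductionOfStructureGroup} applied case by case; formally one may simply cite it with the coset data $S^7 \simeq \mathrm{Spin}(d)/G$ from Prop.~\ref{Cohomotopy7GStructure}.

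The identification with $\mathcal{N}=1$ compactifications is then a matter of quoting the standard dictionary (as in \cite{AcharyaGukov04}\cite{BBS10}\cite{GSZ14}) between topological $G$-structure for these four groups and the existence of a global nowhere-vanishing Majorana spinor, which is what supersymmetry demands.

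I do not anticipate any technical obstacle here: all of the homotopy pullbacks, coset identifications, and conjugacy-class subtleties (for instance distinguishing the exotic $\mathrm{Spin}(7) \hookrightarrow \mathrm{Spin}(8)$ from the canonical one) have already been handled inside the proof of Prop.~\ref{Cohomotopy7GStructure}. The only point requiring a brief comment is the $d=8$ case, where the relevant $\mathrm{Spin}(7) \hookrightarrow \mathrm{Spin}(8)$ is the ``triality-rotated'' embedding $\iota'$ of \eqref{SomeSpin8Subgroups} rather than the standard inclusion; this is precisely the embedding whose induced $G$-structure classifies Cayley 4-form calibrations, so the resulting notion of ``topological $\mathrm{Spin}(7)$-structure'' agrees with the one used in the M/F-theory literature.
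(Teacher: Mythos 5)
Your proposal is correct and matches the paper's own proof, which is precisely the one-line remark that the statement follows from Prop.~\ref{Cohomotopy7GStructure} used together with Prop.~\ref{TwistedCohomotopyIsReductionOfStructureGroup}. You spell out the same composition of those two results in more detail (and usefully flag the triality-rotated $\mathrm{Spin}(7)\hookrightarrow\mathrm{Spin}(8)$ embedding $\iota'$ in the $d=8$ case, a subtlety the paper handles inside Prop.~\ref{Cohomotopy7GStructure}), but the underlying argument is identical.
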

\begin{proof}
  This is Prop. \ref{Cohomotopy7GStructure} used in Prop. \ref{TwistedCohomotopyIsReductionOfStructureGroup}
\end{proof}

\medskip

\subsection{DMW anomaly cancellation}
\label{W7Cancellation}

We prove that \hyperlink{Hypothesis H}{Hypothess H}
implies the DMW anomaly cancellation condition \eqref{SP1Sp2Structure}:

\medskip

\begin{prop}
\label{CohomologicalCharacterizationOfSpin3TimesSpin5StructureIn8d}
  Let $X^8$ be an 8-manifold as in Def. \ref{The8Manifold}.
  Then existence
  of topological $\mathrm{Sp}(2)\cdot \mathrm{Sp}(1)$-structure
  on $X^8$, as in Hypothesis H (Def. \ref{HypothesisHFor8Manifolds})
  implies the following:
\item {\bf (i)}
  The Euler class of the tangent bundle is
  proportional to the one-loop anomaly polynomial
    \begin{equation}
      \label{Spin5DotSpin3ImpliesI8RelatedToChi}
      \tfrac{1}{24}
      \rchi_8(T X^8)
      \;=\;
      I_8(T X^8)
      \;:=\;
      \tfrac{1}{48}
      \big(
        p_2(T X^8)
        +
        \tfrac{1}{4}
        \big(
          p_1(T X^8)
        \big)^2
      \big)
      \;\in\;
      H^8\big(X^8, \mathbb{R}\big)
      \;.
    \end{equation}
  \item {\bf (ii)}
  The degree-6 Stiefel-Whitney class vanishes:
    \begin{equation}
      \label{Vanishingw6}
      w_6
      \big(
        T X^8
      \big)
      =
      0\;
      \;\in\;
      H^6\big( X^8, \mathbb{Z}_2 \big)\,,
    \end{equation}
    and hence so does the
    degree-7 integral Stiefel-Whitney class $W_7 := \beta(w_6)$:
    \begin{equation}
      \label{VanishingW7}
      W_7
      \big(
        T X^8
      \big)
      =
      0\;
      \;\in\;
      H^7\big( X^8, \mathbb{Z} \big)\,.
    \end{equation}
\end{prop}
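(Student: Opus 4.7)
The plan is to transport both claims to the triality-transformed bundle $W := (B\mathrm{tri}) \circ TX^8$, on which the assumed $\mathrm{Sp}(2)\cdot\mathrm{Sp}(1)$-structure becomes, by Prop. \ref{QuaternionicSubgroupTriality}, the canonical $\mathrm{Spin}(5)\cdot\mathrm{Spin}(3) \hookrightarrow \mathrm{Spin}(8)$-structure. Projecting to $\mathrm{SO}(5)\times \mathrm{SO}(3)$, the bundle $W$ splits as an orthogonal direct sum $W = V_5 \oplus V_3$ of oriented bundles of rank $5$ and $3$, on which characteristic-class computations are transparent. The desired relations for $TX^8$ are then recovered using the pullback formulas of Lemma \ref{PullbackOfClassesAlongTriality}.

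For part (i): since $V_5$ has odd rank, its Euler class $\chi_5(V_5)$ is $2$-torsion in integral cohomology, hence vanishes in real cohomology; by multiplicativity of the Euler class on direct sums, $\chi_8(W) = \chi_5(V_5) \smile \chi_3(V_3) = 0$ in $H^8(X^8;\mathbb{R})$. Unpacking this via $\chi_8(W) = TX^{8\,*}\bigl((B\mathrm{tri})^*\chi_8\bigr)$ together with the formula
\[
(B\mathrm{tri})^*\chi_8 \;=\; -\tfrac{1}{4}\bigl(p_2 - (\tfrac{1}{2}p_1)^2\bigr) + \tfrac{1}{2}\chi_8
\]
of Lemma \ref{PullbackOfClassesAlongTriality} yields
\[
0 \;=\; -\tfrac{1}{4}\bigl(p_2(TX^8) - \tfrac{1}{4}p_1(TX^8)^2\bigr) + \tfrac{1}{2}\chi_8(TX^8),
\]
which rearranges to $\tfrac{1}{24}\chi_8(TX^8) = \tfrac{1}{48}\bigl(p_2(TX^8) - \tfrac{1}{4}p_1(TX^8)^2\bigr) = I_8(TX^8)$, in the form given in \eqref{SP1Sp2Structure}.

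For part (ii): the rank bounds $w_i(V_5) = 0$ for $i>5$ and $w_j(V_3) = 0$ for $j>3$, together with orientation $w_1(V_i) = 0$, collapse the Whitney sum formula to
\[
w_6(W) \;=\; w_4(V_5)\, w_2(V_3) + w_3(V_5)\, w_3(V_3).
\]
The assumption $H^2(X^8;\mathbb{Z}_2) = 0$ forces $w_2(V_3) = w_2(V_5) = 0$, after which the Wu relation $w_3(V_i) = \mathrm{Sq}^1 w_2(V_i) + w_1(V_i)\, w_2(V_i)$ kills $w_3(V_i)$ as well; hence $w_6(W) = 0$. To transport this back to $TX^8$, I claim $w_6$ is triality-invariant: from the presentation \eqref{Spin8CohomologyRing} one reads off $H^6(B\mathrm{Spin}(8);\mathbb{Z}) = 0$, so the Bockstein $\beta: H^6(B\mathrm{Spin}(8);\mathbb{Z}_2) \to H^7(B\mathrm{Spin}(8);\mathbb{Z})$ is injective, while $H^7(B\mathrm{Spin}(8);\mathbb{Z}) = \mathbb{Z}_2\cdot W_7$ is automatically triality-invariant as a 1-dimensional $\mathbb{Z}_2$-vector space preserved by the automorphism $(B\mathrm{tri})^*$. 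Naturality of $\beta$ then forces $(B\mathrm{tri})^* w_6 = w_6$, so $w_6(TX^8) = w_6(W) = 0$, and hence $W_7(TX^8) = \beta(w_6(TX^8)) = 0$.

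The main delicate point is the triality-invariance of $w_6$; the Bockstein argument above is clean but depends on the precise integral cohomology ring of $B\mathrm{Spin}(8)$ as recorded in \eqref{Spin8CohomologyRing}, so I would confirm the vanishing $H^6(B\mathrm{Spin}(8);\mathbb{Z}) = 0$ from that presentation before relying on it.
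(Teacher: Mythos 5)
Your proposal is correct and, importantly, supplies an actual in-paper argument where the published proof is just a one-line citation to \cite[Thm.\ 8.1 \& Rem.\ 8.2]{CV98b}. The route you take --- transport $TX^8$ along $B\mathrm{tri}$ to the bundle $W$ with canonical $\mathrm{Spin}(5)\cdot\mathrm{Spin}(3)$-structure, exploit the resulting orthogonal splitting $W = V_5 \oplus V_3$ to compute characteristic classes of $W$ directly, then translate back via Lemma \ref{PullbackOfClassesAlongTriality} --- is essentially what underlies the cited result but is here made self-contained, using only facts already established in the paper plus standard properties of Stiefel-Whitney and Euler classes. That buys transparency: it makes explicit that the only ingredients beyond classical bundle theory are the triality pullback formulas \eqref{PullbackAlongTriality} and the hypothesis $H^2(X^8;\mathbb{Z}_2) = 0$ (which in your argument enters exactly where one would expect, to kill $w_2(V_3)$ and $w_2(V_5)$ and hence, via Wu, $w_3(V_3)$).

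Two remarks. First, the step you flag as delicate --- triality-invariance of $w_6$ --- is indeed fine: $H^6(B\mathrm{Spin}(8);\mathbb{Z})$ vanishes, as read off from \eqref{Spin8CohomologyRing} (generators sit only in degrees $4,7,8,8$), so the integral Bockstein $\beta$ is injective on $H^6(-;\mathbb{Z}_2)$, and your naturality argument goes through; alternatively one can observe directly that $H^6(B\mathrm{Spin}(8);\mathbb{Z}_2) \cong \mathbb{Z}_2\langle w_6\rangle$ is one-dimensional, so the automorphism $(B\mathrm{tri})^*$ has no choice but to fix $w_6$, though that requires the mod-$2$ cohomology rather than the integral presentation recorded in the paper. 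Second, your computation confirms that the sign in \eqref{Spin5DotSpin3ImpliesI8RelatedToChi} as typeset (with $p_2 + \tfrac14 p_1^2$) is a typo; the correct form, consistent with \eqref{SP1Sp2Structure} and with your derivation, is
\begin{equation*}
  \tfrac{1}{24}\,\rchi_8(TX^8) \;=\; \tfrac{1}{48}\bigl(p_2(TX^8) - \tfrac{1}{4}\,p_1(TX^8)^2\bigr).
\end{equation*}
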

\begin{proof}
  This follows by applying \cite[Thm. 8.1 \& Rem. 8.2]{CV98b}.
\end{proof}

\medskip

\subsection{Curvature-corrected Bianchi identity}
\label{CFieldBackgroundCharge}

We prove that \hyperlink{Hypothesis H}{Hypothesis H}
implies the higher curvature corrected Bianchi identities
\eqref{RationalTwists} \eqref{CompatibleRationalTwists}.

\begin{prop}[Higher curvature corrections via Cohomotopy]
  Let $X^8$ be an 8-manifold as in Def. \ref{The8Manifold}. Then:

\item {\bf (i)} The general form of the rationally twisted rational Cohomotopy
  sets in degrees 4 and 7 is as in \eqref{RationalTwists}
  and \eqref{CompatibleRationalTwists}.

 \item {\bf (ii)}   If the
  differential forms $(G_4,G_7)$ moreover satisfy
  Hypothesis H (Def. \ref{HypothesisHFor8Manifolds}),
  then the Cohomotopy set is concretely
  given as follows:
  $$
    \xymatrix{
    \pi^{i_4 \circ \tau}(X^8)_{\mathbb{R}}
    \simeq
    \left\{
    {\begin{aligned}
      d\,G_4 & = 0
      \\
      d\,G_7
        & =
        -
        \tfrac{1}{2}
        \widetilde G_4
          \wedge
        \big(
          \widetilde G_4 - \tfrac{1}{2}p_1(\nabla)
        \big)
        \;-\;
        12 \cdot I_8(\nabla)
    \end{aligned}}
    \right\}_{\big/\sim}
    \ar@{<<-}[r]
    &
     \;
    \big\{
      G_4, G_7
        \,\vert\,
      \cdots
    \big\}
    \;   \ar@{^{(}->}[r]
    &
      \Omega^4(X^8) \times \Omega^7(X^8)
    }\!,
  $$
  where $\widetilde G_4 := G_4 + \tfrac{1}{4}p_1(\nabla)$ from \eqref{Shifted4Flux}
  and $I_8 = \tfrac{1}{48}\big(p_2 + \tfrac{1}{4}p_1^2 \big)$ from \eqref{Spin5DotSpin3ImpliesI8RelatedToChi}.
\end{prop}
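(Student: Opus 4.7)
The plan is to deduce both parts from Proposition~\ref{SectionsOfRationalSphericalFibrations} applied to 4-Cohomotopy (the case $k=2$), combined with the Sullivan-model description of the $\mathrm{Sp}(2)$-parametrized quaternionic Hopf fibration from Proposition~\ref{hHsslash} and the cohomological constraints of $\mathrm{Sp}(2)\cdot\mathrm{Sp}(1)$-structure recorded in Proposition~\ref{CohomologicalCharacterizationOfSpin3TimesSpin5StructureIn8d}.

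For part (i), I first apply Proposition~\ref{SectionsOfRationalSphericalFibrations}(ii) with $2k=4$. This immediately yields that a rational $\tau^4$-twisted 4-Cohomotopy class on $X^8$ is represented by a pair of forms $(G_4, G_7)$ with $dG_4 = 0$ and $d\,2G_7 = -G_4\wedge G_4 + c_8$, where the characteristic 8-form $c_8$ represents the class classifying the spherical fibration rationally; allowing arbitrary $[c_8]\in H^8(X^8;\mathbb{R})$ and relabelling yields the formula in \eqref{RationalTwists}. For the 7-Cohomotopy entry, Proposition~\ref{SectionsOfRationalSphericalFibrations}(i) with $2k+1=7$ gives $d\widetilde G_7 = -\rchi_8(\nabla_{\tau^7})$, and again any 8-class may appear. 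For the compatibility relations displayed in \eqref{CompatibleRationalTwists}, I would set up the induced Sullivan-model morphism for the parametrized Hopf fibration of Proposition~\ref{hHsslash}: the generator $\omega_4$ on the $S^4$-side pulls back to a closed element on the $S^7$-side that differs from a fixed representative of $\tfrac{1}{4}P_4$ by the exterior derivative of a universal 3-form $H_3$, forcing the shifts $\widetilde G_4 := G_4 + \tfrac{1}{4}P_4$ and $\widetilde G_7 := G_7 + \tfrac{1}{2}H_3\wedge\widetilde G_4$, and producing the constraint $L_8 = K_8 + (\tfrac{1}{4}P_4)\wedge(\tfrac{1}{4}P_4)$ as the necessary compatibility between the two 8-form twists.

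For part (ii), I specialize to the case that $\tau$ is a genuine $\mathrm{Sp}(2)\cdot\mathrm{Sp}(1)$-structure on $TX^8$, not just an abstract rational twist. Then the rank-$5$ real vector bundle $V$ associated to the $\mathrm{Sp}(2)$-factor via $\mathrm{Sp}(2)\simeq\mathrm{Spin}(5)\hookrightarrow\mathrm{Aut}(S^4)$ has concrete characteristic forms $p_1(\nabla_\tau), p_2(\nabla_\tau)$ and $\rchi_4(\nabla_{\hat\tau})$, and these furnish concrete representatives for the abstract 8-form $c_8$ and 4-form $P_4$ appearing in part (i). The core of the argument is then to rewrite these $V$-theoretic characteristic forms in terms of those of the tangent bundle $TX^8$. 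For this I would apply the triality pullback formulas of Lemma~\ref{PullbackOfClassesAlongTriality} together with the identity $\tfrac{1}{24}\rchi_8(TX^8) = I_8(TX^8)$ furnished by Proposition~\ref{CohomologicalCharacterizationOfSpin3TimesSpin5StructureIn8d}. Substituting the resulting expressions into the shifted Bianchi identity from part (i) and simplifying should collapse to the stated closed form $dG_7 = -\tfrac{1}{2}\widetilde G_4\wedge(\widetilde G_4 - \tfrac{1}{2}p_1(\nabla)) - 12\,I_8(\nabla)$.

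The main obstacle will be the Chern--Weil bookkeeping in this final step: the triality trade in \eqref{PullbackAlongTriality} exchanges Pontryagin and Euler classes in a nontrivial affine way, and I must track carefully how the coefficient $\tfrac{1}{4}$ appearing in $c_8 = \tfrac{1}{4}p_2(\nabla_\tau)$, the coefficient $\tfrac{1}{16}$ arising from $(\tfrac{1}{4}P_4)\wedge(\tfrac{1}{4}P_4)$, and the triality-induced identification between $p_2(\nabla_\tau)$, $p_1(\nabla)^2$ and $\rchi_8(\nabla)$ combine to yield precisely the factor $12$ in front of $I_8$ in the final expression. Beyond this arithmetic, the rest reduces to functoriality of Sullivan models, the universality of the Cohomotopy Chern character of Definition~\ref{RationalTwistedCohomotopy}, and the vanishing of residual mod-$2$ obstructions guaranteed by the technical hypothesis $H^2(X^8;\mathbb{Z}_2)=0$ of Definition~\ref{The8Manifold}.
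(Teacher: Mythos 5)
Your proposal follows essentially the same route as the paper: Prop.~\ref{SectionsOfRationalSphericalFibrations} in degrees 4 and 7 for the general rational-twist form data, and for (ii) the triality pullback of $\tfrac{1}{4}p_2$ from Lemma~\ref{PullbackOfClassesAlongTriality} combined with the $\mathrm{Sp}(2)$-structure relation of Prop.~\ref{CohomologicalCharacterizationOfSpin3TimesSpin5StructureIn8d}, which yields $\tfrac{1}{4}p_2(B\mathrm{tri}_\ast\nabla) = \big(\tfrac{1}{4}p_1(\nabla)\big)^2 - \rchi_8(\nabla)$. The ``Chern--Weil bookkeeping'' you defer at the end is exactly the paper's completion of the square in $G_4$ after this substitution; it is short and you should carry it through, but your plan correctly identifies all the ingredients.
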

\begin{proof}
  The first statement is
  the specialization of Prop. \ref{SectionsOfRationalSphericalFibrations}
  to degrees 4\&7.
  For the second statement it then remains to
  re-express the class $\tfrac{1}{4}p_2$ of the
  effective $\mathrm{O}(5)$-twist \eqref{RationallyTwistedCohomtopyInEvenDegree}
  to the corresponding class of the given tangential
  $\mathrm{Sp}(2)$-twist
  as we pass through triality (Prop. \ref{QuaternionicSubgroupTriality})
  \begin{equation}
    \label{ATopologicalSp2Structure}
    \xymatrix@C=4em{
      &&
      B
        \mathrm{Sp}(2)
      \ar[d]
      \ar[r]^-{ \simeq }
      &
      B \mathrm{Spin}(5)
      \ar[d]
      \\
      X^8
      \ar[rr]_-{
        T X^8
      }
      \ar@{-->}[urr]^-{
        \tau
      }
      &&
      B \mathrm{Spin}(8)
      \ar[r]^-{\simeq}_-{ B \mathrm{tri} }
      &
      B \mathrm{Spin}(8)
      \,,
    }
  \end{equation}
  with the delooped triality automorphism \eqref{TrialityAutomorphismDelooped}
  shown on the right.
  We claim that this is the difference between the Euler class
  and the squared
  first fractional Pontrjagin class of $X^8$:
  \begin{equation}
    \label{Fractionalp2OfSpin5InTermsOfSp2}
    \tfrac{1}{4}p_2\big( B \mathrm{tri}_\ast(\tau)\big)
    \;=\;
    \Big(
      \tfrac{1}{4}p_1\big(
        T X^8
      \big)
    \Big)^2
    -
    \rchi_8\big( T X^8\big).
  \end{equation}
This follows by combining \eqref{PullbackOfp2UnderTriality} from
Prop. \ref{PullbackOfClassesAlongTriality} with
the $\mathrm{Sp}(2)$-structure relation
\eqref{Spin5DotSpin3ImpliesI8RelatedToChi}
\begin{equation}
  \label{EulerRelationOnSp2}
  \tfrac{1}{4}p_2 =  \big(\tfrac{1}{4}p_1 \big)^2 + \tfrac{1}{2} \rchi_8
  \;\;\;\;\;\;\;\;
  \mbox{on $B \mathrm{Sp}(2)$}
\end{equation}
from Prop. \ref{CohomologicalCharacterizationOfSpin3TimesSpin5StructureIn8d}.

In conclusion, this yields the claim by completing the square
on the right of the Bianchi identity:
$$
  \begin{aligned}
    d\, G_7
    & =
    -
    \tfrac{1}{2} G_4 \wedge G_4
    +
    \tfrac{1}{4}p_2(B \mathrm{tri}_\ast \nabla)
    \\
    & =
    -
    \tfrac{1}{2} G_4 \wedge G_4
    +
    \tfrac{1}{4}p_1(\nabla)
    \wedge
    \tfrac{1}{4}p_1(\nabla)
    -
    \rchi_8(\nabla)
    \\
    & =
    -
    \tfrac{1}{2}
    \underset{
      = \widetilde G_4
    }{
    \underbrace{
      \big(
        G_4 + \tfrac{1}{4}p_1(\nabla)
      \big)
    }
    }
    \wedge
    \underset{
      \mathclap{
      = \widetilde G_4 - \frac{1}{2}p_1(\nabla)
      }
    }{
    \underbrace{
      \big(
        G_4 - \tfrac{1}{4}p_1(\nabla)
      \big)
    }
    }
    -
    \underset{
      = 12 \cdot I_8(\nabla)
    }{
    \underbrace{
      \rchi_8(\nabla)
    }
    }.
  \end{aligned}
$$

\vspace{-7mm}
\end{proof}

\medskip

\subsection{Shifted 4-flux quantization}
\label{HalfIntegralCFieldFluxQuantization}

We prove that \hyperlink{HypothesisH}{Hypothesis H}
implies the shifted flux quantization condition \eqref{ShiftedIntegrality}.
The result is Prop. \ref{CohomologicalCharacterizationOfTwistedCohomotopyDegree4}
below.
The basic observation that makes this work is highlighted in Remark \ref{CFieldUniversal} below.
To put this to full use we need to
go into some technicalities in Lemma \ref{FreeIntegralCohomologyOfBSpin4Spin3} and
Lemma \ref{IntegralUniversalCfieldOnBSpin4xSpin3} below.

\medskip

First we recall some classical facts about the
integral cohomology of $B \mathrm{Spin}(n)$ for low $n$:

\begin{lemma}
\label{BSpinRationalHomotopy}
{\bf (i)} The integral cohomology ring of $B \mathrm{SO}(3)$
is
\begin{equation}
  \label{HBSO3}
  H^\bullet
  (
    B \mathrm{SO}(3);
    \mathbb{Z}
  )
  \;\simeq\;
  \mathbb{Z}
  \big[
    p_1, W_3
  \big]/(2 W_3)
  \,,
\end{equation}
and the integral cohomology of
$B \mathrm{Spin}(3)$
is free on one generator
\begin{equation}
  H^\bullet
  \big(
    B \mathrm{Spin}(3);
    \mathbb{Z}
  \big)
  \;\cong\;
  \mathbb{Z}
  \big[
    \tfrac{1}{4}p_1
  \big],
\end{equation}
while the integral cohomology ring of $B \mathrm{Spin}(4)$ is
free on two generators
\begin{equation}
  \label{IntegralCohomologyOfSpin4}
  H^\bullet
  \big(
    B \mathrm{Spin}(4)
    ;
    \mathbb{Z}
  \big)
  \;\simeq\;
  \mathbb{Z}
  \big[
    \tfrac{1}{2}p_1
    ,
    \underset{
      =: \widetilde \Gamma_4
    }{
    \underbrace{
      \underset{
        =:
        \Gamma_4
      }{
      \underbrace{
        \tfrac{1}{2}\rchi_{4}
      }
      }
      +
      \tfrac{1}{4} p_1
    }
    }
  \big]
  \,,
\end{equation}
where $p_1$ is the first Pontrjagin class and $\rchi_{4}$ the Euler class.
\item {\bf (ii)} Under the exceptional isomorphism
$\vartheta \;\colon\; \mathrm{Spin}(3) \times \mathrm{Spin}(3) \overset{\simeq}{\longrightarrow} \mathrm{Spin}(4)$
these classes are related by
\begin{equation}
  \label{FirstPontrjaginFromBSpin4ToBSpin3BSpin3}
  \begin{aligned}
  \vartheta^\ast \left( \tfrac{1}{2}p_1 \right)
  & =
  \phantom{-} \tfrac{1}{4}p_1 \otimes 1 + 1 \otimes \tfrac{1}{4} p_1\;,
  \\
  \vartheta^\ast
  \big(
    \tfrac{1}{2}\rchi
    +
    \tfrac{1}{4} p_1
  \big)
  & =\phantom{-}
   \tfrac{1}{4}p_1 \otimes 1\;,
  \phantom{+1 \otimes \tfrac{1}{4}p_1 }
\\
 \text{hence}
 \phantom{AAAAAA}
 \vartheta^\ast(\rchi)
 \phantom{ 1 }\;\;
 &
 =\phantom{-} \tfrac{1}{4}p_1 \otimes 1 - 1 \otimes \tfrac{1}{4} p_1\;.
 \end{aligned}
\end{equation}
\end{lemma}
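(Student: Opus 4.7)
The plan is to reduce everything to the well-known integral cohomology $H^\bullet(B\mathrm{SU}(2); \mathbb{Z}) \simeq \mathbb{Z}[c_2]$ (with $B\mathrm{SU}(2) \simeq \mathbb{H}P^\infty$) via the exceptional isomorphisms $\mathrm{Spin}(3) \simeq \mathrm{SU}(2)$ and $\mathrm{Spin}(4) \simeq \mathrm{SU}(2) \times \mathrm{SU}(2)$. This immediately yields the polynomial structure $H^\bullet(B\mathrm{Spin}(3); \mathbb{Z}) \simeq \mathbb{Z}[c_2]$ on a single degree-$4$ generator, and by K\"unneth $H^\bullet(B\mathrm{Spin}(4); \mathbb{Z}) \simeq \mathbb{Z}[c_2^+, c_2^-]$ with $c_2^\pm$ pulled back from the two $\mathrm{SU}(2)$-factors. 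The companion statement \eqref{HBSO3} for $B\mathrm{SO}(3)$ is classical and can be derived by running the Serre spectral sequence of the double cover $B\mathrm{Spin}(3) \to B\mathrm{SO}(3)$, with the $2$-torsion class $W_3 = \beta(w_2)$ produced by transgression.

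To translate between Chern-class generators and Pontrjagin/Euler classes, I would use the standard relation $p_1(\mathrm{ad}(E)) = 4\, c_2(E)$ for any $\mathrm{SU}(2)$-bundle $E$, where $\mathrm{ad}(E)$ is the associated rank-$3$ real bundle -- i.e., the vector bundle for $\mathrm{Spin}(3)$ via $\mathrm{Spin}(3) \to \mathrm{SO}(3)$. This identifies the generator $c_2 \in H^4(B\mathrm{Spin}(3); \mathbb{Z})$ with $\tfrac{1}{4}p_1$, establishing the claimed polynomial presentation.

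For $B\mathrm{Spin}(4)$ and for part (ii), the critical computation is to express the Pontrjagin and Euler classes of the $\mathrm{Spin}(4)$-vector representation $V$ on $\mathbb{R}^4 \simeq \mathbb{H}$ in terms of $c_2^\pm$. By Lemma \ref{Spin4Action}, under $\vartheta$ the action is $(q_1, q_2) \cdot x = q_1 x \bar{q_2}$, whose complexification is the external tensor product $V \otimes_{\mathbb{R}} \mathbb{C} \simeq \Sigma^+ \otimes_\mathbb{C} \Sigma^-$ of the two fundamental $\mathrm{SU}(2)$-representations. A splitting-principle calculation -- writing Chern roots of $\Sigma^\pm$ as $\pm a, \pm b$ with $c_2(\Sigma^\pm) = -a^2, -b^2$ -- yields $c_2(V_\mathbb{C}) = 2c_2^+ + 2c_2^-$ and $c_4(V_\mathbb{C}) = (c_2^+ - c_2^-)^2$. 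Combined with the general identities $p_1(V) = \pm c_2(V_\mathbb{C})$ and $p_2(V) = c_4(V_\mathbb{C}) = \rchi_4(V)^2$ for oriented rank-$4$ bundles, this gives $\vartheta^*(\tfrac{1}{2}p_1) = c_2^+ + c_2^-$ and $\vartheta^*(\rchi_4) = c_2^+ - c_2^-$ up to orientation, hence $\vartheta^*(\tfrac{1}{2}\rchi_4 + \tfrac{1}{4}p_1) = c_2^+$. Substituting $c_2^\pm = \tfrac{1}{4}p_1$ on the respective $B\mathrm{Spin}(3)$-factor yields both \eqref{IntegralCohomologyOfSpin4} and \eqref{FirstPontrjaginFromBSpin4ToBSpin3BSpin3}.

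The main obstacle will be bookkeeping of signs and orientation conventions -- the sign in $p_i(V) = \pm c_{2i}(V_\mathbb{C})$ and the sign of $\rchi_4$ under $\vartheta$ relative to the chosen orientation on $\mathbb{H}$. Once these conventions are pinned down, however, the identifications are forced: both domain and codomain of $\vartheta^*$ are polynomial integral rings of matching rank on generators of matching degree, so the Chern-class computation determines the ring map uniquely.
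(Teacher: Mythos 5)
Your proposal is correct, but it takes a genuinely different route from the paper. The paper's own proof of this lemma is essentially bibliographic: it cites Pittie for the general structure, Brown (as recalled in Rudolph--Schmidt) for the $B\mathrm{SO}(3)$ presentation \eqref{HBSO3}, and {\v C}adek--Van{\v z}ura for the remaining statements, and does not rederive them. You instead give a self-contained derivation by passing through the exceptional isomorphisms $\mathrm{Spin}(3)\simeq\mathrm{SU}(2)$ and $\mathrm{Spin}(4)\simeq\mathrm{SU}(2)\times\mathrm{SU}(2)$, using $B\mathrm{SU}(2)\simeq\mathbb{H}P^\infty$ and K\"unneth, and then computing the characteristic classes of the vector representation via its complexification $V\otimes_\mathbb{R}\mathbb{C}\simeq\Sigma^+\otimes_\mathbb{C}\Sigma^-$ and the splitting principle. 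This buys transparency and makes part (ii) essentially mechanical: writing the Chern roots of $\Sigma^\pm$ as $\pm a,\pm b$ gives roots $\pm a\pm b$ for $V_\mathbb{C}$, whence $p_1(V)=2a^2+2b^2$, $\rchi_4(V)=\pm(a^2-b^2)$, and therefore $\tfrac12\rchi_4+\tfrac14 p_1=a^2$, exactly the pullback of $\tfrac14 p_1$ from the first $B\mathrm{Spin}(3)$ factor; the two claimed generators then differ from $(a^2,b^2)$ by the unimodular matrix $\begin{pmatrix}1&1\\1&0\end{pmatrix}$, confirming the integral polynomial presentation \eqref{IntegralCohomologyOfSpin4}. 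The only caveat is the one you already flag: sign conventions for $c_2$ versus $\tfrac14 p_1$ and for $\rchi_4$ must be fixed consistently, since $p_1(\mathrm{ad}\,E)=\mp 4\,c_2(E)$ depending on whether $c_2$ is taken to be $\mp x^2$ for Chern roots $\pm x$. Once that is pinned down, the argument is forced, and the result agrees with the paper's statement. Both approaches are valid; the paper's is faster, yours is reproducible without chasing references.
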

\begin{proof}
  This follows from classical results \cite{Pittie91}.   More explicitly, \eqref{HBSO3}
  is a special case of   \cite[Thm. 1.5]{Brown82},   recalled for instance as
  \cite[Thm. 4.2.23 with Remark 4.2.25]{RudolphSchmidt17}.
  The other statements are recalled for instance in \cite[Lemma 2.1]{CV98a}.
\end{proof}

\begin{remark}[Universal avatar of the integral C-field]
  \label{CFieldUniversal}
  We highlight from \eqref{IntegralCohomologyOfSpin4}, under the braces, the universal integral class
  \begin{equation}
    \label{UniversalCField}
    \widetilde \Gamma_4
    \;:=\;
    \underset{
      =: \Gamma_4
    }{
      \underbrace{
        \tfrac{1}{2}\rchi_{4}
      }
    }
    +
    \tfrac{1}{4}p_{{}_1}
    \;\in\;
    H^4
    (
      B \mathrm{Spin}(4)
      ;
      \mathbb{Z}
    )
  \end{equation}
  for use below. Prop. \ref{CohomologicalCharacterizationOfTwistedCohomotopyDegree4}
  below says that,
  under \hyperlink{HypothesisH}{Hypothesis H}, these universal characteristic classes
  are the avatars of the
  half-integral shifted C-field flux $\widetilde G_4$.
  Since $\widetilde \Gamma_4$ is an integral cohomology class,
  its pullback to any given spacetime
  is an integral class, and such that its image in de Rham cohomology
  is $[G_4 + \tfrac{1}{4}p_1(\nabla)]$.
  This integral lift is what implements
  the shifted C-field flux quantization condition in M-theory
  \cref{HalfIntegralCFieldFluxQuantization}.
\end{remark}

We now trace the integral generator
$\widetilde \Gamma_4$ in \eqref{UniversalCField} to
the larger group $\mathrm{Spin}(5)\boldsymbol{\cdot} \mathrm{Spin}(3)$.
\begin{lemma}[Cohomology of the central group]
  \label{FreeIntegralCohomologyOfBSpin4Spin3}
  The integral cohomology in degree 4 of the classifying space
  of the central product group \eqref{Spin4DotSpin3}
  $$
    \mathrm{Spin}(4) \boldsymbol{\cdot} \mathrm{Spin}(3)
    \;\simeq\;
    \mathrm{Spin}(3) \boldsymbol{\cdot}
    \mathrm{Spin}(3) \boldsymbol{\cdot} \mathrm{Spin}(3)
  $$
  is the free lattice
  \vspace{-4mm}
  \begin{equation}
    \label{TheSublattice}
    H^4
    \big(
      B
      (
        \mathrm{Spin}(4)\boldsymbol{\cdot} \mathrm{Spin}(3)
      );
      \mathbb{Z}
    \big)
    \;\simeq\;
    \mathbb{Z}
    \left\langle
    \renewcommand{\arraystretch}{.15}
    \small
      \begin{array}{ccccc}
      \tfrac{1}{4}p^{(1)}_{{}_1}
      &+&
      \tfrac{1}{4}p^{(2)}_{{}_1}
      &+&
      \tfrac{2}{4}p^{(3)}_{{}_1}
      ,
      \\
      \\
      \tfrac{1}{4}p^{(1)}_{{}_1}
      &+&
      \tfrac{2}{4}p^{(2)}_{{}_1}
      &+&
      \tfrac{1}{4}p^{(3)}_{{}_1}
      ,
      \\
      \\
      \tfrac{2}{4}p^{(1)}_{{}_1}
      &+&
      \tfrac{1}{4}p^{(2)}_{{}_1}
      &+&
      \tfrac{1}{4}p^{(3)}_{{}_1}
      \end{array}
    \right\rangle
  \end{equation}
  where
  $
    p^{(k)}_{{}_{1}}
    \;:=\;
    (B \mathrm{pr}_k)^\ast
    (
      p_{{}_1}
    )
  $
  is the pullback of the first Pontrjagin class
  along the projection \eqref{Spinn1Spinn2ExactSequences}
  $$
    B
    \big(
      \mathrm{Spin}(4)\boldsymbol{\cdot}
      \mathrm{Spin}(3)
    \big)
    \simeq
    B
    \big(
      \mathrm{Spin}(3)
        \boldsymbol{\cdot}
      \mathrm{Spin}(3)
        \boldsymbol{\cdot}
      \mathrm{Spin}(3)
    \big)
    \xrightarrow{
      B \mathrm{pr}_k
    }
    B \mathrm{SO}(3)
    \,.
  $$
\end{lemma}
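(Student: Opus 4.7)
My plan is to sandwich $H^4(BG;\mathbb{Z})$, where I write $G := \mathrm{Spin}(3)\boldsymbol{\cdot}\mathrm{Spin}(3)\boldsymbol{\cdot}\mathrm{Spin}(3)$, between an upper and a lower bound inside $H^4(B\mathrm{Spin}(3)^3;\mathbb{Z}) \cong \mathbb{Z}^3$ (the latter identified with the lattice $\{\tfrac{1}{4}p_1^{(k)}\}_{k=1,2,3}$ via K\"unneth from Lemma~\ref{BSpinRationalHomotopy}). The upper bound will come from the Serre spectral sequence of the central $\mathbb{Z}_2$-extension
\[
1 \to \mathbb{Z}_2^{\mathrm{diag}} \to \mathrm{Spin}(3)^3 \to G \to 1,
\]
and the matching lower bound from pulling back integral characteristic classes along the three double covers $G \twoheadrightarrow \mathrm{SO}(4)\times\mathrm{SO}(3)$, one for each way of grouping two of the three $\mathrm{Sp}(1)$-factors into an $\mathrm{SO}(4) = \mathrm{Sp}(1)\boldsymbol{\cdot}\mathrm{Sp}(1)$.

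\textbf{Upper bound.} In the integral Serre spectral sequence of the associated fibration $B\mathbb{Z}_2 \to B\mathrm{Spin}(3)^3 \to BG$, the local system on $E_2^{p,q} = H^p(BG; H^q(B\mathbb{Z}_2;\mathbb{Z}))$ is trivial since $BG$ is simply connected. Using $H^q(B\mathbb{Z}_2;\mathbb{Z})=0$ for $q=1,3$ and $H^1(BG;\mathbb{Z}_2)=0$, every incoming and outgoing differential of $E_r^{4,0}$ for $r\geq 2$ has source or target zero, so the edge map
\[
H^4(BG;\mathbb{Z}) \;=\; E_\infty^{4,0} \;\hookrightarrow\; H^4(B\mathrm{Spin}(3)^3;\mathbb{Z}) \;\cong\; \mathbb{Z}^3
\]
is injective, in particular making $H^4(BG;\mathbb{Z})$ torsion-free. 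The cokernel is then filtered by $E_\infty^{0,4} \subseteq E_2^{0,4} = \mathbb{Z}_2$ and $E_\infty^{2,2} \subseteq H^2(BG;\mathbb{Z}_2)$. A parallel analysis of the mod-$2$ Serre spectral sequence, using that the obstruction class $\varpi$ of Def.~\ref{Epsilon} is the transgression $d_2(u)$ of the generator $u \in H^1(B\mathbb{Z}_2;\mathbb{Z}_2)$ and that $H^2(B\mathrm{Spin}(3)^3;\mathbb{Z}_2)=0$, will yield $H^2(BG;\mathbb{Z}_2) \cong \mathbb{Z}_2$, giving $|\mathbb{Z}^3 / H^4(BG;\mathbb{Z})| \leq 2 \cdot 2 = 4$.

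\textbf{Lower bound and matching.} For each pair $(i,j)\subset\{1,2,3\}$ with remaining index $k$, the diagonal $\mathbb{Z}_2 \subset \mathrm{Spin}(3)^3$ is contained in the order-four subgroup $\mathbb{Z}_2^{(ij)\text{-diag}} \times \mathbb{Z}_2^{(k)}$, so quotienting yields a well-defined double cover $G \twoheadrightarrow \mathrm{SO}(4)\times\mathrm{SO}(3)$ in which the $\mathrm{SO}(4)$-factor uses the pair $(i,j)$. Pulling back the integral classes $p_1^{\mathrm{SO}(4)}$, $\rchi_4^{\mathrm{SO}(4)}$ and $p_1^{\mathrm{SO}(3)}$ further to $H^4(B\mathrm{Spin}(3)^3;\mathbb{Z})$, the exceptional-isomorphism identities \eqref{FirstPontrjaginFromBSpin4ToBSpin3BSpin3} identify their images as $\tfrac{1}{2}p_1^{(i)}+\tfrac{1}{2}p_1^{(j)}$, $\tfrac{1}{4}p_1^{(i)}-\tfrac{1}{4}p_1^{(j)}$ and $p_1^{(k)} = 4\cdot\tfrac{1}{4}p_1^{(k)}$. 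Row reduction on the integer vectors so obtained from any two of the three groupings (e.g.\ from $(1,2)/3$ and $(1,3)/2$) shows that they already span a sublattice $L \subseteq \mathbb{Z}^3$ of index exactly $4$, and that each of the three generators $\tfrac{1}{4}(p_1^{(i)}+p_1^{(j)}+2p_1^{(k)})$ of the claimed lattice $L_c$ lies in $L$. Since $|\mathbb{Z}^3/L_c|=4$ by direct determinant and the outer cokernel $|\mathbb{Z}^3/H^4(BG;\mathbb{Z})|$ is at most $4$ by the spectral-sequence bound, the chain $L_c \subseteq H^4(BG;\mathbb{Z}) \subseteq \mathbb{Z}^3$ forces the equalities $L_c = H^4(BG;\mathbb{Z})$. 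The main technical obstacle will be the delicate bookkeeping in the two Serre spectral sequences needed both to rule out all nontrivial differentials affecting the bidegrees $(4,0)$, $(2,2)$, $(0,4)$ on the integral page and to pin down $|H^2(BG;\mathbb{Z}_2)|=2$ exactly on the mod-$2$ page; the rest is a transparent tracking of characteristic classes through the three compositions $B\mathrm{Spin}(3)^3 \to BG \to B(\mathrm{SO}(4)\times\mathrm{SO}(3))$.
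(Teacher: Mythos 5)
Your strategy — sandwich $H^4(BG;\mathbb{Z})$ between a lower bound coming from explicitly integral characteristic classes and an upper bound from the Serre spectral sequence of $B\mathbb{Z}_2 \to B\mathrm{Spin}(3)^3 \to BG$ — is essentially the paper's. The explicit classes differ: the paper pulls back $\tfrac12 p_1$ from $\mathrm{Spin}(7)$ and takes its $S_3$-orbit, you pull back $p_1^{\mathrm{SO}(4)}$, $\rchi_4^{\mathrm{SO}(4)}$, $p_1^{\mathrm{SO}(3)}$ from the three maps $G\to\mathrm{SO}(4)\times\mathrm{SO}(3)$; both produce the same index-$4$ sublattice and this part of your argument checks out.

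The gap is in the final logical step. You establish $[\mathbb{Z}^3 : H^4(BG;\mathbb{Z})] \leq 4$ from the spectral sequence and you establish $L_c \subseteq H^4(BG;\mathbb{Z}) \subseteq \mathbb{Z}^3$ with $[\mathbb{Z}^3:L_c] = 4$. Those two facts together give only $[\mathbb{Z}^3 : H^4(BG;\mathbb{Z})] \in \{1,2,4\}$, since the index of an intermediate lattice divides $4$. Your sentence "the chain $L_c \subseteq H^4(BG;\mathbb{Z}) \subseteq \mathbb{Z}^3$ forces the equalities" is therefore a non-sequitur: it is entirely consistent with your two bounds that $H^4(BG;\mathbb{Z}) = \mathbb{Z}^3$ (index $1$) or that it equals the unique intermediate lattice of index $2$ (note $\mathbb{Z}^3/L_c \cong \mathbb{Z}_4$, so there is exactly one such). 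What you need is the matching \emph{lower} bound $[\mathbb{Z}^3:H^4(BG;\mathbb{Z})] \geq 4$, i.e.\ that the cokernel of the pullback $H^4(BG;\mathbb{Z}) \hookrightarrow \mathbb{Z}^3$ has order exactly $4$, which requires showing that neither $E_\infty^{0,4}$ nor $E_\infty^{2,2}$ is killed by a differential (and, for the first, that $y^2 \in H^4(B\mathbb{Z}_2;\mathbb{Z})$ survives past $d_3$ and $d_5$; for the second, that the nontrivial class in $H^2(BG;\mathbb{Z}_2)$ survives past $d_3$). This is not "delicate bookkeeping" for the last step, as your final sentence suggests — it is the entire content of the upper half of the sandwich, and without it the argument does not close. (The paper, for comparison, simply asserts that the spectral sequence gives index exactly $4$ rather than merely at most $4$; you would need to supply the verification that it does.)
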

\begin{proof}
  The defining short exact sequence of groups (Def. \ref{Def-dot})
  $$
    1
    \longrightarrow
      \mathbb{Z}_2
    \longrightarrow
      \mathrm{Spin}(3)
         \boldsymbol{\cdot}
      \mathrm{Spin}(3)
         \boldsymbol{\cdot}
      \mathrm{Spin}(3)
    \longrightarrow
      \mathrm{Spin}(3)
         \times
      \mathrm{Spin}(3)
         \times
      \mathrm{Spin}(3)
    \longrightarrow
    1
  $$
  induces a homotopy fiber sequence of classifying spaces
  (e.g. \cite[11.4]{Mitchell11})
  $$
    \xymatrix{
      B \mathbb{Z}_2
      \ar[r]
      &
      B
      \big(
        \mathrm{Spin}(3)
          \times
        \mathrm{Spin}(3)
          \times
        \mathrm{Spin}(3)
      \big)
      \ar[r]
            &
      B
      \big(
        \mathrm{Spin}(3)
          \boldsymbol{\cdot}
        \mathrm{Spin}(3)
          \boldsymbol{\cdot}
        \mathrm{Spin}(3)
      \big).
    }
  $$
  The corresponding Serre spectral sequence  shows that
  $$
    \begin{aligned}
    H^4
    \big(
      B
      (
        \mathrm{Spin}(3)
          \boldsymbol{\cdot}
        \mathrm{Spin}(3)
          \boldsymbol{\cdot}
        \mathrm{Spin}(3)
      );
      \mathbb{Z}
    \big)
        \xymatrix{\; \ar@{^{(}->}[r]&} &
    H^4
    \big(
      B
      (
        \mathrm{Spin}(3)
          \times
        \mathrm{Spin}(3)
          \times
        \mathrm{Spin}(3)
      ),
      \mathbb{Z}
    \big)
    \\
    & \simeq
    \mathbb{Z}
    \big\langle
      \tfrac{1}{4}p^{(1)}_{{}_1},
      \tfrac{1}{4}p^{(2)}_{{}_1},
      \tfrac{1}{4}p^{(3)}_{{}_1}
    \big\rangle
    \end{aligned}
  $$
  is a sublattice of index 4. This sublattice must include the integral class $\tfrac{1}{2}p_{{}_1}$
  pulled back along the inclusion into $\mathrm{Spin}(7)$,  which by Lemma \ref{BSpinRationalHomotopy} is
  \begin{equation}
    \label{TripleForp1}
    \xymatrix@R=-1pt{
      B
      \big(
        \mathrm{Spin}(4)
          \boldsymbol{\cdot}
        \mathrm{Spin}(3)
      \big)
      \ar[rr]
      &&
      B \mathrm{Spin}(7)\;.
      \\
      \tfrac{1}{4}p_{{}_1}^{(1)}
      +
      \tfrac{1}{4}p_{{}_1}^{(2)}
      +
      \tfrac{2}{4}p_{{}_1}^{(3)}
      &&
      \tfrac{1}{2}p_{{}_1}
      \ar@{|->}[ll]
    }
  \end{equation}
  But then it must also  contain the  images of this element under the delooping of the
  $S_3$-automorphisms \eqref{S3ActionOnSpin4Spin3}.  This yields
  the other two elements shown in \eqref{TheSublattice}.
  Finally, it is clear that the sublattice spanned by these  three elements already has full rank and index 4:
  \begin{equation}
    \label{SublatticeOfTriplesSummingTo4}
    \mathbb{Z}
    \left\langle
    \renewcommand{\arraystretch}{.15}
 \small      \begin{array}{ccccc}
      \tfrac{1}{4}p^{(1)}_{{}_1}
      &+&
      \tfrac{1}{4}p^{(2)}_{{}_1}
      &+&
      \tfrac{2}{4}p^{(3)}_{{}_1}
      ,
      \\
      \\
      \tfrac{1}{4}p^{(1)}_{{}_1}
      &+&
      \tfrac{2}{4}p^{(2)}_{{}_1}
      &+&
      \tfrac{1}{4}p^{(3)}_{{}_1}
      ,
      \\
      \\
      \tfrac{2}{4}p^{(1)}_{{}_1}
      &+&
      \tfrac{1}{4}p^{(2)}_{{}_1}
      &+&
      \tfrac{1}{4}p^{(3)}_{{}_1}
      \end{array}
    \right\rangle
    \;\simeq\;
    \Big\{
      \tfrac{a}{4}p^{(1)}_{{}_1}
      +
      \tfrac{b}{4}p^{(2)}_{{}_1}
      +
      \tfrac{c}{4}p^{(3)}_{{}_1}
      \;\vert\;
      a,b,c \in \mathbb{Z}, \;
      a + b + c = 0 \,\; \mathrm{mod}\, 4
    \Big\}
  \end{equation}
  which means that there are no further generators.
\end{proof}

As a direct consequence we obtain the following identification.
\begin{lemma}[Integral classes]
  \label{IntegralUniversalCfieldOnBSpin4xSpin3}
  The following cohomology class
  on the classifying space of the group
  $\mathrm{Spin}(4) \boldsymbol{\cdot} \mathrm{Spin}(3)$ \eqref{Spin4DotSpin3},
  which a priori is in
  rational cohomology, is in fact integral:
  $$
    \underset{
      =: \widetilde \Gamma_4
    }{
    \underbrace{
      \tfrac{1}{2}\rchi_4
      +
      \tfrac{1}{4}p_{{}_1}
    }
    }
    \;+\;
    \tfrac{1}{2}p^{(3)}_{{}_1}
    \;\in\;
    H^4\big(
      \mathrm{Spin}(4)
       \!\boldsymbol{\cdot}\!
      \mathrm{Spin}(3)
      ;
      \mathbb{Z}
    \big)
  $$
  and hence so is its image on the classifying space of
  $\mathrm{Sp}(1) \!\boldsymbol{\cdot}\! \mathrm{Sp}(1) \!\boldsymbol{\cdot}\! \mathrm{Sp}(1)$ \eqref{Sp1Sp1Sp1}
  under the delooping of the triality isomorphism
  from Prop. \ref{QuaternionicSubgroupTriality}, which we
  will denote by the same symbols:
  \begin{equation}
    \label{tildeGamma4OnBSp1Sp1Sp1}
    \underset{
      =: \widetilde \Gamma_4
    }{
    \underbrace{
      \tfrac{1}{2}\rchi_4
      +
      \tfrac{1}{4}p_{{}_1}
    }
    }
    \;+\;
    \tfrac{1}{2}p^{(3)}_{{}_1}
    \;\in\;
    H^4\big(
      \mathrm{Sp}(1)
        \!\boldsymbol{\cdot}
      \mathrm{Sp}(1)
        \!\boldsymbol{\cdot}
      \mathrm{Sp}(1)
      ;
      \mathbb{Z}
    \big)
    \;\simeq\;
    H^4\big(
      \mathrm{Spin}(4)
       \!\boldsymbol{\cdot}\!
      \mathrm{Spin}(3)
      ,
      \mathbb{Z}
    \big)
    \,.
  \end{equation}
  Here $\tfrac{1}{2}\rchi_4$ is the Euler class  pulled back back from the left
  $B \mathrm{SO}(4)$ factor and   $p^{(3)}_{{}_1}$ is the   first Pontrjagin
  class pulled back from   the right $B \mathrm{SO}(3)$ factor, both along the respective
  projections \eqref{Spinn1Spinn2ExactSequences}, while   $p_{{}_1}$ is the first
  Pontrjagin class pulled back from the ambient  $B \mathrm{Spin}(8)$ along the
  canonical inclusion \eqref{Spinn1Spinn2SubgroupInclusion}:
  $$
    \xymatrix@R=2pt{
      &
      B
      \big(
        \mathrm{Spin}(4) \!\boldsymbol{\cdot}\! \mathrm{Spin}(3)
      \big)
      \ar[ddddl]_-{\!\!\!\! B \mathrm{pr}_4 }
      \ar[ddddr]^-{~B \mathrm{pr}_3 }
      \ar[dddd]|-{ B \iota_8 }
      \\
      \\
      \\
      \\
      B \mathrm{SO}(4)
      &
      B \mathrm{Spin}(8)
      &
      B \mathrm{SO}(3)
      \\
      \rchi_4
      &
      p_{{}_1}
      &
      p^{(3)}_{{}_1}
    }
  $$
\end{lemma}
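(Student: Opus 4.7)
\medskip

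\noindent\textbf{Proof proposal.}
The plan is to reduce the integrality claim to the explicit lattice description already obtained in Lemma \ref{FreeIntegralCohomologyOfBSpin4Spin3}. That lemma identifies $H^4\big( B(\mathrm{Spin}(4)\boldsymbol{\cdot}\mathrm{Spin}(3));\,\mathbb{Z}\big)$ with the sublattice \eqref{SublatticeOfTriplesSummingTo4} inside $H^4\big( B \mathrm{Spin}(3)^{\times 3};\,\mathbb{Z}\big) = \mathbb{Z}\big\langle \tfrac{1}{4}p^{(1)}_{1}, \tfrac{1}{4}p^{(2)}_{1}, \tfrac{1}{4}p^{(3)}_{1}\big\rangle$ cut out by the congruence $a+b+c\equiv 0\pmod 4$. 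Consequently, it is enough to pull the rational class $\widetilde\Gamma_4 + \tfrac{1}{2}p^{(3)}_{1}$ back along the $\mathbb{Z}_2$-cover $B\mathrm{Spin}(3)^{\times 3}\to B\big(\mathrm{Spin}(3)\boldsymbol{\cdot}\mathrm{Spin}(3)\boldsymbol{\cdot}\mathrm{Spin}(3)\big)\simeq B\big(\mathrm{Spin}(4)\boldsymbol{\cdot}\mathrm{Spin}(3)\big)$ and verify that the resulting triple of rational coefficients over $\tfrac{1}{4}$ satisfies this mod-$4$ congruence.

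Carrying out this pullback uses two inputs in sequence. First, the exceptional isomorphism $\vartheta\colon \mathrm{Spin}(3)\times\mathrm{Spin}(3)\overset{\simeq}{\to}\mathrm{Spin}(4)$ together with the formulas \eqref{FirstPontrjaginFromBSpin4ToBSpin3BSpin3} from Lemma \ref{BSpinRationalHomotopy} express $\tfrac{1}{2}\rchi_4$ and $\tfrac{1}{2}p_{1}^{\mathrm{Spin}(4)}$ on the first two $\mathrm{Spin}(3)$-factors as explicit half-integer combinations of $p_1^{(1)}$ and $p_1^{(2)}$. Second, the additivity of the first rational Pontrjagin class under direct sum of vector bundles, applied to the canonical inclusion $\mathrm{Spin}(4)\times\mathrm{Spin}(3)\hookrightarrow\mathrm{Spin}(8)$, gives $p_1^{\mathrm{Spin}(8)}\;\mapsto\; p_1^{\mathrm{Spin}(4)} + p^{(3)}_{1}$, so that $\tfrac{1}{4}p_1^{\mathrm{Spin}(8)}$ contributes a $\tfrac{1}{4}p^{(3)}_{1}$ piece on the third factor in addition to the pieces on the first two factors. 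Combining these two steps and summing with the explicit summand $\tfrac{1}{2}p^{(3)}_{1}$ reduces the whole pullback to an $\tfrac{1}{4}\mathbb{Z}$-linear combination of $p^{(1)}_{1},\,p^{(2)}_{1},\,p^{(3)}_{1}$; I expect the $p^{(2)}_{1}$-coefficients to cancel and the three integer numerators to be $(1,0,3)$, whose sum $4$ is indeed divisible by $4$.

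Granted this numerical check, Lemma \ref{FreeIntegralCohomologyOfBSpin4Spin3} concludes that $\widetilde\Gamma_4 + \tfrac{1}{2}p^{(3)}_{1}$ lies in the image of the integral cohomology of $B\big(\mathrm{Spin}(4)\boldsymbol{\cdot}\mathrm{Spin}(3)\big)$, hence is itself integral. The second half of the statement, concerning the class on $B\big(\mathrm{Sp}(1)\boldsymbol{\cdot}\mathrm{Sp}(1)\boldsymbol{\cdot}\mathrm{Sp}(1)\big)$, will then be automatic: the triality of Prop. \ref{QuaternionicSubgroupTriality}, together with the exceptional isomorphisms $\mathrm{Sp}(1)\simeq\mathrm{Spin}(3)$ and $\mathrm{Spin}(3)\boldsymbol{\cdot}\mathrm{Spin}(3)\simeq\mathrm{Spin}(4)$, furnishes a group isomorphism of the two central products and hence a homotopy equivalence of their classifying spaces, under which integral cohomology is preserved tautologically.

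The only real obstacle is bookkeeping: one must carefully distinguish whether each Pontrjagin class is meant on $B\mathrm{Spin}(8)$, on $B\mathrm{Spin}(4)$, or on one of the three $B\mathrm{Spin}(3)$-factors, and correspondingly invoke either the additivity formula or the $\vartheta$-pullback formula \eqref{FirstPontrjaginFromBSpin4ToBSpin3BSpin3}. Once the symbols are consistently interpreted, no further conceptual step is required beyond the mod-$4$ verification.
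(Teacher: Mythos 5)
Your proposal is correct and follows essentially the same route as the paper's proof: reduce to the lattice description of Lemma \ref{FreeIntegralCohomologyOfBSpin4Spin3}, expand the three summands in the $\tfrac{1}{4}p^{(k)}_1$-basis via the $\vartheta$-pullback formulas of Lemma \ref{BSpinRationalHomotopy} and Whitney additivity of $p_1$, and verify the mod-$4$ congruence on the resulting coefficient triple $(1,0,3)$. The paper carries out the same computation (noting that the final expression $\tfrac{1}{4}p^{(1)}_1+\tfrac{3}{4}p^{(3)}_1$ lies in the sublattice \eqref{SublatticeOfTriplesSummingTo4}), so there is no substantive difference.
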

\begin{proof}
  In terms of the contributions from the three factors
  under the identification
  $\mathrm{Spin}(4)\boldsymbol{\cdot}\mathrm{Spin}(3)
  \simeq \mathrm{Spin}(3)\boldsymbol{\cdot}\mathrm{Spin}(3)
  \boldsymbol{\cdot}\mathrm{Spin}(3)$
  the class in question is
  $$
    \underset{
      = \tfrac{1}{2}\rchi_4
    }{
    \underbrace{
      \tfrac{1}{8}p^{(1)}_{{}_1}
      -
      \tfrac{1}{8}p^{(2)}_{{}_1}
    }
    }
    +
    \underset{
      = \tfrac{1}{4}p_{{}_1}
    }{
    \underbrace{
      \tfrac{1}{8}p^{(1)}_{{}_1}
      +
      \tfrac{1}{8}p^{(2)}_{{}_1}
      +
      \tfrac{1}{4}p^{(3)}_{{}_1}
    }
    }
    +
    \tfrac{2}{4}p^{(3)}_{{}_1}
    \;=\;
    \tfrac{1}{4}p^{(1)}_{{}_1}
    +
    \tfrac{3}{4}p^{(3)}_{{}_1}
    \,,
  $$
  where under the braces we used Lemma \ref{BSpinRationalHomotopy}
  as in \eqref{TripleForp1}.
  The equivalent expression on the right
  makes manifest that this is in the sublattice
  \eqref{SublatticeOfTriplesSummingTo4}. Therefore,
  Lemma \ref{FreeIntegralCohomologyOfBSpin4Spin3} implies the
  claim.
\end{proof}

Now we may finally state and prove the main result of this section.

\begin{prop}[Integrality of the shifted class]
  \label{CohomologicalCharacterizationOfTwistedCohomotopyDegree4}
  Let $X^{8}$ be a 8-manifold as in Def. \ref{The8Manifold}.
  If a differential 4-form $G_4$ on $X^8$ satisfies
  Hypothesis H (Def. \ref{HypothesisHFor8Manifolds}), then
  its shift by 1/4th the first Pontrjagin form
  \begin{equation}
    \label{Shifted4Flux}
    \widetilde G_4
    \;:=\;
    G_4 \;+\; \tfrac{1}{4}p_1(\nabla)
  \end{equation}
  is integral:
  \begin{equation}
    \label{ShiftedIntegrality}
    \big[
      G_4 + \tfrac{1}{4}p_1(\nabla)
    \big]
    \;=\;
    [G_4] + \tfrac{1}{4}p_1\big(T X^8\big)
    \;\in\;
    H^4\big(X^8; \mathbb{Z}\big)
    \to
    H^4\big(X^8; \mathbb{R}\big)
    \,.
  \end{equation}
\end{prop}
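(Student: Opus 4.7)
The plan is to leverage \hyperlink{HypothesisH}{Hypothesis H} in order to pull an integral universal class back from a suitable classifying space. First, Def.~\ref{HypothesisHFor8Manifolds} together with Prop.~\ref{hHsslash} and the triality isomorphism from Prop.~\ref{QuaternionicSubgroupTriality} allows us to present the given cocycle $c$ equivalently as a classifying map
\[
\tilde c \;\colon\; X^8 \;\longrightarrow\; B\bigl(\mathrm{Spin}(4)\boldsymbol{\cdot}\mathrm{Spin}(3)\bigr),
\]
through which the $\mathrm{Sp}(2)$-twisted structure on $X^8$ factors. The idea is then to pull back the universal integral class exhibited in Lemma~\ref{IntegralUniversalCfieldOnBSpin4xSpin3} and show that its de Rham image differs from $\widetilde G_4$ by another manifestly integral class.

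To carry this out, I will first identify the de Rham image of $\tilde c^\ast \widetilde{\Gamma}_4 = \tilde c^\ast\bigl(\tfrac{1}{2}\rchi_4 + \tfrac{1}{4}p_1\bigr)$ with $\widetilde G_4 = G_4 + \tfrac{1}{4}p_1(\nabla)$. For the Euler-class piece $\tfrac{1}{2}\rchi_4$ this is precisely the Chern-character identification from Prop.~\ref{SectionsOfRationalSphericalFibrations}(ii), which characterises $G_4$ as the differential form on the base whose pullback to the $S^4$-bundle represents half the Euler class of the orthogonal complement of the tautological section. For the $\tfrac{1}{4}p_1$ piece this is Lemma~\ref{BSpinRationalHomotopy}: the class $p_1$ on $B(\mathrm{Spin}(4)\boldsymbol{\cdot}\mathrm{Spin}(3))$ is pulled back from $B\mathrm{Spin}(8)$ along the canonical inclusion, and its image in de Rham cohomology on $X^8$ is the Pontrjagin form $p_1(\nabla)$ of the tangent bundle. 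Lemma~\ref{IntegralUniversalCfieldOnBSpin4xSpin3} then yields an integral class on $X^8$ with de Rham image
\[
\widetilde G_4 \;+\; \tfrac{1}{2}\, p^{(3)}_1(\nabla),
\]
where $p^{(3)}_1(\nabla)$ is the first Pontrjagin form of the rank-$3$ real bundle $V_3 \to X^8$ classified by projection to the third factor.

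The remaining step is to show that $\tfrac{1}{2}p^{(3)}_1(\nabla)$ is itself the de Rham image of an integral class on $X^8$; subtracting it from the previous integral class will produce the required integral lift of $[\widetilde G_4]$. Here I invoke the assumption $H^2(X^8;\mathbb{Z}_2) = 0$ from Def.~\ref{The8Manifold}: by Lemma~\ref{ObstructionToDirectProductStructure}, the universal obstruction class $\varpi(\tilde c) \in H^2(X^8;\mathbb{Z}_2)$ then vanishes, so $\tilde c$ lifts along $B(\mathrm{Spin}(4)\times\mathrm{Spin}(3)) \to B(\mathrm{Spin}(4)\boldsymbol{\cdot}\mathrm{Spin}(3))$. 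Equivalently, $V_3$ admits a $\mathrm{Spin}(3)$-structure; via this lift, $p^{(3)}_1$ factors through $B\mathrm{Spin}(3)$, on which $\tfrac{1}{4}p_1$ is an integral generator by Lemma~\ref{BSpinRationalHomotopy}. Hence $\tfrac{1}{2}p^{(3)}_1 = 2\cdot \tfrac{1}{4}p^{(3)}_1$ is integral on $X^8$, and the conclusion follows.

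The main obstacle will be the bookkeeping around triality: one must verify that the Chern-character isomorphism of Prop.~\ref{SectionsOfRationalSphericalFibrations}(ii), which naturally produces a class of the form $\tfrac{1}{2}\rchi$ on the rationalization of the $\mathrm{Sp}(2)$-twisted $S^4$-fibration, aligns under Prop.~\ref{hHsslash} and Prop.~\ref{QuaternionicSubgroupTriality} with the generator $\tfrac{1}{2}\rchi_4$ appearing in Lemma~\ref{IntegralUniversalCfieldOnBSpin4xSpin3}, and that the ambient Pontrjagin class there restricts consistently with $\tfrac{1}{4}p_1(TX^8)$; once this matching is in place, the integrality assertion reduces to the combination of Lemma~\ref{IntegralUniversalCfieldOnBSpin4xSpin3} with the $w_2$-vanishing argument above.
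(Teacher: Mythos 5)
Your proposal is essentially sound and tracks the paper's proof closely: the same universal integral class $\widetilde\Gamma_4 + \tfrac{1}{2}p_1^{(3)}$ from Lemma \ref{IntegralUniversalCfieldOnBSpin4xSpin3}, the same triality and $B\mathrm{pr}$ bookkeeping, and the same use of $\varpi(\tau)=0$ (via $H^2(X^8;\mathbb{Z}_2)=0$ and Lemma \ref{ObstructionToDirectProductStructure}) to peel off the $\tfrac{1}{2}p_1^{(3)}$ term. The one genuine divergence is structural: you try to do everything on $X^8$ by pulling $\widetilde\Gamma_4$ back along the lifted classifying map $\tilde c$, whereas the paper pulls back to the total space $E$ of the $S^4$-fibration, where Prop.~\ref{SectionsOfRationalSphericalFibrations}(ii) gives $\pi^*[G_4]=\tfrac{1}{2}\rchi_4(\widehat V)$ directly, and then \emph{descends} to $X^8$ by showing that $\pi^*$ is injective on $H^4(-;\mathbb{R}/\mathbb{Z})$ (via the Serre spectral sequence for $\pi\colon E\to X^8$). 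Your route would buy a marginally shorter argument by dispensing with that injectivity lemma, but it does so at the price of the step you yourself flag as ``the main obstacle'' and do not carry out.

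That step is real and nontrivial. Prop.~\ref{SectionsOfRationalSphericalFibrations}(ii) only asserts $\pi^*[G_4]=\tfrac{1}{2}\rchi_4(\widehat V)$ on $E$; it does \emph{not} directly say $[G_4]=\tilde c^*(\tfrac{1}{2}\rchi_4)$ on $X^8$. To close this, you must (a) recall that the cohomotopy cocycle $c$ is, up to homotopy, a section $c\colon X^8\to E$ of $\pi$, and that under the identification $S^4\!\sslash\!(\mathrm{Sp}(2)\!\boldsymbol{\cdot}\!\mathrm{Sp}(1))\simeq B(\mathrm{Spin}(4)\!\boldsymbol{\cdot}\!\mathrm{Spin}(3))$ (Props.~\ref{hHsslash}, \ref{QuaternionicSubgroupTriality}, and Lemma \ref{fibBiota}) the lift $\tilde c$ is precisely the composite of $c$ with the classifying map of $\widehat V$; (b) therefore $c^*\widehat V$ is the rank-$4$ bundle on $X^8$ classified by $B\mathrm{pr}_4\circ\tilde c$; (c) apply $c^*$ to both sides of $\pi^*[G_4]=\tfrac{1}{2}\rchi_4(\widehat V)$ and use $\pi\circ c\simeq\mathrm{id}_{X^8}$. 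You also need to explicitly invoke Lemma~\ref{PullbackOfClassesAlongTriality} to justify that the ambient $\tfrac{1}{4}p_1$-term pulls back to $\tfrac{1}{4}p_1(TX^8)$ rather than to $\tfrac{1}{4}p_1(B\mathrm{tri}\circ TX^8)$; this is where the triality-preserves-$p_1$ fact actually enters, and is a separate claim from the Euler-class alignment. As written, your proposal states the desired conclusion of the matching ($[G_4]$ is the de Rham image of $\tilde c^*(\tfrac{1}{2}\rchi_4)$) without supplying the argument, which is exactly the content the paper supplies via its detour through $E$.
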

\begin{proof}
  Notice that,
  by the assumption $H^2(X^8, \mathbb{Z}_2) = 0$ in Def. \ref{The8Manifold},
  it follows in particular that
  the characteristic class $\varpi$,
  from Def. \ref{Epsilon}, vanishes:
  \begin{equation}
    \label{VanishingEpsilonForIntegralFlux}
    \varpi(\tau) \in H^2\big(X^{8}; \mathbb{Z}_2\big) \;=\; 0\;.
  \end{equation}
With this, the proof proceeds by considering the following diagram,
which we will discuss below in stages:
\begin{equation}
  \label{DiagramProvingHalfIntegralFluxQuantization}
  \raisebox{60pt}{
  \xymatrix@R=5pt{
    && &
    \ar@{|->}@/_4pc/[llldddd]
    \overset{
      \widetilde \Gamma_4
    }{
    \overbrace{
      \tfrac{1}{2}\rchi_4
      +
      \tfrac{1}{4}p_1
    }
    }
    \;
    +
    \;
    \;\;
    \tfrac{1}{2}p_1^{(3)}
    &
    \\
    &&
    B\big(
      \mathrm{Sp}(1)
        \!\boldsymbol{\cdot}\!
      \mathrm{Sp}(1)
        \!\boldsymbol{\cdot}\!
      \mathrm{Sp}(1)
    \big)
    \ar[r]^-{ \simeq }_-{ }
    \ar[dd]
    &
    B
    \big(
      \mathrm{Spin}(4)
      \!\boldsymbol{\cdot}\!
      \mathrm{Spin}(3)
    \big)
    \ar[dd]
    \ar[r]^-{ B \mathrm{pr}_{4} }
    &
    B
      \mathrm{SO}(4)
    \ar[dd]
    \\
    \\
    &&
    B\big(
      \mathrm{Sp}(2)
        \!\boldsymbol{\cdot}\!
      \mathrm{Sp}(1)
    \big)
    \ar[r]^-{ \simeq }_-{  }
    \ar[dd]
    &
    B
    \big(
      \mathrm{Spin}(5)
      \!\boldsymbol{\cdot}\!
      \mathrm{Spin}(3)
    \big)
    \ar[dd]
    \ar[r]^-{ B \mathrm{pr}_5 }
    &
    B
      \mathrm{SO}(5)
    \ar[dd]
    \\
    \mathllap
    {
    [G_4]
    +
    \tfrac{1}{4}p_1(T X^8)
    \!\!
    }
    \\
    X
    \ar[rr]|-{ T X }
    \ar@{-->}[uurr]^-{ \tau }
    \ar@/^1pc/@{-->}[uuuurr]|-{
      \mbox{
        \tiny
        \color{blue}
        \begin{tabular}{c}
          cocycle in
          \\
          $\tau$-twisted
          \\
          Cohomotopy
        \end{tabular}
      }
    }^>>>>>{ c }
    &&
    B \mathrm{Spin}(8)
    \ar[r]_-{\simeq}^-{ B \mathrm{tri} }
    &
    B \mathrm{Spin}(8)
    \ar[r]
    &
    B \mathrm{SO}(8)
    \\
    p_1\big( T X^8 \big)
    &&
    \ar@{|->}[ll]
    p_1
    &
    p_1
    \ar@{|->}[l]
  }
  }
\end{equation}
Here  the vertical maps are the deloopings of the canonical group inclusions
(Remark \ref{SubgroupInclusions})
and the horizontal equivalences $B\mathrm{tri}$ are the  deloopings
\eqref{TrialityAutomorphismDelooped}
of the  respective triality automorphism from Prop. \ref{QuaternionicSubgroupTriality},
while the horizontal  maps $B \mathrm{pr}_n$ are the deloopings of the canonical  projections \eqref{Spinn1Spinn2ExactSequences}.
On the left we used that, by Def. \ref{TwistedCohomotopy}, an element
$$
  [c] \in
  \pi^\tau\big( X^8 \big)
$$
in the
$\tau$-twisted Cohomotopy of $X^8$ is the homotopy class of
a section $c$ of the $S^4$-bundle classified by $ B \mathrm{pr}_5 \circ B \mathrm{tri}\circ \tau$:
$$
  \xymatrix@R=1em@C=3.5em{
    S^4
    \ar[r]
    &
    E
    \ar[dd]_-{\pi}
    \ar[rr]^{  }
    \ar@{}[ddrr]|-{
      \mbox{
        \tiny
        (pb)
      }
    }
    &&
    B \mathrm{SO}(4)
    \mathrlap{
      \; \simeq S^4 \!\sslash\!  \mathrm{SO}(5)
    }
    \ar[dd]
    \\
    \\
    &
    X^{8}
    \ar@/^1.5pc/@{-->}[uu]^c
    \ar[rr]_-{
      B \mathrm{pr}_5 \;\circ \;  B \mathrm{tri}\; \circ \; \tau
    }
    &&
    B \mathrm{SO}(5)
  }
$$
and we used Prop. \ref{hHsslash} to identify various homotopy quotients
of $S^4$ with classifying spaces, as shown.
   This shows that    $E$ is  the unit sphere bundle of a  rank 5 real vector bundle $V$
   classified by $B \mathrm{pr}_5 \circ B \mathrm{tri} \circ c$.   Therefore, by
   Prop. \ref{SectionsOfRationalSphericalFibrations} we have
  $$
    \pi^*[G_4] = \tfrac{1}{2}\rchi_4\big(\widehat{V})
    \,,
  $$
  where $\widehat{V}$ is defined by the splitting $\pi^*V=\mathbb{R}_E\oplus \widehat{V}$ determined
  by the tautological section of $\pi^*V$ over $E$,   i.e., it is the rank 4 real vector bundle on $E$
  classified by $E \to B \mathrm{SO}(4)$.
  Hence, by \eqref{tildeGamma4OnBSp1Sp1Sp1}  in
  Lemma \ref{IntegralUniversalCfieldOnBSpin4xSpin3},
  we have that
  \[
    \pi^*\Big(
      \;
      \underset{
        =:
        K
      }{
      \underbrace{
        [G_4]
          +
        \tfrac{1}{4}p_1\big( B \mathrm{tri} \circ T X^8 \big)
          +
        \tfrac{1}{2}p_1^{(3)}\big(  B \mathrm{tri} \circ \tau  \big)
      }
      }
      \;
    \Big)
    \;\in\;
    H^4(E;\mathbb{Z})
  \]
  is an integral class.

 We now claim that the class $K$ is integral
 already before the pullback, as a class on $X$.
 For this, consider the commutative diagram
 \[
 \xymatrix@C=3em{
 \cdots
   \ar[r]
   &
   H^4\big(X^8; \mathbb{Z}\big)
   \ar[r]
   \ar[d]^{\pi^*}
   &
   H^4\big(X^8; \mathbb{R}\big)
   \ar[r]^-{q}
   \ar[d]^{\pi^*}
   &
   H^4\big(X^8; \mathbb{R}/\mathbb{Z}\big)
   \ar[r]
   \ar[d]^{\pi^*}
   &
  \cdots
  \\
  \cdots
  \ar[r]
  &
  H^4\big(E; \mathbb{Z}\big)
  \ar[r]
  &
  H^4\big(E; \mathbb{R}\big)\ar[r]^-{q}
  &
  H^4\big(E; \mathbb{R}/\mathbb{Z}\big)
  \ar[r]
  &
 \cdots
 }
 \]
 induced by the short exact sequence $0\to \mathbb{Z} \to \mathbb{R} \to \mathbb{R}/\mathbb{Z}\to 0$.
 From the Serre spectral sequence for the fibration $\pi\colon E\to X$ one sees that the vertical maps in the
 above diagram are injective. Consequently, from
 \[
   \begin{aligned}
      \pi^*
      q (K)
      & =
      q  \pi^*(K)
      \\
      & = 0
   \end{aligned}
 \]
 it follows that already
 $q(K)=0$,
 which means that $K$ itself is integral:
 \begin{equation}
   \label{AClassOnX}
   [G_4]
     +
   \tfrac{1}{4}p_1\big( B \mathrm{tri} \circ T X^8 \big)
     +
   \tfrac{1}{2}p_1^{(3)}(  B \mathrm{tri} \circ \tau)
   \;\in\;
   H^4(
     X^8; \mathbb{Z}
   )
   \,.
 \end{equation}
 Now observe that the third summand in \eqref{AClassOnX}  is the  first fractional Pontrjagin class of the
 underlying $\mathrm{SO}(3)$-bundle. By the assumption \eqref{VanishingEpsilonForIntegralFlux}
 this admits Spin structure, by Lemma \ref{Epsilon}.  This in turn implies that its first Pontrjagin class
 is divisible by two, hence that the last summand in  \eqref{AClassOnX} is integral by itself
 $$
   \tfrac{1}{2}p_1^{(3)}( B \mathrm{tri}\circ \tau)
   \;\in\;
   H^4\big( X^8; \mathbb{Z}\big)
   \,,
 $$
 and hence that also the remaining summand
 \begin{equation}
   \label{AlmostThere}
   [G_4]
     +
   \tfrac{1}{4}p_1
   \big(
     B \mathrm{tri} \circ T X^8
   \big)
   \;\in\;
   H^4\big(X^8; \mathbb{Z}\big)
 \end{equation}
 is integral by itself.
 Finally, pullback along the triality automorphism preserves
 the first Pontrjagin class, by Lemma \ref{PullbackOfClassesAlongTriality}
 \begin{equation}
   \label{BphiPreservesp1}
   p_1( B \mathrm{tri} \circ \tau)
     \;=\;
   p_1\big(T X^8\big)
 \end{equation}
 and hence \eqref{AlmostThere} indeed becomes
 $
    [G_4]
     +
   \tfrac{1}{4}p_1
   \big(
     T X^8
   \big)
   \in
   H^4\big(X^8; \mathbb{Z}\big)
 $.
\end{proof}

\medskip

\subsection{Background charge}
\label{BackgroundCharge}

We prove that \hyperlink{HypothesisH}{Hypothesis H} implies
the background charge \eqref{BackgroundChargeValue} of the 4-flux.

\medskip
\begin{prop}[Cohomotopically vanishing 4-flux form]
  \label{PTVanishing4Flux}
 Let $X^8$ be a smooth 8-manifold which is simply connected
 (Remark \ref{SimplyConnected}) and equipped with
 topological
 $\mathrm{Sp}(2)$-structure
 $\tau$ (Example \ref{CentralProductOfSymplecticGroups}).
 Then, if a cocycle in $\tau$-twisted Cohomotopy
 (Def. \ref{TwistedCohomotopy})
 has a factorization through the quaternionic Hopf fibration,
 exhibiting its vanishing PT-charge according to
 \eqref{PTVanishingChargeAsLift} in \cref{TwistedPT},
 it follows that the differential 4-form
 $G_4$ by Def. \ref{HypothesisHFor8Manifolds}
 has value
 $$
   G_4 \;=\; \tfrac{1}{4}p_1\big( \nabla_\tau \big)
   \,.
 $$
 Consequently, the corresponding integral 4-form $\widetilde G_4$
 \eqref{ShiftedIntegrality} from Prop. \ref{CohomologicalCharacterizationOfTwistedCohomotopyDegree4}
 has class $\tfrac{1}{2}p_1$:
 $$
   \raisebox{60pt}{
   \xymatrix@C=4em{
     &&
     S^7
     \!\sslash\!
       \mathrm{Sp}(2)
     \ar[d]^{\small
       h_{\mathbb{H}}
       \sslash
         \mathrm{Sp}(2)
     }
     \\
     &&
     S^4
     \!\sslash\!
       \mathrm{Sp}(2)
     \ar[d]
     \\
     X^8
     \ar@/^2pc/@{-->}[uurr]^-{ \exists }
     \ar@/^1pc/@{-->}[urr]^-{ c }
     \ar[rr]^-{ \tau }
     \ar[drr]_-{ T X^8 }
     &&
     B
       \mathrm{Sp}(2)
     \ar[d]
     \\
     &&
     B \mathrm{Spin}(8)
   }
   }
   \phantom{AA}
   \Longrightarrow
   \phantom{AA}
   [\widetilde G_4]
   \;=\;
   \tfrac{1}{2} p_1(T X^8)
   \;\in\;
   H^4
   (
    X^8; \mathbb{R}
   )
   \,.
 $$
\end{prop}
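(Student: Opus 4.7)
The plan is to turn the given factorization through the $\mathrm{Sp}(2)$-equivariant quaternionic Hopf fibration into a reduction of structure on $TX^8$ all the way down to an $\mathrm{Sp}(1)$-principal bundle $P$ on $X^8$, under which both the rank-$8$ tangent bundle and the auxiliary rank-$5$ bundle underlying the twisted $4$-Cohomotopy cocycle split off trivial summands sharing a common rank-$4$ piece $\widehat V$; then Chern--Weil theory combined with the integral-lattice computation of Lemma~\ref{BSpinRationalHomotopy} identifies $G_4 = \tfrac{1}{2}\rchi_4(\widehat V)$ with $\tfrac{1}{4} p_1(\widehat V) = \tfrac{1}{4} p_1(\nabla_\tau)$.

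First I would apply Lemma~\ref{fibBiota} to the coset realization $S^7 \cong \mathrm{Sp}(2)/\mathrm{Sp}(1)$ (with $\mathrm{Sp}(1)$ the stabilizer $\{\mathrm{diag}(q,1) : q\in\mathrm{Sp}(1)\}\subset\mathrm{Sp}(2)$ of $(0,1)\in\mathbb{H}^2$), identifying $S^7 \sslash \mathrm{Sp}(2) \simeq B\mathrm{Sp}(1)$, so that the given factorization translates to a lift $\hat c \colon X^8 \to B\mathrm{Sp}(1)$ of $\tau$ along the delooped inclusion $q\mapsto\mathrm{diag}(q,1)$. Writing $P$ for the resulting $\mathrm{Sp}(1)$-principal bundle on $X^8$ and $\widehat V := P\times_{\mathrm{Sp}(1)}\mathbb{H}$ for its rank-$4$ real associated bundle via left quaternionic multiplication, the restriction of the defining $8$-dimensional representation on $\mathbb{H}^2$ decomposes as $\mathbb{H}\oplus\mathbb{H}_{\mathrm{triv}}$, giving the Whitney splitting $TX^8 \cong \widehat V \oplus \mathbb{R}^4_{X^8}$. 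A parallel computation, using the standard model of $\mathrm{Sp}(2)\cong\mathrm{Spin}(5)$ acting by $g\cdot A = gAg^{*}$ on traceless Hermitian quaternionic $2\times 2$ matrices $A = \bigl(\begin{smallmatrix} a & b \\ \bar b & -a\end{smallmatrix}\bigr) \in \mathbb{R}\oplus\mathbb{H}\cong\mathbb{R}^5$, shows that $\mathrm{diag}(q,1)$ acts as $(a,b)\mapsto (a,qb)$, so the rank-$5$ bundle $V$ underlying the twisted cocycle splits as $V\cong \widehat V \oplus \mathbb{R}_{X^8}$, providing a canonical global section $s\colon X^8 \to E = S(V)$ of the effective $4$-sphere bundle.

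Next, Proposition~\ref{SectionsOfRationalSphericalFibrations}(ii) identifies $G_4$ at the Chern--Weil level on $E$ as $\pi^{*} G_4 = \tfrac{1}{2}\rchi_4(\nabla_{\hat\tau})$ for the tautological rank-$4$ complement $\widehat V_E$ of the trivial summand of $\pi^{*}V$; pulling back along $s$ and using $s^{*}\widehat V_E = \widehat V$ gives $G_4 \equiv \tfrac{1}{2}\rchi_4(\nabla_{\widehat V})$ modulo exact forms. To convert this Euler form into a Pontrjagin form, I would restrict the universal identities $\vartheta^{*}(\tfrac{1}{2} \rchi_4 + \tfrac{1}{4} p_1) = \tfrac{1}{4} p_1 \otimes 1$ and $\vartheta^{*}(\tfrac{1}{2} p_1) = \tfrac{1}{4} p_1 \otimes 1 + 1 \otimes \tfrac{1}{4} p_1$ from Lemma~\ref{BSpinRationalHomotopy} along the first-factor inclusion $\mathrm{Spin}(3)\hookrightarrow \mathrm{Spin}(3)\times\mathrm{Spin}(3)\cong \mathrm{Spin}(4)$, yielding $\rchi_4(\widehat V) = \tfrac{1}{2} p_1(\widehat V)$. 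Since the trivial $\mathbb{R}^4$-summand of $TX^8$ contributes zero to the first Pontrjagin form, $p_1(\nabla_{\widehat V}) = p_1(\nabla_\tau)$, and therefore $G_4 = \tfrac{1}{4} p_1(\nabla_\tau)$ as claimed. The consequent statement then follows by substituting into the defining relation $\widetilde G_4 = G_4 + \tfrac{1}{4} p_1(\nabla)$ of \eqref{Shifted4Flux}, yielding $[\widetilde G_4] = \tfrac{1}{2} p_1(TX^8)\in H^4(X^8;\mathbb{R})$.

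The main technical obstacle is the identification, in the first paragraph, that the \emph{same} rank-$4$ $\mathrm{Sp}(1)$-associated bundle $\widehat V$ appears in both splittings of $TX^8$ and of $V$: this requires tracing the exceptional isomorphism $\mathrm{Sp}(2)\cong\mathrm{Spin}(5)$ carefully enough to confirm that the stabilizer $\mathrm{Sp}(1)\subset \mathrm{Sp}(2)$ of $(0,1)\in\mathbb{H}^2$ embeds into $\mathrm{Spin}(5)$ precisely as one $\mathrm{Spin}(3)$-factor of $\mathrm{Spin}(4)\subset\mathrm{Spin}(5)$ acting by left quaternionic multiplication on $\mathbb{H}$. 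Once this common identification is fixed, the remainder of the argument is a formal consequence of Proposition~\ref{SectionsOfRationalSphericalFibrations} together with the integral structure of $H^{\bullet}(B\mathrm{Spin}(4);\mathbb{Z})$ recorded in Lemma~\ref{BSpinRationalHomotopy}.
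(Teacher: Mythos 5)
Your proof is correct and follows essentially the same route as the paper's. Both arguments hinge on translating the factorization through $h_{\mathbb{H}}\sslash\mathrm{Sp}(2)$ into a reduction of the tangential structure group to the first $\mathrm{Sp}(1) = \{\mathrm{diag}(q,1)\}\subset\mathrm{Sp}(1)\times\mathrm{Sp}(1)\subset\mathrm{Sp}(2)$, and then exploiting the cohomological identities of Lemma~\ref{BSpinRationalHomotopy} on $B\mathrm{Spin}(4)\simeq B(\mathrm{Sp}(1)\times\mathrm{Sp}(1))$ restricted along the first-factor inclusion. The paper does this abstractly: it observes via Prop.~\ref{hHsslash} that the lift is exactly an $\mathrm{Sp}(1)$-reduction, then notes $\widetilde\Gamma_4 - \tfrac{1}{2}p_1 = -\tfrac{1}{4}p_1^{(2)}$ pulls back to zero under $(h_\mathbb{H}\sslash\mathrm{Sp}(2))^*$ precisely because the second factor dies, and so $[\widetilde G_4] = \tfrac{1}{2}p_1(TX^8)$. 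You instead unwind this to the level of actual vector bundles: you produce the Whitney splittings $TX^8\cong\widehat V\oplus\mathbb{R}^4$ and $V\cong\widehat V\oplus\mathbb{R}$ with the common rank-$4$ piece $\widehat V = P\times_{\mathrm{Sp}(1)}\mathbb{H}$, use the canonical section $s$ given by the trivial summand to extract $[G_4] = \tfrac{1}{2}\rchi_4(\widehat V)$ from Prop.~\ref{SectionsOfRationalSphericalFibrations}, and then read off $\rchi_4 = \tfrac{1}{2}p_1$ on the first $\mathrm{Sp}(1)$-factor from the same universal formulas the paper uses. This is a legitimate and more self-contained unwinding of the paper's argument rather than a genuinely different one: it avoids citing Prop.~\ref{CohomologicalCharacterizationOfTwistedCohomotopyDegree4} as a black box, at the cost of carrying around the explicit quaternionic computations (the action $(a,b)\mapsto(a,qb)$ on $\mathbb{R}\oplus\mathbb{H}\cong\mathbb{R}^5$ and $(x_1,x_2)\mapsto(qx_1,x_2)$ on $\mathbb{H}^2$) that the paper relegates to Prop.~\ref{hHsslash}, Lemma~\ref{Spin4Action} and the triality dictionary. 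The paper's key equation $\widetilde\Gamma_4 - \tfrac{1}{2}p_1 = -\tfrac{1}{4}p_1^{(2)}$ and your $\rchi_4|_{\mathrm{Sp}(1)} = \tfrac{1}{2}p_1|_{\mathrm{Sp}(1)}$ are the same identity read in two equivalent ways.

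One small point of phrasing: you describe $\widehat V_E$ as ``the tautological rank-$4$ complement \dots of the trivial summand of $\pi^*V$''; in the paper's convention (Prop.~\ref{SectionsOfRationalSphericalFibrations}) $\widehat V_E$ is the orthogonal complement of the \emph{tautological section} of $\pi^* V$ over $E$, not of the trivial summand. These two complements coincide only along $s$, which is exactly what makes $s^*\widehat V_E = \widehat V$, so your conclusion is right, but the description is off.
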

\begin{proof}
  By Prop. \ref{hHsslash}, the cocycle $c$
  in degree 4 twisted Cohomotopy itself
  is equivalently further reduction of $\tau$ to
  topological
  $
   \mathrm{Sp}(1)
   \!\boldsymbol{\cdot}\!
   \mathrm{Sp}(1)
   \!\boldsymbol{\cdot}\!
   \mathbb{Z}_2
   $-structure
   (Example \ref{Sp1CentralProductGroups}).
   Similarly, the assumed factorization through degree-7 Cohomotopy
   is equivalently existence of yet further reduction to topological
   $
   \mathrm{Sp}(1)
   \boldsymbol{\cdot}
   \mathbb{Z}_2
   $-structure,
   via inclusion of the first factor
   $$
     \xymatrix@R=1.3em{
       S^7
       \!\sslash\!
       \mathrm{Sp}(2)
       \ar[rr]^-{ \simeq }
       \ar[dd]_{
         h_\mathbb{H}
         \sslash
           \mathrm{Sp}(2)
       }
       &&
       B
       \big(
         \mathrm{Sp}(1)
           \boldsymbol{\cdot}
         \mathbb{Z}_2
       \big)
       \ar[dd]|-{
         B(
           [q, 1] \mapsto [q,1,1]
         )
       }
       \ar@{}[r]|-{\simeq}
       &
       B \mathrm{Sp}(1)
       \ar[dd]|-{ B( q \mapsto (q,1) )  }
       \\
       \\
       S^4
       \!\sslash\!
         \mathrm{Sp}(2)
       \ar[rr]_-{ \simeq }
       &&
       B
       \big(
         \mathrm{Sp}(1)
           \boldsymbol{\cdot}
         \mathrm{Sp}(1)
           \boldsymbol{\cdot}
         \mathbb{Z}_2
       \big)
       \ar@{}[r]|-{ \simeq }
       &
       B
       \big(
         \mathrm{Sp}(1)
         \times
         \mathrm{Sp}(1)
       \big)
     }
   $$
   This means that the pullback along the equivariant
   quaternionic Hopf fibration is given by projection to the first component $p_1^{(1)}$
   (in the notation of Lemma \ref{FreeIntegralCohomologyOfBSpin4Spin3}).
   But, by \eqref{FirstPontrjaginFromBSpin4ToBSpin3BSpin3} in
   Lemma \ref{BSpinRationalHomotopy}, the difference between the
   universal avatar class $\widetilde \Gamma_4$ of the half integral shifted flux
   (Prop. \ref{CohomologicalCharacterizationOfTwistedCohomotopyDegree4}
   and Remark \ref{CFieldUniversal}) and the class $\tfrac{1}{2}p_1$
   has no such first component:
   $$
   \xymatrix{
     \widetilde \Gamma_4 - \tfrac{1}{2}p_1
     \;=\;
     -\tfrac{1}{4}p^{(2)}_1
     \ar@{|->}[rr]^-{(h_{\mathbb{H}}\sslash \mathrm{Sp}(2))^\ast
     }
     &&
     0
     }.
   $$
   With this, the statement follows from \eqref{DiagramProvingHalfIntegralFluxQuantization}
   in the proof of Prop. \ref{CohomologicalCharacterizationOfTwistedCohomotopyDegree4}.
\end{proof}

\medskip

\subsection{Integral equation of motion}
\label{IntegralEquationOfMotionFromCohomotopy}

We prove that \hyperlink{HypothesisH}{Hypothesis H}
implies the C-field's ``integral equation of motion''
\eqref{SteenrodSquareVanishes}.

\medskip

\begin{prop}[$Sq^2$-closedness of shifted 4-flux]
  \label{DerivingIntegralEquationOfMotion}
  If $G_4$ is a differential 4-form on
  an 8-manifold $X^8$ as in Def. \ref{The8Manifold}
  and satisfying Hypothesis H (Def. \ref{HypothesisHFor8Manifolds}),
  then the class of
  the shifted form $\widetilde G_4 := G_4 + \tfrac{1}{4}p_1(\nabla)$,
  which is integral by Prop. \ref{CohomologicalCharacterizationOfTwistedCohomotopyDegree4},
  is annihilated  by (mod 2 reduction followed by) the second Steenrod operation:
  \begin{equation}
    \label{Spin4Sq2AnnihilatestildeG4}
    \mathrm{Sq}^2
    \big(
      [\widetilde G_4]
    \big)
    \;=\;
    0\;.
  \end{equation}
\end{prop}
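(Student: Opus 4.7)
The plan is to exhibit $[\widetilde G_4]$ as the pullback of an integral class on $B\mathrm{Spin}(4)$ whose second Steenrod square vanishes at the universal level, and then conclude by naturality of $\mathrm{Sq}^2$.

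First, I will lift the classifying data to a direct product. By Hypothesis~H, together with the construction in the proof of Prop.~\ref{CohomologicalCharacterizationOfTwistedCohomotopyDegree4}, the twisted Cohomotopy cocycle refines the tangential structure to a classifying map $\tilde c : X^8 \to B(\mathrm{Spin}(4)\boldsymbol{\cdot}\mathrm{Spin}(3))$. The standing assumption $H^2(X^8;\mathbb{Z}_2)=0$ (Def.~\ref{The8Manifold}) forces the obstruction class $\varpi(\tilde c) \in H^2(X^8;\mathbb{Z}_2)$ of Def.~\ref{Epsilon} to vanish, and hence Lemma~\ref{ObstructionToDirectProductStructure} produces a lift $\tilde c' : X^8 \to B\bigl(\mathrm{Spin}(4)\times\mathrm{Spin}(3)\bigr)$ through the $B\mathbb{Z}_2$-covering $B(\mathrm{Spin}(4)\times\mathrm{Spin}(3)) \twoheadrightarrow B(\mathrm{Spin}(4)\boldsymbol{\cdot}\mathrm{Spin}(3))$. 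Pulling back the universal integral class $\widetilde\Gamma_4 + \tfrac{1}{2}p_1^{(3)}$ of Lemma~\ref{IntegralUniversalCfieldOnBSpin4xSpin3} along $\tilde c'$ and subtracting $\tau^*\bigl(\tfrac{1}{2}p_1^{(3)}\bigr)$, which is now integrally realised on $X^8$ as $\tfrac{1}{2}p_1$ of the lifted $\mathrm{Spin}(3)$-bundle, the argument of Prop.~\ref{CohomologicalCharacterizationOfTwistedCohomotopyDegree4} yields
\[
  [\widetilde G_4] \;=\; (\tilde c'_4)^*\bigl(\widetilde\Gamma_4\bigr) \;\in\; H^4(X^8;\mathbb{Z}),
\]
where $\tilde c'_4 \colon X^8 \to B\mathrm{Spin}(4)$ is the composition of $\tilde c'$ with the first projection.

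Next I will compute $\mathrm{Sq}^2(\widetilde\Gamma_4)$ universally on $B\mathrm{Spin}(4)$. Via the exceptional isomorphism $\vartheta : \mathrm{Spin}(3)\times\mathrm{Spin}(3) \xrightarrow{\;\simeq\;}\mathrm{Spin}(4)$ from Lemma~\ref{BSpinRationalHomotopy}(ii), the class pulls back as $\vartheta^*\widetilde\Gamma_4 = \tfrac{1}{4}p_1\otimes 1 = \mathrm{pr}_1^*\,y_4$, where $y_4 := \tfrac{1}{4}p_1$ is the degree-$4$ generator of $H^\bullet(B\mathrm{Spin}(3);\mathbb{Z})$. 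Since $\mathrm{Spin}(3)\simeq S^3$, the mod-$2$ cohomology ring $H^\bullet\bigl(B\mathrm{Spin}(3);\mathbb{Z}_2\bigr) \simeq \mathbb{Z}_2[y_4]$ is concentrated in degrees divisible by $4$; in particular $H^6\bigl(B\mathrm{Spin}(3);\mathbb{Z}_2\bigr)=0$, so $\mathrm{Sq}^2(y_4)=0$. By naturality of $\mathrm{Sq}^2$ and injectivity of $\vartheta^*$, this yields $\mathrm{Sq}^2(\widetilde\Gamma_4)=0$ in $H^6(B\mathrm{Spin}(4);\mathbb{Z}_2)$.

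Combining the two steps via naturality of Steenrod squares along $\tilde c'_4$,
\[
  \mathrm{Sq}^2\bigl([\widetilde G_4]\bigr)
  \;=\; (\tilde c'_4)^*\bigl(\mathrm{Sq}^2\widetilde\Gamma_4\bigr)
  \;=\; 0,
\]
as claimed. The main technical obstacle is the first step: the description of $[\widetilde G_4]$ in Prop.~\ref{CohomologicalCharacterizationOfTwistedCohomotopyDegree4} naturally lives on the central-product classifying space $B(\mathrm{Spin}(4)\boldsymbol{\cdot}\mathrm{Spin}(3))$, where $\widetilde\Gamma_4$ is not integral on its own, and one must carefully trade the auxiliary summand $\tfrac{1}{2}p_1^{(3)}$ against the $\tfrac{1}{2}p_1$ of the Spin lift of the $\mathrm{SO}(3)$-factor bundle in order to land on $B\mathrm{Spin}(4)$ where $\widetilde\Gamma_4$ is integral. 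Once that bookkeeping is done, the universal vanishing of $\mathrm{Sq}^2(\widetilde\Gamma_4)$ is essentially a consequence of the cohomological sparsity of $B\mathrm{Spin}(3)\simeq BS^3$.
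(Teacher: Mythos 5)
Your proof is correct, and it arrives at the same conclusion by a genuinely different route in the final step.

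The first step — realizing $[\widetilde G_4]$ as the pullback of the universal class $\widetilde\Gamma_4\in H^4(B\mathrm{Spin}(4);\mathbb{Z})$ along a classifying map to $B\mathrm{Spin}(4)$ obtained from the twisted $4$-Cohomotopy cocycle — is essentially what the paper does, via Prop.~\ref{hHsslash} and the triality Prop.~\ref{QuaternionicSubgroupTriality}. Your version is slightly more elaborate than necessary: the paper's proof works directly with an $\mathrm{Sp}(2)$-structure (per Remark~\ref{AssumptionsOn8Manifold}(iv)), so the cocycle $c$ lands straight in $S^4\!\sslash\!\mathrm{Sp}(2)\simeq B\mathrm{Spin}(4)$ and no lifting through $B(\mathrm{Spin}(4)\times\mathrm{Spin}(3))$ or trading of $\tfrac{1}{2}p_1^{(3)}$ is required. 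Your handling of that bookkeeping via Lemma~\ref{ObstructionToDirectProductStructure} is sound but can be bypassed.

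Where you genuinely diverge is the closing step. The paper disposes of $\mathrm{Sq}^2(\widetilde\Gamma_4)=0$ by citing \cite[Cor.~4.2(1)]{CV98a}, whereas you give a direct, self-contained computation: via the exceptional isomorphism $\vartheta\colon\mathrm{Spin}(3)\times\mathrm{Spin}(3)\xrightarrow{\simeq}\mathrm{Spin}(4)$ and the K\"unneth formula, $H^\bullet(B\mathrm{Spin}(4);\mathbb{Z}_2)\simeq\mathbb{Z}_2[u_1,u_2]$ with $\lvert u_i\rvert=4$ is concentrated in degrees divisible by $4$, hence $H^6(B\mathrm{Spin}(4);\mathbb{Z}_2)=0$ and $\mathrm{Sq}^2$ vanishes on all of $H^4$ for degree reasons. (In fact you can skip the observation that $\widetilde\Gamma_4$ is pulled back from one $B\mathrm{Spin}(3)$ factor: the vanishing of $H^6$ of $B\mathrm{Spin}(4)$ already kills $\mathrm{Sq}^2$ of \emph{every} degree-$4$ class, so the specific identification of $\widetilde\Gamma_4$ is irrelevant to this step.) This elementary argument buys you independence from the external reference and makes transparent that \eqref{Spin4Sq2AnnihilatestildeG4} is automatic once one has the reduction to $\mathrm{Spin}(4)$-structure, since $B\mathrm{Spin}(4)\simeq\mathbb{HP}^\infty\times\mathbb{HP}^\infty$ has no mod-$2$ cohomology in degree $6$.
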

\begin{proof}
  By Prop. \ref{hHsslash} and under triality  (Prop. \ref{QuaternionicSubgroupTriality}) the
  $\tau$-twisted Cohomotopy cocycle  exhibits reduction to $\mathrm{Spin}(4)$-structure:
    $$
    \xymatrix@C=3em{
      &&
      S^4 \!\sslash\! \mathrm{Sp}(2)
      \ar[d]
      \ar[rr]^-{ \simeq }
      &&
      B \mathrm{Spin}(4)
      \ar[d]
      \\
      &&
      B \mathrm{Sp}(2)
      \ar[d]
      \ar[rr]^-{ \simeq }
      &&
      B \mathrm{Spin}(5)
      \ar[d]
      \\
      X^8
      \ar@{-->}@/^1pc/[uurr]^>>>>>{
        c
      }|-{
        \mbox{
          \tiny
          \color{blue}
          \begin{tabular}{c}
            cocycle in
            \\
            $\tau$-twisted
            \\
            Cohomotopy
          \end{tabular}
        }
      }
      \ar@{->}[urr]^-{ \tau }
      \ar[rr]_-{ T X^8 }
      &&
      B \mathrm{Spin}(8)
      \ar[rr]^-{ \simeq }_-{ B \mathrm{tri} }
      &&
      B \mathrm{Spin}(8)
    }
  $$
  But, by Prop. \ref{CohomologicalCharacterizationOfTwistedCohomotopyDegree4},
  the class of $\widetilde G_4$ is the pullback of the class
  $\widetilde \Gamma_4 \in H^4( B \mathrm{Spin}(4); \mathbb{Z})$
  \eqref{IntegralCohomologyOfSpin4}
  along this reduction:
  $$
    [\widetilde G_4]
    \;=\;
    ( B \mathrm{tri} \circ c)^\ast
    \big(
      \widetilde \Gamma_4
    \big)
    \;\in\;
    H^4(
      X^8; \mathbb{Z}
    )
    \,.
  $$
  Under these identifications, the statement
  follows upon using \cite[Cor. 4.2 (1)]{CV98a}, where the element corresponding to
  $\widetilde \Gamma_4$ is denoted $s$,
  while the class $[\widetilde G_4]$ is denoted $S$.
\end{proof}

\medskip

\subsection{7-Flux quantization}
\label{PageCharge}

We prove that \hyperlink{HypothesisH}{Hypothesis H} implies
integrality \eqref{HalfIntegral7Flux} of the Page 7-flux \eqref{PageCharge7Form}.

\medskip
The main result is Theorem\ref{IntegralityOfPageCharge} below.
In order to formulate this,
the key ingredient in the expression for the page charge is
a 3-flux $H_3$ which locally trivializes the C-field 4-flux $G_4$.
The homotopy-theoretic manifestation of this local trivialization
is the homotopy pullback of the quaternionic Hopf fibration.
We first introduce this now in Def. \ref{ExtendedSpacetime}, and
then prove a few technical lemmas about it.

\medskip
\begin{defn}[Extended spacetime]
  \label{ExtendedSpacetime}
  Let $X^8$ be an 8-manifold as in Def. \ref{The8Manifold},
  equipped with topological $\mathrm{Sp}(2)$-structure $\tau$
  and with a cocycle in $c$ in $\tau$-twisted Cohomotopy as in
  Def. \ref{HypothesisHFor8Manifolds}.
  Then we say that the corresponding
  \emph{extended spacetime} is the  fibration $\widehat X^8 \to X^8$
  arising as the homotopy pullback
  of the $\mathrm{Sp}(2)$-equivariant quaternionic Hopf fibration
  (Prop. \ref{hHsslash}) along $c$:
  \begin{equation}
    \label{PullExtendedSpacetime}
    \xymatrix{
      \widehat X^7
      \ar@{}[drr]|-{
        \mbox{
          \tiny
          (pb)
        }
      }
      \ar[d]_{ c^\ast(h_{\mathbb{H}}\sslash \mathrm{Sp}(2)) }
      \ar[rr]
      &&
      S^7 \!\sslash\! \mathrm{Sp}(2)
      \ar[d]^-{
        h_{\mathbb{H}}
        \sslash
        \mathrm{Sp}(2)
      }
      \\
      X^8
      \ar@{-->}[rr]|-{\; c \; }
      \ar[dr]_-{\tau}
      &&
      S^4 \!\sslash\! \mathrm{Sp}(2)
      \ar[dl]
      \\
      &
      B \mathrm{Sp}(2)
    }
      \end{equation}
\end{defn}

\begin{remark}[Nature of the extended spacetime]
  \label{NatureOfExtendedSpacetime}
\item  {\bf (i)}
  The extended spacetime $\widehat X^8$ in Def. \ref{ExtendedSpacetime}
  is an $S^3$-fibration over $X^8$, since the homotopy fiber
  of $h_{\mathbb{H}} \sslash \mathrm{Sp}(2)$ over any point
  is $S^3$, by the pasting law for homotopy pullbacks:
  $$
    \xymatrix{
      S^3
      \ar@{}[drr]|-{
        \mbox{
          \tiny
          (pb)
        }
      }
      \ar[d]
      \ar[rr]
      &&
      S^7
      \ar[rr]
      \ar[d]^-{
        h_{\mathbb{H}}
      }
      \ar@{}[drr]|-{
        \mbox{
          \tiny
          (pb)
        }
      }
      &&
      S^7 \!\sslash\! \mathrm{Sp}(2)
      \ar[d]^-{
        h_{\mathbb{H}}
        \sslash
        \mathrm{Sp}(2)
      }
      \\
      \ast
      \ar[rr]
      &&
      S^4
      \ar[d]
      \ar[rr]
      \ar@{}[drr]|-{
        \mbox{
          \tiny
          (pb)
        }
      }
      &&
      S^4 \!\sslash\! \mathrm{Sp}(2)
      \ar[d]
      \\
      &&
      \ast
      \ar[rr]
      &&
      B \mathrm{Sp}(2)
    }
  $$
  As such, this is the incarnation in non-rational
  parameterized homotopy theory of the
  rational superspace $S^3$-fibration
  over 11-dimensional super-spacetime
  discussed in detail in \cite{Higher-T}\cite{PLB},
  which is classified by the bifermionic component  $\mu_{{}_{\rm M2}}$
  of the C-field super flux form
  \cite[p. 12]{FSS13}\cite[(2.1)]{FSS15}\cite{FSS19d}:
\begin{equation}
  \label{SuperRationalExendedSpacetime}
  \xymatrix@R=1.3em{
    \widehat{
      \mathbb{T}^{10,1\vert \mathbf{32}}
    }
    \ar[rr]^-{
      \mu_{{}_{\rm M5}}
      \;+\;
      h_3 \wedge \mu_{{}_{\rm M2}}
    }
    \ar[dd]
    \ar@{}[ddrr]|-{
      \mbox{
        \tiny
        (pb)
      }
    }
    &&
    S^7_{\mathbb{R}}
    \ar[dd]^{
      (
        h_{\mathbb{H}}
      )_{\mathbb{R}}
    }
    \\
    \\
    \mathbb{T}^{10,1\vert \mathbf{32}}
    \ar[rr]^-{
      (
        \mu_{{}_{\rm M2}},\;
        \mu_{{}_{\rm M5}}
      )
      }
    \ar[dr]_-{ \tau }
    &&
    S^4_{\mathbb{R}}
    \ar[dl]
    \\
    &
    K(\mathbb{R},4)
  }
\end{equation}
 \item {\bf (ii)}
  By the universal property of homotopy pullbacks,
  the extended spacetime $\widehat{X}$ in
  Def. \ref{ExtendedSpacetime} is the classifying space for maps $\phi$ to $X$   equipped with
  a cocycle $\widehat c$ in degree 7 twisted Cohomotopy that exhibits the  degree 4 twisted
  Cohomotopy cocycle $\phi^\ast(c)$ as factoring
  through the quaternionic Hopf fibration, via a homotopy $H_3$:
  \begin{equation}
    \label{ExtendedSpacetimeAsClassifyingSpace}
    \raisebox{30pt}{
    \xymatrix@C=5em{
      \widehat Q_{M5}
      \ar@/^1pc/[drrr]^-{ \widehat c }_>>>>>{\ }="s"
      \ar@/_1pc/[ddr]_-{\phi}
      \ar@{-->}[dr]|-{ (\phi,\widehat{c}, H_3) }
      \\
      &
      \widehat X^8
      \ar[d]
      \ar[rr]|>>>>>>>>>>>>>>>{ \phantom{AAA} }
      &&
      S^7 \!\sslash\! \mathrm{Sp}(2)
      \ar[d]^-{
        h_{\mathbb{H}}
        \sslash
        \mathrm{Sp}(2)
      }
      \\
      &
      X^8
      \ar@{->}[rr]^-{ c }^<<<{\ }="t"
      \ar[dr]_-{\tau}
      &&
      S^4 \!\sslash\! \mathrm{Sp}(2)
      \ar[dl]
      \\
      &
      &
      B \mathrm{Sp}(2)
      \ar@/^.4pc/@{=>}^{H_3} "s"; "t"
    }
    }
  \end{equation}

  By Lemma \ref{VanishingBraneChargeInTermsOfPontrjaginThom}
  and Prop. \ref{PTVanishing4Flux},
  factorization through the quaternionic Hopf fibration is
  the intrinsic cohomotopical meaning of
  the concept of ``vanishing 4-flux'', and here this
  is reflected by the trivializing homotopy $H_3$.
  But this means that,
  under \hyperlink{HypothesisH}{\it Hypothesis H},
  the extended spacetime of Def. \ref{ExtendedSpacetime}
  is really
  the classifying space for fundamental M5-brane sigma-model configurations
  in $X$ with extended worldvolume $\hat Q_{\rm M5}$
  Hence the extended spacetime $\widehat X^8$
  is the classifying space space for the fluxed M5-brane sigma model in the
  M2-brane background $X^8$.
  This is discussed in detail in
  \cite{FSS19c}.
  For the super-rational analog \eqref{SuperRationalExendedSpacetime}
  this was discussed in
  in \cite[Rem. 3.11]{FSS13}\cite[p. 4]{FSS15}.

\end{remark}

Next we characterize,
in Prop. \ref{DifferentialFormDataOnExtendedSpacetime} below,
the differential form data
encoded in \eqref{ExtendedSpacetimeAsClassifyingSpace}.
For that we need the following two lemmas.
The statement of Lemma \ref{SullivanModelOfHopfFibration} is standard
but rarely made fully explicit.
We spell it out since it is crucial
for our new result, Lemma \ref{SullivanModelOfSp2EquivariantHopfFibration}.
For background on Sullivan models see e.g. \cite[Sec. 12]{FHT00}.
\

\begin{lemma}[Sullivan model of the Hopf fibration]
  \label{SullivanModelOfHopfFibration}
  The Sullivan model of the quaternionic Hopf fibration $h_{\mathbb{H}}$,
  with explicit normalization of its generators,
  is:
  $$
    \xymatrix@R=1.5em{
      S^7
      \ar[dd]^-{h_{\mathbb{H}}}
      &&
      \mathbb{R}
      [
        \omega_7
      ]
      /
      (
        {\begin{aligned}
          d \omega_7 & = 0
        \end{aligned}} \!\!\!\!\!\!\!
      )
      \ar@{<-}[dd]_{
        ( h_{\mathbb{H}})^\ast
      }^{
        \mbox{
          \footnotesize
          $
          {\begin{aligned}
            \omega_4 & \mapsto 0
            \\
            \omega_7 & \mapsto \omega_7
          \end{aligned}}
          $
        }
      }
      &
      {\begin{aligned}
        \big\langle \omega_7, [S^7]\big\rangle
        &
        =
        1
      \end{aligned}}
      \\
      \\
      S^4
      &&
      \mathbb{R}
      [
        \omega_4
        ,
        \omega_7
      ]
      \Big/
      \left(
        {\begin{aligned}
          d \omega_4 & = 0
          \\
          d \omega_7 & = - \omega_4 \wedge \omega_4
        \end{aligned}}
      \right)
      &
      {\begin{aligned}
        \big\langle \omega_4, [S^4]\big\rangle
        &
        =
        1
      \end{aligned}}
    }
  $$
\end{lemma}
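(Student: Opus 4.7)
The plan is to carry out the proof in three stages. First, I will determine the algebraic shape of any dgc-algebra morphism $\mathrm{CE}(\mathfrak{l}S^4)\to \mathrm{CE}(\mathfrak{l}S^7)=\mathbb{R}[\omega_7]$ modelling $h_{\mathbb{H}}^*$. The target is concentrated in degrees $0$ and $7$, so the degree-$4$ generator must go to zero; this is consistent with $H^4(S^7)=0$. The remaining generator $\omega_7\in \mathrm{CE}(\mathfrak{l}S^4)$ then must be sent to a closed element of degree $7$ in $\mathbb{R}[\omega_7]$, which is forced to be of the form $\lambda\cdot\omega_7$ for some $\lambda\in\mathbb{R}$. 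Thus the Sullivan model of $h_{\mathbb{H}}^*$ is determined up to a single real parameter $\lambda$, and the content of the lemma is the identification $\lambda=1$ with the stated fundamental-class normalizations.

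Second, to pin down $\lambda$, I would realize $h_{\mathbb{H}}\colon S^7\to S^4$ as an $S^3$-principal bundle and pass through its relative Sullivan model. Concretely, enlarge $\mathrm{CE}(\mathfrak{l}S^4)$ by a new generator $\alpha_3$ of degree $3$ with $d\alpha_3=\omega_4$, encoding that $\omega_4$ serves as the Euler class of the $S^3$-fibration. A direct computation in the resulting dgc-algebra $\Lambda(\omega_4,\omega_7,\alpha_3)$ shows that $\omega_4$ becomes a coboundary, that $\omega_7+\alpha_3\wedge\omega_4$ is closed, and that the cohomology ring reduces to $\mathbb{R}\oplus\mathbb{R}\cdot[\omega_7+\alpha_3\wedge\omega_4]$, matching $H^\bullet(S^7;\mathbb{R})$. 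Hence the assignment $\omega_4\mapsto 0$, $\alpha_3\mapsto 0$, $\omega_7\mapsto\omega_7$ defines a quasi-isomorphism $\Lambda(\omega_4,\omega_7,\alpha_3)\xrightarrow{\sim} \mathbb{R}[\omega_7]$. Pre-composing with the tautological inclusion $\mathrm{CE}(\mathfrak{l}S^4)\hookrightarrow\Lambda(\omega_4,\omega_7,\alpha_3)$ yields a model of $h_{\mathbb{H}}^*$ of exactly the form stated in the lemma.

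Third, I would verify the fundamental-class normalization by fiber integration along the $S^3$-fiber of $h_{\mathbb{H}}$. Under the quasi-isomorphism to de Rham forms, $\alpha_3$ corresponds to a form that restricts to a unit-volume form on each fiber, so the fiber integration $\pi_!$ sends $\omega_7\mapsto 0$ and $\alpha_3\wedge\omega_4\mapsto\omega_4$. Combined with $\int_{S^4}\omega_4=1$, Fubini gives $\int_{S^7}(\omega_7+\alpha_3\wedge\omega_4)=\int_{S^4}\omega_4=1$, confirming that the closed generator $\omega_7+\alpha_3\wedge\omega_4$ of $H^7$ on the upstairs model corresponds under the quasi-isomorphism to the unit-volume generator $\omega_7$ in $\mathrm{CE}(\mathfrak{l}S^7)$. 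The main technical obstacle I anticipate is really this normalization step: one must check carefully that the sign chosen in $d\alpha_3=\omega_4$ (as opposed to $-\omega_4$) is compatible with the sign convention $d\omega_7=-\omega_4\wedge\omega_4$ used in the Sullivan model of $S^4$ and with the orientation conventions entering the fiber integration; once this is done, the claim reduces to standard facts about relative Sullivan models of sphere bundles (cf. \cite[Sec.~15]{FHT00}).
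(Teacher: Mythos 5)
Your proof takes a genuinely different route from the paper's. The paper pins down the coefficient $\lambda$ by invoking the theorem of Andrews--Arkowitz identifying the co-binary part of the Sullivan differential with the dual of the Whitehead product, together with the classical relation $\big[[\mathrm{id}_{S^4}],[\mathrm{id}_{S^4}]\big]_{\mathrm{Wh}} = 2\,[h_{\mathbb{H}}]$; a factor of $2$ from the Whitehead product cancels against a factor of $2$ from evaluating $\omega_4\wedge\omega_4$ on $S^4\wedge S^4$, yielding $\lambda=1$. (The paper's ``alternative'' cofiber argument via Hopf invariant one only gets $\lambda=\pm1$.) You instead build the relative Sullivan model of the $S^3$-bundle $h_{\mathbb{H}}\colon S^7\to S^4$ and pin down $\lambda$ by fiber integration; this is clean, avoids the cancelling factors of $2$, and is entirely internal to the theory of spherical fibrations (FHT \S 15) that the paper already invokes in Prop.~2.5.

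However, there is a real gap that you correctly flag but do not close. Your argument rests on the relative Sullivan model having $d\alpha_3 = \omega_4$ together with the normalization $\langle\alpha_3,[S^3]\rangle = 1$; in the paper's convention (Prop.~2.5, eq.~(32)), the fiberwise-normalized odd generator satisfies $d\alpha_3 = -\rchi$, so your $d\alpha_3 = \omega_4$ amounts to asserting that the rational Euler class of the $S^3$-fibration $h_{\mathbb{H}}$ is $-[\omega_4]$, i.e.\ equal to $-1$ times the generator that integrates to $+1$ over $[S^4]$. With $d\alpha_3 = -\omega_4$ instead, the closed degree-$7$ generator is $\omega_7 - \alpha_3\wedge\omega_4$, fiber integration gives $-1$, and the composite model of $h_{\mathbb{H}}^\ast$ comes out as $\omega_7\mapsto -\omega_7$, contradicting the lemma. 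So the sign of the Euler class is not a technicality to be deferred: it \emph{is} the nontrivial content of the lemma, precisely the analogue of the assertion $\big[[\mathrm{id}],[\mathrm{id}]\big]_{\mathrm{Wh}} = +2\,[h_{\mathbb{H}}]$ (as opposed to $-2$) which the paper imports from Andrews--Arkowitz. To complete your proof you need to supply the Euler-class computation directly, e.g.\ by identifying $h_{\mathbb{H}}$ as the unit-sphere bundle of the tautological quaternionic line bundle over $\mathbb{H}P^1$ and noting that the self-intersection of its zero section is $-1$ (exactly as $c_1$ of the tautological bundle on $\mathbb{C}P^1$ is $-1$), or equivalently by reading off the Euler number from the clutching function. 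Once that single input is established, the rest of your argument is correct and the Fubini step does correctly transfer the normalization from the relative model to the minimal model of $S^7$.
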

\begin{proof}
  One way to see this is with \cite[Theorem 6.1]{AnAr78},  by which, under the identification
  of Sullivan generators with linear duals of homotopy groups, the co-binary component of the
  Sullivan differential equals the linear dual of the Whitehead product,
  $[\; \cdot \;, \cdot \; ]_{\mathrm{Wh}}$:
  $$
    \big[ d \omega \big]\vert_{\wedge^2}
    \;=\;
   -  \big[\; \cdot \;, \; \cdot \;\big]_{\mathrm{Wh}}^\ast( \omega)
    \,.
  $$
 Note that both the Whitehead product gives a factor of 2
  $$
    \big[
      [\mathrm{id}_{S^4}],\, [\mathrm{id}_{S^4}]
    \big]_{\mathrm{Wh}}
    \;=\;
    2 \cdot [h_{\mathbb{H}}]
  $$
  as does the evaluation $\langle \; \cdot \; , \; \cdot \; \rangle$ of the wedge square of $\omega_4$
  (by \cite[top of p. 976]{AnAr78}):
  $$
    \big\langle
      \omega_4 \wedge \omega_4
      ,
      S^4
      \wedge
      S^4
    \big\rangle
    \;=\;
    (-1)^{2 \cdot 2}
    \big\langle
      \omega_4, S^4
    \big\rangle^2
    +
    \big\langle
      \omega_4, S^4
    \big\rangle^2
    \;=\;
    2\,,
  $$
  which hence cancel out.
  See also \cite[Example 1 on p. 178]{FHT00}.

  Alternatively, this follows
  by considering the homotopy cofiber of $h_{\mathbb{H}}$,
  whose Sullivan model is the fiber product
  $$
  \xymatrix@C=9em@R=1.2em{
    &
    \left(
      {\begin{aligned}
        d\omega_4 & = 0
        \\
        d \omega_7 & = h \cdot \omega_4 \wedge \omega_4 + \omega_8
        \\
        d \omega_8 & = 0
      \end{aligned}}
    \right)
    \ar[dl]_-{
      \mbox{
        \footnotesize
        $
        {\begin{aligned}
          \omega_4 & \mapsto \omega_4
          \\
          \omega_7 & \mapsto \omega_7
          \\
          \omega_8 & \mapsto 0
        \end{aligned}}
        $
      }
    }
    \ar[dr]^-{
      \mbox{
        \footnotesize
        $
        {\begin{aligned}
          \omega_4 & \mapsto 0
          \\
          \omega_7 & \mapsto \omega_7
          \\
          \omega_8 & \mapsto \omega_8
        \end{aligned}}
        $
      }
    }
    \\
    \left(
      {\begin{aligned}
        d \omega_4 & = 0
        \\
        d \omega_7 & = h \cdot \omega_4 \wedge \omega_4
      \end{aligned}}
    \right)
    \ar[dr]_{
      \mbox{
        \footnotesize
        $
        {\begin{aligned}
          \omega_4 & \mapsto 0
          \\
          \omega_7 & \mapsto \omega_7
        \end{aligned}}
        $
      }
    }
    &&
    \left(
      {\begin{aligned}
        d \omega_7 & = \omega_8
        \\
        d \omega_8 & = 0
      \end{aligned}}
    \right)
    \ar[dl]^-{
      \mbox{
        \footnotesize
        $
        {\begin{aligned}
          \omega_7 & \mapsto \omega_7
          \\
          \omega_8 & \mapsto 0
        \end{aligned}}
        $
      }
    }
    \\
    &
    \left(
      {\begin{aligned}
        d \omega_7 & = 0
      \end{aligned}}
    \right)
  }
$$
and then using the Hopf invariant one theorem
\cite{Adams60} which implies that
$h = \pm 1$.
\end{proof}
\begin{lemma}[Sullivan model of ${\rm Sp}(2)$-equivariant Hopf fibration]
  \label{SullivanModelOfSp2EquivariantHopfFibration}
  The Sullivan model
  of the $\mathrm{Sp}(2)$-parametrized quaternionic Hopf fibration
  $h_{\mathbb{H}} \!\sslash\! \mathrm{Sp}(2)$
  (Prop. \ref{hHsslash})
  is as shown here:
  \begin{equation}
  \label{hHRationalEquivariant}
  \raisebox{60pt}{
  \xymatrix@C=7pt{
    S^7
      \!\sslash\!
    \mathrm{Sp}(2)
    \ar[dd]|-{
      h_{\mathbb{H}}
      \sslash
      \mathrm{Sp}(2)
    }
    \ar[dr]
    &&&
    \mathrm{CE}
    \big(
      \mathfrak{l}
      B
      \mathrm{Sp}(2)
    \big)
    \otimes
    \mathbb{R}[
      \omega_7
    ]\big/
    \big(
      {\begin{aligned}
        d \omega_7
          & =
        -
        \rchi_8 \!\!\!\!\!\!\!
      \end{aligned}}
    \big)
    \ar@{<-}[dd]^-{
      \mbox{
        \footnotesize
        $
        \begin{aligned}
          \omega_4 & \mapsto \tfrac{1}{4}p_1
          \\
          \omega_7 & \mapsto \omega_7
        \end{aligned}
        $
      }
    }_-{
      (
        h_{\mathbb{H}}
        \sslash
        \mathrm{Sp}(2)
      )^\ast
    }
    &
    \;\;
    \big\langle
      \omega_7, [S^7]
    \big\rangle
    =
    1
    \\
    &
    B \mathrm{Sp}(2)
    \\
    S^4
      \!\sslash\!
    \mathrm{Sp}(2)
    \ar[ur]
    &&&
    \mathrm{CE}
    \big(
      \mathfrak{l}
      B
      \mathrm{Sp}(2)
    \big)
    \otimes
    \mathbb{R}[
      \omega_4, \omega_7
    ]\Bigg/
    \left(
      {\begin{aligned}
        d \omega_4 & = 0
        \\
        d \omega_7
          & =
        - \omega_4 \wedge \omega_4
        \\
        & \;\phantom{=}
        -
        \rchi_8
        +
        \big(
          \tfrac{1}{4}p_1
        \big)^2
      \end{aligned}}
    \right)
    &
    \;\;
    \big\langle
      \omega_4, [S^4]
    \big\rangle
    =
    1
  }
  }
\end{equation}
where $\mathrm{CE}\big( \mathfrak{l} B \mathrm{Sp}(2) \big)$
denotes the Sullivan model of the classifying space of $\mathrm{Sp}(2)$.
\end{lemma}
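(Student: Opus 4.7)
The plan is to construct both relative Sullivan models over $B\mathrm{Sp}(2)$ using the general sphere-bundle formulas of Prop.~\ref{SectionsOfRationalSphericalFibrations}, and then pin down the connecting dgc-algebra morphism $(h_{\mathbb{H}}\sslash\mathrm{Sp}(2))^{\ast}$ by matching it on one side with the unparametrized case (Lemma~\ref{SullivanModelOfHopfFibration}) and on the other side with the known value of the $\mathrm{Sp}(2)$-twisted $4$-Cohomotopy Chern character on cocycles that factor through $h_{\mathbb{H}}$ (Prop.~\ref{PTVanishing4Flux}).

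First I would recall, from Prop.~\ref{CohomologicalCharacterizationOfSpin3TimesSpin5StructureIn8d}, that the rational cohomology $H^{\bullet}(B\mathrm{Sp}(2);\mathbb{R})$ is polynomial on two generators which we may take to be $\tfrac{1}{4}p_{1}$ in degree $4$ and $\rchi_{8}$ in degree $8$; hence $\mathrm{CE}(\mathfrak{l}B\mathrm{Sp}(2))$ is the free graded-commutative $\mathbb{R}$-algebra on these two generators with zero differential, and any relative Sullivan model over it is determined up to isomorphism by its list of new generators and their differentials.

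Next I would apply Prop.~\ref{SectionsOfRationalSphericalFibrations} to each of the two universal sphere bundles. The $S^{7}$-bundle $S^{7}\sslash\mathrm{Sp}(2)\to B\mathrm{Sp}(2)$ is the unit sphere bundle of the rank-$8$ real vector bundle coming from the defining representation $\mathrm{Sp}(2)\hookrightarrow\mathrm{O}(\mathbb{H}^{2})$, so case (i) of that proposition (with $2k+1=7$) adjoins a single new generator $\omega_{7}$ with $d\omega_{7}=-\rchi_{8}$. The $S^{4}$-bundle $S^{4}\sslash\mathrm{Sp}(2)\to B\mathrm{Sp}(2)$ is the unit sphere bundle of the rank-$5$ vector bundle associated with $\mathrm{Sp}(2)\xrightarrow{\simeq}\mathrm{Spin}(5)\twoheadrightarrow\mathrm{SO}(5)$, so case (ii) of the proposition (with $2k=4$) adjoins generators $\omega_{4}$ and $\omega_{7}$ with $d\omega_{4}=0$ and $d\omega_{7}=-\omega_{4}\wedge\omega_{4}+c_{8}$, where $c_{8}\in\mathrm{CE}(\mathfrak{l}B\mathrm{Sp}(2))$ represents $\tfrac{1}{4}p_{2}$ of the rank-$5$ bundle. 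I would then use the $\mathrm{Sp}(2)$-structure identity recorded in \eqref{EulerRelationOnSp2}, together with the triality-pullback formulas of Lemma~\ref{PullbackOfClassesAlongTriality}, to rewrite $c_{8}$ as $(\tfrac{1}{4}p_{1})^{2}-\rchi_{8}$, yielding the displayed differential $d\omega_{7}=-\omega_{4}\wedge\omega_{4}-\rchi_{8}+(\tfrac{1}{4}p_{1})^{2}$.

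Finally I would identify the chain map. The assignment $\omega_{7}\mapsto\omega_{7}$ is forced by restricting the square to any fiber over $B\mathrm{Sp}(2)$, where Lemma~\ref{SullivanModelOfHopfFibration} fixes this value once the fiberwise normalization $\langle\omega_{7},[S^{7}]\rangle=1$ is chosen consistently on both sides. The assignment $\omega_{4}\mapsto\tfrac{1}{4}p_{1}$ is forced because the pullback of $[\omega_{4}]$ must land in $H^{4}(S^{7}\sslash\mathrm{Sp}(2);\mathbb{R})\simeq H^{4}(B\mathrm{Sp}(2);\mathbb{R})=\mathbb{R}\langle\tfrac{1}{4}p_{1}\rangle$ (since the $S^{7}$-fiber carries no degree-$4$ cohomology), so only a rational coefficient is in doubt; that coefficient is then pinned to be $1$ by Prop.~\ref{PTVanishing4Flux}, which identifies the $4$-form extracted from a twisted $4$-Cohomotopy cocycle factoring through $h_{\mathbb{H}}\sslash\mathrm{Sp}(2)$ with exactly $\tfrac{1}{4}p_{1}(\nabla_{\tau})$.

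The main obstacle will be the bookkeeping of normalizations: reconciling the factor of $2$ that Prop.~\ref{SectionsOfRationalSphericalFibrations}(ii) attaches to $G_{4k-1}$, the factor $\tfrac{1}{4}$ entering both $\tfrac{1}{4}p_{1}$ and $\tfrac{1}{4}p_{2}$, and the precise form of the triality/Euler relation on $B\mathrm{Sp}(2)$, so that they combine into the specific expression $(\tfrac{1}{4}p_{1})^{2}-\rchi_{8}$ rather than into a rationally rescaled variant. Once these conventions are tracked consistently, commutativity of the bundle square with the Hopf fibration on fibers, together with uniqueness of relative minimal models, forces both the entire relative Sullivan model and the connecting dgc-algebra morphism to take precisely the form asserted in \eqref{hHRationalEquivariant}.
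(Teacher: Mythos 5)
Your proposal follows essentially the same route as the paper's own proof: set up both relative Sullivan algebras over $\mathrm{CE}(\mathfrak{l}B\mathrm{Sp}(2))$ by the sphere-bundle formulas underlying Prop.~\ref{SectionsOfRationalSphericalFibrations} (really \cite[Sec.~15, Ex.~4]{FHT00}), use the triality formula \eqref{Fractionalp2OfSpin5InTermsOfSp2} (equivalently \eqref{EulerRelationOnSp2} together with Lemma~\ref{PullbackOfClassesAlongTriality}) to re-express $\tfrac{1}{4}p_2$ of the $\mathrm{Spin}(5)$-side as $(\tfrac{1}{4}p_1)^2 - \rchi_8$, pin $\omega_7\mapsto\omega_7$ by restricting to a fiber and invoking the unparametrized Lemma~\ref{SullivanModelOfHopfFibration}, and observe by degree reasons that $\omega_4$ must go to a scalar multiple of $p_1$.

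The one genuine divergence is how you fix that scalar. You appeal to Prop.~\ref{PTVanishing4Flux}, which the paper only offers as a secondary confirmation. The paper's primary argument is more economical and self-contained: once $\omega_4\mapsto c\cdot p_1$ is forced by degree, the requirement that the dgc-morphism commute with the differentials means the image of $d\omega_7 = -\omega_4\wedge\omega_4 - \rchi_8 + (\tfrac{1}{4}p_1)^2$ must equal $-\rchi_8$, i.e.\ $(cp_1)^2 = (\tfrac{1}{4}p_1)^2$, which fixes $c$. Your route via Prop.~\ref{PTVanishing4Flux} is logically sound (that proposition appears earlier and does not depend on the present lemma, so there is no circularity), but it quietly requires choosing a test 8-manifold on which $p_1(\tau)\neq 0$ and where a lift through $h_{\mathbb{H}}\!\sslash\!\mathrm{Sp}(2)$ actually exists, so that the pullback detects the universal coefficient; the differential-compatibility argument avoids any such auxiliary choice. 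As a small point of citation hygiene, the structure of $H^\bullet(B\mathrm{Sp}(2);\mathbb{R})$ as a free polynomial algebra in degrees 4 and 8 is standard and is not actually the content of Prop.~\ref{CohomologicalCharacterizationOfSpin3TimesSpin5StructureIn8d}; and the normalization discrepancy you flag (the factor of $2$ attached to $G_{4k-1}$ in Prop.~\ref{SectionsOfRationalSphericalFibrations}(ii) versus the unscaled $\omega_7$ here) is resolved exactly as you anticipate, by anchoring the generator normalizations to Lemma~\ref{SullivanModelOfHopfFibration}.
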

\begin{proof}
  That the domain and codomain Sullivan algebras
  are as shown follows by {\cite[Sec. 15, Example 4]{FHT00}}
  as in the proof of Prop. \ref{SectionsOfRationalSphericalFibrations},
  where the normalization of the generators is
  from Lemma \ref{SullivanModelOfHopfFibration}.
  Here in the bottom right we translated,
  the summand
  $\tfrac{1}{4}p_2$ \eqref{c4pk}
  from the $\mathrm{Spin}(5)$-structure
  for which Prop. \ref{SectionsOfRationalSphericalFibrations} applies,
  to the given $\mathrm{Sp}(2)$-structure, by pullback
  along $B \mathrm{tri}$ \eqref{TrialityAutomorphismDelooped},
  using \eqref{Fractionalp2OfSpin5InTermsOfSp2}
\begin{equation}
  \label{TrialityPullbackOfp2InSp2Structure}
  \big(
    B \mathrm{tri}
  \big)^\ast
  \big(
    \tfrac{1}{4}p_2
  \big)
  \;=\;
  -\rchi_8
  +
  \big( \tfrac{1}{4}p_1 \big)^2
  \,.
\end{equation}
  Now to see that the map $( h_\mathbb{H} \sslash \mathrm{Sp}(2))^\ast$
  in \eqref{hHRationalEquivariant} is given on generators as claimed,
  we use that over any base point of $B \mathrm{Sp}(2)$ the parameterized quaternionic Hopf fibration
  restricts to the ordinary quaternionic Hopf fibration,
  making the following diagram homotopy commutative:
  $$
    \xymatrix@C=15pt{
      S^7
      \ar[dd]_<<<<<<{ h_{\mathbb{H}} }
      \ar[rrr]
      \ar[dr]
      &&&
      S^7 \!\sslash\! \mathrm{Sp}(2)
      \ar[dd]|<<<<<<{
        h_{\mathbb{H}} \sslash \mathrm{Sp}(2)
      }
      \ar[dr]
      \\
      &
      \ast
      \ar[rrr]|>>>>>>>>>>>>>>{ \phantom{AA} }
      &&&
      B \mathrm{Sp}(2)\;.
      \\
      S^4
      \ar[ur]
      \ar[rrr]
      &&&
      S^4 \!\sslash\! \mathrm{Sp}(2)
      \ar[ur]
    }
  $$
This means that the Sullivan model of $h_{\mathbb{H}}\sslash \mathrm{Sp}(2)$
must be a dashed homomorphism that makes the following diagram of dg-algebras commute:
$$
  \xymatrix{
    \mathbb{R}[
      \omega_7
    ]\big/
    \left(
     {\begin{aligned}
        d \omega_7
          & =
        0
      \end{aligned}\!\!\!\!\!\!}
    \right)
    \ar@{<-}[rr]
    \ar@{<-}[dd]_{
      \mbox{
        \footnotesize
        $
        \begin{aligned}
          \omega_4 & \mapsto 0
          \\
          \omega_7 & \mapsto \omega_7
        \end{aligned}
        $
      }
    }
    &&
    \mathrm{CE}
    \big(
      \mathfrak{l}
      B
      \mathrm{Sp}(2)
    \big)
    \otimes
    \mathbb{R}[
      \omega_7
    ]\big/
    \big(
      {\begin{aligned}
        d \omega_7
          & =
        \rchi_8
      \end{aligned}\!\!\!\!\!\!\!}
    \big)
    \ar@{<--}[dd]^-{
      \mbox{
         \footnotesize
         $
         \begin{aligned}
           \omega_4 & \mapsto \tfrac{1}{4}p_1
           \\
           \omega_7 & \mapsto \omega_7
         \end{aligned}
         $
      }
    }
    \\
    \\
    \mathbb{R}[
      \omega_4, \omega_7
    ]\Big/
    \left(
      {\begin{aligned}
        d \omega_4 & = 0
        \\
        d \omega_7
          & =
        - \omega_4 \wedge \omega_4
      \end{aligned}}
    \right)
    \ar@{<-}[rr]
    &&
    \mathrm{CE}
    \big(
      \mathfrak{l}
      B
      \mathrm{Sp}(2)
    \big)
    \otimes
    \mathbb{R}[
      \omega_4, \omega_7
    ]\Bigg/
    \left(
      {\begin{aligned}
        d \omega_4 & = 0
        \\
        d \omega_7
          & =
        - \omega_4 \wedge \omega_4
        \\
        &
        \phantom{=}\;
        \underset{
          =
          ( B \mathrm{tri})^\ast
          (
            \frac{1}{4}p_2
          )
        }{
        \underbrace{
          - \rchi_8 + \big(\tfrac{1}{4}p_1\big)^2
        }
        }
      \end{aligned}}
      \;\;\;\;\;
    \right)
  }
$$
where the horizontal morphisms project away the base algebra
$\mathrm{CE}\big( \mathfrak{l} B \mathrm{Sp}(2)\big)$.

The commutativity of this diagram requires that
the dashed morphism sends $\omega_7 \mapsto \omega_7$.
and by degree reasons it must send
$\omega_4 \mapsto c \cdot p_1$, for some $c \in \mathbb{R}$.
The unique choice for $c$ that makes the map respect the differentials,
in that the second summand in \eqref{TrialityPullbackOfp2InSp2Structure}
cancels out, is clearly $c = \tfrac{1}{4}$.
Alternatively, this follows also by Prop. \ref{PTVanishing4Flux}.
\end{proof}

\begin{prop}[Differential form data on extended spacetime]
  \label{DifferentialFormDataOnExtendedSpacetime}
  Let $X^8$ be an 8-manifold as in Def. \ref{The8Manifold}
  equipped with differential forms $(G_4. G_7)$
  that satisfy Hypothesis H (Def. \ref{HypothesisHFor8Manifolds}),
  hence
  equipped with  topological $\mathrm{Sp}(2)$-structure $\tau$ \eqref{ActionOfSp2Sp1}
  and equipped with a cocycle $c$ in $\tau$-twisted Cohomotopy  (Def. \ref{TwistedCohomotopy})
  with underlying differential forms $(G_4, 2G_7)$ according
  to Def. \ref{HypothesisHFor8Manifolds}
  $$
    \xymatrix@R=1.3em{
      X
      \ar[rr]^-{ (G_4, 2G_7) }
      \ar[dr]_-{ \tau }
      &&
      S^4 \!\sslash\! \mathrm{Sp}(2)\;.
      \ar[dl]
      \\
      &
      B \mathrm{Sp}(2)
    }
  $$
  Then  the pullback of these differential forms
  to the corresponding extended spacetime
  $\widehat X$ from Def. \ref{ExtendedSpacetime}  satisfies
  \begin{align}
    \label{TwistingOfH3univ}
    d\,H_3^{\mathrm{univ}}
    \; &=\;
    \widetilde G_4 - \tfrac{1}{2}p_1(\nabla)\;,
\\
    \label{FluxedCohomologicalPoincareThom}
    d\,\widetilde G_7
    \; & =\;
    -\tfrac{1}{2}\rchi_8(\nabla)\;,
  \end{align}
  where $H_3^{\mathrm{univ}}$
  is the universal 3-form $H_3^{\mathrm{univ}}$
  \eqref{ExtendedSpacetimeAsClassifyingSpace} on $\widehat{X}$,
  and where $\widetilde G_7$ the \emph{shifted 7-flux form}
  or \emph{Page flux}
  \begin{equation}
    \label{Shifted7Flux}
    \widetilde G_7
    \;:=\;
      G_7
        +
      \tfrac{1}{2} H^{\mathrm{univ}}_3 \wedge \widetilde G_4\;.
  \end{equation}
\end{prop}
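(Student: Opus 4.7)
The plan is to read off both identities from the explicit Sullivan model of the $\mathrm{Sp}(2)$-parametrized quaternionic Hopf fibration (Lemma \ref{SullivanModelOfSp2EquivariantHopfFibration}), applied to the defining homotopy-pullback square \eqref{PullExtendedSpacetime} of extended spacetime, whose rational model is, dually, a homotopy pushout of cdg-algebras.

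First, I would replace the vertical morphism in \eqref{hHRationalEquivariant} by a relative Sullivan (hence cofibrant) model: adjoin to $\mathrm{CE}\big(\mathfrak{l}(S^7 \!\sslash\! \mathrm{Sp}(2))\big)$ a generator $\omega_4$ of degree $4$ and a generator $H_3$ of degree $3$ with differentials $d\omega_4 = 0$ and
\[
d H_3 \;=\; \omega_4 - \tfrac{1}{4}p_1,
\]
together with an appropriately shifted $\omega_7^{\mathrm{bot}}$, so that the original cdga map $(h_{\mathbb{H}}\sslash\mathrm{Sp}(2))^\ast$ of Lemma \ref{SullivanModelOfSp2EquivariantHopfFibration} is recovered up to quasi-isomorphism (using closedness of $p_1$). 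The Sullivan model of $\widehat{X}^8$ is then the strict pushout of this cofibration against the Sullivan-level cocycle map $c$, which simply adjoins to the Sullivan model of $X^8$ a degree-$3$ generator whose realization on $\widehat{X}^8$ is the universal 3-form $H_3^{\mathrm{univ}}$. Its defining relation immediately yields
\[
d H_3^{\mathrm{univ}} \;=\; G_4 - \tfrac{1}{4} p_1(\nabla)
\;=\; \widetilde{G}_4 - \tfrac{1}{2} p_1(\nabla),
\]
which is \eqref{TwistingOfH3univ}.

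Second, for \eqref{FluxedCohomologicalPoincareThom}, I would verify $d \widetilde{G}_7 = -\tfrac{1}{2}\rchi_8(\nabla)$ by direct Leibniz computation from the definition $\widetilde{G}_7 := G_7 + \tfrac{1}{2} H_3^{\mathrm{univ}} \wedge \widetilde{G}_4$ of \eqref{Shifted7Flux}. Combining $d\widetilde{G}_4 = 0$ (since $G_4$ is closed and $p_1(\nabla)$ is a Chern--Weil representative), the previous step, and the Bianchi identity for $G_7$ implied by Hypothesis H together with \eqref{Fractionalp2OfSpin5InTermsOfSp2}, namely
\[
dG_7 \;=\; -\tfrac{1}{2} G_4 \wedge G_4 + \tfrac{1}{2}\bigl(\tfrac{1}{4}p_1(\nabla)\bigr)^2 - \tfrac{1}{2}\rchi_8(\nabla),
\]
the cross term $\tfrac{1}{2}(dH_3^{\mathrm{univ}})\wedge \widetilde{G}_4 = \tfrac{1}{2}(\widetilde{G}_4 - \tfrac{1}{2}p_1)\wedge \widetilde{G}_4$ expands, via reverse completion of the square, to $\tfrac{1}{2}\bigl(G_4 \wedge G_4 - (\tfrac{1}{4}p_1)^2\bigr)$, which cancels precisely against the first two summands of $dG_7$, leaving $-\tfrac{1}{2}\rchi_8(\nabla)$.

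The main obstacle will be Step 1: carefully threading the Hopf-invariant normalizations of Lemma \ref{SullivanModelOfHopfFibration} through the relative Sullivan replacement --- in particular tracking why the pullback of $\omega_7^{\mathrm{bot}}$ along the cocycle is $2G_7$ rather than $G_7$, which is precisely what forces the factor of $\tfrac{1}{2}$ in front of $H_3^{\mathrm{univ}}\wedge \widetilde{G}_4$ in \eqref{Shifted7Flux}. Once $dH_3^{\mathrm{univ}}$ is cleanly in hand, the second identity reduces, as in the proof of the curvature-corrected Bianchi identity in \cref{CFieldBackgroundCharge}, to the elementary algebraic factorization $G_4\wedge G_4 - (\tfrac{1}{4}p_1)^2 = (G_4 - \tfrac{1}{4}p_1)\wedge(G_4 + \tfrac{1}{4}p_1)$.
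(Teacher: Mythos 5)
Your proposal is correct and takes essentially the same route as the paper: both arguments rest on the explicit Sullivan model of Lemma~\ref{SullivanModelOfSp2EquivariantHopfFibration}, replaced by a relative (cofibrant) Sullivan model adjoining the degree-$3$ generator $h_3$ with $dh_3=\omega_4-\tfrac14p_1$, and then pushed out along the Sullivan-level cocycle $(G_4,2G_7)$ to produce $H_3^{\mathrm{univ}}$ on $\widehat X^8$. The only cosmetic difference is that the paper reads off both identities simultaneously from the cofibration resolution, where the relation
\[
d\omega_7 \;=\; -\,dh_3\wedge(\omega_4+\tfrac14 p_1)\;-\;\rchi_8
\]
already encodes $d\widetilde G_7=-\tfrac12\rchi_8$ after the assignment $\omega_7\mapsto 2G_7$, whereas you re-derive the same identity by a Leibniz computation from the definition of $\widetilde G_7$ together with the Bianchi identity for $G_7$; the factorization $G_4\wedge G_4-(\tfrac14p_1)^2=(G_4-\tfrac14p_1)\wedge(G_4+\tfrac14p_1)$ you invoke is just the ``completing the square'' step written out. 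The ``factor of $2$'' subtlety you flag (that $\omega_7\mapsto 2G_7$, not $G_7$) is built into the normalization of Prop.~\ref{SectionsOfRationalSphericalFibrations}\,(ii) and is indeed exactly what puts the $\tfrac12$ in front of $H_3^{\mathrm{univ}}\wedge\widetilde G_4$ in \eqref{Shifted7Flux}.
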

\begin{proof}
To extract the differential form data
we may compute the defining homotopy pullback \eqref{PullExtendedSpacetime}
in rational homotopy theory and read off the resulting assignment
of generators in the Sullivan model.
By general facts of rational homotopy theory
(recalled e.g. in \cite[A]{FSS16a})
the Sullivan model for
$\widehat X^8$ is given as the pushout along the map corresponding to $(G_4, 2G_7)$ of a
minimal cofibration resolution of the Sullivan model for the equivariant quaternionic Hopf fibration
$h_{\mathbb{H}}\sslash \mathrm{Sp}(2)$. The latter was obtained in Lemma
\ref{SullivanModelOfSp2EquivariantHopfFibration}. By direct inspection one sees that the
minimal cofibration resolution is given as shown on the right of the following diagram:
$$
  \xymatrix@C=2.3em{
    &&
    \mathrm{CE}
    \big(
      \mathfrak{l}
      B
      \mathrm{Sp}(2)
    \big)
    \otimes
    \mathbb{R}[
      \omega_7
    ]\big/
    (
      {\begin{aligned}
        d \omega_7
        & =
        -\rchi_8
      \end{aligned}} \!\!\!\!\!\!\!
    )
    \ar@{<-}[dd]_-{\simeq}^-{
      \mbox{
        \footnotesize
        $
        \begin{aligned}
          h_3 & \mapsto 0
          \\
          \omega_4 & \mapsto \tfrac{1}{4}p_{{}_1}
          \\
          \omega_7 & \mapsto \omega_7
        \end{aligned}
        $
      }
    }
    \ar@/_3pc/@{<-}[dddd]
      |-{ \phantom{ {A \atop {A \atop A} } \atop {{A \atop A} \atop A} } }
      _<<<<<<<<<<<{
        ( h_{\mathbb{H}}  \sslash \mathrm{Sp}(2))^\ast
      }
    \\
    \\
    \mathrm{CE}
    \big(
      \mathfrak{l}
      \widehat{
        X
      }
    \big)
    \ar@{}[ddrr]|-{
      \mbox{
        \tiny
        (po)
      }
    }
    \ar@{<-}[dd]
    \ar@{<-}[rr]|-{
      \mbox{
        \footnotesize
        $
        \begin{aligned}
          \omega_4 & \mapsto G_4
          \\
          \omega_7 & \mapsto 2G_7
          \\
          h_3 & \mapsto H_3^{\mathrm{univ}}
        \end{aligned}
        $
      }
    }
    &&
    \mathrm{CE}
    \big(
      \mathfrak{l}
      B
      \mathrm{Sp}(2)
    \big)
    \otimes
    \mathbb{R}[
      h_3, \omega_4, \omega_7
    ]\Bigg/
    \left(
      {\begin{aligned}
        d h_3 & = \omega_4 - \tfrac{1}{4}p_1
        \\
        d \omega_4 & = 0
        \\
        d \omega_7
          & =
        -
        d h_3 \wedge (\omega_4 + \tfrac{1}{4}p_1)
        \\
        &
        \;\phantom{=}
        -
        \rchi_8
      \end{aligned}}
    \right)
    \ar@{<-^{)}}[dd]^-{
      \mbox{
      \footnotesize
      $
      \begin{aligned}
        \omega_4 & \mapsto \omega_4
        \\
        \omega_7 & \mapsto \omega_7
      \end{aligned}
      $
      }
    }
    \\
    \\
    \mathrm{CE}
    \big(
      \mathfrak{l}
      X
    \big)
    \ar@{<-}[rr]|-{
      \mbox{
        \footnotesize
        $
        \begin{aligned}
          \omega_4 & \mapsto G_4
          \\
          \omega_7 & \mapsto 2G_7
        \end{aligned}
        $
      }
    }
    \ar@{<-}[dr]_-{
      \tau^\ast
    }
    &&
    \mathrm{CE}
    \big(
      \mathfrak{l}
      B
      \mathrm{Sp}(2)
    \big)
    \otimes
    \mathbb{R}[
      \omega_4, \omega_7
    ]\Bigg/
    \left(
      {\begin{aligned}
        d \omega_4 & = 0
        \\
        d \omega_7
          & =
        - \omega_4 \wedge \omega_4
        +
        \big( \tfrac{1}{4}p_1\big)^2 - \rchi_8
      \end{aligned}}
    \right)
    \ar@{<-}[dl]
    \\
    &
    \mathrm{CE}
    \big(
      \mathfrak{l}
      B
      \mathrm{Sp}(2)
    \big)
  }
$$
The differential relations appearing on the right
now immediately imply the claim.
\end{proof}

\begin{prop}[Integrality of Page charge]
  \label{IntegralityOfPageCharge}
  Let $X^8$ be an 8-manifold as in Def. \ref{The8Manifold},
  equipped with differential forms $(G_4, G_7)$ that
  satisfy Hypothesis H, Def. \ref{HypothesisHFor8Manifolds}
  with respect to a topological $\mathrm{Sp}(2)$-structure $\tau$
  and a cocycle in $c$ in $\tau$-twisted Cohomotopy.
  Then for every map
  $$
    i \;:\; S^7 \longrightarrow \widehat X^8
  $$
  from the 7-sphere to the corresponding extended spacetime
  (Def. \ref{ExtendedSpacetime}), the integration of
  the pullback of the Page flux \eqref{Shifted7Flux}
  over the 7-sphere is half-integral:
  $$
    2
    \int_{{}_{S^7}}
      i^\ast \widetilde G_7
    \;\in\;
    \mathbb{Z}
    \,.
  $$
\end{prop}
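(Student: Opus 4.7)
The plan is to reduce the claim, via the universal property of the defining homotopy pullback \eqref{PullExtendedSpacetime}, to a universal integrality of the pullback of the fiber-normalized generator $\omega_7 \in \mathrm{CE}(\mathfrak{l}\,S^7\sslash\mathrm{Sp}(2))$ against any map from $S^7$. The pullback provides a canonical classifying map $\widehat c \colon \widehat X^8 \to S^7\sslash\mathrm{Sp}(2)$; setting $f := \widehat c\circ i$ and using pullback naturality, the statement reduces to
\[
  \int_{S^7} f^*\omega_7 \;\in\; \mathbb{Z},
\]
provided one first establishes the form-level identification $\widehat c^*\omega_7 = 2\widetilde G_7$.

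For this identification, I would rerun the pushout computation from the proof of Prop. \ref{DifferentialFormDataOnExtendedSpacetime}. The pullback $\widehat c^*\omega_7$ is characterized as a 7-form on $\widehat X^8$ with $d(\widehat c^*\omega_7) = -\chi_8(\nabla)$, and by \eqref{FluxedCohomologicalPoincareThom} the form $2\widetilde G_7$ already satisfies this. To see that $2\widetilde G_7$ is indeed the correct dgca image of $\omega_7$, observe that in the minimal cofibration resolution used in the proof the element $\Omega_7 := \omega_7 + h_3\wedge(\omega_4+\tfrac{1}{4}p_1)$ satisfies $d\Omega_7 = -\chi_8$ (by direct calculation using the relations $dh_3 = \omega_4 - \tfrac{1}{4}p_1$ and $d\omega_7 = -dh_3\wedge(\omega_4+\tfrac{1}{4}p_1)-\chi_8$); it lifts the target's $\omega_7$ under the quasi-isomorphism $h_3\mapsto 0,\ \omega_4\mapsto\tfrac{1}{4}p_1,\ \omega_7\mapsto\omega_7$; and it maps forward under the pushout assignment $h_3\mapsto H_3^{\mathrm{univ}},\ \omega_4\mapsto G_4,\ \omega_7\mapsto 2G_7$ to $2G_7 + H_3^{\mathrm{univ}}\wedge\widetilde G_4 = 2\widetilde G_7$.

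The final and more delicate step is to show $\int_{S^7} f^*\omega_7 \in \mathbb{Z}$ for every $f \colon S^7\to S^7\sslash\mathrm{Sp}(2)$. By the normalization $\langle\omega_7,[S^7]\rangle = 1$ of Lemma \ref{SullivanModelOfSp2EquivariantHopfFibration}, $\omega_7$ restricts to the integer fundamental-class generator on each 7-sphere fiber of $S^7\sslash\mathrm{Sp}(2)\to B\mathrm{Sp}(2)$; in particular the fiber inclusion $j$ gives $\int_{S^7} j^*\omega_7 = 1$. Since $B\mathrm{Sp}(2)$ is 3-connected with $\pi_7 B\mathrm{Sp}(2)\simeq\pi_6\mathrm{Sp}(2)$ finite torsion, some integer multiple of $f$ is homotopic to a map factoring through a fiber, and the integrality of the transgressing Euler class $-d\omega_7 = \chi_8 \in H^8(B\mathrm{Sp}(2);\mathbb{Z})$ ensures that every homotopy ambiguity shifts the integral by integer amounts, so the base-case integrality propagates. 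The main obstacle is precisely this last step: because $\omega_7$ is not closed on the total space, $\int_{S^7} f^*\omega_7$ is a priori only a secondary invariant well-defined modulo $\mathbb{Z}$ via the integrality of $\chi_8$, and care is needed to combine the fiber-normalization with the finiteness of the relevant obstruction groups in order to upgrade this to on-the-nose integrality.
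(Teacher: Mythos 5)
The paper's ``proof'' of Prop.~\ref{IntegralityOfPageCharge} consists of a single citation to [FSS19c, Theorem 4.6], so there is no internal argument to compare your proposal against. Your Sullivan-model preliminaries are sound: the element $\Omega_7 := \omega_7 + h_3\wedge(\omega_4 + \tfrac{1}{4}p_1)$ in the cofibration resolution from the proof of Prop.~\ref{DifferentialFormDataOnExtendedSpacetime} does satisfy $d\Omega_7 = -\rchi_8$, projects to $\omega_7$ under the quasi-isomorphism, and maps to $2\widetilde G_7$ under the pushout — this correctly explains the factor of $2$ and reproduces the content of \eqref{FluxedCohomologicalPoincareThom}.

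However, the final reduction has a genuine flaw, more serious than the gap you flag. The universal claim ``$\int_{S^7} f^*\omega_7 \in \mathbb{Z}$ for every $f: S^7 \to S^7\sslash\mathrm{Sp}(2)$'' is actually \emph{false}. Since $\rchi_8 = -d\omega_7$ is exact on $S^7\sslash\mathrm{Sp}(2)\simeq B\mathrm{Sp}(1)$ (the Euler class of the rank-$8$ bundle restricted along $B\mathrm{Sp}(1)\hookrightarrow B\mathrm{Sp}(2)$ vanishes, because the $\mathbb{H}^2$-representation splits off a trivial rank-$4$ summand under $\mathrm{Sp}(1)\subset\mathrm{Sp}(2)$), for any embedded disc $\hat f\colon D^8\hookrightarrow S^7\sslash\mathrm{Sp}(2)$ with boundary $f = \hat f|_{\partial}$ Stokes' theorem gives $\int_{S^7} f^*\omega_7 = -\int_{D^8}\hat f^*\rchi_8$, which tends to $0$ as the disc shrinks and is generically not an integer. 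You correctly observe that the non-closedness of $\omega_7$ is the obstacle, but your proposed fix — that integrality of $\rchi_8\in H^8(B\mathrm{Sp}(2);\mathbb{Z})$ makes homotopy shift the integral by integers — does not hold: the Stokes correction $\int_{S^7\times I}H^*\rchi_8$ is the integral of a closed form over a manifold with boundary, which is not a pairing against any integral cycle, so integrality of the class $\rchi_8$ gives no control. Consequently $\int_{S^7}f^*\omega_7$ is not even well-defined $\mathrm{mod}\ \mathbb{Z}$ as a function of the homotopy class of $f$.

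The real integrality statement can therefore not follow from rational homotopy theory and the finiteness of $\pi_7(B\mathrm{Sp}(2))$ alone. It requires the genuinely topological data of the twisted Cohomotopy cocycle $\hat c$ — of which the forms $(G_4, G_7, H_3^{\mathrm{univ}})$ are only the rational shadow via the Chern character, determined up to gauge — and $2\int_{S^7}i^*\widetilde G_7$ has to be identified with a secondary invariant (a quaternionic-Hopf-degree type quantity), not a de Rham pairing against a fixed form $\omega_7$ on the classifying space. This is precisely what [FSS19c, Thm.~4.6] supplies; your reduction silently conflates the gauge-equivalence class of $\widetilde G_7$ with a literal pullback $\hat c^*\omega_7$, which loses the information that makes the quantization work.
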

\begin{proof}
  This is proven as \cite[Theorem 4.6]{FSS19c}.
\end{proof}

\medskip

\subsection{Tadpole cancellation}
\label{M2BraneTadpoleCancellation}

We discuss here how \hyperlink{HypothesisH}{Hypothesis H}
implies the fluxless C-field tadpole cancellation condition \eqref{NIsI}.

\medskip

The key point is to see what precisely ``4-fluxless'' is to mean.
For this, recall that we discussed Cohomotopy cocycles at four levels
of approximation, from the coarse approximation of rational/de Rham cohmology
on the left to full non-abelian Cohomotopy on the right:
\begin{center}
  \begin{tabular}{|c||c|c|c|c|}
    \hline
    \begin{tabular}{c}
  \bf    Cohomology
      \\
  \bf    theory
    \end{tabular}
    &
    \begin{tabular}{c}
      Rational
      \\
      cohomology
    \end{tabular}
    &
    \begin{tabular}{c}
      Integral
      \\
      cohomology
    \end{tabular}
    &
    \begin{tabular}{c}
      Stable
      \\
      Cohomotopy
    \end{tabular}
    &
    \begin{tabular}{c}
      Non-abelian
      \\
      Cohomotopy
    \end{tabular}
    \\
    \hline
    \hline
  \rule{0pt}{2.5ex}  \bf Cocycle
    &
    $G_4$
    &
    $\widetilde G_4$
    &
    $\Sigma^\infty c$
    &
    $c$
    \\
   \hline
  \end{tabular}
\end{center}
On the far left,
for rational and integral cohomology we had found
in Prop. \ref{PTVanishing4Flux} that
cohomotopical fluxlessness is reflected by any factorization
of the 4-cocycle through 7-Cohomotopy
via the quaternionic Hopf fibration $h_{\mathbb{H}}$,
and that this means that the differential flux 4-form
takes its background charge value:
$G_4 = \tfrac{1}{4}p_1(\nabla)$.

\medskip
But stable Cohomotopy theory is finer than its approximation by
de Rham cohomology. Indeed, not \emph{every} factorization through
$h_{\mathbb{H}}$ gives zero in 4-Cohomotopy, instead
there are in general torsion side effects
which disappear only in de Rham cohomology,
as shown in \eqref{QuaternionicHopfFibration}:
It is only those factorizations through $h_{\mathbb{H}}$
which occur in multiples of 24 that are strictly 4-fluxless
as seen in stable Cohomotopy.
Together with the 7-flux quantization of Prop. \ref{IntegralityOfPageCharge}
this means that the cohomotopically normalized 7-flux,
measuring the number of M2-branes, is as in \eqref{CohomotopicalNormalization}.
With this we have:
\begin{prop}[C-field tadpole cancellation via Cohomotopy]
  Let $X^8$ be an 8-manifold as in Def. \ref{The8Manifold}
  which is

  \vspace{-1mm}
  \item {\bf (a)} the complement of a tubular neighborhood
  around a finite number of points in a closed 8-manifold
  (the M2-brane loci) as in \eqref{ComplementOfM2s},
  and

  \vspace{-1mm}
  \item {\bf (b)} equipped with a twisted 7-Cohomotopy cocycle as in
  \eqref{ExtendedSpacetime}, hence with a section $i$
  of the corresponding extended spacetime (Def. \ref{ExtendedSpacetime}).

  \noindent Then the fluxless C-field tadpole cancellation condition
  \eqref{NIsI} holds:
  $$
    N_{{}_{\mathrm{M2}}}
    \;=\;
    I_8\big[X^8\big]
    \,.
  $$
\end{prop}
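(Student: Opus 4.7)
The plan is to reduce the tadpole identity to a direct Stokes-plus-Chern-Weil computation, using the differential-form consequences of \hyperlink{HypothesisH}{\it Hypothesis H} that have been established in the preceding propositions. The only subtlety worth highlighting is the precise numerical factor $-\tfrac{1}{12}$ appearing in the cohomotopical normalization \eqref{CohomotopicalNormalization}; everything else is essentially a one-line computation.

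First I would unpack the definition \eqref{CohomotopicalNormalization} of $N_{\mathrm{M2}}$ and apply Stokes' theorem on $X^8$, using that by assumption $\partial X^8 \simeq \sqcup_i S^7_i$ is the disjoint union of the boundary 7-spheres around the removed M2-brane loci: this gives $\int_{\partial X^8} i^\ast \widetilde G_7 = \int_{X^8} d(i^\ast \widetilde G_7) = \int_{X^8} i^\ast d\widetilde G_7$. Next I would invoke Prop.~\ref{DifferentialFormDataOnExtendedSpacetime}, which as a consequence of Hypothesis H gives the key identity $d\widetilde G_7 = -\tfrac{1}{2}\, p^\ast \rchi_8(\nabla)$ on the extended spacetime $\widehat X^8$. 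Since $i$ is a section of $p \colon \widehat X^8 \to X^8$, we have $i^\ast p^\ast \rchi_8(\nabla) = \rchi_8(\nabla)$, hence
\begin{equation*}
  N_{\mathrm{M2}}
  \;=\;
  \tfrac{-1}{12}
  \int_{X^8} i^\ast d\widetilde G_7
  \;=\;
  \tfrac{1}{24} \int_{X^8} \rchi_8(\nabla).
\end{equation*}

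The last step is to apply the Chern-Weil version of the $\mathrm{Sp}(2)\!\cdot\!\mathrm{Sp}(1)$-structure identity \eqref{Spin5DotSpin3ImpliesI8RelatedToChi} of Prop.~\ref{CohomologicalCharacterizationOfSpin3TimesSpin5StructureIn8d}, i.e.\ $\tfrac{1}{24}\rchi_8(\nabla) = I_8(\nabla)$ at the level of differential forms representing the corresponding characteristic classes. This immediately yields
\begin{equation*}
  N_{\mathrm{M2}}
  \;=\;
  \int_{X^8} I_8(\nabla)
  \;=\;
  I_8[X^8],
\end{equation*}
as required.

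The one genuinely non-trivial point in the plan is justifying \emph{why} the prefactor in the definition of $N_{\mathrm{M2}}$ is $-\tfrac{1}{12}$ rather than, say, $-\tfrac{1}{2}$: I would explain that this comes from the combination of the half-integrality of $\widetilde G_7$ on every boundary 7-sphere (Prop.~\ref{IntegralityOfPageCharge}) with the fact, recalled from \eqref{QuaternionicHopfFibration}, that in stable Cohomotopy the unit charge carried by a cocycle factoring through $h_{\mathbb{H}}$ is ${\color{magenta}24}$, not $1$; dividing the doubled 7-flux integer $2\!\int_{S^7} i^\ast \widetilde G_7$ by ${\color{magenta}24}$ produces exactly the factor $\tfrac{1}{12}$, and the sign is fixed by orientation conventions so as to be compatible with $d\widetilde G_7 = -\tfrac{1}{2}\rchi_8$. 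Apart from that bookkeeping step, the proof is a direct combination of Stokes' theorem, Prop.~\ref{DifferentialFormDataOnExtendedSpacetime}, and Prop.~\ref{CohomologicalCharacterizationOfSpin3TimesSpin5StructureIn8d}.
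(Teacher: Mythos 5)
Your proof is correct and follows essentially the same route as the paper's own argument: unpack the definition \eqref{CohomotopicalNormalization} of $N_{\mathrm{M2}}$, pass from the boundary integral to the bulk integral via Stokes, substitute $d\widetilde G_7 = -\tfrac{1}{2}\,\rchi_8(\nabla)$ from Prop.~\ref{DifferentialFormDataOnExtendedSpacetime}, and then invoke the $\mathrm{Sp}(2)\!\cdot\!\mathrm{Sp}(1)$-structure identity $\tfrac{1}{24}\rchi_8 = I_8$ from Prop.~\ref{CohomologicalCharacterizationOfSpin3TimesSpin5StructureIn8d}. Your explicit intermediate step $i^\ast p^\ast \rchi_8(\nabla) = \rchi_8(\nabla)$ is a slight (correct) clarification the paper leaves implicit, and the closing paragraph on the origin of the $-\tfrac{1}{12}$ prefactor is useful context but strictly speaking belongs to the justification of the definition \eqref{CohomotopicalNormalization} itself rather than to the proof of the proposition.
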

\begin{proof}
  By definition of
  the cohomotopically normalized 7-flux \eqref{CohomotopicalNormalization},
  we compute as follows:
  $$
    \begin{aligned}
    N_{{}_{M2}}
    & =
    \tfrac{-1}{12}\int_{\partial X^8} i^\ast \widetilde G_7
    \\
    & =
    \tfrac{-1}{12}\int_{{}_{X^8}} i^\ast d \widetilde G_7
    \\
    & =
    \tfrac{-1}{12}\int_{{}_{X^8}} \tfrac{-1}{2}\rchi_{8}(\nabla)
    \\
    & =
    \int_{{}_{X^8}} I_8(\nabla)
    \\
    & =
    I_8\big[X^8\big]\;.
    \end{aligned}
  $$
  Here the third step is by
  \eqref{FluxedCohomologicalPoincareThom}
  from Prop. \ref{DifferentialFormDataOnExtendedSpacetime}
  and the fourth step by \eqref{Spin5DotSpin3ImpliesI8RelatedToChi}
  from Prop. \ref{CohomologicalCharacterizationOfSpin3TimesSpin5StructureIn8d}.
\end{proof}

\medskip

\newpage

\section{Conclusion}
\label{Conclusions}

Perturbative string theory has a precise definition via 2d worldsheet SCFT.
In contrast, the formulation of its non-perturbative completion to M-theory
and of the brane physics this subsumes (see \cite{Duff99B}\cite{BeckerBeckerSchwarz06}),
remains an open problem
(e.g.
\cite[6]{Duff96}\cite[p. 2]{HoweLambertWest97}\cite[p. 2]{NicolaiHelling98}
\cite[p. 6]{Duff98}
\cite[p. 330]{Duff99B}
\cite[12]{Moore14}\cite[p. 2]{ChesterPerlmutter18}\cite
{Witten19}
\footnote{
\cite{Witten19} at 21:15: ``I actually believe that string/M-theory is on the right
track toward a deeper explanation. But at a very fundamental level it's not
well understood. And I'm not even confident that we have a good concept of
what sort of thing is missing or where to find it."}
).
The lack of an actual set of fundamental laws of non-perturbative brane physics
has recently surfaced in a debate on the extent of validity of
the brane uplifts that have been widely discussed for 15 years
\cite{DanielssonVanRiet18}\cite[p. 14-22]{Banks19}.

\vspace{-.1cm}

\medskip

Besides the field of gravity, the only other field in
M-theory at low-energy is the C-field \cite{CJS}.
A list of cohomological conditions on the C-field,
including those shown in \hyperlink{Table1}{Table 1},
have been derived as plausible consistency conditions in various
expected limiting cases of M-theory (effective field theory limits,
decoupling limits etc.) assuming the conjectural string dualities to hold.
One imagines that if M-theory exists
then thereby it must be consistent, and hence ought to imply
all these expected consistency conditions.
In order to  make this actually happen,
the first step in formulating M-theory ought to be the identification of the
generalized cohomology theory that charge-quantizes the C-field,
just as the first step in formulating a quantum consistent
theory of electromagnetism was Dirac's \emph{charge quantization} of
the electromagnetic field:
as a cocycle in (differential) ordinary cohomology (see \cite{Freed00}),

\vspace{-.1cm}

\medskip

The string theory literature has mostly regarded the M-theory C-field as a cocycle in ordinary
4-cohomology, with extra constraints imposed on it by hand. A proposal to build at least one of
these conditions, the shifted flux quantization condition (\cref{HalfIntegralCFieldFluxQuantization}),
into the definition of the cohomology theory (making it a ``mildly generalized cohomology theory'')
has been considered in \cite{DFM03}\cite{HopkinsSinger05}\cite{SatiSchreiberStasheff12}\cite{FSS14a}.
Another condition, the ``integral equation of motion''
(\cref{IntegralEquationOfMotionFromCohomotopy}) has been
argued in \cite{DMWa}\cite{DMWb} to be in correspondence
with one differential of specific degree in the Atiyah-Hirzebruch
spectral sequence for K-theory.
In reaction to this state of affairs, it has been suggested
\cite{Sa1}\cite{Sa2}\cite{Sa3}\cite{tcu}
that the C-field
should be regarded as a cocycle in some genuine generalized cohomology theory,
such as Cohomotopy theory \cite[2.5]{S-top}.
Indeed, if M-theory is as fundamental to physics as it should be, one
may expect the generalized cohomology theory that charge quantizes the C-field
to be more fundamental to mathematics than ordinary cohomology with
some modifications.

\vspace{-.1cm}

\medskip
In order to derive what this generalized cohomology theory
actually is, we had initiated a systematic analysis of the
bifermionic super $p$-brane charges from the point of view of
super rational homotopy theory \cite{FSS13};
see \cite{FSS19a} for review.
We proved in this fully super-geometric setting, albeit in rational approximation,
that the expected charge quantization of the RR-field in twisted K-theory
follows from
systematic analysis
of the D-brane super WZW terms
\cite{FSS16a}\cite{FSS16b}\cite{GaugeEnhancement}.
Then we showed that the exact same
logic applies to the super WZW terms of the M-branes \cite{FSS15}.
The analysis in this case reveals their cohomology theory
to be \cite[3]{FSS15}\cite[2]{FSS16a}:
Cohomotopy cohomology theory in compatible degrees $(4, 7)$, related
by the quaternionic Hopf fibration;
see \cite[7]{FSS19a} for review of this super rational analysis.
This proves that if full M-theory retains the super-space structure
of its low-energy limit,
then the cohomology theory that charge-quantizes the C-field must
be such that its rationalization coincides with
that of Cohomotopy cohomology theory in degrees $(4, 7)$.
While there are many different cohomology theories with the same
rationalization as Cohomotopy theory, one of these
is \emph{minimal} in number of CW-cells:
This is Cohomotopy theory itself.

\vspace{-.1cm}

\medskip
What we have shown in this article is that assuming,
with \hyperlink{HypothesisH}{\it Hypothesis H}, that
Cohomotopy cohomology theory in compatible degrees $(4, 7)$
indeed encodes
the charge-quantization of the C-field even beyond the
rational approximation, then the list in \hyperlink{Table1}{Table 1}
of expected consistency conditions is implied.
Further checks of \hyperlink{HypothesisH}{\it Hypothesis H}
are presented in
\cite{SS19a}\cite{SS19b} (for the case of M-theory orbifolds)
and in \cite{SS19c} (for intersecting branes).

\vspace{-.1cm}

\medskip

\medskip
\noindent {\bf Acknowledgements}
\\
D. F. would like to thank NYU Abu Dhabi for hospitality during the writing of this paper.
We thank Paolo Piccinni for useful discussion and Martin {\v C}adek
for useful communication.  We also thank
Mike Duff,
for comments on an earlier version.


\vspace{1cm}
\noindent Domenico Fiorenza,  {\it Dipartimento di Matematica, La Sapienza Universit\`a di Roma, Piazzale Aldo Moro 2, 00185 Rome, Italy.}

 \medskip
\noindent Hisham Sati, {\it Mathematics, Division of Science, New York University Abu Dhabi, UAE.}

 \medskip
\noindent Urs Schreiber,  {\it Mathematics, Division of Science, New York University Abu Dhabi, UAE, on leave from Czech Academy of Science, Prague.}

\end{document}